\newcommand{\be}{\begin{equation}}
\newcommand{\ee}{\end{equation}}
\newcommand{\bmatr}[4]{\begin{pmatrix} #1 & #2 \\ #3 & #4 \end{pmatrix}}
\newcommand{\Lie}{\mathrm{Lie}}
\newcommand{\A}{{\mathcal{A}}}
\renewcommand{\d}{{\mathrm{d}}}
\newcommand{\D}{{\mathrm{D}}}
\newcommand{\SU}{{\mathrm{SU}}}
\newcommand{\YM}{{\text{YM}}}
\newcommand{\G}{{\mathcal{G}}}
\newcommand{\Ad}{{\mathrm{Ad}}}
\newcommand{\pp}{{\partial}}
\newcommand{\fG}{{\mathrm{Lie}(\G)}}
\newcommand{\su}{{\mathfrak{su}}}
\renewcommand{\bar}{\overline}
\newcommand{\dd}{{\mathbb{d}}}
\renewcommand{\hat}{\widehat}
\newcommand{\DS}{{{}^S{\D}}}
\newcommand{\T}{{\mathrm{T}}}
\newcommand{\I}{\mathcal{I}}
\newcommand{\fkG}{\mathfrak{G}}
\renewcommand{\ker}{{\mathrm{ker}}}
\newcommand{\sdw}{{\text{SdW}}}
\newcommand{\rad}{{\text{rad}}}
\newcommand{\Coul}{{\text{Coul}}}
\newcommand{\GC}{{\mathsf{G}}}
\newcommand{\lbr}{\llbracket}
\newcommand{\rbr}{\rrbracket}
\newcommand{\cint}{{\int\kern-.87em{<}}}
\newcommand{\sint}{{\int\kern-.75em{\sim}}}
\newcommand{\fint}{{\int\kern-1.00em{\int}}}
\newcommand{\bb}{\mathbb}
\newcommand{\tr}{\mathrm{Tr}}
\newcommand{\bbv}{\dot{\bb A}}
\newtheorem{Thm}{Theorem}[section]
\newtheorem{Prop}[Thm]{Proposition}
\newtheorem{Lemma}[Thm]{Lemma}
\newtheorem{CorP}[Thm]{Corollary}
\newtheorem{CorT}[Thm]{Corollary}
\newtheorem{Def}[Thm]{Definition}
\newtheorem{Rmk}[Thm]{Remark}
\renewcommand{\#}{\sharp}
\let\oldmarginpar\marginpar
\renewcommand\marginpar[1]{\oldmarginpar{\color{red}\raggedright\footnotesize #1}}
\begin{document}

\begin{center}{\Large \textbf{
The quasilocal degrees of freedom of Yang-Mills theory
}}\end{center}

\begin{center}
H. Gomes\textsuperscript{1},
A. Riello\textsuperscript{2*}
\end{center}

\begin{center}
{\bf 1} Trinity College, Cambridge University, Cambridge CB2 1TQ, England
\\
{\bf 2} { Physique Théorique et Mathématique, Université libre de Bruxelles, Campus Plaine C.P. 231, B-1050 Bruxelles, Belgium}
\\
* gomes.ha@gmail.com, aldo.riello@ulb.be
\end{center}

\begin{center}
\today
\end{center}


\section*{Abstract}

{\bf
Gauge theories possess nonlocal features that, in the presence of boundaries, inevitably lead to subtleties. 
We employ geometric methods rooted in the functional geometry of the phase space of Yang-Mills theories 
to: (\textit{1}) characterize a basis for quasilocal degrees of freedom (dof) that is manifestly gauge-covariant also at the boundary;
 (\textit{2}) tame the non-additivity of the regional symplectic forms upon the gluing of regions; and to (\textit{3}) discuss gauge and global charges in both Abelian and non-Abelian theories from a geometric perspective. 
Naturally, our analysis leads to splitting the Yang-Mills dof into Coulombic and radiative. Coulombic dof enter the Gauss constraint and are dependent on extra boundary data (the electric flux); radiative dof are unconstrained and independent. 
The inevitable non-locality of this split is identified as the source of the symplectic non-additivity, i.e. of the appearance of new dof upon the gluing of regions. Remarkably, these new dof are fully determined by the regional radiative dof only.
Finally, a direct link is drawn between this split and Dirac's dressed electron.
}

\vspace{10pt}
\noindent\rule{\textwidth}{1pt}
\setcounter{tocdepth}{2}
\tableofcontents\thispagestyle{fancy}
\noindent\rule{\textwidth}{1pt}
\vspace{10pt}


\section{Introduction and summary of the results}

Physical  degrees of freedom in gauge theories cannot be completely localized, since gauge-invariant quantities have a certain degree of nonlocality; the prototypical example being a Wilson line.

Here, we will address the problem of defining \textit{quasilocal} degrees of freedom (quasilocal dof) in electromagnetism and Yang-Mills (YM) theories.
By ``quasilocal'', we specifically mean ``confined to a finite and bounded region'', with a certain degree of nonlocality allowed {\it within} the region. 
When the role of the specific region needs to be emphasized, we will call such properties \textit{regional}.

In electromagnetism, or any Abelian YM theory, although the field strength $F_{\mu\nu}=\pp_{\mu} A_{\nu} - \pp_\nu A_\mu$ provides a complete set of local gauge-invariant  observables, a canonical formulation unveils the underlying nonlocality. 
The  components of $F_{\mu\nu}$  (i.e. the electric and magnetic fields $E$ and $B$) fail to provide gauge-invariant {\it canonical} coordinates on field space: in 3 space dimensions, $\{ E^i(x) , B^j(y) \} = \epsilon^{ijk}\pp_k\delta(x,y)$ is not a canonical Poisson bracket and the presence of the derivative on the right-hand-side is the first sign of a nonlocal behaviour.
 (For a  striking proof of the tension between locality and even gauge-{\it co}variance in the quantum formalism, see \cite[Thm. 8.1]{StrocchiBook}.)

From a canonical perspective, the  constraint whose Poisson bracket generates gauge transformations, namely the Gauss constraint, is responsible for the non-local attributes of gauge theories---and indeed of most of their peculiar properties (both classical, and quantum \cite{StrocchiBook, strocchi2015symmetries}). 
The Gauss constraint gives an {\it elliptic} equation which must be satisfied by initial data on a Cauchy surface $\Sigma$.
In other words, the initial values of the fields cannot be freely specified throughout $\Sigma$; for instance the allowed values of the electric field inside a region depend on the distribution of charges within the region and the flux of the electric field at its boundary. Ultimately, this is the source of both the nonlocality and the difficulty of identifying freely specifiable initial data---the ``true'' dof of the theory.  
The viewpoint often adopted in the literature is that such nonlocality also prevents the factorizability of gauge-invariant observables and  of physical degrees of freedom across regions (e.g. \cite{Polikarpov, Casini_gauge, GiddingsDonnelly}).  In this paper, we will clarify  these statements, characterizing the quasilocal dof of Yang-Mills theory as well  as their non-local properties.

That is, we will address the definition of YM quasilocal dof in a linearized setting around a background configuration. We refer to these  first-order perturbations as ``fluctuations'' or, often, as ``modes." Geometrically, these modes are identified with tangent vectors to the YM configuration space over a Cauchy hypersurface $\Sigma$ at a certain base-point in configuration space---the background configuration. Such tangent vectors are the basic objects required by the study of symplectic geometry, as encoded in the (pre)symplectic form $\Omega$.

Our approach  seamlessly adapts to the treatment of bounded regions $R\subset \Sigma$, $\pp R\neq\emptyset$, without ever requiring any restriction on the dof: {\it not  even in the form of boundary conditions at $\pp R$}.
 This feature makes our approach uniquely adaptable to the study of arbitrary \textit{fiducial boundaries}---that is, of interfaces that do not presume  any  boundary condition on the fields---with foreseen applications in e.g. entanglement entropy computations  discussed in the outlook section. Although restrictive boundary conditions (see e.g. \cite{Harlow_cov}) on the physical content can in principle be incorporated in the formalism by restricting the definition of the configuration space, we will not analyze this possibility here (we refer to \cite{RielloSoft}  by one of the authors for considerations regarding asymptotic null infinity).

 To be more explicit: more than leave boundary conditions open, we {\it never} fix the gauge freedom, {\it not even at the boundary}. 
Manifest covariance, including at the boundary, is the central feature of our approach, lying at the core of all our results.
Moreover, this freedom fundamentally distinguishes our approach to  gauge theories in regions with either finite or asymptotic boundaries from other standard approaches  (e.g. \cite{ReggeTeitelboim1974,  strominger2018lectures, BalachandranVaidya2013, brown1986}---see also \cite{RielloSoft} for a discussion of this point).
Since we also restrain from introducing any additional dof at the boundary, our approach is more economical than the edge-mode approach \cite{DonnellyFreidel, Balachandran:1994up, Speranza:2017gxd, Geiller:2017xad, Camps}  (to be discussed in the concluding section).

This paper is centered on three physical questions: (\textit{1}) How do we characterize the quasilocal dof of YM theory? (\textit{2}) What are  their covariantly conserved regional charges  and how are these related to the underlying gauge symmetry? And finally, (\textit{3}) how do the quasilocal dof behave upon composition,  or gluing, of the underlying regions? 

These three questions will be addressed through the development of appropriate mathematical tools, respectively: (\textit{1}) A decomposition of the linearized dof over a region, into a basis that is covariant with respect to gauge transformations of the background configuration. The main tool here is the introduction of a functional connection form over the phase-space of Yang-Mills theory \cite{GomesRiello2016,GomesRiello2018,GomesHopfRiello}.
Here we show how the introduction of this connection naturally leads to a split of the dof into Coulombic and radiative. Coulombic dof are those that enter the initial-value Gauss constraint and, in the presence of boundaries, rely on extra independent boundary data---the electric flux. In \cite{AldoNew} by one of the authors, this dependence on boundary data is shown to be at the source of superselection sectors. Within each of these sectors, a quasi-local gauge-reduction procedure can be meaningfully performed. Radiative dof, on the other hand, are unconstrained and independent of any other data: they are the ``true'' quasi-local degrees {\it of freedom} of the theory. Although the split itself depends on the choice of functional connection, our results hold for an arbitrary such choice. Nonetheless, a geometrically privileged functional connection exists which satisfies some extra, convenient, properties. We called this connection the Singer--DeWitt (SdW) connection \cite{GomesHopfRiello}. The gauge-geometry of phase space is described in section \ref{sec:field_space}, while the consequences at the symplectic level are discussed in section \ref{sec:symred}.

(\textit{2}) 
Together with \cite{AbbottDeser, Barnich, DeWitt_Book}, we will argue that non-trivial global charges can only be associated  to reducible configurations of the gauge potential. In Abelian theories, every configuration is reducible (with reducibility parameter the constant ``gauge transformations'') and global charges admit a Hamiltonian symplectic flow in the reduced quasilocal phase space---notice that the global charges over $\Sigma$, for $\pp\Sigma=\emptyset$, must vanish. In contrast to Abelian theories, in the {\it non}-Abelian case, reducible configurations are extremely rare (i.e. {\it ir}reducible configurations are dense in configuration space) and possess an intricate geometric structure  \cite{Ebin, Palais, Mitter:1979un, isenberg1982slice, kondracki1983, YangMillsSlice}. This means not only that the physical relevance of global charges in the non-Abelian theories is less clear (fluctuations that are not fine-tuned generically break the global symmetry under study), but also that an extension of our geometric formalism that encompasses non-Abelian reducible configurations would require substantially more work. For these reasons, in this article we limit ourselves to laying down some general considerations on the non-Abelian case and leave  the detailed analysis of the symplectic geometry associated to these charges to future work.  Charges are discussed in section \ref{sec:charges}. The relationship of this formalism with Dirac's dressed electron is explained in section \ref{sec:dressing}.

(\textit{3})  
Our analysis of the gluing of the YM dof across adjacent regions leverages a novel gluing-theorem that we prove in the case of (topologically trivial) bipartite systems. This theorem shows that: (i) the regional  {\it radiative} dof are sufficient to reconstruct the global symmetry-reduced symplectic form; and yet (ii)  the composition of the radiative symplectic forms is non-additive, i.e. that the global symmetry-reduced symplectic form contains (in a precise sense) more dof than the combination of the regional  radiative ones. This is the classical analogue of the non-factorizability of the Hilbert spaces of (lattice) gauge theory. 
 Remarkably, in the SdW case, the gluing theorem leads to an explicit gluing formula for the radiative dof which shows that the ``missing'' dof that emerge upon gluing are indeed encoded in the {\it mismatch} between the two regional radiatives across the interface. 
As the gluing theorem shows, at a generic configuration of the non-Abelian theory, if gluing is possible---i.e. if the two radiatives can be composed at all---then it is unique. However, at reducible configurations, and in the presence of matter, gluing is ambiguous due to the presence of the non-trivial global symmetries analyzed in (2). This is particularly relevant in the Abelian case, where the ambiguity is related to the total regional electric charge. Finally, we explore in a simple 1-dimensional case the consequences of non-trivial space topology and the emergence of Aharonov-Bohm phases within out formalism. Gluing is discussed in section \ref{sec:gluing}.

Crucially, the key feature in all these results is the nonlocal nature of the ``physical dof'' of Yang--Mills theory, a property which is manifest in our answer to (\textit{3}). 

Of course, this nonlocality is a property that we expect Yang--Mills theory to share with (all) other gauge theories---such as Chern-Simons theory. For example the decomposition of linear fluctuations along gauge and transverse directions in field space, as well as the results on their gluing, apply to any gauge theory described by a Lie-algebra valued gauge potential $A$. Having said that, precise statements on the nature of the dof of a gauge theory can rely only upon a detailed analysis of the \textit{symplectic structure} of the theory, especially in relation to gauge transformations. And since this analysis can only be performed on a theory-by-theory basis, the conclusions we draw in this paper only apply---strictly speaking---to Yang--Mills theory.

We conclude our discussion in section \ref{sec:conclusions} with a brief outlook. A list of symbols can be found in appendix \ref{app:symbols}.

\section{Field-space geometry: setup and definitions\label{sec:field_space}}

 This section will  set the stage for our future considerations. 
It mostly reviews constructions and results that have already appeared in our previous work \cite{GomesRiello2016,GomesHopfRiello}. 
Nonetheless, the inclusion of this material aims for more than just reviewing: our current presentation will be more rigorous, complete, and systematic than those previously available.
Throughout this article we will not strive for functional analytic rigour: our constructions will rather focus on the algebraic aspects of the geometry of field space.

Most of the field-space objects introduced in this paper are understood within the setting of ``local'' calculus in the sense of the pullback from the (infinite) jet bundle, and not in the setting of general differential geometry on Frechet manifolds. For example, the “cotangent bundle” of the space of connection $\A$ introduced later is the fiberwise dualisation of the vector bundle whose sections are the fields. However, as it will become clear later on, these local spaces have to be slightly generalized to introduce certain nonlocal objects such as Green's functions. We will not attempt a rigorous characterization of this extension.

 Before starting we notice one  important remark: all the constructions will be performed at the quasilocal level, by formally replacing a Cauchy surface $\Sigma$ with any compact subregion $R$ thereof, with $\pp R\neq \emptyset$.
Since our interest lies mostly in bounded regions, we take this replacement for granted.
Motivated by the study of subregions of $\Sigma$ defined by fiducial boundaries, in the following we will assume \textit{no} boundary condition at $\pp R$, not even in the allowed gauge freedom. Unless otherwise specified, all integrals are understood to be over $R$, i.e. $\int := \int_R \d^D x$, and all boundary integrals over $\pp R$, i.e. $\oint := \int_{\pp R} \d^{D-1}x$. 

 \subsection{Horizontal splittings in configuration space\label{sec:hor-spl-config}}

To start, we introduce  notation and recall some basic facts. 

Consider a Lagrangian $D+1$ formulation of YM theory on a globally hyperbolic spacetime $M\cong \Sigma \times \bb R $ foliated by equal-time Cauchy surfaces\footnote{Concerning the extrinsic geometry of our foliation, i.e. how $\Sigma_t$ is embedded in spacetime: Unless stated otherwiese, all our formulae we will hold when $\Sigma$ belongs to an Eulerian foliation of spacetime, i.e. to a foliation whose lapse is equal to one and whose shift vanishes. In other words, $\Sigma$ is an equal-time hypersurface in a spacetime with metric $\d s^2 = -\d t^2 + g_{ij}(t,x)\d x^i \d x^j$. The inclusion of nontrivial lapse and shift is  in principle straightforward, but makes some formulae more cluttered,  and most likely wouldn't add much to our  considerations here. However,  we point the reader to \cite{RielloSoft} for a situation where the introduction of a nontrivial shift plays a crucial role in dealing with asymptotic gauge transformations and charges. \label{fn:setup}}  
 $\Sigma_t\cong \Sigma $.

To distinguish issues of global (topological) nature---which will only be considered in section \ref{sec:topology}---from those associated with finite boundaries---which constitute our main focus,---we assume $\Sigma\cong \bb R^D$. This choice is made for mere convenience and will play no role in the following where our focus will be on compact subregions $R\subset\Sigma$, diffeomorphic to a $D$-disk.

Denote the corresponding {\it  quasilocal YM configuration space} $\A$ (see figure figure \ref{fig1}). This is the space of Lie-algebra valued one-forms on $ R\subset \Sigma$,\footnote{Rigorously speaking, dealing with a non-compact Cauchy surface would require us to consider only fields that vanish fast enough at infinity. However, our focus on compact region will make this restriction virtually irrelevant in the following. Therefore, we do not concern ourselves with a precise determination of the fall off rates and hereafter neglect them completely. For an application of our formalism where asymptotic conditions at null infinity are carefully treated, see \cite{RielloSoft}.}
\be
A \in \A:=\Omega^1(R, \Lie(G)).
\ee
 Since we will be using a Hamiltonian (phase-space) framework, the component of $A$ in the transverse direction to $\Sigma$, $A_0$, is left out of the description.

\begin{figure}[t]
		\begin{center}
			\includegraphics[scale=0.17]{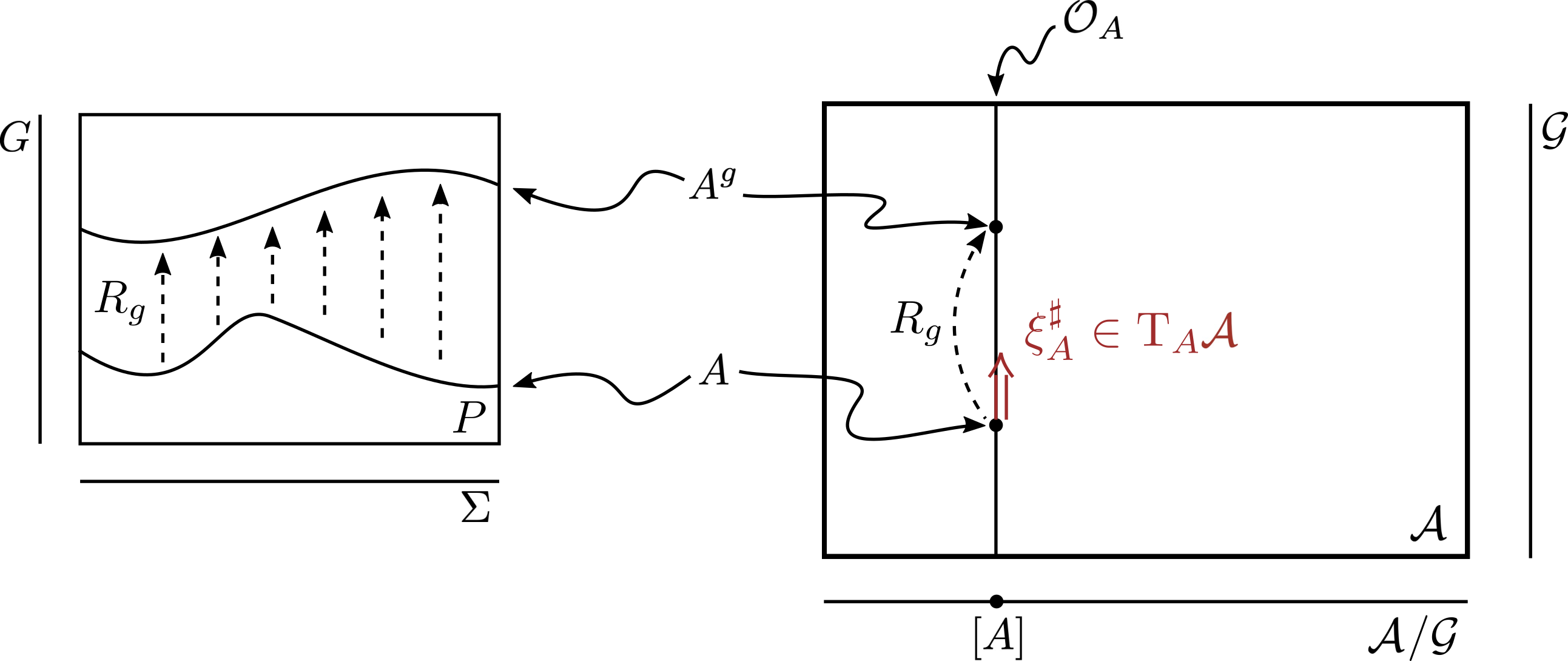}	
			\caption{A pictorial representation of the configuration space $\A$ seen as a principal fibre bundle, on the right. We have highlighted a generic configuration $A$, its (gauge-transformed) image under the action of $R_g:A\mapsto A^g$, and its orbit $\mathcal O_A \cong \G$. We have also represented the quotient space of `gauge-invariant configurations' $\A/\G$. On the left hand side of the picture, we have ``zoomed into'' a representation of $A$ and $A^g$ as two gauge-related local sections of  a connection $\omega$ on $P$, the  finite dimensional principal fibre bundle with structural group $G$ over $ R$.  The principal fibre bundle picture of $\A$ will be  partially revisited in section \ref{sec:charges}---see figure \ref{fig8}.}
			\label{fig1}
		\end{center}
\end{figure}

The group $G$ is assumed to be compact and semisimple and will be referred to as the {\it charge group} of the theory.  
 In specific applications, we will have $G=\SU(N)$ in mind.
We write, $A=A_i\d x^i = A^\alpha_i \d x^i \tau_\alpha$, where  $\{\tau_a\}$ is a basis of  generators of $\Lie(G)$ which is orthonormal with respect to a rescaled Killing form on $\Lie(G)$, i.e. $\frac{1}{2N}\mathrm{k}(\tau_\alpha,\tau_\beta) =  \tr(\tau_\alpha \tau_\beta) =\delta_{\alpha,\beta}$.

The space of gauge transformations i.e. the space of  smooth (compactly supported) $G$-valued functions on $\Sigma$, 
$\mathcal C_o^\infty(\Sigma, G)$ inherits a group structure from $G$ via pointwise multiplication.  This group is in general not connected. Although this fact has crucial physical consequences, in this article we shall be concerned exclusively with the properties of  infinitesimal gauge transformations, thus turning a blind eye to these issues.\footnote{The non-connectedness of $\G$  has physical consequences e.g. for chiral symmetry breaking in the full quantum theory; for a thorough discussion see  \cite{Strocchi_SB}.} 
 Most often, we shall focus on the space of {\it quasilocal} gauge transformations within $R\subset \Sigma$, which we call the {\it gauge group} and indicate by
\be
 \G:=\mathcal C^\infty(R, G)\ni g.
\ee

The gauge transformation $g:  R \to G$ acts on the gauge potential's configuration $A$ as 
\be\label{eq:gt} 
R_g : \A \to \A ,\quad A \mapsto A^g = g^{-1} A g + g^{-1}\d g.
\ee
This defines an action of $\G$ on $\A$. 
The orbits of this action, $\mathcal O_A$, are called gauge orbits and they define a foliation\footnote{$\G$ does not act freely on every orbit. Indeed, certain configurations $A\in\A$, said reducible, admit a {\it finite-dimensional} stabilizer. For more on this, see section \ref{sec:charges} and in particular appendix \ref{app:slice}, where the consequences of this fact will be explored. Until then, we will ignore this complication. \label{ftnt:generic fol}} of $\A$, denoted $\mathcal F = \{ \mathcal O_A\}$ and called the {\it vertical foliation} of $\A$. The space of orbits, $\mathcal P \cong \A/\G$, is the ``gauge-invariant'' space of configurations which is only defined abstractly through an equivalence relation, and is  most often inaccessible for practical purposes.  
Rigorous mathematical work has shown that $\A$ and the vertical foliation $\cal F$ provide indeed (locally\footnote{Cf. previous footnote.}) a principal fibre bundle structure with $\G$ the structure group \cite{Ebin, Palais, Mitter:1979un, isenberg1982slice, kondracki1983, YangMillsSlice}. 

We will denote the tangent bundle to the vertical foliation by $V:=\mathrm T\mathcal F \subset \mathrm T\A$.

An infinitesimal gauge transformation $\xi\in\Lie(\G)\cong  \mathcal C^\infty(R, \Lie(G))$ defines a vector field tangent to $\mathcal F$. 
This is denoted by $\xi^\#\in V$, and its value at $A$ is
\be
\xi_A^\# = \int (\D_i\xi)^\alpha(x) \frac{\delta}{\delta A_i^\alpha(x)} \in \mathrm T_A \mathcal O_A \subset \mathrm T_A\A,
\label{eq:hash}
\ee
where $\D_i \xi := \pp_i \xi + [A_i, \xi] $ is the gauge-covariant derivative in the adjoint representation.
Clearly, at $A\in\A$, $V_A = \mathrm{Span}(\{\xi^\#_A \}_{\xi\in\Lie(\G)})$. Thus, we say that $V$ comprises the ``pure gauge directions'' in $\A$.

Later applications, such as the study of charges and especially gluing, require us to consider so-called ``{\it field-dependent} gauge transformations". 
 Let us first provide a heuristic intuition of this concept: field-dependent gauge transformations correspond to choices of different $\xi\in\Lie(\G)$'s at different configurations $A\in\A$ (hence their ``field dependence''). Note that the definition of $\xi^\#$ \eqref{eq:hash} holds point-wise on $\A$ and can thus be canonically extended to the field-dependent case. This leads to field-dependent gauge transformations being associated to {\it generic} vertical vector fields in $V\subset \mathrm T\A$.
 
These heuristic ideas can be formalized by introducing the {\it action} (or {\it transformation}) {\it Lie algebroid} $(\mathfrak A, \cdot^\#, \A)$ associated to the action of $\G$ on $\A$ (see e.g. \cite{Fernandes}). Here, $\mathfrak A = \A \times \Lie(\G)$ is a trivial bundle on $\A$; $\xi$ is promoted to a (non-necessarily constant) section of $\mathfrak A$, i.e. 
\be
\xi \in  \Gamma(\A, \mathfrak A) \cong  \Omega^0(\A, \Lie(\G));
\ee
and the anchor $\cdot^\# : \mathfrak A \to \mathrm T\A$ is still defined through \eqref{eq:hash}.
The Lie algebroid $(\mathfrak A, \cdot^\#, \A)$ is canonically isomorphic to the Lie algebroid of the foliation $\mathcal F\subset \mathrm TM$, understood as the canonical Lie algebroid of vertical vector fields endowed with their Lie bracket.

An important formula is the isomorphism between, on one side, the Lie bracket $\lbr\cdot,\cdot\rbr_{\T\A}$ between vectors in $\T\A$ and, on the other, the action Lie algebroid bracket in $\frak A$. This isomorphism can be expressed more elementarily in terms of the Lie bracket $[\cdot,\cdot]$ of $\fG$---which is a point-wise extension of the Lie bracket on $\Lie(G)$,---according to:
\be
\lbr \xi^\# , \eta^\# \rbr_{\T\A} = \big( [\xi,\eta] + \xi^\#(\eta) - \eta^\#(\xi) \big)^\#.
\label{eq:bracket_iso}
\ee
On the right-hand side $\xi$ and $\eta$ are treated as zero-forms on $\A$ with values in $\fG$, thus  $\xi^\#(\eta) \equiv \xi^\#(\eta^\alpha)\tau_\alpha \in\fG$. 

Moreover, the formulation in terms of Lie algebroids not only allows us to formalize the notion of ``field-dependent'' gauge transformations, but also opens the door to future generalizations of our framework, e.g. general relativity in the formalism of \cite{BlohmannWeinstein11, BlohmannWeinstein18}. 

In terms of the action Lie algebroid, field-{\it in}dependent gauge transformations are constant sections in $\Gamma(\A, \mathfrak A) \cong  \Omega^0(\A, \Lie(\G))$. Introducing a formal de-Rham differential $\dd$ on $\A$, this condition reads  $\dd \xi = 0 $.
Since field-independent gauge transformations play a distinguished role in our framework, we expect that generalizations beyond the action Lie-algebroid will involve Lie algebroids equipped with a connection, i.e. $({\frak A}, \cdot^\#, \A, \bb D)$ with $\bb D : \Gamma(\A, {\frak A}) \to \Omega^1(\A)\otimes \Gamma(\A, {\frak A})$: indeed this allows to generalize the field-independence condition to $\bb D \xi = 0$ (see also \cite{KotovStrobl16b}). An action Lie algebroid like the one appearing in YM theory comes equipped with the canonical flat connection $\bb D = \dd$, $\dd^2 \equiv 0$.

Since vertical directions in $\mathrm T\A$ are identified with pure-gauge directions, the `physical' directions can be defined as those transverse to $V$. Thus, physical directions are encoded in a complementary distribution $H\subset \mathrm T\A$, $ H \oplus V = \mathrm T\A $, that we call the ``horizontal'' distribution.
The decomposition $H \oplus V = \mathrm T\A $ is however not canonically defined. 

The {\it choice} of any such decomposition that is compatible with the gauge structure of $\A$ is encoded in the choice of an Ehresmann connection on $\A$ valued in $\Lie(\G)$, that we call $\varpi$, which satisfies two compatibility conditions. 

\begin{Def}[Functional connection\footnote{Cf. \cite{kobayashivol1} for the finite dimensional case.} \cite{GomesHopfRiello}] Let
\be
\varpi \in \Omega^1(\A, \Lie(\G)),
\ee 
then $\varpi$ is said a $\G$-compatible functional connection form on $\A$, or simply a \emph{functional connection}, if it satisfies the following properties for all field-dependent gauge transformations $\xi$:
\be\label{eq:varpi_def}
\begin{dcases}
\bb i_{\xi^\#}\varpi = \xi,\\
\bb L_{\xi^\#} \varpi = [\varpi, \xi] + \dd \xi.
\end{dcases}
\ee
We will call these properties the \emph{projection} and \emph{covariance} properties, respectively.\footnote{In the non-Abelian theory, this definition is viable only within the dense subset of irreducible configurations. In the Abelian theory, this definition requires an adjustment to the definition of $\G$ with important physical consequences. Discussion of these issues is postponed until section \ref{sec:charges}.\label{fnt:reducible}}
\end{Def}

Notice that this definition demands $\varpi$ to be a local 1-form over field-space, $\A$, but says nothing on its locality properties over space, $\Sigma$. Indeed, as we will see in section \ref{sec:SdW}, $\varpi(A)$ will be a nonlocal functional of $A(x)$. We will come back on this point shortly.

Hereafter, double-struck symbols refer to geometrical objects and operations in configuration space: $\dd$ is the (formal)  field-space de Rham differential,\footnote{We prefer this notation to the more common $\delta$, because the latter is often used to indicate vectors as well as forms, hence creating possible confusions.}$^{,}$\footnote{ More concretely, given a zero-form $\mathcal S\in\Omega^0(\A)$, i.e. a functional $\mathcal S:\A \to \bb R$, and a vector field $\bb X = \int X_A \frac{\delta}{\delta A} \in \mathfrak X^1(\A)$, one has that $\bb X(\mathcal S) \equiv \bb i_{\bb X} \dd \mathcal S = \lim_{\epsilon\to0}\frac{1}{\epsilon}\big(\mathcal S(A + \epsilon X_A) - \mathcal S(A) \big)$. Hence, $\dd {\cal S}$ is the Fréchet differential of $\cal S$. In the following, we will simply assume that these differential exist for the class of vector fields we are interested in. We will not pursue functional analytic questions.\label{fn:Frechet}} $\bb i$ is the inclusion operator of  field-space vectors into  field-space forms, and $\mathbb L_\bb X$ is the  field-space Lie derivative along the vector field $\bb X\in\mathfrak X^1(\A)$. Its action on  field-space forms is given by Cartan's formula, $\bb L_{\bb X} = \bb i_{\bb X} \dd + \dd \bb i_{\bb X} $.
Finally, the curly wedge $\curlywedge$ will denote the wedge product in $\Omega^\bullet(\A)$, where $\bullet$ stands in for arbitrary degrees.

The projection property means that $\varpi$  defines a horizontal complement  $H$ to the fixed vertical space $V$,  via
\be
H := \ker (\varpi).
\label{eq:Hkervarpi}
\ee
The horizontal projector $\hat H : \T\A \to H$ is thus given by $\bb X \mapsto \hat H(\bb X):=\bb X-\varpi(\bb X)^\#$.
 See figure \ref{fig2}. 

The covariance property intertwines the action of vertical vector fields on 1-forms over $\A$ (the lhs) to the adjoint action of $\Lie(\G)$ on itself (the rhs). This condition ensures the compatibility of the above definition with the group action of $\G$ on $\A$, i.e. it embodies the covariance of $\varpi$ under gauge transformations. 
The term $\dd\xi$ on the right hand side of the covariance property is only present if $\xi$ is an infinitesimal {\it field-dependent} gauge transformation. Using Cartan's formula, its presence can be deduced from the covariance of $\varpi$ under field-independent gauge transformation and the projection property  of $\varpi$ which holds pointwise in field-space (see \cite{GomesHopfRiello}).

\begin{figure}[t]
	\begin{center}
	\includegraphics[scale=.17]{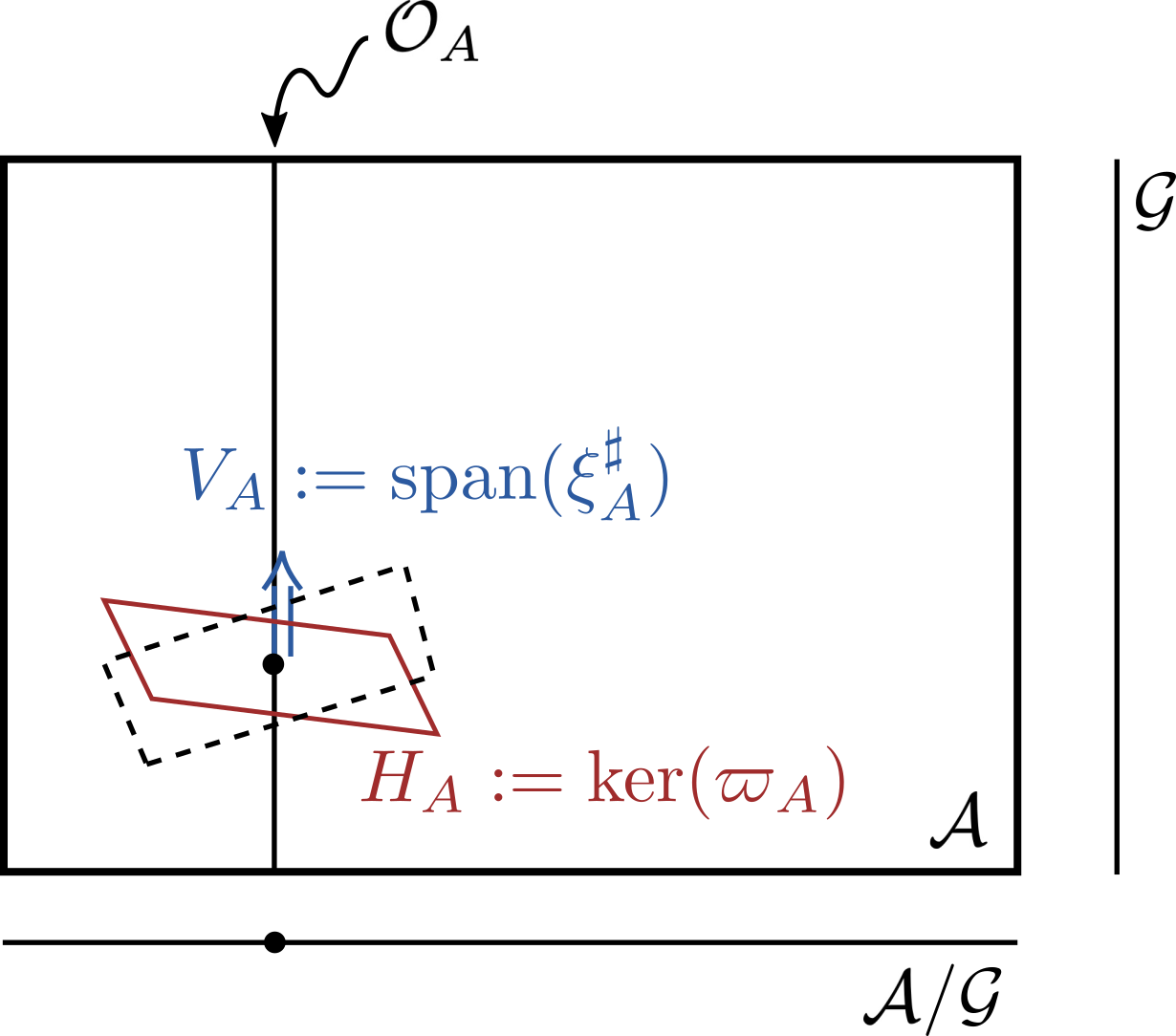}
	\caption{A pictorial representation of the split of ${\rm T}_A \A$ into a vertical subspace $V_A$ spanned by $\{\xi_A^\#, \xi\in\fG\}$ and its horizontal complement $H_A$ defined as the kernel at $A$ of a functional connection $\varpi$. With dotted lines, we represent a different choice of horizontal complement associated to a different choice of $\varpi$. }
	\label{fig2}
	\end{center}
\end{figure}

\begin{Rmk}[On nonlocality]
Since a gauge transformation transforms $A$ by a derivative of the gauge parameter, in order to satisfy the projection property, $\varpi$ must be nonlocal over $\Sigma$. Indeed, recalling that on $\T\A$, $\xi^\# = \int \D \xi \frac{\delta}{\delta A}$ \eqref{eq:hash}, the projection property $\bb i_{\xi^\#}\varpi = \xi$ can be formally re-written as $\varpi(\D \xi) = \xi$. From this perspective, $\varpi$ is morally the inverse operator to the covariant derivative $\D = \d + A$ and as such it must be an integral operator. That is, making it explicit that $\varpi$ is valued in $\Lie(\G)$, $\varpi$ is expected to be of the form
\be
\varpi^\alpha(x) = \int \d y \;\varpi^{\alpha,}{}^j_{\beta}(x,y) \dd A_j^\beta(y)
\ee
for some integral kernel $\varpi^{\alpha,}{}^j_{\beta}(x,y)$. Then the equation $\xi = \varpi(\xi^\#) $ reads: 
\be
\xi^\alpha(x) = \int \d y \;\varpi^{\alpha,}{}^j_{\beta}(x,y) \D_j\xi^\beta(y).
\ee 
In section \ref{sec:SdW}, we will introduce an explicit example of functional connection that has this form (see also section \ref{sec:dressing} for a well-known realization in electromagnetism).

Conversely, by working over the space of matter fields that transform homogeneously under gauge transformations (no derivatives involved), spatially-local functional connections can be constructed. See e.g. \cite[Sect. 7]{GomesHopfRiello}.
\end{Rmk}

Given a functional connection form satisfying \eqref{eq:varpi_def}, alongside $\dd$ we can introduce the horizontal differential, $\dd_H$ \cite{GomesRiello2016, GomesRiello2018, GomesHopfRiello}. Horizontal differentials are by definition transverse to the vertical, pure gauge, directions:

\begin{Def}[Horizontal differential]\label{def:varpi_def}
The horizontal differential $\dd_H \mu$ of a form $\mu\in\Omega^k(\A)$ is the $(k+1)$-form such that $\bb i_{\bb X} \dd_H \mu := \bb i_{\hat H(\bb X)}\dd \mu$ for all $\bb X\in\T\A$.
\end{Def}

Of course, the definition implies  $\bb i_{\xi^\#}\dd_H \mu \equiv 0$. 

The following proposition shows that a simpler, and more intuitive, characterization of $\dd_H$ in terms of a ``$\varpi$-covariant'' differential on field space can be given for horizontal differentials of {\it horizontal and equivariant} field-space forms of general degree.

For example, one could consider a $\lambda \in \Omega^k(\A)\otimes\Gamma(\Sigma, W)$ such that for all field-independent $\xi$'s ($\dd \xi =0$) satisfies (\textit{i}) $\bb i_{\xi^\#} \lambda =0$ (horizontality)  and (\textit{ii}) $\bb L_{\xi^\#} \lambda^a = - (R(\xi))^{a}{}_b \lambda^b$ (equivariance), where $(W,R)$ is a representation of $G$, and $a,b$ are indices in the vector space $W$. Then:
\begin{Prop} 
The horizontal differential of a horizontal and equivariant form $\lambda \in \Omega^k(\A)\otimes\Gamma(\Sigma, W)$ is itself horizontal and equivariant, and it is given by
\be
\dd_H \lambda^a = \dd \lambda^a + (R(\varpi))^{a}{}_b\curlywedge \lambda^b \in \Omega^{k+1}(\A)\otimes\Gamma(\Sigma,W),
\label{eq:dH_equivariant}
\ee
where $R(\varpi) \in \Omega^1(\A) \otimes\Gamma(\Sigma, \mathrm{End}(W))$ is constructed from the representation $R: \Lie(G) \to \mathrm{End}(W)$ and the connection form $\varpi\in\Omega^1(\A, \Lie(\G)) \cong \Omega^1(\A) \otimes \Gamma(\Sigma, \Lie(G))$ in the obvious way.
\end{Prop}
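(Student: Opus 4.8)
The plan is to prove the identity by a decomposition argument that never requires differentiating a field-dependent gauge parameter. Write $\beta^a := \dd\lambda^a + (R(\varpi))^a{}_b\curlywedge\lambda^b$ for the candidate right-hand side. Both $\dd_H\lambda^a$ and $\beta^a$ are field-space $(k+1)$-forms valued in $\Gamma(\Sigma,W)$, so to prove $\dd_H\lambda^a=\beta^a$ it suffices to show (i) that both are horizontal and (ii) that they agree whenever all their arguments are horizontal: indeed a horizontal form is completely determined by its values on horizontal vector fields, since contracting any slot with a vertical vector yields zero. The point of organising the proof this way is that horizontality is tested by contraction against field-\emph{independent} verticals $\eta^\#$ (these already span $V$ pointwise), while the horizontal-argument check is purely algebraic; thus the field-dependence of $\varpi(\bb X)$ --- the one genuine subtlety --- never enters.

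First I would check that $\beta$ is horizontal. Contracting with $\eta^\#$ for field-independent $\eta$ ($\dd\eta=0$) and using Cartan's formula $\bb i_{\eta^\#}\dd\lambda^a = \bb L_{\eta^\#}\lambda^a - \dd\,\bb i_{\eta^\#}\lambda^a$, horizontality of $\lambda$ kills the second term and equivariance gives $\bb i_{\eta^\#}\dd\lambda^a = -(R(\eta))^a{}_b\lambda^b$. For the connection term, $\bb i_{\eta^\#}\big[(R(\varpi))^a{}_b\curlywedge\lambda^b\big] = (R(\bb i_{\eta^\#}\varpi))^a{}_b\lambda^b - (R(\varpi))^a{}_b\curlywedge\bb i_{\eta^\#}\lambda^b$; the projection property $\bb i_{\eta^\#}\varpi=\eta$ and horizontality of $\lambda$ reduce this to $+(R(\eta))^a{}_b\lambda^b$. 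The two contributions cancel, so $\bb i_{\eta^\#}\beta^a=0$; since the $\eta^\#$ span $V$ at every point, $\beta$ is horizontal. The form $\dd_H\lambda$ is horizontal by construction (its defining property gives $\bb i_{\xi^\#}\dd_H\lambda\equiv0$, equivalently $\dd_H\lambda=\hat H^*\dd\lambda$).

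Next I would evaluate both forms on an arbitrary all-horizontal tuple $\bb Y_0,\dots,\bb Y_k\in H=\ker\varpi$. Since $\hat H$ acts as the identity on $H$, the defining relation gives $\dd_H\lambda^a(\bb Y_0,\dots,\bb Y_k)=\dd\lambda^a(\bb Y_0,\dots,\bb Y_k)$. On the same tuple every term in the expansion of $(R(\varpi))^a{}_b\curlywedge\lambda^b$ carries a factor $\varpi(\bb Y_i)=0$, so the connection term drops and $\beta^a(\bb Y_0,\dots,\bb Y_k)=\dd\lambda^a(\bb Y_0,\dots,\bb Y_k)$ as well. Hence $\dd_H\lambda^a$ and $\beta^a$ are two horizontal forms agreeing on horizontal arguments, and therefore equal; this simultaneously establishes the formula and the horizontality of $\dd_H\lambda$.

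Finally, for equivariance I would apply $\bb L_{\eta^\#}$, $\eta$ field-independent, to the established formula. Using $[\bb L_{\eta^\#},\dd]=0$ and equivariance of $\lambda$ gives $\bb L_{\eta^\#}\dd\lambda^a=-(R(\eta))^a{}_b\dd\lambda^b$; for the connection term, the covariance property reduces to $\bb L_{\eta^\#}\varpi=[\varpi,\eta]$ (the $\dd\eta$ term vanishes), and since $R$ is a Lie-algebra homomorphism, $R([\varpi,\eta])=[R(\varpi),R(\eta)]$. Expanding $\bb L_{\eta^\#}\big[(R(\varpi))^a{}_b\curlywedge\lambda^b\big]$ by Leibniz, the term $(R(\varpi)R(\eta))^a{}_b\curlywedge\lambda^b$ coming from the commutator cancels against the term produced by $\bb L_{\eta^\#}\lambda^b=-(R(\eta))^b{}_c\lambda^c$ after relabelling the $W$-indices, leaving $-(R(\eta))^a{}_c\big[(R(\varpi))^c{}_b\curlywedge\lambda^b\big]$. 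Collecting the two pieces yields $\bb L_{\eta^\#}\dd_H\lambda^a=-(R(\eta))^a{}_b\,\dd_H\lambda^b$, i.e. $\dd_H\lambda$ is equivariant. The main obstacle to watch is precisely the field-dependence flagged above: a more direct proof contracting the claimed identity with a single arbitrary $\bb X$ would require evaluating $\bb L_{\varpi(\bb X)^\#}\lambda$ for the field-dependent parameter $\varpi(\bb X)$, whose Lie derivative carries an extra $\dd$-type correction (mirroring the inhomogeneous term in the covariance property); the horizontal/vertical split above is designed to bypass this.
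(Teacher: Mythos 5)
Your proof is correct and, modulo organization, is exactly the argument the paper gestures at: its one-line proof invokes the projection and covariance properties \eqref{eq:varpi_def}, the horizontality and equivariance of $\lambda$, and the Cartan/wedge calculus, and these are precisely the ingredients of your contractions with field-independent $\eta^\#$. Your device of establishing the identity by checking horizontality of both sides and agreement on all-horizontal tuples (rather than contracting with an arbitrary $\bb X$, which would require Lie-deriving along the field-dependent parameter $\varpi(\bb X)$) is a clean way of sidestepping the $\dd\xi$ correction, and every individual step---including the cancellation of the $R(\varpi)R(\eta)$ terms between $R([\varpi,\eta])\curlywedge\lambda$ and $R(\varpi)\curlywedge\bb L_{\eta^\#}\lambda$ in the equivariance check---is sound.
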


\begin{proof}
The proposition is a straightforward application of \eqref{eq:varpi_def}, the properties of $\lambda$, and the anticommutativity of $\dd$ and $\curlywedge$; see \cite{GomesRiello2016}. 
\end{proof}

 The all-important horizontal differential of $A_i$, seen as a ``coordinate'' map from $\A$ to $\Omega^1(\Sigma,\Lie(G))$ is characterized by the following: 
\begin{Prop} 
The horizontal differential of $A_i$ is given by
\be\label{eq:dH}
\dd_H A_i = \dd A_i - \D_i \varpi,
\ee
and it is equivariant under any (possibly field-dependent) gauge transformation, that is
\be
\bb L_{\xi^\#} \dd_H A_i = [\xi, \dd_H A_i] .
\label{eq:LddHA}
\ee
\end{Prop}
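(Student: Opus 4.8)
The plan is to prove \eqref{eq:dH} first and then to deduce the equivariance \eqref{eq:LddHA} from it. For \eqref{eq:dH} I would unfold Definition~\ref{def:varpi_def}: for an arbitrary $\bb X\in\T\A$, writing $\hat H(\bb X)=\bb X-\varpi(\bb X)^\#$, one has $\bb i_{\bb X}\dd_H A_i=\bb i_{\hat H(\bb X)}\dd A_i=\bb i_{\bb X}\dd A_i-\bb i_{\varpi(\bb X)^\#}\dd A_i$. The key input is that contracting $\dd A_i$ with a vertical vector reproduces a covariant derivative: by \eqref{eq:hash}, for any (possibly field-dependent) gauge parameter $\eta$ one has $\eta^\#=\int (\D_j\eta)\,\tfrac{\delta}{\delta A_j}$, hence $\bb i_{\eta^\#}\dd A_i=\D_i\eta$. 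Applying this with $\eta=\varpi(\bb X)=\bb i_{\bb X}\varpi$, and noting that $\D_i$ acts only through the spatial derivative and the base point $A_i$ (a $0$-form), so that it commutes with the field-space contraction $\bb i_{\bb X}$, gives $\bb i_{\varpi(\bb X)^\#}\dd A_i=\D_i(\bb i_{\bb X}\varpi)=\bb i_{\bb X}(\D_i\varpi)$. As $\bb X$ is arbitrary, this yields $\dd_H A_i=\dd A_i-\D_i\varpi$. A quick consistency check is horizontality: contracting with $\xi^\#$ and using the projection property $\bb i_{\xi^\#}\varpi=\xi$ from \eqref{eq:varpi_def} gives $\bb i_{\xi^\#}\dd_H A_i=\D_i\xi-\D_i\xi=0$.

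For the equivariance \eqref{eq:LddHA} I would apply $\bb L_{\xi^\#}$ to \eqref{eq:dH} and show that all inhomogeneous (i.e. $\dd\xi$-dependent) pieces cancel. For the first term I would use that the field-space Lie derivative commutes with $\dd$, so $\bb L_{\xi^\#}\dd A_i=\dd(\bb L_{\xi^\#}A_i)=\dd(\D_i\xi)$; the graded Leibniz rule turns this into $\D_i(\dd\xi)$ plus a term linear in $\dd A_i$. For the second term I would expand $\bb L_{\xi^\#}(\D_i\varpi)$ using that $\bb L_{\xi^\#}$ is an even derivation commuting with $\pp_i$, together with $\bb L_{\xi^\#}A_i=\D_i\xi$ and the covariance property $\bb L_{\xi^\#}\varpi=[\varpi,\xi]+\dd\xi$ of \eqref{eq:varpi_def}. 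The $\dd\xi$ appearing in the covariance property is exactly what is required: the $\D_i(\dd\xi)$ generated by the first term cancels against the $\pp_i(\dd\xi)$ and $[A_i,\dd\xi]$ terms generated by the $\dd\xi$ in the covariance property, so every $\dd\xi$-dependent contribution drops out. What remains is homogeneous and, after reorganizing the nested brackets, reassembles into the adjoint action on $\dd_H A_i=\dd A_i-\D_i\varpi$, giving \eqref{eq:LddHA}.

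The only genuine difficulty is the bookkeeping. One must consistently track the graded signs (recall $\xi$ is a $0$-form while $\varpi$ and $\dd A_i$ are $1$-forms on $\A$) and repeatedly invoke the graded Jacobi identity to collapse the nested brackets---$[A_i,[\varpi,\xi]]$, $[\D_i\xi,\varpi]$ and $[\dd A_i,\xi]$---into the single adjoint term. Conceptually, this cancellation is precisely the assertion that, in contrast to $A_i$ (which transforms inhomogeneously as $\bb L_{\xi^\#}A_i=\D_i\xi$), the object $\dd_H A_i$ is a genuine adjoint-valued horizontal tensor on $\A$: the $-\D_i\varpi$ subtraction is engineered, through the covariance of $\varpi$, to absorb the inhomogeneous part of $\dd A_i$. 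As a final step I would cross-check the resulting sign against the adjoint-equivariance convention fixed in \eqref{eq:dH_equivariant}, since that is where a bracket-ordering slip is most likely to hide.
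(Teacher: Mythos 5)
Your proposal is correct and takes essentially the same route as the paper, whose entire proof is the remark that both statements ``can be easily checked using \eqref{eq:varpi_def}'': unfolding the definition of $\dd_H$ via $\hat H(\bb X)=\bb X-\varpi(\bb X)^\#$ together with $\bb i_{\eta^\#}\dd A_i=\D_i\eta$ gives \eqref{eq:dH}, and Lie-deriving it with the projection and covariance properties gives \eqref{eq:LddHA}, which is exactly the intended check. The cancellation you describe does go through---$\dd(\D_i\xi)$ produces the $\D_i(\dd\xi)$ that cancels against the $\pp_i(\dd\xi)+[A_i,\dd\xi]$ coming from the $\dd\xi$ term in $\bb L_{\xi^\#}\varpi$---and your closing caveat about cross-checking the bracket ordering in \eqref{eq:LddHA} against the convention of \eqref{eq:dH_equivariant} is well placed.
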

\begin{proof}
These two statements can be easily checked using \eqref{eq:varpi_def}. 
\end{proof}

A central property of the horizontal distribution $H\subset \T\A$ is its anholonomicity, i.e. its non-integrability in the sense of Frobenius theorem---figure \ref{fig-anholo}. As standard, this is characterized by  failure of the Lie bracket between two horizontal vector fields to be   itself horizontal. Thanks to the projection property of $\varpi$, this quantity can be encoded in the following definition:

\begin{figure}[t]
	\begin{center}
	\includegraphics[scale=.17]{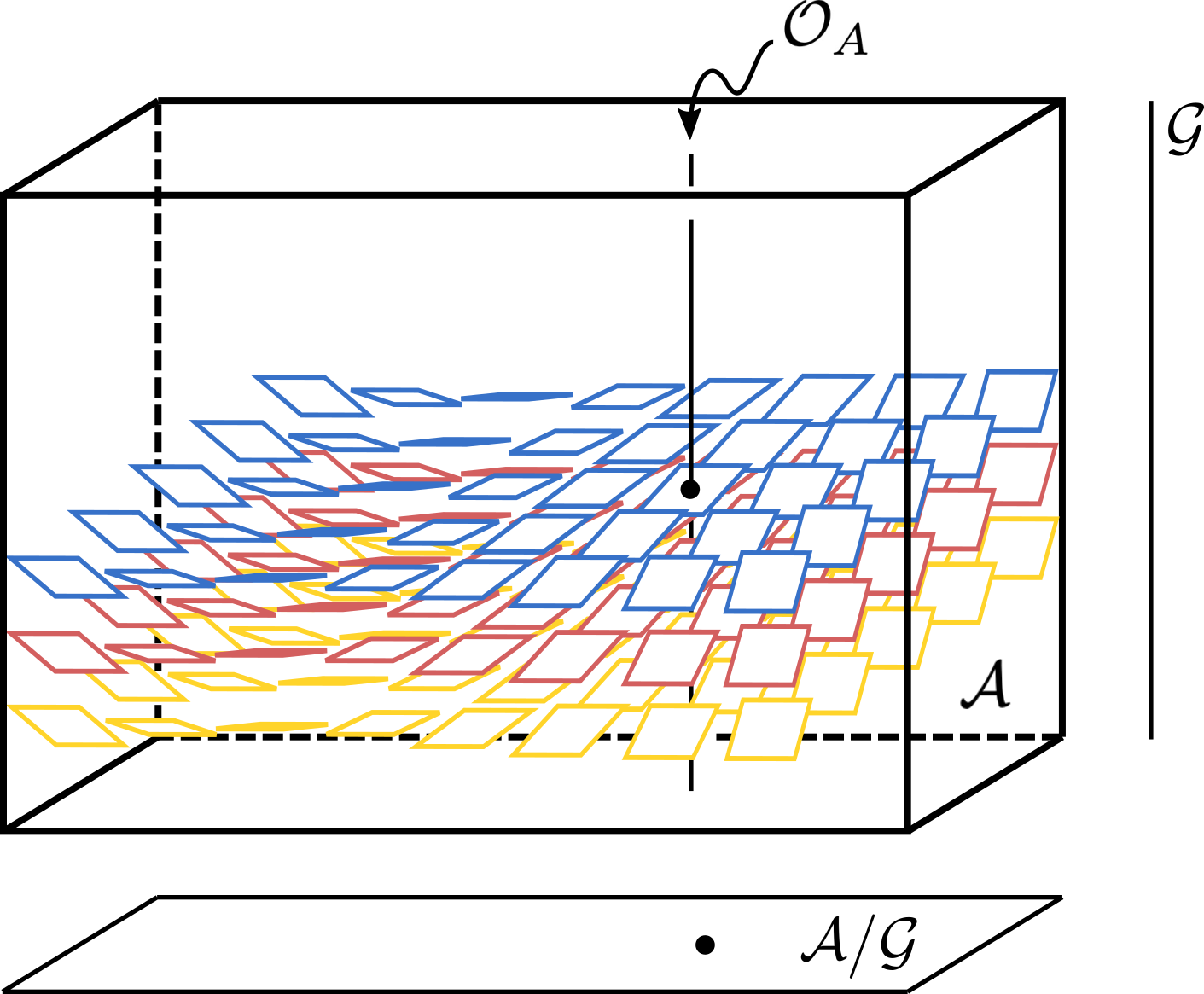}
	\caption{Pictorial representation of anholonomic horizontal plances in $\cal A$, corresponding to a non-vanishing curvature $\bb F\neq0$. }
	\label{fig-anholo}
	\end{center}
\end{figure}

\begin{Def}[Functional curvature \cite{Singer:1981xw, GomesHopfRiello}]
Given a functional connection $\varpi$, the anholonomicity of the associated horizontal distribution $H_\varpi=\ker(\varpi)\subset\T\A$ as quantified by the functional two-form 
\be
\bb F_\varpi := \varpi\big( \big\lbr \hat H(\cdot), \hat H(\cdot) \big\rbr \big) \in \Omega^2(\A,\Lie(\G))
\label{eq:FFunholo}
\ee 
is called the \emph{functional curvature} of the functional connection $\varpi$. The subscript $\bullet_\varpi$ will generally be omitted.
\end{Def}

 As standard in the theory of principal fibre bundles, the curvature of $\varpi$ satisfies the following properties

\begin{Prop}
The curvature $\bb F$ of $\varpi$ is horizontal $\bb i_{\xi^\#} \bb F \equiv 0$, equivariant $\bb L_{\xi^\#} \bb F = [\bb F, \xi]$, and its horizontal differential satisfies the algebraic Bianchi identity
$
\dd_H \bb F \equiv 0.
$ 
Moreover, $\bb F$ can be expressed as
\be
\bb F =\dd_H\varpi \equiv \dd \varpi + \tfrac12[\varpi \stackrel{\curlywedge}{,}\varpi] .
\label{eq:FF}
\ee
\end{Prop}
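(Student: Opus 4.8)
The plan is to establish the four assertions in a logical cascade, each resting on the previous one: first horizontality, then the Cartan structure equation \eqref{eq:FF}, then equivariance, and finally the algebraic Bianchi identity. Horizontality is immediate from the projection property in \eqref{eq:varpi_def}: since $\varpi(\xi^\#)=\xi$, the horizontal projector annihilates vertical vectors, $\hat H(\xi^\#)=\xi^\#-\varpi(\xi^\#)^\#=0$, and inserting $\xi^\#$ into either slot of the defining expression \eqref{eq:FFunholo} gives $\bb i_{\xi^\#}\bb F=\varpi(\lbr\hat H(\xi^\#),\hat H(\cdot)\rbr)=\varpi(\lbr 0,\hat H(\cdot)\rbr)=0$.

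For the structure equation I would argue that, since both $\bb F$ and $\dd\varpi+\tfrac12[\varpi\stackrel{\curlywedge}{,}\varpi]$ are $\Lie(\G)$-valued two-forms on $\A$, it suffices to check their equality on an arbitrary pair of vector fields, which by bilinearity and antisymmetry splits into three cases according to whether each argument is purely horizontal ($\bb X^H$, with $\varpi(\bb X^H)=0$) or purely vertical ($\xi^\#$). I would evaluate $\dd\varpi$ through the Koszul formula $\dd\varpi(\bb X,\bb Y)=\bb X(\varpi(\bb Y))-\bb Y(\varpi(\bb X))-\varpi(\lbr\bb X,\bb Y\rbr)$ together with $\tfrac12[\varpi\stackrel{\curlywedge}{,}\varpi](\bb X,\bb Y)=[\varpi(\bb X),\varpi(\bb Y)]$. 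On two horizontal vectors only $-\varpi(\lbr\bb X^H,\bb Y^H\rbr)$ survives, reproducing \eqref{eq:FFunholo}; on a horizontal--vertical pair the terms conspire to vanish once the covariance property of \eqref{eq:varpi_def} is used to rewrite $\varpi(\lbr\bb X^H,\eta^\#\rbr)$, consistently with the horizontality just proven; and on two vertical vectors the bracket isomorphism \eqref{eq:bracket_iso} together with the projection property gives $\varpi(\lbr\xi^\#,\eta^\#\rbr)=[\xi,\eta]+\xi^\#(\eta)-\eta^\#(\xi)$, which cancels against $[\varpi(\xi^\#),\varpi(\eta^\#)]=[\xi,\eta]$ and the derivative terms. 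The identity $\bb F=\dd_H\varpi$ then follows from the definition of the horizontal differential, since $\dd_H\varpi(\bb X,\bb Y)=\dd\varpi(\hat H\bb X,\hat H\bb Y)$ coincides with the structure equation on all three cases. This step is the crux of the proof, and the main obstacle is bookkeeping: the three cases must be handled keeping the paper's sign conventions for $\dd$, the functional bracket $\lbr\cdot,\cdot\rbr_{\T\A}$, and the curly wedge fixed throughout, and the vertical and mixed cases genuinely require both the inhomogeneous $\dd\xi$ term of the covariance property and the algebroid bracket \eqref{eq:bracket_iso}.

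Equivariance then follows by applying $\bb L_{\xi^\#}$ to the structure equation \eqref{eq:FF} and using that the field-space Lie derivative commutes with $\dd$. With the covariance property $\bb L_{\xi^\#}\varpi=[\varpi,\xi]+\dd\xi$ and $\dd^2=0$ one gets $\bb L_{\xi^\#}\dd\varpi=\dd([\varpi,\xi]+\dd\xi)=\dd[\varpi,\xi]$, while $\bb L_{\xi^\#}\tfrac12[\varpi\stackrel{\curlywedge}{,}\varpi]=[([\varpi,\xi]+\dd\xi)\stackrel{\curlywedge}{,}\varpi]$. Expanding these with the graded Leibniz rule for $\dd$ and the Jacobi identity, the $\dd\xi$ contributions cancel and the remainder reorganizes into $[\bb F,\xi]$, as claimed.

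Finally, for the Bianchi identity I would invoke that $\bb F$ is now known to be horizontal and equivariant in the adjoint representation, so that \eqref{eq:dH_equivariant} applies with $R=\Ad$, giving $\dd_H\bb F=\dd\bb F+[\varpi\stackrel{\curlywedge}{,}\bb F]$. Substituting the structure equation and using $\dd^2=0$ with the graded Leibniz rule yields $\dd\bb F=\tfrac12\dd[\varpi\stackrel{\curlywedge}{,}\varpi]=-[\varpi\stackrel{\curlywedge}{,}\dd\varpi]$, whence $\dd_H\bb F=-[\varpi\stackrel{\curlywedge}{,}\dd\varpi]+[\varpi\stackrel{\curlywedge}{,}\dd\varpi]+\tfrac12[\varpi\stackrel{\curlywedge}{,}[\varpi\stackrel{\curlywedge}{,}\varpi]]$, and the surviving term vanishes by the graded Jacobi identity $[\varpi\stackrel{\curlywedge}{,}[\varpi\stackrel{\curlywedge}{,}\varpi]]=0$. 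Hence $\dd_H\bb F\equiv0$, completing the argument.
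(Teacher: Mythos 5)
Your proof is correct and fills in precisely the route the paper's proof sketches: horizontality read off from the projection property, the structure equation \eqref{eq:FF} checked against \eqref{eq:FFunholo} case by case on horizontal/vertical pairs using \eqref{eq:varpi_def}, \eqref{eq:Hkervarpi}, \eqref{eq:bracket_iso} and Cartan calculus, and then equivariance and the Bianchi identity by direct computation from \eqref{eq:FF}. One caveat on signs: with the Koszul formula exactly as you state it, the horizontal--horizontal case yields $-\varpi(\lbr \hat H(\bb X),\hat H(\bb Y)\rbr)$ rather than the $+$ sign of \eqref{eq:FFunholo}, but this overall-sign mismatch is a convention ambiguity already latent in the paper's own pair of definitions (your vertical--vertical and mixed cases, where the two sides both vanish, are unaffected), not a flaw in your argument.
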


\begin{proof}
Horizontality is manifest from the definition of $\bb F$.
The equivalence between the definitions \eqref{eq:FFunholo} and the expressions of \eqref{eq:FF} is standard and can be checked using \eqref{eq:varpi_def}, \eqref{eq:Hkervarpi} and Cartan's calculus.\footnote{See e.g. \cite[Sect. 4.2]{GomesHopfRiello}.}  Once the right-most formula of \eqref{eq:FF} has been established, the other properties can be checked by direct computation.
\end{proof}

We conclude this section with a (new) simple proposition which will help us clarify the relationship between $\varpi$ and gauge fixings in section \ref{sec:symred}.

\begin{Lemma}[On exact connection forms]\label{Lemma:exactvarpi}
The functional connection $\varpi$ is exact, i.e.  $\varpi = \dd \varsigma$ for some $\varsigma\in\Omega^0(\A,\fG)$, if and only if $G$ is Abelian and $\varpi$ is flat.
\end{Lemma}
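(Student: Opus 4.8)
The plan is to establish the two implications separately, using only the projection and covariance properties collected in \eqref{eq:varpi_def}, the curvature formula \eqref{eq:FF}, and the fact that $\A=\Omega^1(R,\Lie(G))$ is an affine—indeed linear—space. The forward direction will be pure Cartan calculus, while the converse will be a Poincaré-lemma argument.

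For the forward implication I would assume $\varpi=\dd\varsigma$ and feed this into both conditions of \eqref{eq:varpi_def}. Since $\varsigma$ is a $0$-form, Cartan's formula together with the projection property gives $\bb L_{\xi^\#}\varsigma=\bb i_{\xi^\#}\dd\varsigma=\bb i_{\xi^\#}\varpi=\xi$ for every $\xi$. Because $\bb L_{\xi^\#}$ commutes with $\dd$, I then obtain $\bb L_{\xi^\#}\varpi=\bb L_{\xi^\#}\dd\varsigma=\dd\xi$. Comparing this with the covariance property $\bb L_{\xi^\#}\varpi=[\varpi,\xi]+\dd\xi$ makes the $\dd\xi$ terms cancel and forces $[\varpi,\xi]=0$ for every $\xi$. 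Contracting this identity with an arbitrary $\eta^\#$ and invoking the projection property once more yields $[\eta,\xi]=\bb i_{\eta^\#}[\varpi,\xi]=0$ for all $\eta,\xi\in\fG$, so $\fG$, and hence $G$, is Abelian. Flatness is then immediate: $\dd\varpi=\dd\dd\varsigma=0$, while the Abelian condition annihilates the bracket term in \eqref{eq:FF}, whence $\bb F=\dd\varpi+\tfrac12[\varpi\stackrel{\curlywedge}{,}\varpi]=0$.

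For the converse I would assume $G$ Abelian and $\varpi$ flat. In the Abelian case the bracket term in \eqref{eq:FF} vanishes identically, so flatness reduces to $\dd\varpi=0$; that is, $\varpi$ is a closed $\fG$-valued $1$-form on $\A$. Since $\A$ is a linear space it is contractible, and the Poincaré lemma—applied componentwise in the \emph{fixed} vector space $\fG$—produces a primitive $\varsigma\in\Omega^0(\A,\fG)$ with $\dd\varsigma=\varpi$. Concretely one can exhibit $\varsigma$ through the standard radial homotopy operator contracting $\A$ onto the origin, which makes the statement explicit rather than relying on abstract de Rham theory.

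The only genuinely delicate point is this last appeal to the Poincaré lemma in infinite dimensions. However, because $\A$ is literally a vector space the contracting homotopy is completely explicit, and in the formal, algebraic setting adopted throughout the paper this is enough. Everywhere else the argument is bookkeeping with Cartan's calculus and the graded bracket; I would deliberately route the forward direction through the covariance property precisely to avoid the sign-sensitive expansion of $[\varpi\stackrel{\curlywedge}{,}\varpi]$.
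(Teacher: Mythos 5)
Your proof is correct and follows essentially the same route as the paper's: the exactness-implies-Abelian direction via Cartan's formula, $\bb L_{\xi^\#}\varsigma=\xi$, comparison with the covariance property to force $[\varpi,\xi]=0$, and contraction with $\eta^\#$; and the converse via \eqref{eq:FF} together with the contractibility (affine nature) of $\A$. Your explicit mention of the radial homotopy operator merely spells out what the paper compresses into the phrase ``the affine nature of $\A$''.
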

\begin{proof}
If $G$ is Abelian, it follows from \eqref{eq:FF} and the affine nature of $\A$ that $\varpi$ is exact if and only if $\varpi$ is flat. 
Conversely, assume that $\varpi = \dd \varsigma$. Then, through Cartan's formula, the projection property \eqref{eq:varpi_def} implies
\be
 \bb L_{\xi^\#}\varsigma = \bb i_{\xi^\#} \varpi  = \xi
 \label{eq:varsigma}
\ee
for all $\xi\in\Omega^0(\A,\fG)$.
From this,  $\bb L_{\xi^\#} \varpi = \bb L_{\xi^\#} \dd \varsigma = \dd \bb L_{\xi^\#} \varsigma  = \dd \xi$. Comparing this formula with the second of \eqref{eq:varpi_def}, it follows that for all $\xi$, $[\xi,\varpi]=0$. By contracting with an arbitrary $\eta^\#$ and using again the projection property, one concludes that $G$ is Abelian.
\end{proof}

\subsection{Metric structure on $\A$ and the Singer-DeWitt connection}\label{sec:SdW}

Consider a positive-definite (super)metric on $\A$, i.e. $\bb G \in \Gamma( \T^*\A \otimes_S \T^*\A)$, with $\otimes_S$ standing for the symmetric part of the tensor product.
Through such a metric one can fix a notion of horizontality via the condition of orthogonality to the vertical foliation ${\cal F} \subset  \A$:
\be
H_{\bb G} := (\T {\cal F})^\perp \equiv V^\perp.
\ee
The question is whether such a notion of horizontality can be encoded in a connection form, i.e. if it is gauge-covariant along the orbits. In \cite{GomesHopfRiello}, we showed that this is the case if and only if $\bb G$ is gauge compatible in the following sense:\footnote{Notice that this notion of gauge-compatibility for the supermetric is different from that for a ``bundle-like'' metric common in the mathematical literature (e.g. as a sufficient condition for the existence of Ehresmann connections \cite{BlumenthalHebda84, Koike99}). The bundle-like condition can be written without reference to field-independent vertical vectors and involves the inner product of two horizontal vectors, rather than of one vertical and one horizontal vector. Although we won't make use of it, we write here, as a reference, the the bundle-like condition in our (infinite dimensional) notation: $(\bb L_{\eta^\#} \bb G)(\bb h, \bb h') = 0$ for all $\eta^\# \in V$ and $\bb h, \bb h'\in H_{\bb G}$. \label{fnt:bundlelike}}

\begin{Def}[Gauge compatible supermetric]
A supermetric $\bb G \in \Gamma( \T^*\A \otimes_S \T^*\A)$ is said \emph{gauge compatible} if
\be
(\bb L_{\xi^\#} \bb G)( \eta^\# , \bb h) = 0 \qquad (\dd \xi = 0)
\label{eq:GGcov}
\ee
holds  for all gauge-{\it in}dependent\footnote{Notice that this condition requires a notion of field-independence for the $\xi$'s which is automatic in the YM context (which is described by an action Lie algebroid), but might not be obvious in the context of a more general Lie-algebroid over some configuration space). Cf. footnote \ref{fnt:LG=0}.} gauge transformation $\xi\in\fG$, all arbitrary vertical vectors $\eta^\#\in V$, and arbitrary horizontal vectors $ \bb h \in H_{\bb G}$.
\end{Def}

\begin{Prop}
Let $\bb G$ be a gauge compatible supermetric. Then the following equation implicitly defines a $\varpi_{\bb G}$ satisfying the defining properties \eqref{eq:varpi_def},
\be
\bb G (\xi^\#, \bb X - \varpi_{\bb G}^\#(\bb X) ) \stackrel{!}{=} 0 \quad \forall \xi,\bb X.
\label{eq:GGvarpi}
\ee
\end{Prop}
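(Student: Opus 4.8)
The plan is to read \eqref{eq:GGvarpi} as an implicit definition, establish that it determines a unique $\fG$-valued one-form, and then verify the two defining properties \eqref{eq:varpi_def} in turn, deriving the covariance property from the gauge-compatibility \eqref{eq:GGcov}. First I would fix a configuration and introduce the (field-dependent) bilinear form on $\fG$ given by $\gamma(\xi,\eta):=\bb G(\xi^\#,\eta^\#)$. Restricting to the dense set of irreducible configurations (cf. footnote \ref{fnt:reducible}), the anchor $\xi\mapsto\xi^\#$ is injective---its kernel consists of the covariantly constant reducibility parameters---so positive-definiteness of $\bb G$ makes $\gamma$ positive-definite, hence invertible. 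Rewriting \eqref{eq:GGvarpi} as $\gamma\big(\xi,\varpi_{\bb G}(\bb X)\big)=\bb G(\xi^\#,\bb X)$ for all $\xi$, the right-hand side is a linear functional of $\xi$, so $\varpi_{\bb G}(\bb X)=\gamma^{-1}\big(\bb G(\cdot^\#,\bb X)\big)$ is uniquely determined and linear in $\bb X$; this is the sought one-form. The projection property is then immediate: setting $\bb X=\eta^\#$, the vector $\hat H(\eta^\#)=\big(\eta-\varpi_{\bb G}(\eta^\#)\big)^\#$ is vertical and, by \eqref{eq:GGvarpi}, orthogonal to all of $V$, so positive-definiteness forces it to vanish and injectivity of $\#$ gives $\varpi_{\bb G}(\eta^\#)=\eta$.

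For the covariance property it suffices to treat field-independent $\xi$ ($\dd\xi=0$); the extra $\dd\xi$ term for field-dependent parameters then follows from Cartan's formula together with the pointwise projection property, exactly as recalled after \eqref{eq:varpi_def}. The key step is to differentiate the defining relation along $\xi^\#$. Writing $\hat H(\bb X)=\bb X-\varpi_{\bb G}(\bb X)^\#\in H_{\bb G}=V^\perp$ and applying $\bb L_{\xi^\#}$ to the identity $\bb G(\eta^\#,\hat H(\bb X))=0$ (with $\eta,\bb X$ field-independent), the term carrying $\bb L_{\xi^\#}\bb G$ dies by gauge-compatibility \eqref{eq:GGcov} since $\hat H(\bb X)\in H_{\bb G}$, while the term with $\bb L_{\xi^\#}\eta^\#=[\xi,\eta]^\#$ dies because $[\xi,\eta]^\#$ is vertical and $\hat H(\bb X)$ horizontal. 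What survives is $\bb G\big(\eta^\#,\bb L_{\xi^\#}\hat H(\bb X)\big)=0$ for all field-independent $\eta$, i.e. the horizontal distribution is $\bb L_{\xi^\#}$-invariant, $\bb L_{\xi^\#}\hat H(\bb X)\in V^\perp$.

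I would then convert this equivariance of $H$ into covariance of $\varpi_{\bb G}$. Expanding $\bb L_{\xi^\#}\hat H(\bb X)=\lbr\xi^\#,\bb X\rbr-\bb L_{\xi^\#}\big(\varpi_{\bb G}(\bb X)^\#\big)$ and evaluating the last Lie derivative through the bracket isomorphism \eqref{eq:bracket_iso}---using $\dd\xi=0$ and the Cartan identity $\xi^\#\big(\varpi_{\bb G}(\bb X)\big)=(\bb L_{\xi^\#}\varpi_{\bb G})(\bb X)+\varpi_{\bb G}(\lbr\xi^\#,\bb X\rbr)$---the horizontal contributions recombine into $\hat H(\lbr\xi^\#,\bb X\rbr)$, leaving $\bb L_{\xi^\#}\hat H(\bb X)-\hat H(\lbr\xi^\#,\bb X\rbr)=-\big([\xi,\varpi_{\bb G}(\bb X)]+(\bb L_{\xi^\#}\varpi_{\bb G})(\bb X)\big)^\#$. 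The left-hand side is horizontal (both terms lie in $V^\perp$) while the right-hand side is manifestly vertical; since $V\cap V^\perp=\{0\}$ by positive-definiteness, both sides vanish, and injectivity of $\#$ yields $(\bb L_{\xi^\#}\varpi_{\bb G})(\bb X)=[\varpi_{\bb G}(\bb X),\xi]$ for all $\bb X$, which is the covariance property.

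I expect the main obstacle to be precisely this last bookkeeping step: differentiating the composite vertical field $\varpi_{\bb G}(\bb X)^\#$---whose $\fG$-value is itself field-dependent---correctly via \eqref{eq:bracket_iso}, and tracking which pieces are horizontal versus vertical so that the $V\cap V^\perp=\{0\}$ argument applies. The only other point requiring care is the standing irreducibility assumption, which is exactly what guarantees injectivity of $\#$ and hence both the invertibility of $\gamma$ in the first step and the final extraction of $(\bb L_{\xi^\#}\varpi_{\bb G})(\bb X)$ from its image under $\#$.
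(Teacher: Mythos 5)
Your proof is correct and follows essentially the route the paper delegates to \cite{GomesHopfRiello}: the defining orthogonality relation \eqref{eq:GGvarpi} is inverted through the (formally invertible, at irreducible configurations) vertical Gram form $\gamma(\xi,\eta)=\bb G(\xi^\#,\eta^\#)$, the projection property follows from positive-definiteness and injectivity of $\cdot^\#$, and covariance follows by Lie-deriving the orthogonality relation, killing the $\bb L_{\xi^\#}\bb G$ term via \eqref{eq:GGcov} and separating horizontal from vertical pieces with $V\cap V^\perp=\{0\}$. Your bookkeeping for the field-dependent value $\varpi_{\bb G}(\bb X)$ through the bracket isomorphism \eqref{eq:bracket_iso}, together with the reduction of field-dependent $\xi$ to the $\dd\xi=0$ case via Cartan's formula, matches the paper's conventions, so no gaps remain at the paper's declared formal level of rigour.
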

\begin{proof}
See \cite[ Section 4.1]{GomesHopfRiello}.
\end{proof}

In YM theory, a most natural choice of supermetric is given by inspecting its second-order Lagrangian, and in particular its kinetic term. 
In temporal gauge, on the $(D+1)$-dimensional spacetime $M \cong \Sigma\times \bb R$, this is $L = K - U$ with potential 
\be
U = \tfrac{1}{4} \int_{ \Sigma} \d^D x \sqrt{g} \, g^{ii'} g^{jj'}  \tr( F_{ij} F_{i'j'}),
\ee
where $F_{ij} = 2\pp_{[i} A_{j]} + [A_i,A_j]$, and with kinetic term
\be
K = \tfrac12 \int_{ \Sigma} \d^D x \sqrt{g} \, g^{ij} \tr( \dot A_i \dot A_j) = \tfrac12 \bb G(\bbv,\bbv).
\ee
In the last term we have introduced the velocity vector\footnote{Notice that the dot is just a notational device and does not stand here for any time derivative: on par to the momentum $E$ here the velocity $\dot A$ is an independent quantity relatively to $A$.} $\dot{\bb A} = \int \dot A \frac{\delta}{\delta A} \in \T\A$, as well as the kinetic supermetric $\bb G$:

\begin{Def}[Kinetic supermetric]
On the  quasilocal configuration space of YM theory $\A$, the \emph{kinetic supermetric} is defined as
\be
\bb G(\bb X, \bb Y) := \int_R \d^D x \sqrt{g}\, g^{ij} \tr( \bb X_i \bb Y_j) \qquad \forall \bb X, \bb Y\in\mathrm T\A.
\label{eq:GG}
\ee
From now on the symbol $\bb G$ will refer exclusively to the kinetic supermetric \eqref{eq:GG}.
\end{Def}

It is then straightforward to prove that

\begin{Prop}
The kinetic supermetric $\bb G$ is gauge invariant, i.e.\footnote{This condition implies \eqref{eq:GGcov} as well as the bundle-like condition mentioned in the previous footnote.}$^,$\footnote{A finite dimensional analogue of this condition was recently studied and generalized to more general Lie algebroids ${\frak A}_{KS}$ than the action Lie algebroid featuring studied here, by Kotov and Strobl \cite{KotovStrobl16a,KotovStrobl16b}. They named Lie algebroids satisfying such a generalized condition Killing Lie algebroids, and related their properties to the ability of ``gauging'' a Poisson-sigma-model with a ${\frak A}_{KS}$-symmetry.\label{fnt:LG=0}}
\be
\bb L_{\xi^\#} \bb G = 0 \qquad (\dd \xi = 0),
\ee
and therefore gauge compatible. 
\end{Prop}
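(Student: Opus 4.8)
The plan is to prove the stronger statement that $\bb G$ is invariant under the \emph{finite} gauge action, $R_g^*\bb G = \bb G$, and then recover $\bb L_{\xi^\#}\bb G = 0$ by differentiating at the identity; gauge compatibility \eqref{eq:GGcov} follows immediately afterwards, being merely a single contraction of the identically vanishing tensor $\bb L_{\xi^\#}\bb G$.

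First I would record how tangent vectors transform under $R_g$. Because the action \eqref{eq:gt} is affine in $A$ with a \emph{field-independent} linear part (conjugation by $g$) and a field-independent inhomogeneous term $g^{-1}\d g$, its differential sends a vector $\bb X = \int X_i \frac{\delta}{\delta A_i}$ at $A$ to the vector at $A^g$ whose components are $\Ad_{g^{-1}} X_i = g^{-1} X_i g$. The two crucial structural facts are then: (\textit{i}) the coefficient $\sqrt{g}\, g^{ij}$ of the supermetric \eqref{eq:GG} is a fixed background datum on $R$, carrying no dependence on the field $A$; and (\textit{ii}) the trace pairing $\tr$ is $\Ad$-invariant. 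Substituting the transformed components into \eqref{eq:GG} and using cyclicity of the trace, $\tr(g^{-1}X_i g\, g^{-1} Y_j g) = \tr(X_i Y_j)$, shows that the integrand is pointwise invariant, hence $R_g^*\bb G = \bb G$.

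Alternatively --- and this is what I would actually write out, as it is self-contained --- I would verify $\bb L_{\xi^\#}\bb G = 0$ infinitesimally. Since $\bb L_{\xi^\#}\bb G$ is tensorial in its arguments, it suffices to test against field-independent vector fields $\bb X, \bb Y$ (with $X_i, Y_i$ not depending on $A$), which span $\T\A$ at every point. For field-independent $\xi$ ($\dd\xi = 0$) one computes the field-space bracket $\lbr\xi^\#,\bb X\rbr = \int [\xi, X_i]\frac{\delta}{\delta A_i}$ --- the infinitesimal adjoint action --- using $\xi^\#(X_i)=0$ together with $\bb X(\D_i\xi) = [X_i,\xi]$, which holds precisely because $\xi$ is field-independent. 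Then in
\be
(\bb L_{\xi^\#}\bb G)(\bb X,\bb Y) = \xi^\#\big(\bb G(\bb X,\bb Y)\big) - \bb G\big(\lbr\xi^\#,\bb X\rbr,\bb Y\big) - \bb G\big(\bb X,\lbr\xi^\#,\bb Y\rbr\big)
\ee
the first term vanishes because $\bb G(\bb X,\bb Y)$ is a field-independent constant, while the last two combine into $-\int\sqrt{g}\,g^{ij}\,\tr\big([\xi,X_i]Y_j + X_i[\xi,Y_j]\big) = -\int\sqrt{g}\,g^{ij}\,\tr\big([\xi,X_iY_j]\big) = 0$, the final equality holding because the trace of a commutator vanishes.

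The computation is entirely routine; the only place demanding care --- and the step I expect to be the main (if modest) obstacle --- is the correct bookkeeping of the homogeneous transformation law of tangent vectors, i.e. establishing that the inhomogeneous piece $g^{-1}\d g$ of \eqref{eq:gt} drops out of the differential of $R_g$ so that vectors transform purely in the adjoint (equivalently, that $\xi^\#(X_i)=0$ and the bracket reduces to $[\xi,X_i]$ in the infinitesimal version). Everything else reduces to $\Ad$-invariance of the trace form and field-independence of the background metric coefficients. Finally, since $\bb L_{\xi^\#}\bb G$ vanishes as a tensor, its restriction $(\bb L_{\xi^\#}\bb G)(\eta^\#,\bb h)$ to a vertical and a horizontal argument vanishes in particular, yielding gauge compatibility \eqref{eq:GGcov}.
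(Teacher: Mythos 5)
Your proposal is correct, and it fills in exactly the ``straightforward'' verification the paper leaves implicit: the components of $R_g$ are affine in $A$ with field-independent linear part, so tangent vectors transform purely in the adjoint, and invariance then follows from the field-independence of $\sqrt{g}\,g^{ij}$ together with the $\Ad$-invariance (cyclicity) of $\tr$, with the infinitesimal bracket computation $\lbr\xi^\#,\bb X\rbr = \int[\xi,X_i]\frac{\delta}{\delta A_i}$ (valid for $\dd\xi=0$) and the reduction to field-independent test vectors both handled properly. The final deduction of gauge compatibility \eqref{eq:GGcov} as a contraction of the vanishing tensor is likewise exactly as intended.
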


One can then introduce the connection associated to $\bb G$ (see \cite{GomesHopfRiello} for an account of the historical origin of this connection in gauge theories):

\begin{Def}[Singer-DeWitt connection]
The connection associated to the kinetic supermetric \eqref{eq:GG} via \eqref{eq:GGvarpi} is called the {\it Singer-DeWitt (SdW) connection}, $\varpi_\sdw$.
SdW horizontal differentials will be denoted by $\dd_\perp = \dd + \varpi_\sdw$.
\end{Def}

An independent argument for the derivation of $\varpi_\sdw$ that is based on generalizing Dirac's dressing of the electron to  non-Abelian theories in the presence of boundaries, is discussed in section \ref{sec:dressing}.

Although we have motivated the choice of the kinetic supermetric \eqref{eq:GG} by reference to the Lagrangian formulation of YM, this reference is not necessary for the analysis that will follow---and therefore we won't pursue it any further. However, we find relevant  that the whole YM Lagrangian is nothing but a gauge- and Lorentz-covariant extension of the kinetic term $K=\tfrac12 \bb G(\bbv,\bbv)$: this simple observation explains the wealth of properties satisfied by the connection form associated through \eqref{eq:GGvarpi} to the kinetic supermetric.

 An alternative, and fully explicit, characterization of the SdW connection can be given in terms of an elliptic boundary value problem:\footnote{This proposition is subjected to the same limitations of the definition \eqref{eq:varpi_def}. See footnote \ref{fnt:reducible}.} 

\begin{Prop}[SdW boundary value problem]\label{prop:SdWbvp}
Over a bounded region $R$, $\pp R\neq \emptyset$, $\varpi_\sdw$ can be equivalently defined through the following \emph{elliptic} boundary value problem\footnote{This equation between 1-forms should be understood as follows. Given any $\bb X = \int X_i^\alpha \frac{\delta}{\delta A_i^\alpha}$, its contraction  into \eqref{eq:SdW}---recall, $\dd A_i(\bb X) \equiv X_i$---defines the contraction $\varpi_\sdw(\bb X)$ as the unique solution to: 
$$
\begin{dcases}
\D^2 \varpi_\sdw(\bb X) = \D^i X_i  & \text{in }R,\\
\D_s \varpi_\sdw(\bb X) =  X_s & \text{at }\pp R.
\end{dcases}
$$
Knowledge of $\varpi_\sdw(\bb X)$ for an arbitrary $\bb X$ is what defines the one-form $\varpi_\sdw$.}
\be
\begin{dcases}
\D^2 \varpi_\sdw = \D^i \dd A_i  & \text{in }R,\\
\D_s \varpi_\sdw = \dd A_s & \text{at }\pp R.
\end{dcases}
\label{eq:SdW}
\ee
where $\D^2 = \D^i\D_i$ is the covariant Laplace operator, and the subscript $\bullet_s$ denotes the contraction with the outgoing unit normal $s^i$ at $\pp R$.
We will call this type of elliptic boundary value problem (with this covariant-Neumann boundary condition) a \emph{SdW boundary value problem}. 
\end{Prop}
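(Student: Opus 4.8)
The plan is to prove the equivalence of the two characterizations of $\varpi_\sdw$: the implicit metric definition \eqref{eq:GGvarpi} and the elliptic boundary value problem \eqref{eq:SdW}. Since the preceding proposition already guarantees that the $\varpi_{\bb G}$ determined by \eqref{eq:GGvarpi} satisfies the connection axioms \eqref{eq:varpi_def}, it suffices to unfold \eqref{eq:GGvarpi} into the stated bulk and boundary equations. Fixing an arbitrary $\bb X\in\T\A$ and writing $\chi:=\varpi_\sdw(\bb X)\in\fG$ and $X_i := \dd A_i(\bb X)$, I would first insert the component expressions $(\xi^\#)_i = \D_i\xi$ from \eqref{eq:hash} and $(\hat H(\bb X))_i = X_i - \D_i\chi$ into the kinetic supermetric \eqref{eq:GG}, turning the orthogonality condition into
\be
\int_R \sqrt g\, g^{ij}\,\tr\!\big((\D_i\xi)(X_j - \D_j\chi)\big) = 0 \qquad \forall\,\xi\in\fG .
\ee

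Next I would integrate by parts. The essential inputs are that the Killing-form trace is $\mathrm{ad}$-invariant---so that $\D_i$ acts as an ordinary derivative on the traced scalar $\tr(\xi\,Y_j)$---and that the spatial index is handled by the Levi-Civita connection of $g$, so that the divergence theorem applies. With $Y_j := X_j - \D_j\chi$ this yields
\be
\oint_{\pp R}\tr\!\big(\xi\,(X_s - \D_s\chi)\big) - \int_R \sqrt g\,\tr\!\big(\xi\,(\D^i X_i - \D^2\chi)\big) = 0 ,
\ee
where I used $\D^i\D_i = \D^2$ and $Y_s = s^i Y_i$.

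The crucial step is then the independent variation of $\xi$. Because we never impose any boundary condition on gauge transformations---not even at $\pp R$---the value of $\xi$ in the interior of $R$ and its restriction to $\pp R$ are free and mutually independent. The fundamental lemma of the calculus of variations can therefore be applied separately to the bulk and boundary integrals, yielding $\D^2\chi = \D^i X_i$ in $R$ together with the covariant-Neumann condition $\D_s\chi = X_s$ at $\pp R$. Re-expressing $\chi = \varpi_\sdw(\bb X)$ and $X_i = \dd A_i(\bb X)$ reproduces exactly \eqref{eq:SdW}. I want to emphasize that this is where the ``fiducial boundary'' philosophy of the paper does all the work: had we fixed $\xi|_{\pp R}=0$, the boundary term would drop out and only the bulk Laplace equation would survive, with the boundary condition left to be imposed by hand; it is precisely the freedom of $\xi$ at $\pp R$ that \emph{produces} the Neumann datum.

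Finally I would address well-posedness, which is the main technical obstacle. The operator $\D^2$ with covariant-Neumann data is formally self-adjoint and elliptic; integrating the homogeneous problem by parts shows that its kernel consists exactly of covariantly constant sections $\chi_0$ (those with $\D_i\chi_0 = 0$), and the Fredholm solvability condition $\int_R\sqrt g\,\tr(\chi_0\,\D^i X_i) = \oint_{\pp R}\tr(\chi_0\,X_s)$ is automatically met, since both sides coincide after one integration by parts using $\D_i\chi_0=0$. Hence a solution always exists and is unique up to covariantly constant sections. For irreducible $A$ there are no nontrivial such sections, so $\varpi_\sdw(\bb X)$ is well defined and the equivalence is complete; the ambiguity at reducible configurations is precisely the phenomenon flagged in footnote \ref{fnt:reducible}, and the same caveat that restricts \eqref{eq:varpi_def} restricts this proposition.
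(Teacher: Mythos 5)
Your proof follows the paper's own route exactly: you unfold the $\bb G$-orthogonality condition \eqref{eq:GGvarpi} using \eqref{eq:hash} and \eqref{eq:GG}, integrate by parts (with the ad-invariance of the trace), and exploit the unrestricted arbitrariness of $\xi$ both in the bulk of $R$ and at $\pp R$ to read off the bulk Laplace equation together with the covariant-Neumann condition of \eqref{eq:SdW}, which is precisely the paper's one-line derivation. Your closing well-posedness paragraph goes slightly beyond the paper's proof---the kernel characterization reproduces Proposition \ref{prop:Dchi=0}, and the Fredholm solvability check (automatic since $\int_R \sqrt{g}\,\tr(\chi_0\,\D^i X_i) = \oint_{\pp R}\sqrt{h}\,\tr(\chi_0\, X_s)$ when $\D\chi_0=0$) is a correct addition the paper leaves implicit---but the argument is the same in substance.
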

\begin{proof}
For $0=\bb G\Big( \bb X - \varpi_\sdw^\#(\bb X)\, , \, \xi^\# \Big) = \int \sqrt{g}\,g^{ij}\tr\Big( \big(X_i - \D_i \varpi(\bb X) \big)\D_j\xi\Big)$ to hold for all $\xi$'s and $\bb X$ gives condition \eqref{eq:SdW} after an integration by parts. See \cite{GomesRiello2018, GomesHopfRiello} for a detailed derivation.
\end{proof}

The following proposition then characterizes the curvature of the SdW connection in terms of another SdW boundary value problem:

\begin{Prop}[SdW curvature]
The curvature of the SdW-connection $\varpi_\sdw$, denoted $\bb F_\sdw$  \eqref{eq:FFunholo}, satisfies the following boundary value problem:\be
\begin{dcases}
\D^2\bb F_\sdw = g^{ij}[ \dd_\perp A_i \stackrel{\curlywedge}{,} \dd_\perp A_j] & \text{in $R$},\\
\D_s \bb F_\sdw = 0 & \text{at $\pp R$},
\end{dcases}
\label{eq:FFsdw}
\ee
Notice that $\bb F_\sdw \equiv 0$ in the Abelian case.
\end{Prop}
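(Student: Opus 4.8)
The plan is to feed the defining SdW boundary value problem \eqref{eq:SdW} for $\varpi\equiv\varpi_\sdw$ into the structural identity $\bb F = \dd\varpi + \tfrac12[\varpi\stackrel{\curlywedge}{,}\varpi]$ of \eqref{eq:FF}, and to verify that the result solves \eqref{eq:FFsdw}. The only tool needed beyond Cartan calculus is the commutation rule between the spacetime covariant derivative and the field-space differential: since $\D_i\omega = \pp_i\omega + [A_i,\omega]$ and $A_i$ is a field-space $0$-form, the graded Leibniz rule gives, for any $\fG$-valued field-space form $\omega$,
\be
\dd(\D_i\omega) = \D_i(\dd\omega) + [\dd A_i \stackrel{\curlywedge}{,}\omega].
\ee
This is the mechanism by which factors of $\dd A_i$ are generated and, together with $\D_i\varpi$, eventually reassemble into $\dd_\perp A_i = \dd A_i - \D_i\varpi$.

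For the bulk equation, first I would apply $\dd$ to $\D^2\varpi = \D^i\dd A_i$. On the right, using the commutation identity once together with $\dd^2=0$, one finds $\dd(\D^i\dd A_i) = g^{ij}[\dd A_i\stackrel{\curlywedge}{,}\dd A_j]$; on the left, iterating the identity yields $\D^2\dd\varpi$ plus cross terms linear in $\dd A$ and in $\D\varpi$. Next I would compute $\tfrac12\D^2[\varpi\stackrel{\curlywedge}{,}\varpi]$, using $\D_i[\varpi\stackrel{\curlywedge}{,}\varpi]=2[\D_i\varpi\stackrel{\curlywedge}{,}\varpi]$ (valid because $[\alpha\stackrel{\curlywedge}{,}\beta]=[\beta\stackrel{\curlywedge}{,}\alpha]$ for two Lie-algebra-valued $1$-forms) and re-inserting the bulk equation $\D^2\varpi=\D^i\dd A_i$ into the resulting term $[\D^2\varpi\stackrel{\curlywedge}{,}\varpi]$. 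Adding the two contributions, $\D^2\bb F = \D^2\dd\varpi + \tfrac12\D^2[\varpi\stackrel{\curlywedge}{,}\varpi]$, the $[\D_i\dd A_j\stackrel{\curlywedge}{,}\varpi]$ pieces cancel and the survivors organize precisely into $g^{ij}\big([\dd A_i\stackrel{\curlywedge}{,}\dd A_j] - 2[\dd A_i\stackrel{\curlywedge}{,}\D_j\varpi] + [\D_i\varpi\stackrel{\curlywedge}{,}\D_j\varpi]\big) = g^{ij}[\dd_\perp A_i\stackrel{\curlywedge}{,}\dd_\perp A_j]$, which is the claimed source.

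For the boundary condition I would contract the commutation identity with the (field-independent) unit normal $s^i$, obtaining $\dd(\D_s\varpi) = \D_s\dd\varpi + [\dd A_s\stackrel{\curlywedge}{,}\varpi]$ on $\pp R$. The covariant-Neumann condition $\D_s\varpi = \dd A_s$ of \eqref{eq:SdW} makes the left-hand side $\dd(\dd A_s)=0$, so $\D_s\dd\varpi = -[\dd A_s\stackrel{\curlywedge}{,}\varpi]$ there. Since likewise $\tfrac12 s^i\D_i[\varpi\stackrel{\curlywedge}{,}\varpi] = [\D_s\varpi\stackrel{\curlywedge}{,}\varpi] = [\dd A_s\stackrel{\curlywedge}{,}\varpi]$, the two contributions to $\D_s\bb F = \D_s\dd\varpi + \tfrac12 s^i\D_i[\varpi\stackrel{\curlywedge}{,}\varpi]$ cancel, establishing $\D_s\bb F=0$ at $\pp R$. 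The Abelian statement is then immediate: the bracket vanishes identically, so both the source and the quadratic term in $\bb F$ drop out and the homogeneous SdW problem forces $\bb F_\sdw\equiv 0$ (equivalently, by Lemma \ref{Lemma:exactvarpi}, $\varpi_\sdw=\dd\varsigma$ is exact, whence $\bb F_\sdw=\dd\varpi_\sdw=0$).

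I expect the main obstacle to be purely organizational: keeping the graded signs straight in the field-space brackets $[\cdot\stackrel{\curlywedge}{,}\cdot]$ (in particular the symmetry $[\alpha\stackrel{\curlywedge}{,}\beta]=[\beta\stackrel{\curlywedge}{,}\alpha]$ for $1$-forms, which controls the factors of $2$ and hence the cancellations), and checking that each reuse of the defining bulk and boundary equations is legitimate. A secondary point worth stating is that, for \eqref{eq:FFsdw} to \emph{characterize} $\bb F_\sdw$ rather than merely be satisfied by it, one invokes uniqueness for the SdW problem with the same caveats about reducible configurations recorded in footnote \ref{fnt:reducible}.
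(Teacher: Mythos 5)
Your proof is correct, but it takes a genuinely different route from the paper's. The paper does not recompute anything: it imports the bulk equation from Singer and from \cite[eq.~5.6]{GomesHopfRiello}, and obtains the boundary condition by re-running that derivation in the \emph{weak} form --- smearing against an arbitrary $\xi$ and tracking the integration-by-parts boundary term $\oint\sqrt{h}\,\tr(\xi\, s^i\D_i\bb F)$, whose vanishing for unrestricted boundary values of $\xi$ forces $\D_s\bb F=0$. You instead work with the \emph{strong} form of the defining system \eqref{eq:SdW}: applying $\dd$ to $\D^2\varpi=\D^i\dd A_i$ and to $\D_s\varpi=\dd A_s$, using the commutation identity $\dd(\D_i\omega)=\D_i\dd\omega+[\dd A_i\stackrel{\curlywedge}{,}\omega]$ and the $1$-form symmetry of the bracket, and feeding the result into $\bb F=\dd\varpi+\tfrac12[\varpi\stackrel{\curlywedge}{,}\varpi]$; I checked the bookkeeping (the $[\D^i\dd A_i\stackrel{\curlywedge}{,}\varpi]$ cancellation in the bulk, the factor-of-$2$ identities controlling $\D_i[\varpi\stackrel{\curlywedge}{,}\varpi]$, and the exact cancellation $\D_s\dd\varpi=-[\dd A_s\stackrel{\curlywedge}{,}\varpi]=-\tfrac12 s^i\D_i[\varpi\stackrel{\curlywedge}{,}\varpi]$ at $\pp R$) and it all closes. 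What each approach buys: yours is self-contained within this paper, makes fully explicit the mechanism by which $\dd A_i$ and $\D_i\varpi$ reassemble into $\dd_\perp A_i$, and establishes the boundary condition pointwise rather than distributionally; the paper's smeared route is the natural way to \emph{discover} the boundary condition (it appears as the obstruction term in the weak identity) and mirrors how $\varpi_\sdw$ itself is defined via $\bb G$-orthogonality against $\xi^\#$, whereas your argument verifies a candidate. One small caution on your closing Abelian remark: the homogeneous SdW (Neumann) problem in the Abelian case has the constants $\chi=const$ in its kernel, so by itself it only forces $\bb F_\sdw$ to be a constant stabilizer mode; the clean argument is the one you give in parentheses --- $\varpi_\sdw=\dd\varsigma$ by Theorem \ref{thm:EM}, whence $\bb F_\sdw=\dd^2\varsigma=0$ --- or, equivalently, uniqueness after modding out $\I_\text{EM}$ as in section \ref{sec:charges_EM}. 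Your final observation, that invoking uniqueness of the SdW problem (at irreducible configurations) upgrades ``satisfies'' to ``is characterized by,'' is correct and consistent with the paper's standing assumptions.
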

\begin{proof}
In the absence of boundaries, this formula was given by Singer in \cite{Singer:1981xw}.
In \cite[eq. 5.6]{GomesHopfRiello}, the differential equation for $\bb F_\sdw$ is explicitly derived in the context without boundary. To find the boundary condition used in \eqref{eq:FFsdw}, we note that, in \cite{GomesHopfRiello} to obtain equation 5.6, one uses equations 5.4 and 5.5. The first requires no integration by parts, contrary to the second, which yields an extra boundary term: $\oint \sqrt{h} \,\tr(\xi s^i\D_i\bb F)$. Hence, from the arbitrariness of $\xi$ at the boundary, we deduce the boundary condition of \eqref{eq:FFsdw}. 
\end{proof}

In \cite{GomesHopfRiello}, the significance of $\bb F_\sdw$ for the non-Abelian theory is extensively discussed in relation to: (\textit{i}) the obstruction to  the extension of the dressing of matter fields  \`a la Dirac (see e.g. \cite{Dirac:1955uv, Lavelle:1994rh, bagan2000charges, bagan2000charges2}) to the non-Abelian setting  \cite{Lavelle:1995ty}; (\textit{ii}) the Gribov problem \cite{Gribov:1977wm, Singer:1978dk}; and (\textit{iii}) the Vilkovisky-DeWitt geometric effective action \cite{Rebhan1987,Vilkovisky:1984st,DeWitt_Book, vilkovisky1984gospel, Pawlowski:2003sk,Branchina:2003ek}. See also section \ref{sec:dressing} in the present article.

 As a consequence of the bulk and boundary properties of $\varpi_\sdw$,  SdW-horizontal modes $\bb h$, i.e. those in the kernel of $\varpi_\sdw$ (that is $\bb i_{\bb h}\varpi_\sdw = 0$), do satisfy specific bulk and boundary properties:\footnote{Of course, the properties of a SdW-horizontal mode can be deduced with no reference to $\varpi_\sdw$, but only to $\bb G$. See the proof of the following proposition.} 

\begin{Prop}[SdW horizontal modes]\label{prop:SdWhoriz}
The quasilocal \emph{SdW-horizontal modes} of the gauge potential, $\delta A = h$, 
are covariantly divergenceless in the bulk and vanish when contracted with $s^i$ at the boundary $\pp R$, i.e.
\be
\bb h = \int h_i \frac{\delta}{\delta A_i} 
\quad\text{is SdW-horizontal iff}\quad
\begin{dcases}
\D^i h_i = 0 & \text{in } R,\\
h_s = 0 & \text{at }\pp R.
\end{dcases}
\label{eq:horizontalpert}
\ee
Physically, SdW-horizontal modes generalize to the non-Abelian setting and to the presence of boundaries the notion of transverse photon. We will therefore sometimes call them \emph{radiative} modes.
\end{Prop}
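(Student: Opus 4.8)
The plan is to establish the characterization directly from the metric definition of SdW-horizontality, $\bb h\in H_{\bb G}=V^\perp$, rather than routing through the boundary value problem of Proposition~\ref{prop:SdWbvp}. This is the more economical route anticipated in the footnote, and it delivers the bulk and boundary conditions simultaneously, yielding the ``iff'' in a single stroke. By definition $\bb h$ is SdW-horizontal precisely when $\bb G(\bb h,\xi^\#)=0$ for \emph{all} $\xi\in\fG$, so the whole argument reduces to evaluating this pairing.

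First I would spell out the pairing using the explicit kinetic supermetric~\eqref{eq:GG} together with $\xi^\#=\int(\D_j\xi)\frac{\delta}{\delta A_j}$ from~\eqref{eq:hash}, giving $\bb G(\bb h,\xi^\#)=\int_R\sqrt{g}\,g^{ij}\tr(h_i\,\D_j\xi)$. The one computational step is then the same covariant integration by parts already used in the proof of Proposition~\ref{prop:SdWbvp}: ad-invariance of the trace gives $\tr(h_i\,\D_j\xi)=\pp_j\tr(h_i\xi)-\tr\big((\D_j h_i)\xi\big)$, and the divergence theorem on $R$ yields the key identity
\be
\bb G(\bb h,\xi^\#)=\oint_{\pp R}\sqrt{h}\,\tr(h_s\,\xi)\;-\;\int_R\sqrt{g}\,\tr\big((\D^i h_i)\,\xi\big),
\ee
with $h_s=s^i h_i$ and $\D^i h_i$ the covariant divergence.

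The conclusion then follows from the arbitrariness of $\xi$, and here lies the only genuinely structural point: in our framework \emph{no boundary condition is imposed on the gauge parameter}, so $\xi$ can be varied freely and independently both in the interior of $R$ and on $\pp R$. Consequently the right-hand side vanishes for every $\xi$ if and only if its bulk and boundary integrands vanish separately, i.e. if and only if $\D^i h_i=0$ in $R$ and $h_s=0$ at $\pp R$; since this is an equivalence, both directions of the claim are obtained at once. I would stress that it is exactly the boundary freedom of $\xi$ that enforces the covariant-Neumann condition $h_s=0$: had one instead fixed $\xi|_{\pp R}=0$, the boundary term would simply drop and only bulk divergencelessness would survive.

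I do not expect a serious obstacle, since the argument is a single integration by parts whose whole content is the separate treatment of the bulk and boundary terms. The one caveat worth recording is the reducibility issue already flagged for~\eqref{eq:varpi_def}: the orthogonality definition $H_{\bb G}=V^\perp$ is unambiguous at any $A$, but its identification with $\ker\varpi_\sdw$---and the cross-check against Proposition~\ref{prop:SdWbvp}, where one would instead set $\varpi_\sdw(\bb h)=0$ and read the same two conditions off the source terms---relies on uniqueness of the SdW boundary value problem. This holds on the dense set of irreducible configurations, where the homogeneous Neumann problem $\D^2\psi=0$ in $R$, $\D_s\psi=0$ at $\pp R$ admits only $\psi=0$.
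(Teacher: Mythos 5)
Your argument is correct and is essentially the paper's own proof: the paper's ``Alternatively'' argument proves the proposition exactly via $\bb G(\bb h,\xi^\#)=0$, the covariant integration by parts $0=\int \sqrt{g}\,g^{ij}\tr(h_i\D_j\xi)=-\int\sqrt{g}\,\tr(\D^i h_i\,\xi)+\oint\sqrt{h}\,\tr(h_s\,\xi)$, and the arbitrariness of $\xi$ in the bulk and at $\pp R$ to separate the two conditions. Your added remarks---that the unrestricted boundary value of $\xi$ is what forces $h_s=0$, and the reducibility caveat for identifying $V^\perp$ with $\ker\varpi_\sdw$---are consistent with the paper's footnoted first route through the SdW boundary value problem \eqref{eq:SdW}.
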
 
\begin{proof}
 Contracting $\bb h$ into the boundary value problem for the SdW connection \eqref{eq:SdW}, and using $\bb i_{\bb h}\varpi_\sdw= 0$ (by definition of SdW-horizontality) and the identity $\bb i_{\bb h}\dd A_i = h_i$, readily gives the sought result.
Alternatively, observe that the SdW horizontal modes $\bb h$ are by definition $\bb G$-orthogonal to all $\xi^\#\in V\subset \T\A$, thus $0\stackrel{!}{=}\bb G( \bb h, \xi^\#) = \int \sqrt{g}\, g^{ij} \tr( h_i \D_j \xi ) = -\int \sqrt{g} \,\tr(\D^i h_i \xi) + \oint \sqrt{h}\,\tr(h_s \xi) $. The conclusion then follows from the arbitrariness of $\xi$.
\end{proof}

Mathematically, the SdW decomposition of $\dd A := \dd_\perp A + \D \varpi_\sdw$ is a non-Abelian generalization of the orthogonal Helmholtz decomposition (in the presence of boundaries) of 1-tensors in a pure-gradient part and a divergence-free part.
 
Notice that, consistently with our goals, the SdW boundary value problem \eqref{eq:SdW} defines a connection form on $\A$ that is quasi-local to $R$: i.e. non-local within $R$ (it requires the inversion of a covariant Laplacian) but completely determined by the value of the fields within $R$. 

For this to work, it is important that the boundary value boundary for $\varpi_\sdw$ involves boundary conditions for $\varpi_\sdw$, but {\it not} for the background gauge potential $A_i$ nor for its fluctuations $\dd A_i$.  
The boundary conditions on $\varpi_\sdw$ ensure that the connection in a region $R$, and the corresponding horizontal projections, are uniquely defined. 
In this  way, no restriction is imposed on the gauge-variant fields $A_i$ nor on the gauge parameters $\xi$, neither in $R$ nor at $\pp R$---but restrictions naturally arise for the horizontal linearized fluctuations $h_i$.

In this regard the horizontality conditions \eqref{eq:horizontalpert}  can be interpreted as a (gauge-covariant) gauge-fixing for the linearized fluctuations in a bounded region. This gauge fixing encompasses the \textit{entire} physical content of possible linearized fluctuations over a given region; that is, although the boundary conditions might seem restrictive, a completely general linear fluctuation $\bb X\in \mathrm{T}\mathcal{A}$, with \textit {any} other boundary condition can be  generated from a $\bb h$ of  the form \eqref{eq:horizontalpert} with the aid of a  unique infinitesimal gauge transformation.
Importantly, this is only possible  because gauge freedom at the boundary is unrestricted, and therefore the $\bb G$-orthogonal projection is a complete and viable gauge fixing for the linearized fluctuations around $A\in\A$.
In particular, the technical demand that the gauge parameters $\xi$ are fully {\it un}constrained at the boundary will have far-reaching repercussions, and it distinguishes our approach from others in the literature, e.g.  \cite{DonnellyFreidel, Balachandran:1994up, Avery_2016,  Speranza:2017gxd, Geiller:2017xad, AliDieter, Camps}  (cf. also the discussion of the ``edge mode'' framework in the conclusions).

Now that we have established these fundamental properties of the SdW connection and the SdW horizontal properties, we shall comment on some general properties of the SdW boundary value problem.

In footnote \ref{fnt:reducible}, we have anticipated (without explanation) that a connection form satisfying the projection and covariance properties can be successfully defined in the non-Abelian theory only on a dense subset of configurations $A\in\A$, and in the Abelian theory only for a slightly modified definition of the gauge group $\G$. Interestingly, the kernel of the SdW boundary value problem reflects these issues.

In electromagnetism (EM), the boundary value problem \eqref{eq:SdW} is of Neumann type. This means that in EM the solution to this boundary value problem is not unique for constant gauge transformations are in its kernel. These are precisely the gauge transformations that have to be ``removed'' from $\G$ for the definition of a connection form to apply. It is intriguing that these gauge transformations are related to the definition of the electric charge, an observation that we will develop on in section \ref{sec:charges}. 

In general, the kernel of the SdW boundary value problem is characterized by the following:

\begin{Def}[Reducible configurations]
Configurations $A\in\A$ such that the equation $\D\chi\equiv \d \chi + [A, \chi]=0$ admits nontrivial solution $\chi\in\fG\setminus\{0\}$ are called \emph{reducible}; all other configurations are called \emph{ir}reducible. 
At a reducible configuration $A\in\A$, the nonvanishing solutions $\chi$ are called \emph{reducibility parameters} or \emph{stabilizers} of $A$.
\end{Def} 

\begin{Prop}[Kernel of the SdW boundary value problem]\label{prop:Dchi=0}
At the background configuration $A\in\A$, the kernel of the SdW boundary value problem is given by the reducibility parameters of $A$.
\end{Prop}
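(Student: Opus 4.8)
The plan is to interpret the ``kernel of the SdW boundary value problem'' as the space of those $\chi\in\fG$ that solve the \emph{homogeneous} counterpart of \eqref{eq:SdW}, namely
\be
\begin{dcases}
\D^2\chi = 0 & \text{in } R,\\
\D_s\chi = 0 & \text{at } \pp R,
\end{dcases}
\ee
these being precisely the $\chi$ whose addition to any particular solution $\varpi_\sdw(\bb X)$ preserves \eqref{eq:SdW} and hence obstructs uniqueness. I would then establish the asserted set equality by a double inclusion between this kernel and the reducibility parameters $\{\chi\in\fG : \D\chi = 0\}$ of $A$.

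The inclusion ``reducibility parameters $\subseteq$ kernel'' is immediate: if $\D\chi = 0$ then a fortiori $\D^2\chi = \D^i\D_i\chi = 0$ in $R$ and $\D_s\chi = s^i\D_i\chi = 0$ at $\pp R$, so $\chi$ solves the homogeneous problem. For the converse---the substantive direction---I would reuse the integration by parts already carried out in the proof of Proposition \ref{prop:SdWhoriz}, now applied to the vertical vector $\chi^\#$, whose components are $\D_i\chi$. Since $\tr$ is $\Ad$-invariant, $\tr[A_i,\cdot]=0$ converts the covariant divergence of a trace into an ordinary one, yielding the first Green identity
\be
\bb G(\chi^\#,\chi^\#) = \int\sqrt g\, g^{ij}\tr(\D_i\chi\,\D_j\chi) = -\int\sqrt g\,\tr(\chi\,\D^2\chi) + \oint\sqrt h\,\tr(\chi\,\D_s\chi).
\ee
For $\chi$ in the kernel the bulk term vanishes by $\D^2\chi = 0$ and the boundary term by $\D_s\chi = 0$, so $\bb G(\chi^\#,\chi^\#)=0$. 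As $g^{ij}$ is Riemannian and $\tr(\tau_\alpha\tau_\beta)=\delta_{\alpha\beta}$ is positive-definite on $\Lie(G)$---i.e. $\bb G$ is positive-definite---the integrand is pointwise non-negative, whence its vanishing integral forces $\D_i\chi=0$ throughout $R$, i.e. $\D\chi = 0$: $\chi$ is a reducibility parameter. Equivalently, positive-definiteness of $\bb G$ means $\chi^\#=0$ as a tangent vector, which is the infinitesimal statement that $\chi$ stabilizes $A$. (In electromagnetism this reproduces the familiar fact that the kernel of the Neumann Laplacian consists of the constants, which are exactly the Abelian reducibility parameters $\d\chi=0$, matching the discussion preceding the definition of reducible configurations.)

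The main obstacle I anticipate is not algebraic but analytic: justifying that the integration by parts leaves no boundary contribution beyond $\oint\sqrt h\,\tr(\chi\,\D_s\chi)$, and that the covariant-Neumann datum $\D_s\chi$ is indeed the boundary operator conjugate to $\D^2$ under $\bb G$. Consistently with the paper's stated policy of not pursuing functional-analytic rigour, I would take $R$ to be a compact $D$-disk with smooth $\pp R$ and invoke the standard elliptic theory for the covariant Laplacian with Neumann-type boundary conditions, under which the above identity holds and the kernel is guaranteed to be finite-dimensional---matching the finite-dimensional stabilizers discussed in footnote \ref{ftnt:generic fol}.
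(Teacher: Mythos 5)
Your proposal is correct and follows essentially the same route as the paper: the easy inclusion is noted as immediate, and the converse is obtained from the Green identity $0 = -\int\sqrt g\,\tr(\chi\,\D^2\chi) + \oint\sqrt h\,\tr(\chi\,\D_s\chi) = \bb G(\chi^\#,\chi^\#)$ together with positivity of $\bb G$, forcing $\chi^\#=0$, i.e.\ $\D\chi\equiv 0$. Your only additions---spelling out that positive-definiteness (rather than the paper's stated ``non-degeneracy'') is the property actually invoked, and the closing analytic caveats---are harmless refinements of the same argument.
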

\begin{proof}
We want to show that $\D^2\chi = 0 = \D_s \chi{}_{|\pp R}$ if and only if (iff) $\D \chi =0$ throughout $R$. One implicaftion is obvious. For the other, we observe that  $0 = - \int \tr(\chi \D^2\chi) + \oint\tr(\chi \D_s \chi) = \bb G( \chi^\#, \chi^\#) $. From the non-degeneracy of $\bb G$, this vanishes iff $\chi^\# = 0$ i.e. iff $\D \chi \equiv 0$. 
\end{proof}

Note the prominent role the SdW boundary condition plays in this proposition: e.g. replacing it with a Dirichlet condition would leave us with a kernel which is always trivial.

Reducibility is to YM configurations as the existence of Killing vector fields is to spacetime metrics in general relativity.
We will argue in section \ref{sec:charges} that, just as for Killing vector fields, the existence of reducibility parameters is related to the existence of ``global'' charges in YM---the electric charge being the most basic such example.

In EM (or any Abelian theory), all configurations are reducible for $\chi = const$ (hence the universal nature of the electric charge).
In non-Abelian YM, on the other hand, reducible configurations are ``rare,'' just like spacetime metrics with Killing vector fields are rare.
More precisely, reducible configurations constitute a meagre subset of $\A$---i.e. {\it ir}reducible configurations are everywhere dense in $\A$. 

In section \ref{sec:charges} (and appendix \ref{app:slice}), we will review the topological and geometrical properties of the set of reducible configurations within the configuration space $\A$. From our field-space perspective it is indeed important that these configurations are imprinted in the geometry of $\A$ as well as on that of the reduced field space $\A/\G$. From that discussion it will be clear why at reducible configurations {\it no} connection form can be defined.
Until then, however, we will work in the generic subspace of $\A$ and neglect the existence of reducible configurations or---in the Abelian case---we will assume that $\G$ is appropriately replaced.
In sum, {\it unless stated otherwise, we will henceforth consider the SdW boundary value problem as invertible}.\footnote{The fact that the SdW boundary value problem is not invertible at reducible configurations means that the definition \eqref{eq:SdW} of $\varpi_\sdw$ is not viable there. In fact, it turns out, the very notion of connection form fails at reducible configurations. Again, this is discussed in section \ref{sec:charges}.} \\

We conclude this section with a remark that will play an important role in the following: the SdW connection for EM (for the appropriately modified $\G$) provides a concrete  example of a connection form which is exact.

\begin{Thm}[SdW connection in EM]\label{thm:EM}
In  noncompact electromagnetism (EM),\footnote{As mentioned above, the SdW connection for $G=\mathrm U(1)$ or $\bb R$ is not invertible. For the time being we will work formally: in section \ref{sec:charges} we will show how to get around this issue  by modding-out constant gauge transformations. The present conclusions will not be altered by the more rigorous treatment.}  i.e. if $G = \bb R$, the SdW horizontal distribution $H_{\bb G} \equiv V^\perp$ is integrable and related to the Coulomb gauge fixing.  
\end{Thm}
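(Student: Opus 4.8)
The plan is to prove the two assertions—integrability and the connection to Coulomb gauge—in turn, leaning on the structural results already established above.

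\emph{Integrability.} Setting $G=\bb R$ trivializes the adjoint action, so the covariant derivative reduces to the ordinary one, $\D=\pp$, and the source and the Neumann datum of the SdW curvature problem \eqref{eq:FFsdw} both vanish; hence $\bb F_\sdw\equiv 0$, exactly as remarked there. By the very definition of the functional curvature \eqref{eq:FFunholo}, $\bb F_\sdw=\varpi_\sdw\big(\lbr \hat H(\cdot),\hat H(\cdot)\rbr\big)$, so $\bb F_\sdw\equiv 0$ states precisely that the field-space Lie bracket of any two horizontal vector fields lies again in $H_{\bb G}=\ker(\varpi_\sdw)$ \eqref{eq:Hkervarpi}. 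The horizontal distribution is therefore involutive and—at the formal, algebraic level adopted throughout—integrable in the sense of Frobenius.

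\emph{Coulomb gauge.} I would next make the integral leaves explicit and recognize them as Coulomb slices. Since $G$ is Abelian and $\varpi_\sdw$ is flat, Lemma \ref{Lemma:exactvarpi} yields $\varpi_\sdw=\dd\varsigma$ for some $\varsigma\in\Omega^0(\A,\fG)$. I would identify $\varsigma$ with the functional $\varsigma[A]$ solving the scalar Neumann problem $\pp^2\varsigma=\pp^iA_i$ in $R$ and $\pp_s\varsigma=A_s$ at $\pp R$: taking the field-space differential $\dd$ of this problem reproduces the Abelian SdW boundary value problem \eqref{eq:SdW} for $\dd\varsigma$, so $\dd\varsigma=\varpi_\sdw$ by uniqueness. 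The horizontal leaves are then the level sets $\{\varsigma[A]=\text{const}\}$, manifestly integral manifolds of $\ker(\dd\varsigma)$. A one-line substitution shows that under the Abelian gauge action \eqref{eq:gt}, $A_i\mapsto A_i+\pp_i\lambda$, one has $\varsigma[A+\pp\lambda]=\varsigma[A]+\lambda$; since $\A$ is affine and the orbits are the translates $A+\pp\lambda$, each orbit meets each leaf exactly once, so the foliation is a genuine global gauge fixing. The distinguished leaf $\varsigma\equiv 0$ is precisely $\{A:\ \pp^iA_i=0\ \text{in }R,\ A_s=0\ \text{at }\pp R\}$, i.e. Coulomb gauge with its natural normal boundary refinement; consistently, the tangent fluctuations obey $\pp^ih_i=0$ and $h_s=0$ by Proposition \ref{prop:SdWhoriz}, the linearized Coulomb condition.

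\emph{Main obstacle.} The genuine subtlety is the constant mode. For $G=\bb R$ the Neumann problems defining both $\varsigma$ and $\varpi_\sdw$ are solvable only up to an additive constant—this is exactly the non-invertibility flagged in the theorem's hypotheses, and it mirrors the fact that \emph{every} Abelian configuration is reducible with reducibility parameter $\chi=\text{const}$ (Proposition \ref{prop:Dchi=0}). I would dispose of it, as anticipated in the footnote, by working modulo constants, i.e. on gauge parameters with the global constant removed; the rigorous version, deferred to section \ref{sec:charges}, amounts to replacing $\G$ accordingly, and does not affect the present conclusions. Modulo this caveat both $\varsigma$ and the foliation are well defined and the identification with Coulomb gauge is exact. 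The only other technical point—invoking Frobenius in an infinite-dimensional (Fréchet) setting—is bypassed by our standing convention of reading integrability algebraically, as closure of $H_{\bb G}$ under the field-space Lie bracket.
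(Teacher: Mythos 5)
Your proposal is correct and follows essentially the same route as the paper's proof: you identify $\varsigma$ as the solution of the Neumann problem \eqref{eq:sdwvarsigma}, obtain $\varpi_\sdw=\dd\varsigma$ by field-space differentiation and uniqueness, invoke Frobenius for integrability, recognize the leaves as the level sets of $\varsigma$, and identify the $\varsigma=0$ leaf with Coulomb gauge (including the boundary condition $A_s=0$), handling the constant zero-mode exactly as the paper's footnote prescribes. The only (cosmetic) difference is that you establish flatness first, directly from the vanishing commutator source in the curvature problem \eqref{eq:FFsdw}, whereas the paper derives $\bb F_\sdw=\dd^2\varsigma=0$ from exactness via Lemma \ref{Lemma:exactvarpi}; your added check that $\varsigma[A+\pp\lambda]=\varsigma[A]+\lambda$ is a welcome finite-form restatement of \eqref{eq:varsigma} making the global gauge-fixing property explicit.
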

\begin{proof}
 Define the real-valued field-space function $\varsigma\in\Omega^0(\A)$ to be the solution of the following SdW boundary value problem:
\be
\begin{dcases}
\nabla^2 \varsigma = \nabla^i  A_i  & \text{in }R,\\
\pp_s \varsigma =  A_s & \text{at }\pp R.
\end{dcases}
\label{eq:sdwvarsigma}
\qquad(\text{EM}).
\ee
Then, the SdW connection satisfying \eqref{eq:SdW} can be obtained by simple field-space differentiation, that is $\varpi_\sdw = \dd \varsigma \in \Omega^1(\A,\fG)$---notice that for this step it is crucial that the spatial differential operator $\nabla_i$ is field-independent, i.e. independent of the configuration $A\in\A$).
By lemma \ref{Lemma:exactvarpi} it follows that the SdW horizontal distribution is flat.
More explicitly, $\bb F_\sdw = \dd \varpi_\sdw = \dd^2 \varsigma =0$.
By Frobenius theorem, a flat distribution is also integrable.

For each field-independent function $\sigma: R \to \bb R$, $\dd \sigma = 0$, define the ``constant-value'' hypersurface $\mathcal H_\sigma := \{A:\varsigma(A) = \sigma\}\subset \A$.
Notice that the invertibility\footnote{See the next footnote.} of the SdW value problem means that every $A$ belongs to one and only one hypersurface $\mathcal H_c$, which therefore foliate $\A$.
The SdW horizontality condition $0 = \bb i_{\bb h} \varpi_\sdw = \bb h(\varsigma)$ says that $\varsigma$ is constant in the SdW horizontal directions within $\A$. As a consequence, the SdW horizontal directions at $A\in\A$ coincide with the directions tangent to the appropriate $\mathcal H_\sigma$ through $A$.
In other words, if $A\in\mathcal H_\sigma$, then $(H_{\bb G})_A = \T_A \mathcal H_\sigma$. From this, $H_{\bb G}  = \T (\bigcup_{\sigma} \mathcal H_\sigma)$.
This also shows that the foliation $\cal H = \bigcup_\sigma\cal H_\sigma$ is transverse to the vertical foliation $\cal F$.

Finally, consider the vanishing parameter $\sigma\equiv 0$. Then, setting $\varsigma = 0$ in \eqref{eq:sdwvarsigma} shows that the configurations lying on $\mathcal H_{0}$ satisfy the Coulomb gauge condition $\nabla^i A_i = 0$, completed---if $\pp R\neq\emptyset$---by the boundary condition $A_s=0$. This means that $\mathcal H_0$ is the section of $\A$ corresponding to the Coulomb fixing.\footnote{~Seemingly, any spatially constant parameter $\xi=\xi_o\in \Lie(G)$ would do. This is because constant gauge transformations constitute precisely the stabilizer gauge transformations in the kernel of the Abelian SdW boundary problem (proposition \ref{prop:Dchi=0}). However, following the discussion above, in order to have a well-defined $\varpi_\sdw$ we have (implicitly) modified $\G$ by modding-out the stabilizer gauge transformations---i.e. the constant gauge transformations. Hence, from this perspective we have identified all $\xi=\xi_o$ with $\xi=0$ and therefore the Coulomb gauge fixing as defined in the text indeed corresponds to one section of $\A$ which crosses all fibres once and only once. See section \ref{sec:charges_EM} for a detailed construction of the appropriately modified gauge group $\G$ for electromagnetism.}
More generally, the ``constant-value'' hypersurfaces $\mathcal H_\sigma$ generalize Coulomb gauge according to the spatial properties of $\sigma:R\to \bb R$.
\end{proof}

Notice that in the Dirac-Bergmann formalism for constrained systems, $ \varsigma =0$ is  the second class constraint associated to the Coulomb gauge fixing. 
However, not all  second class constraints (gauge-fixings) define a connection form, since they must satisfy the restrictive covariance condition \eqref{eq:varsigma}. E.g. even a change in the boundary condition of \eqref{eq:sdwvarsigma} would jeopardize that covariance property.

\subsection{Horizontal splitting in phase space}\label{sec:phase space}

In the last section we have introduced configuration space. In this section we will introduce {\it phase space} and {\it matter fields}. Most constructions are immediate extensions of those performed in the previous section and will therefore be only sketched.

The YM phase space is defined as the cotangent bundle of the configuration space $\A$, and its elements are 
\be
(A,E)\in\T^*\A.
\ee
The coordinates $(A,E)$ have been chosen so that the tautological 1-form on $\T^*\A$ reads 
\be
\theta_\YM = \int \sqrt{g} \,\tr\big(E^i \dd A_i \big) \in \Omega^1(\T^*\A),
\ee 
that is so that---interpreting $\theta_\YM$ as the {\it off-shell\footnote{``Off shell'' refers to the Gauss constraint, see below.} symplectic potential} of Yang-Mills theory---$E^i(x) = E^{i\alpha}(x)\tau_\alpha$ is the $\fG$-valued {\it electric field}.

As customary in second-order Lagrangian theories, the canonical momentum (a one-form) is related to the configuration velocity (a one-vector) via the kinetic supermetric. This is most succinctly expressed in terms $\theta_\YM$:
\be
\theta_\text{YM} = \int \sqrt{g}\, \tr( E^i \dd A_i) = \bb G(\bbv).
\label{eq:GGbbv}
\ee
This is nothing else than the YM analogue of the usual Legendre transform relating momenta and velocities in particle mechanics, $p_I \d q^I = g_{IJ} \dot q^I \d q^J$.

Since under a gauge transformation $E$ transforms in the adjoint representation, the configuration-space gauge symmetry is lifted to phase space as follows:
\be\label{eq:xihash_Phi}
\xi^\#_{(A,E)} = \int (\D_i \xi)^\alpha(x) \frac{\delta}{\delta A_i^\alpha(x)} + [E^i,\xi]^\alpha(x) \frac{\delta}{\delta E^{i\alpha}(x)} \in \T_{(A,E)}\T^*\A.
\ee

As in the previous section, vectors fields of this form are called {\it vertical}. 
Through their span they locally define an integral distribution $\tilde V \subset \T\T^*\A$, and thus a foliation $\tilde {\cal F}$ of $\T^*\A$, which identifies the pure-gauge directions in phase space (we temporarily introduce tildes to distinguish these spaces from their configuration space analogues).

The inclusion of matter can be done in similar fashion. 
For definiteness, we consider complex Dirac fermions, $\psi$, valued in the fundamental representation $W$ of the gauge group $G=\SU(N)$,
\be
\psi^{B,b}\in\Psi = \mathcal C^\infty(\Sigma, \bb C^4\otimes W).
\ee 
The conjugate momenta, $\bar\psi$, thus live in\footnote{In a Lagrangian setting, $\bar\psi = i \psi^\dagger \gamma^0$. See the next footnote for details on $\gamma^0$. Here $\bb C^4_\ast$ indicates that the action of the Lorentz group on $\bar\Psi$ differs from that over $\Psi$ \cite{WeinbergQFT1}. The details won't be needed. } $\bar\Psi =\mathcal C^\infty(\Sigma, \bb C^4_\ast \otimes W^\dagger)$.

Under the action of a gauge transformation $g\in\G$, $\psi$ and $\bar\psi$ transform as
\be
\psi \mapsto g^{-1}\psi
\quad\text{and}\quad
\bar\psi \mapsto \bar\psi g.
\ee
Thus, the $(\psi,\bar\psi)$-components of $\xi^\#$ read
\be
\xi^\#_{|\psi,\bar\psi} = \int  (- \xi \psi)^{B,b}(x) \frac{\delta}{\delta \psi^{B,b}(x)} + ( \bar\psi \xi)^{B,b}(x) \frac{\delta}{\delta \bar\psi^{B,b}(x)}
\ee
where $(\xi \psi)^{B,b}(x) = \xi^\alpha(x) (\tau_\alpha)^b{}_{b'} \psi^{B,b'}(x)$, with $(\tau_\alpha)^{b'}{}_b$ an anti-Hermitian generator of $G$ in the fundamental representation $W$.

The charged fermions carry a $\Lie(G)$-current density
\be
J^\mu = (\rho, J^i)
\quad\text{with}\quad
J^\mu_\alpha = \bar \psi \gamma^\mu \tau_\alpha \psi 
\ee
where $(\gamma^\mu)^{B'}{}_B$ are the Dirac matrices.\footnote{For a metric $g_{ij}$ on $\Sigma$, the commutator is $\{\gamma^\mu,\gamma^\nu\} = 2 g^{\mu\nu} = 2\text{diag}(-1, g^{ij})$, i.e. $\gamma^\mu := e_I^\mu \gamma^I$ for $\gamma^I$ the flat-space Dirac matrices and $e_I^\mu$ a local inertial frame, $g_{\mu\nu}e_I^\mu e_J^\nu = \eta_{IJ}$ (see also footnote \ref{fn:setup}). We adopt the following conventions for the $\gamma^I$ \cite{WeinbergQFT1}: $\gamma^0={\tiny -i\bmatr{0}{1}{1}{0}}$, $\gamma^j=-i{\tiny \bmatr{0}{\sigma^j}{-\sigma^j}{0}}$ with $\sigma^j$ the Hermitian Pauli matrices.  }

It is convenient to introduce the following notation for the {\it total phase space},
\be
\Phi = \T^*\A \times (\bar\Psi\times\Psi).
\ee

Then the (complex) contribution of the Dirac fermions to the {\it total off-shell symplectic potential}
\be
\theta = \theta_\YM + \theta_\text{Dirac} \in\Omega^1(\Phi)
\ee
is:
\be
\theta_\text{Dirac} = - \int  \sqrt{g} \,\bar\psi \gamma^0 \dd \psi   \in  \Omega^1(\bar\Psi\times\Psi).
\ee

As on $\A$, the total action of gauge transformations on $\Phi$, $\xi^\# = \xi^\#_{A,E} + \xi^\#_{\psi,\bar\psi}$, can be promoted to a field-dependent one. That is, from now on
\be
\xi \in \Omega^0(\Phi, \fG),
\ee
with the isomorphism \eqref{eq:bracket_iso} extended to 
\be
\lbr \xi^\# , \eta^\# \rbr_{\T\Phi} = \big( [\xi,\eta] + \xi^\#(\eta) - \eta^\#(\xi) \big)^\#.
\label{eq:bracket_iso_phsp}
\ee
~

Given a connection form on $\A$, a connection form can be introduced on $\Phi$ by pullback:

\begin{Prop}
Denoting by $\pi:\Phi  \to \A$ the canonical projection from the full phase space to the gauge-potential configuration space $\A$, the pullback $\pi^*\varpi$ of the $\G$-compatible connection form $\varpi$ onto $\Phi$ defines a connection form on $\Phi$---i.e. it defines a $\fG$-valued 1-form on $\Phi$ that satisfies the corresponding projection and covariance properties.
In particular, $\pi^*\varpi$ defines a horizontal distribution $\tilde H := \ker(\pi^*\varpi)\subset\T\Phi$ transverse to the vertical distribution spanned by the $\xi^\#$, also denoted $\tilde V\subset\T\Phi$; i.e. $\tilde H \oplus \tilde V = \T \Phi$.
\end{Prop}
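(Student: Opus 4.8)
The plan is to recognize this as the standard fact that the pullback of a connection along a $\G$-equivariant map is again a connection, and to transport the defining properties \eqref{eq:varpi_def} from $\A$ to $\Phi$ by naturality of the pullback. The one geometric input I would establish first is that the vertical generator on $\Phi$ is $\pi$-related to the one on $\A$, i.e. $\pi_*\xi^\# = \xi^\#_A$. This holds because $\pi$ forgets the fibre coordinates $(E,\psi,\bar\psi)$, so the directions $\frac{\delta}{\delta E}$, $\frac{\delta}{\delta\psi}$, $\frac{\delta}{\delta\bar\psi}$ appearing in $\xi^\#_{(A,E)}$ \eqref{eq:xihash_Phi} and in $\xi^\#_{|\psi,\bar\psi}$ are $\pi$-vertical and are annihilated by $\pi_*$, leaving exactly the $A$-component $\int(\D_i\xi)^\alpha\frac{\delta}{\delta A_i^\alpha}=\xi^\#_A$. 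Equivalently, $\pi$ intertwines the $\G$-actions on $\Phi$ and $\A$.

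With this in hand, the projection property on $\Phi$ follows from the naturality of the interior product for $\pi$-related fields, $\bb i_{\xi^\#}(\pi^*\varpi)=\pi^*\big(\bb i_{\xi^\#_A}\varpi\big)=\pi^*\xi=\xi$, the middle equality being the projection property on $\A$. Since $\pi_*$ sends $\xi^\#$ to $\xi^\#_A$ pointwise (evaluating $\xi$ at the given point of $\Phi$) and the projection property on $\A$ holds pointwise in field space, the identity $\bb i_{\xi^\#}\pi^*\varpi=\xi$ holds for all $\xi\in\Omega^0(\Phi,\fG)$, field-dependent or not.

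For the covariance property I would first treat field-independent $\xi$ (with $\dd\xi=0$), where $\xi^\#$ is genuinely $\pi$-related to the vector field $\xi^\#_A$ on $\A$; naturality of the Lie derivative then gives $\bb L_{\xi^\#}(\pi^*\varpi)=\pi^*\big(\bb L_{\xi^\#_A}\varpi\big)=\pi^*[\varpi,\xi]=[\pi^*\varpi,\xi]$, using the covariance property on $\A$ in the form $\bb L_{\xi^\#_A}\varpi=[\varpi,\xi]$. To reinstate the $\dd\xi$ term for field-dependent $\xi$ I would repeat on $\Phi$ the Cartan-formula argument used on $\A$: write $\bb L_{\xi^\#}(\pi^*\varpi)=\bb i_{\xi^\#}\dd(\pi^*\varpi)+\dd\,\bb i_{\xi^\#}(\pi^*\varpi)$, where the second term equals $\dd\xi$ by the projection property just established, while the first, using $\dd(\pi^*\varpi)=\pi^*\dd\varpi=\pi^*\bb F-\tfrac12[\pi^*\varpi\stackrel{\curlywedge}{,}\pi^*\varpi]$ from \eqref{eq:FF}, reduces to $[\pi^*\varpi,\xi]$ because $\pi^*\bb F$ is horizontal ($\bb i_{\xi^\#}\pi^*\bb F=\pi^*\bb i_{\xi^\#_A}\bb F=0$) and $\bb i_{\xi^\#}[\pi^*\varpi\stackrel{\curlywedge}{,}\pi^*\varpi]=2[\xi,\pi^*\varpi]$. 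Every ingredient here is either algebraic or the already-proven pointwise projection property, so it applies verbatim to field-dependent $\xi$.

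Finally, I would deduce the splitting $\tilde H\oplus\tilde V=\T\Phi$ directly from the projection property, exactly as on the base. Transversality $\tilde H\cap\tilde V=0$ follows because $\xi^\#\in\ker(\pi^*\varpi)$ forces $\xi=\bb i_{\xi^\#}\pi^*\varpi=0$ and hence $\xi^\#=0$ (valid on the irreducible locus, where the connection is defined); and the explicit horizontal projector $\bb X\mapsto\bb X-\big(\pi^*\varpi(\bb X)\big)^\#$ shows $\tilde H+\tilde V=\T\Phi$, since $\pi^*\varpi$ annihilates its own image by the projection property. I expect the only genuinely delicate point to be the field-dependent case of the covariance property, where the naive naturality of the Lie derivative fails (because $\xi^\#_A$ is no longer a well-defined vector field on $\A$); the Cartan-formula route circumvents this, and it is worth checking carefully that each of its steps uses only pointwise/algebraic identities.
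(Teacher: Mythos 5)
Your proposal is correct and follows essentially the same route as the paper, whose one-line proof ("the fields transform in concert, and $A$ necessarily changes under a gauge transformation at irreducible configurations") is precisely your two key inputs: $\G$-equivariance of $\pi$ (so $\pi_*\xi^\#=\xi^\#_A$ pointwise) and irreducibility guaranteeing that verticality is faithfully detected on the base. Your Cartan-formula treatment of the field-dependent covariance property is also the paper's own technique, stated just after the definition of $\varpi$ in \eqref{eq:varpi_def}, so you have simply supplied the details the paper leaves implicit.
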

\begin{proof}
This follows directly from the fact that $A$, $E$, $\psi$, and $\bar \psi$ transform in concert under gauge transformations, together with the fact that $A$ necessarily changes under a gauge transformation (recall that we are here considering irreducible configurations only). 
\end{proof}

There is therefore little use in having different notations for $\varpi$ and its pullback on phase space; we will henceforth denote $\pi^*\varpi$ simply by $\varpi$, and $(\tilde H, \tilde V)$ by $(H, V)$.  (For an alternative, of more limited use, to this pullback construction from $\A$ to $\Phi$, see the so-called Higgs connection introduced in \cite{GomesHopfRiello}.)

We can now turn to the computation of horizontal differentials in $\Phi$. 
Following the definitions given in the previous sections, as well as equation \eqref{eq:dH_equivariant} for the horizontal differential of horizontal and equivariant forms, it is straightforward to prove that

\begin{Prop}
The single and double horizontal differentials of $A$, $E$, $\psi$ and $\bar\psi$ are respectively given by
\be
\begin{cases}
\dd_H A = \dd A - \D \varpi\\
\dd_H E = \dd E - [E,\varpi]
\end{cases}
\qquad\text{and}\qquad
\begin{cases}
\dd_H \psi = \dd \psi + \varpi \psi\\
\dd_H \bar \psi = \dd \bar\psi - \bar\psi \varpi\label{eq:dd1}
\end{cases}
\ee
and\footnote{In general, for an horizontal and equivariant form $\lambda$, $\dd_H^2 \lambda^a = - R(\bb F)^a{}_b\curlywedge \lambda^b$. See \eqref{eq:dH_equivariant}.}
\be
\begin{cases}
\dd_H^2 A = -\D \bb F\\
\dd_H^2 E = - [E,\bb F]
\end{cases}
\qquad\text{and}\qquad
\begin{cases}
\dd_H^2 \psi = \bb F\psi\\
\dd_H^2 \bar \psi =  - \bar\psi \bb F
\label{eq:dd2}
\end{cases}
\ee
\end{Prop}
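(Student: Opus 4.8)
The plan is to separate the gauge potential $A$, which transforms inhomogeneously, from the fields $E,\psi,\bar\psi$, which transform homogeneously and are therefore equivariant; only for the latter does the generic machinery of \eqref{eq:dH_equivariant} apply directly.

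For the single differentials I would dispose of $A$ immediately: $\dd_H A=\dd A-\D\varpi$ is just \eqref{eq:dH}, already proven. The fields $E,\psi,\bar\psi$ are field-space $0$-forms valued in a representation of $G$; as $0$-forms they are trivially horizontal, and their equivariance is read off from the vertical transformations encoded in \eqref{eq:xihash_Phi} and in the fermionic components of $\xi^\#$, namely $\bb L_{\xi^\#}E=[E,\xi]$, $\bb L_{\xi^\#}\psi=-\xi\psi$, $\bb L_{\xi^\#}\bar\psi=\bar\psi\xi$. These match $\bb L_{\xi^\#}\lambda=-R(\xi)\lambda$ for, respectively, the adjoint representation $R(\xi)\cdot=[\xi,\cdot]$, the fundamental representation $R(\xi)\psi=\xi\psi$ on $W$, and its dual $R(\xi)\bar\psi=-\bar\psi\xi$ on $W^\dagger$. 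Feeding these into \eqref{eq:dH_equivariant} yields at once $\dd_H E=\dd E-[E,\varpi]$, $\dd_H\psi=\dd\psi+\varpi\psi$, and $\dd_H\bar\psi=\dd\bar\psi-\bar\psi\varpi$.

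For the double differentials of the equivariant fields I would apply $\dd_H$ a second time. Since the Proposition containing \eqref{eq:dH_equivariant} guarantees that $\dd_H\lambda$ is again horizontal and equivariant, I can reuse \eqref{eq:dH_equivariant}, obtaining $\dd_H^2\lambda=\dd(\dd_H\lambda)+R(\varpi)\curlywedge\dd_H\lambda$. Expanding and using $\dd^2=0$, the $R(\varpi)\curlywedge\dd\lambda$ cross-terms cancel, and the surviving $R(\dd\varpi)\lambda$ and $R(\varpi)\curlywedge R(\varpi)\lambda$ combine---via the structure equation $\bb F=\dd\varpi+\tfrac12[\varpi\stackrel{\curlywedge}{,}\varpi]$ of \eqref{eq:FF}---into the compact identity $\dd_H^2\lambda=R(\bb F)\curlywedge\lambda$. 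Specialising the representation as above immediately delivers $\dd_H^2 E=-[E,\bb F]$, $\dd_H^2\psi=\bb F\psi$, and $\dd_H^2\bar\psi=-\bar\psi\bb F$.

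The genuinely separate---and hardest---step is $\dd_H^2 A$, since $A$ is not equivariant and $\dd_H^2=R(\bb F)$ does not apply to it. I would compute it directly from $\dd_H^2 A=\dd_H(\dd A-\D\varpi)$, exploiting the fact that $\dd_H A=\dd A-\D\varpi$ \emph{is} horizontal and equivariant in the adjoint representation---this is precisely \eqref{eq:LddHA} together with the horizontality $\bb i_{\xi^\#}\dd_H A\equiv 0$---so that \eqref{eq:dH_equivariant} can be applied to it, $\dd_H^2 A=\dd(\dd_H A)+[\varpi\stackrel{\curlywedge}{,}\dd_H A]$. Expanding $\D\varpi=\d\varpi+[A,\varpi]$, using Leibniz for $\dd$, $\dd^2=0$, the graded Jacobi identity for $[\,\cdot\stackrel{\curlywedge}{,}\cdot\,]$, and once more the structure equation \eqref{eq:FF}, the $[\dd A\stackrel{\curlywedge}{,}\varpi]$ contributions cancel and the remainder resums into $-\D\bb F=-(\d\bb F+[A,\bb F])$. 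I expect the main obstacle to be purely one of sign bookkeeping: $\varpi$ is a spatial scalar but a field-space $1$-form, so one must fix and consistently track the relative ordering of the spatial differential $\d$ and the field-space differential $\dd$, and of spatial versus field-space form degrees in every graded commutator, so that the $\d\dd\varpi$ terms assemble correctly into $-\d\bb F$. Once the conventions of the field-space/space bicomplex are fixed this is routine, and it is the only step where real care is required.
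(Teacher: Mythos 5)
Your proof is correct and follows essentially the paper's own (only sketched) argument: you apply \eqref{eq:dH_equivariant} to the equivariant zero-forms $E$, $\psi$, $\bar\psi$ in the adjoint, fundamental, and dual representations, invoke the already-proven \eqref{eq:dH} for $A$, and then reuse \eqref{eq:dH_equivariant} on the horizontal and equivariant $\dd_H A$ (via \eqref{eq:LddHA}) together with the structure equation \eqref{eq:FF} to obtain $\dd_H^2 A = -\D\bb F$, which is exactly the route the paper intends. The only point worth flagging is that your general identity $\dd_H^2\lambda = R(\bb F)\curlywedge\lambda$ has the opposite sign to the paper's footnote $\dd_H^2\lambda^a = -R(\bb F)^a{}_b\curlywedge\lambda^b$; with the equivariance convention $\bb L_{\xi^\#}\lambda = -R(\xi)\lambda$ fixed in the text, your sign is the one that reproduces the displayed results \eqref{eq:dd2} (e.g.\ $R(\bb F)E=[\bb F,E]=-[E,\bb F]$), so the discrepancy is purely a matter of the sign convention implicit in the footnote and does not affect your derivation.
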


If $\varpi$ is flat, then the horizontal differentials assume a particular meaning in terms of dressed field \cite{Dirac:1955uv, francoisthesis, bagan2000charges}.
This is spelled out in the following definition and proposition:

\begin{Def}[Dressed fields]\label{Def:dressing}
Assume the existence of a covariant field-space function $h: \Phi \to \G$ such that $R_g^*h = h g$ for all $g\in\G$, then 
the following composite fields, called the \emph{dressed fields}, can be defined:
\be
\begin{cases}
\hat A = hA h^{-1} + h \d  h^{-1} \\
\hat E = h E h^{-1}
\end{cases}
\qquad\text{and}\qquad
\begin{cases}
\hat \psi = h\, \psi \\
\hat {\bar \psi} = h^{-1} \bar\psi 
\end{cases}
\qquad
(\varpi=h^{-1} \dd h)
\ee
In these formulas $h$ is called the \emph{dressing factor}.
\end{Def}

Then it is straightforward to check the following:

\begin{Prop}\label{prop:ddHdressing}
Dressed fields can be defined if and only if a flat connection $\varpi=h^{-1}\dd h$ exists. 
Moreover, the dressed fields are \emph{gauge invariant} and their differential is related to the horizontal differential through the following:
\be
\begin{cases}
\dd \hat A = h( \dd_H A)h^{-1}   \\
\dd \hat E = h( \dd_H E ) h^{-1}
\end{cases}
\qquad\text{and}\qquad
\begin{cases}
 \dd\hat \psi = h\, \dd_H \psi \\
 \dd\hat{\bar\psi} = h^{-1} \dd_H \bar \psi 
\end{cases}
\qquad 
(\varpi=h^{-1} \dd h)
\ee
\end{Prop}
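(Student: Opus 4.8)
The plan is to verify the two claims of Proposition~\ref{prop:ddHdressing} in turn: first, the equivalence between the existence of dressed fields and the existence of a flat connection of the form $\varpi = h^{-1}\dd h$; second, the gauge-invariance of the dressed fields and the stated relations between their differentials and the horizontal differentials. The key input is Lemma~\ref{Lemma:exactvarpi} together with the covariance property of $h$ assumed in Definition~\ref{Def:dressing}.

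For the equivalence, I would proceed as follows. Suppose a covariant dressing factor $h:\Phi\to\G$ exists, i.e. $R_g^* h = h g$. Differentiating this covariance relation and contracting with $\xi^\#$ gives $\bb L_{\xi^\#} h = h\,\xi$ (the infinitesimal version of right-equivariance), from which I can compute $\bb i_{\xi^\#}(h^{-1}\dd h) = h^{-1}(\bb L_{\xi^\#} h) = \xi$, verifying the projection property for $\varpi := h^{-1}\dd h$. The covariance property follows similarly by computing $\bb L_{\xi^\#}(h^{-1}\dd h)$ and using that $\dd$ and $\bb L_{\xi^\#}$ commute. Flatness is then immediate: using the Maurer--Cartan identity, $\dd(h^{-1}\dd h) = -h^{-1}\dd h\curlywedge h^{-1}\dd h = -\tfrac12[\varpi\stackrel{\curlywedge}{,}\varpi]$, so $\bb F = \dd\varpi + \tfrac12[\varpi\stackrel{\curlywedge}{,}\varpi] = 0$ by \eqref{eq:FF}. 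Conversely, given a flat connection I would appeal to the fact (morally Lemma~\ref{Lemma:exactvarpi}, or the integrability of a flat distribution via Frobenius) that flatness permits writing $\varpi$ in the pure-gauge form $h^{-1}\dd h$ for a suitable $\G$-valued $h$, which in turn defines the dressed fields.

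For the second claim, the strategy is a direct computation. I would compute $\dd\hat A$ from $\hat A = hAh^{-1} + h\,\d h^{-1}$ using the Leibniz rule for $\dd$, collecting terms, and then substitute $\dd h = h\varpi$ and $\dd h^{-1} = -\varpi h^{-1}$ (consequences of $\varpi = h^{-1}\dd h$). After reorganizing, the result should factor as $h(\dd A - \D\varpi)h^{-1} = h(\dd_H A)h^{-1}$, using \eqref{eq:dd1}. The computations for $\hat E$, $\hat\psi$, and $\hat{\bar\psi}$ are entirely analogous, each reducing to the corresponding line of \eqref{eq:dd1} after the same substitutions. Gauge-invariance of the dressed fields, i.e. $R_g^*\hat A = \hat A$ etc., follows from combining $R_g^* h = hg$ with the transformation rules $R_g^* A = A^g = g^{-1}Ag + g^{-1}\d g$ and the adjoint/fundamental transformations of $E,\psi,\bar\psi$; the $g$-factors cancel by design.

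I expect the main obstacle to be the converse direction of the first claim---establishing that flatness of $\varpi$ genuinely guarantees a globally well-defined $\G$-valued dressing factor $h$. While Frobenius integrability gives this locally in field space, promoting it to a global statement is delicate, and in the non-Abelian case Lemma~\ref{Lemma:exactvarpi} shows exactness of $\varpi$ forces $G$ Abelian, so some care is needed about what ``flat'' versus ``exact'' buys us and about the domain on which $h$ is defined. The forward computations in the second claim are routine Leibniz-rule bookkeeping; the only subtlety there is keeping track of signs and of the placement of $h$ versus $h^{-1}$ in the fundamental (for $\psi$) versus antifundamental (for $\bar\psi$) cases, which is why the $\psi$ and $\bar\psi$ formulas carry $h$ and $h^{-1}$ respectively.
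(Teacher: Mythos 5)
Your proposal is correct and takes the same route the paper intends: the paper supplies no written proof (the proposition is introduced with ``it is straightforward to check''), and the checks you outline---deriving the projection and covariance properties of $\varpi=h^{-1}\dd h$ from the equivariance $R_g^*h=hg$, flatness from the Maurer--Cartan identity, gauge invariance from cancellation of the $g$-factors, and the differential relations by Leibniz-rule bookkeeping with $\dd h = h\varpi$ and $\dd h^{-1}=-\varpi h^{-1}$ reducing each case to the corresponding line of \eqref{eq:dd1}---are precisely the computations being left to the reader. One clarification: the obstacle you anticipate in the converse direction is not actually demanded by the statement, since ``a flat connection $\varpi=h^{-1}\dd h$ exists'' already packages the covariant dressing factor $h$ into the hypothesis, so the converse is immediate from Definition~\ref{Def:dressing}; the genuinely delicate question---whether abstract flatness of $\varpi$ implies a globally defined pure-gauge form---is outside the proposition's claim, and your distinction between exactness (which by Lemma~\ref{Lemma:exactvarpi} forces $G$ Abelian) and the pure-gauge form $h^{-1}\dd h$ is exactly the right one to draw there.
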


Therefore whenever the connection is not flat and the dressing construction is not available, one can see the horizontal differential as the only viable generalization of the dressing construction. This provides a physical intuition on the meaning of $\varpi$ and will be further discussed in section \ref{sec:dressing}.

As shown by the following theorem, the dressed-field construction {\it is} available, and indeed quite familiar, in electromagnetism:

\begin{Thm}[Coulomb potential and Dirac's dressed electron]
In EM, where $\varpi_\sdw=\dd \varsigma$ is exact, the SdW dressing of the field can be defined. 
Moreover, if $R=\bb R^3$ and we assume standard rapid-fall-off boundary conditions for the fields at infinity, the SdW dressed gauge potential $\hat A$ coincides with the gauge potential in Coulomb gauge, whereas the dressed electron $\hat \psi$ coincides with Dirac's dressed electron (cf. section \ref{sec:dressing} for details on Dirac's dressed electron). 
\end{Thm}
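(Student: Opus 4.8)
The plan is to produce the dressing factor explicitly and then read off both claims directly from the SdW boundary value problem \eqref{eq:sdwvarsigma}. By Theorem~\ref{thm:EM} the EM connection is exact, $\varpi_\sdw = \dd\varsigma$, and since $\dd^2 = 0$ it is flat (this is Lemma~\ref{Lemma:exactvarpi} read in the Abelian direction). Proposition~\ref{prop:ddHdressing} then guarantees that a dressing exists as soon as one exhibits a covariant $h$ with $\varpi_\sdw = h^{-1}\dd h$. Writing the charge group multiplicatively, I would simply set $h := \exp(\varsigma)$, so that $h^{-1}\dd h = \dd\varsigma = \varpi_\sdw$ by Abelianness of $\fG$. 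The covariance requirement $R_g^* h = hg$ of Definition~\ref{Def:dressing} then amounts to $R_g^*\varsigma = \varsigma + \lambda$ for $g = e^{\lambda}$; I would obtain this from the \emph{linearity and unique solvability} of \eqref{eq:sdwvarsigma}, since under $A \mapsto A + \d\lambda$ the bulk source $\nabla^i A_i$ and the Neumann datum $A_s$ shift precisely by $\nabla^2\lambda$ and $\pp_s\lambda$, so the unique solution shifts by $\lambda$.

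With $h = \exp(\varsigma)$ and the trivial adjoint action of an Abelian group, Definition~\ref{Def:dressing} collapses to $\hat A = A + h\,\d h^{-1} = A - \d\varsigma$ and $\hat\psi = h\psi = e^{\varsigma}\psi$. To identify $\hat A$ with the Coulomb potential I would contract it with $\nabla^i$ and with the normal $s^i$ and invoke the two lines of \eqref{eq:sdwvarsigma} in turn: $\nabla^i\hat A_i = \nabla^i A_i - \nabla^2\varsigma = 0$ in the bulk, and $\hat A_s = A_s - \pp_s\varsigma = 0$ on $\pp R$. Hence $\hat A$ sits on the section $\mathcal H_0$ of Theorem~\ref{thm:EM}, i.e. it obeys exactly the Coulomb gauge condition together with its natural boundary condition.

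For the electron I would specialize to $R = \bb R^3$, where the Neumann datum is traded for rapid fall-off and \eqref{eq:sdwvarsigma} is inverted by the Coulomb Green's function $G(x,y) = -1/(4\pi|x-y|)$, giving $\varsigma(x) = \int \d^3 y\, G(x,y)\,\pp^i A_i(y)$. Integrating by parts---the fall-off discarding the surface term---turns this into $\varsigma(x) = \int \d^3 y\, A^i(y)\,(x-y)_i/(4\pi|x-y|^3)$, so that $\hat\psi(x) = \exp(\varsigma(x))\,\psi(x)$ reproduces Dirac's dressed electron, up to the conventional placement of the coupling $e$ and the factor of $i$ distinguishing the realizations $G = \bb R$ and $G = \mathrm U(1)$.

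I expect the only real friction to be bookkeeping rather than mathematics: reconciling the additive Abelian gauge group with the multiplicative formulas of Definition~\ref{Def:dressing}, and matching the field-space normalizations used here with the textbook powers of $e$ and $i$ in Dirac's factor. The substantive ingredients---flatness of $\varpi_\sdw$, the existence and covariance of $h$, and the Coulomb-gauge identity $\nabla^i\hat A_i = 0 = \hat A_s$---all drop out immediately from \eqref{eq:sdwvarsigma} and the results already in hand, so no genuinely new estimate or construction should be required.
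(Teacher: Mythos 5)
Your proof is correct and takes essentially the same route as the paper's: flatness of $\varpi_\sdw=\dd\varsigma$, the dressing factor $h=e^\varsigma$, and the Coulomb-gauge and Dirac-dressing identifications read off directly from \eqref{eq:sdwvarsigma}; you merely make explicit the covariance check, the bulk-and-boundary verification $\nabla^i\hat A_i=0=\hat A_s$, and the Green's-function inversion that the paper compresses into ``direct consequences of \eqref{eq:sdwvarsigma}.'' The one point to phrase more carefully is ``unique solvability'': \eqref{eq:sdwvarsigma} is a Neumann problem whose kernel consists of constants (the stabilizer $\chi_\text{EM}$), so covariance of $\varsigma$ holds only modulo this constant ambiguity---which the paper sidesteps by modding out constant gauge transformations, and which disappears on $\bb R^3$ once rapid fall-off fixes $\varsigma$ at infinity, so your argument stands as written there.
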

\begin{proof}
In EM the SdW connection is flat $\varpi_\sdw = \dd \varsigma$ and $h = e^\varsigma$ defines the SdW dressing factor. 
Then the expression for the dressed fields simplifies to
\be
\begin{cases}
\hat A = A -\d \varsigma \\
\hat E = E
\end{cases}
\qquad\text{and}\qquad
\begin{cases}
\hat \psi = e^\varsigma \psi \\
\hat {\bar \psi} = e^{-\varsigma} \bar\psi 
\end{cases}
\qquad
(\varpi=\dd \varsigma).
\ee
The fact that in $R =\bb R^3$, $\hat A$ is the gauge potential in Coulomb gauge  and $\hat \psi$ is Dirac's dressed electron are both direct consequences of \eqref{eq:sdwvarsigma}---notice that in EM, the electric field $E$ is already gauge invariant and therefore its dressing is trivial. 
\end{proof}

A thorough discussion (with references) of Dirac's dressing is postponed to section \ref{sec:dressing}, where we will also discuss---from a field-space perspective---a possible generalization of dressed fields to the non-flat setting and in particular to the non-Abelian SdW case.

 In the above example the dressing factor is spatially nonlocal. Conversely, if the dressing factor can be chosen to be space(time) local, then the passage to dressed fields is just a local field redefinition that completely ``reabsorbs'' the gauge symmetry. In \cite{francoisthesis} this circumstance is interpreted---and we agree---as meaning that a gauge symmetry that can be ``neutralized'' in this way is non-substantial. This is the case e.g. when the gauge symmetry is introduced through a so-called Stückelberg trick, but it is also the case for the Lorentz gauge symmetry in tetrad gravity (here the dressing factor is given by the inverse tetrad) and, with certain subtleties \cite[Sect. 9]{GomesHopfRiello}, in the presence of spontaneous symmetry breaking (here the dressed fields are the fields expressed in unitary gauge).

\section{Horizontal splittings and symplectic geometry\label{sec:symred}}

This section is dedicated to the study of the symplectic structure of YM theory in the presence of boundaries. 
In particular, we will study the horizontal/vertical split of the symplectic structure induced by the horizontal/vertical decomposition of the (co)tangent bundle of the total phase space introduced in the previous section. 
This study was initiated in \cite{GomesRiello2016,GomesHopfRiello} and is here pushed (much) further.

Of course, many of the propositions presented in this section are (a rephrasing of) well known facts.

We should point out that ultimately the choice of a $\varpi$---including the SdW one, which in certain respects  is a more convenient choice---is entirely fiducial. As described in \cite{AldoNew}, on-shell of the Gauss constraint, one \textit{can} write the physical, i.e. reduced, symplectic form {\it independently} of a choice of $\varpi$. However, the explicit description of the physical degrees of freedom \textit{will} involve a choice of connection $\varpi$. This was to be expected from the standard symplectic duality between the gauge constraints and gauge-fixings. 
 It should also be noticed that the ability of writing down the $\varpi$-independent reduced symplectic structure relies on the introduction of superselection sectors and a {\it canonical} completion of the symplectic structure; this completion does not add any new dof. We will briefly review these results in Section \ref{sec:QLSymplRed}, and refer to \cite{AldoNew} for details.

\subsection{Horizontal/vertical split of the symplectic structure \label{sec:splitsympl}}

Given the total symplectic potential, $\theta$ and the horizontal/vertical split of $\T\Phi$, we introduce a horizontal/vertical split of $\theta$ itself:

\begin{Def}[Horizontal/vertical split of $\theta$]
The \emph{horizontal/vertical split} of the off-shell symplectic potential $\theta = \theta_\YM + \theta_\text{Dirac}$ with respect to a connection form $\varpi$ is defined as:
\be
\theta = \theta^H + \theta^V
\quad\text{where}\quad
\begin{dcases}
\theta^H := \int_R  \sqrt{g} \, \tr( E^i \dd_H A_i)  -\int_R  \sqrt{g}\; \bar\psi \gamma^0 \dd_H \psi\\
\theta^V :=  \int_R  \sqrt{g} \, \tr( E^i \D_i\varpi)  + \int_R \sqrt{g} \;\bar\psi \gamma^0 \varpi \psi
\end{dcases}
\label{eq:thetaH-def}
\ee
$\theta^H$ (resp. $\theta^V$) is said the \emph{horizontal} (resp. \emph{vertical}) off-shell symplectic potential.
\end{Def}

By construction $ \theta^{ H}(\xi^\#)  := \bb i_{\xi^\#} \theta^{ H} \equiv 0$ for all $\xi$, and $\theta^V(\bb h) \equiv 0$ for all $\bb h \in H\subset \T\Phi$, hence the horizontal/vertical nomenclature.

Although in the above formulas we have explicitly decomposed $\dd A$ into pure-gauge and horizontal modes, we haven't yet decomposed the different modes of the electric field. 

\begin{Def}[Radiative/Coulombic decomposition]
Given a connection form $\varpi$, define the following functional decomposition of the electric field into \emph{radiative} and \emph{Coulombic} components
\be
E = E_\rad + E_\Coul
\ee
through the cotangent dual of the decomposition of $\bb X \in \T\A$ into its horizontal and vertical parts, $E_\rad$ being dual to horizontal vectors and $E_\Coul$ to vertical ones.
\end{Def}

In other words, the radiative/Coulombic decomposition is defined by demanding that $ \int \sqrt{g} \,\tr(E_\rad^i\, \D_i \xi)  \equiv 0 \equiv \int\sqrt{g}\, \tr(E_\Coul^i\, h_i)$, for all $\xi\in\fG$ and all horizontal vectors $\bb h = \int h \frac{\delta}{\delta A}$.
Therefore, by definition, $E_\rad$ drops from $\theta^V$ and is therefore the component of the electric field conjugate to $\dd_H A$; and conversely, $E_\Coul$ drops from $\theta^H$ and is (loosely speaking)  the component of $E$ conjugate to $\varpi$. 

 In more detail: we use the cotangent dual to the horizontal/vertical decomposition of vectors $\bb X = \bb h + \xi^\#  \in\T\A$ to decompose the {\it co}vector $\theta = \int \sqrt{g}\, \tr(E^i \dd A_i)\in\T^*\A$ into $\theta^H$ and $\theta^V$---so that by definition $\theta^H(\xi^\#)\equiv 0 \equiv \theta^V(\bb h)$.
The decomposition of $E$ into $E_\rad$ and $E_\Coul$ is then a rewriting of the decomposition of $\theta$ in terms of its coordinate components.\footnote{Note that $\theta^H = \int \sqrt{g} \,\tr(E_\rad \dd_H A) = \int \sqrt{g} \,\tr(E \dd_H A) = \int \sqrt{g} \,\tr(E_\rad \dd A)$, and similarly for $\theta^V$.}
Moreover, from the gauge-covariance of the whole construction it follows that $E$, $E_\rad$ and $E_\Coul$ all transform in the adjoint representation and are therefore equally gauge variant.
Indeed, since $\theta$ is a {\it co}vector rather than a vector, what we call its horizontal/vertical decomposition has {\it nothing} to do with a split of $E$ into its pure-gauge and ``physical'' components, as it was for $\bb X = \bb h + \xi^\#$. 
This point is most evident in electromagnetism, where $E$, $E_\rad$, and $E_\Coul$ are all gauge {\it in}variant and equally ``physical.'' 
The only place in which  the ``pure gauge'' part of $E$ (or of $E_\rad$, or of $E_\Coul$) is  distinguished through a geometric construction, is when we build a horizontal variation of $E$ or, dually, the horizontal differential $\dd_H E$ (or $\dd_H E_\rad$, or   $\dd_H E_\Coul$).

Regarding notation,  we will see that $E_\rad$ is a generalization of the transverse electric field of a photon to a finite region, thus the labeling ``rad'' which stands for {\it radiative}. 
Conversely, $E_\Coul$ will be tasked with solving the Gauss constraint within $R$, and for this reason it is labeled ``Coul'' which stands for {\it Coulombic}.

Another convenient way to understand the above definition uses the supermetric $\bb G$ to dualize the electric field by introducing an associated field space vector.
Despite the fact that to perform the dualization we will use $\bb G$, the following construction holds for {\it any} choice of $\varpi$, not only the SdW one.

Following the hint of \eqref{eq:GGbbv},  we can use $\bb G$ to convert the definition of $E$ as a cotangent vector to a tangent one: define $\bb E \in \T\A$ to be the field-space vector such that\footnote{The last expression of \eqref{eq:theta_E} has been introduced for notational convenience, even if  geometrically imprecise. But the meaning is intuitively clear: $\theta(\bb X) \equiv \bb i_{\bb X} \bb G(\bb E) \equiv \bb G( \bb E, \bb X)$ for any $\bb X\in\mathrm T\A$. We also notice that $\bb i_{\bb X} (\dd_H A) = \hat H(\bb X)$ and $(\bb i_{\bb X} \varpi)^\# = \hat V(\bb X)$ where $\hat H$ and $\hat V$ are the horizontal and vertical projections respectively. 
}
\be
\theta = \bb G(\bb E)\equiv\bb G(\bb E, \dd A).
\label{eq:theta_E}
\ee
More explicitly,
\be
\bb E := \int \sqrt{g}\, g_{ij} E^i \frac{\delta}{\delta A_j} \in \T\A.
\ee

 With this notation, the radiative/Coulombic split of $E$ can be seen to be defined by the following orthogonality relations:
\be
\begin{dcases}
\bb G( \bb E_\rad , \xi^\#) \equiv 0 & \text{for all $\xi$},\\
 \bb G( \bb E_\Coul, \hat H(\bb X) ) \equiv 0 & \text{for all $\bb X$},
 \end{dcases}
\label{eq:EErad/Coul}
\ee
where we recall that $\hat H(\bb X) := \bb X - \varpi(\bb X)^\#$ is the $\varpi$-horizontal projection in $\T\A$.  These equations are of course just a rewriting of the dual nature of the decomposition of $E$ relatively to that of $\bb X \in \T\A$.
See figure \ref{fig:GGE}.

\begin{figure}[t]
\begin{center}
\includegraphics[width=5.5cm]{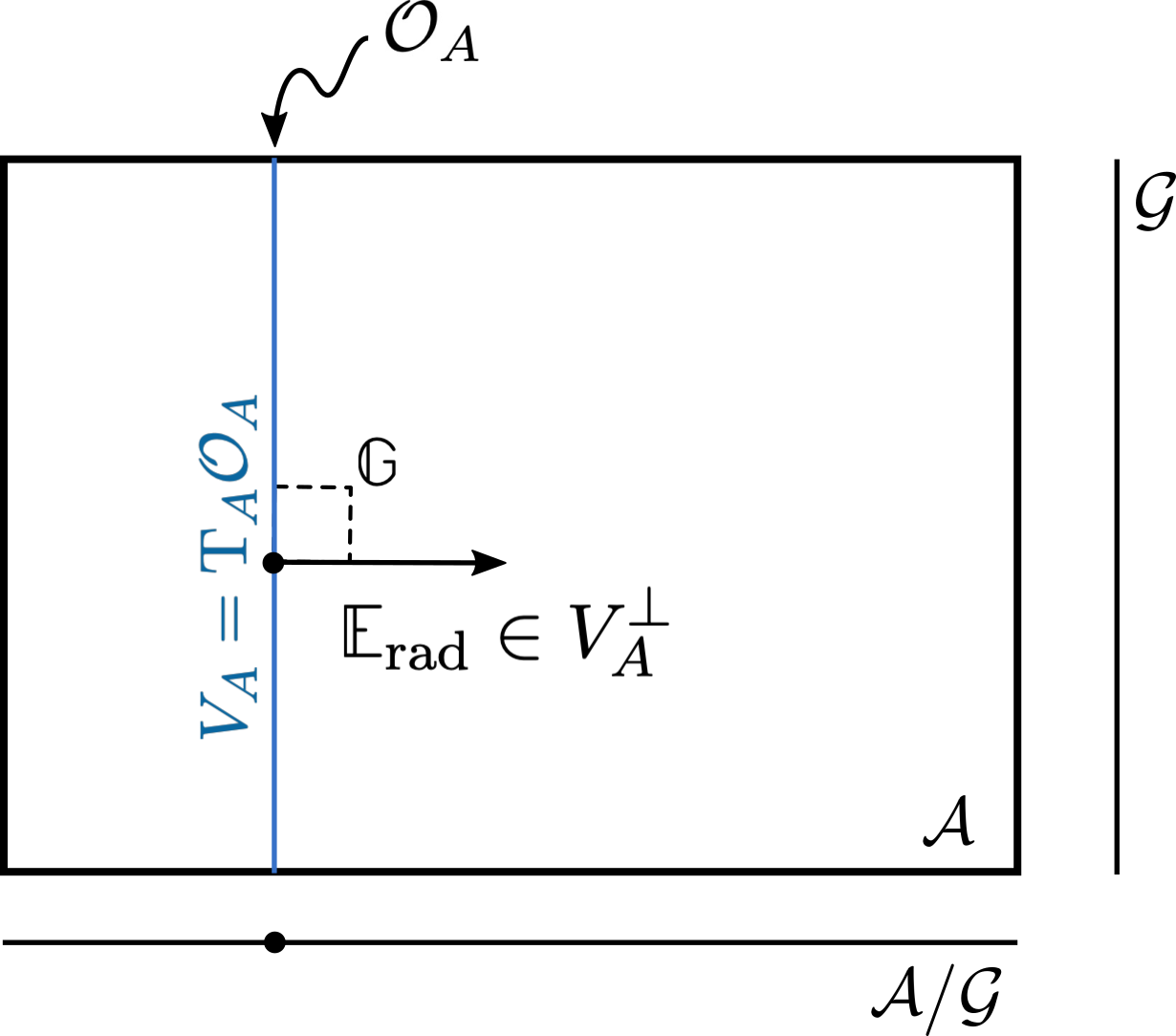}
\hspace{1.5cm}
\includegraphics[width=5.5cm]{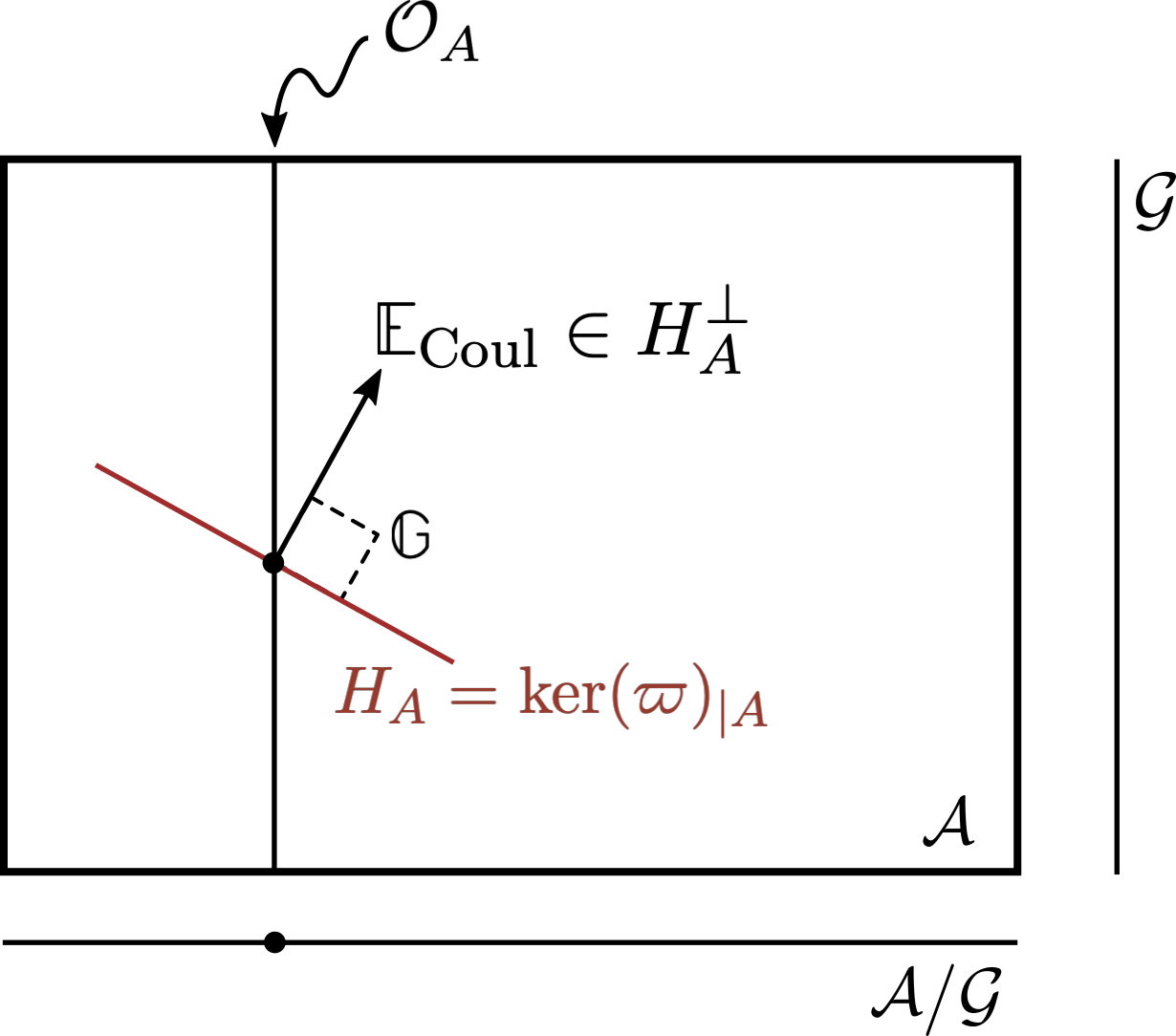}
\caption{A graphical representation in $\mathrm T\A$ of $\bb E_\rad$ and $\bb E_\Coul$ as vectors in the $\bb G$-orthogonal complements of $V_A$ and $H_A$, respectively. Notice that only with the SdW choice $\varpi = \varpi_\sdw$, one has $H_A = V_A^\perp$ and therefore $\bb E_\rad \in H_A^\sdw$ and $\bb E_\Coul \in V_A$; that is, only with this choice do the pictures on the right and left align---see section \ref{sec:SdWCoul}.}
\label{fig:GGE}
\end{center}
\end{figure}

From the first of these equations we readily see that:

\begin{Prop}[Radiative electric field] 
The radiative component of the electric field is (covariant-)divergence-free and fluxless, i.e.
\be
\begin{cases}
\D_i E_\rad^i = 0 & \text{in }R\\
s_i E_\rad^i = 0 & \text{at }\pp R
\end{cases}
\label{eq:Erad}
\ee
\end{Prop}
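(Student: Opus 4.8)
The plan is to establish the two conditions in \eqref{eq:Erad} directly from the defining orthogonality relation for $\bb E_\rad$, namely the first equation of \eqref{eq:EErad/Coul}, which states $\bb G(\bb E_\rad, \xi^\#) = 0$ for all $\xi \in \fG$. The strategy is to unfold this abstract orthogonality into an explicit integral over $R$, integrate by parts to isolate a bulk term and a boundary term, and then invoke the arbitrariness of $\xi$ both in the interior and at $\pp R$ to read off the two conditions separately. This mirrors exactly the argument already used in the proof of Proposition \ref{prop:SdWhoriz} (the alternative, $\bb G$-based derivation) and in the proof of Proposition \ref{prop:Dchi=0}, so the technique is entirely routine given the machinery established earlier.

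Concretely, I would first write out $\bb G(\bb E_\rad, \xi^\#)$ using the definition \eqref{eq:GG} of the kinetic supermetric together with the form \eqref{eq:hash} of the vertical vector $\xi^\#$, whose configuration-space component is $\int (\D_i \xi) \frac{\delta}{\delta A_i}$. This gives
\be
0 = \bb G(\bb E_\rad, \xi^\#) = \int_R \sqrt{g}\, g^{ij} \tr\big( E_{\rad,i}\, \D_j \xi \big).
\ee
Next I would integrate by parts, moving the covariant derivative $\D_j$ off $\xi$ and onto $E_\rad$. Using the compatibility of $\D$ with the trace pairing and the metric, this produces a bulk term proportional to $\D^i E_{\rad,i}$ and a boundary term carrying the normal component $s_i E_\rad^i$:
\be
0 = -\int_R \sqrt{g}\, \tr\big( (\D^i E_{\rad,i})\, \xi \big) + \oint_{\pp R} \sqrt{h}\, \tr\big( s_i E_\rad^i\, \xi \big).
\ee

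Finally, I would appeal to the arbitrariness of $\xi$. Since the gauge parameters $\xi$ are completely unconstrained both in the interior of $R$ and at its boundary $\pp R$ (the central hypothesis emphasized throughout the paper, that no boundary condition is imposed on the gauge freedom), the bulk and boundary integrals must vanish independently. Localizing $\xi$ in the interior forces the bulk integrand to vanish, yielding $\D_i E_\rad^i = 0$ in $R$; then, with the bulk term gone, localizing $\xi$ near $\pp R$ forces the boundary integrand to vanish, yielding $s_i E_\rad^i = 0$ at $\pp R$. I do not anticipate any genuine obstacle here: the only subtlety worth flagging is the clean separation of bulk and boundary contributions, which relies precisely on $\xi$ being free at the boundary---a point already stressed in the text---and the nondegeneracy of the trace form on $\Lie(G)$, which guarantees that vanishing of the pairing against all $\xi$ implies vanishing of the field itself. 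Everything else is a direct transcription of the integration-by-parts arguments already deployed in Propositions \ref{prop:SdWhoriz} and \ref{prop:Dchi=0}.
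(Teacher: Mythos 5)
Your proposal is correct and follows essentially the same route as the paper: the paper's proof explicitly states that the result follows from \eqref{eq:EErad/Coul} by an argument formally identical to the $\bb G$-orthogonality computation in the proof of Proposition \ref{prop:SdWhoriz}, which is precisely the expansion, integration by parts, and appeal to the arbitrariness of $\xi$ (in the bulk and, independently, at $\pp R$) that you carry out. Nothing is missing.
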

\begin{proof}
The proof follows from \eqref{eq:EErad/Coul} is formally identical to the proof of proposition \ref{prop:SdWhoriz} on the properties of the SdW-horizontal modes of the gauge potential.
\end{proof}

Equation \eqref{eq:Erad} reduces the number of local dof  of $E_\rad$ with respect to $E$ by one (times $\dim(G)$), as required for $E_\rad$ to be conjugate to $\dd_HA$.
 The remaining degree of freedom is then encoded in $E_\Coul$.
As exemplified by the second equation of \eqref{eq:EErad/Coul}, the functional properties of  $E_\Coul$, contrary to those of $E_\rad$, are not universal i.e. they depend on the choice of horizontal distribution, that is of $\varpi$.

We will see shortly that the vertical symplectic potential $\theta^V$ is tightly related to the Gauss constraint.
It should therefore not come as a surprise that $E_\rad$ is completely absent from the Gauss constraint, due to its being divergence-free in the bulk.
Moreover, the boundary condition $s_i E^i_\rad = 0$ in \eqref{eq:Erad}, which expresses the fluxless property of $E_\rad$, already suggests that the Gauss constraint, in a bounded region, should be complemented by a boundary condition involving the electric flux $E_s$.

In section \ref{sec:charges} (and especially \ref{sec:Greens}) we will argue how, contrary to $\rho$, the electric flux is {\it not} determined by the field content of the region $R$. This means, in particular, that charged matter can be introduced into $R$ without\footnote{There are caveats to these statements in Abelian theories and more generally at reducible configurations, where a finite number of modes of $E_s$ over $\pp R$ is related to as many integrals of $\rho$ over $R$. E.g. in electromagnetism $\int \sqrt{g} \, \rho_{\text{EM}} = \oint \sqrt{h} \, f$. We refer to section \ref{sec:charges} for a discussion.}  modifying $E_s$. 
 Following this argument, as well as the  analysis of the Gauss constraint performed in \cite{AldoNew}, we are led to consider the value $f$ of $E_s$ as an {\it external} datum which is not on par with $\rho$. Rather, this external datum defines {\it super-selection sectors} of the theory as restricted to $R$.

Hence, {\it given} a functional connection $\varpi$---which allows us to define the radiative/coulombic split of $E$,---and a flux $f$, we introduce the following version of the Gauss constraint (see \cite{AldoNew} for details):
\be
\GC_f : \qquad 
\begin{cases}
\GC:= \D_i E_\Coul^i -\rho \approx 0 & \text{in }R,\\
\GC_f^\pp:=s_i E_\Coul^i - f\approx0 & \text{at }\pp R.
\end{cases}
\label{eq:Gauss-Coul}
\ee
This equation has then a unique solution, as discussed in section \ref{sec:SdWCoul}.

The above-mentioned relation between $\theta^V$ and the Gauss constraint  (paragraphs following \eqref{eq:Erad}) becomes manifest  through an integration by parts: 
\be
\theta^V = - \int \sqrt{g} \;\tr\big( \GC\, \varpi ) + \oint \sqrt{h}\; \tr( E_s \varpi ) \approx \oint\sqrt{h}\;\tr(f\varpi),
\label{eq:thetaV}
\ee
where we have introduced $h_{ab} = (\iota_{\pp R}^* g)_{ab}$, the induced metric on $\pp R$, and the square-root of its determinant $\sqrt{h}$.
This shows that the vertical symplectic potential is, on shell of the Gauss constraint, a pure boundary term.

We are finally ready to introduce the split of the symplectic {\it form} and thus state our theorem on the horizontal/vertical split of the symplectic structure in the presence of boundaries.

Recall that from the off-shell symplectic potential $\theta = \theta_\YM + \theta_\text{Dirac}$, one builds the {\it off-shell symplectic 2-form} by differentiation:
\be
\Omega = \Omega_\YM + \Omega_\text{Dirac} = \dd \theta_\YM + \dd \theta_\text{Dirac} = \dd \theta,
\ee
i.e.
\be
\Omega  = \int_R  \sqrt{g} \,\tr( \dd E^i \curlywedge \dd A_i)  -\int_R \sqrt{g} \;\dd \bar\psi \curlywedge \gamma^0  \dd \psi.
\ee
~

\begin{Def}[Horizontal/vertical split of $\Omega$]
The \emph{horizontal/vertical split} of the off-shell symplectic 2-form $\Omega = \Omega_\YM + \Omega_\text{Dirac}$ is defined as:
\be
\Omega^H = \Omega^H_\YM + \Omega^H_\text{Dirac} :=  \dd \theta^H_\YM + \dd \theta^H_\text{Dirac} = \dd \theta^H
\qquad\text{and}\qquad
\Omega^\pp := \dd \theta^V
\label{eq:OmegaH-def}.
\ee
$\Omega^H$ (resp. $\Omega^\pp$) is said the \emph{horizontal} (resp. \emph{boundary}) symplectic form.
\end{Def}

Notice that, when referring to $\Omega^H$ and $\Omega^\pp$ the use of the adjective ``symplectic'' is technically incorrect, since they have degenerate directions in $\T\Phi$. This fact can be emphasized in the case of $\Omega^H$ by rather using the term {\it pre}-symplectic.

We also warn the reader that when referred to $\Omega$, the nomenclature ``horizontal/vertical split'' should not be misinterpreted: the horizontal/vertical decomposition of $\T\A$ is at the basis of the split---hence its name,---but:  $\Omega^\pp$ fails to be purely vertical {\it and} $\Omega^H$ sometimes fails to be the {\it entirety} of the horizontal components present in $\Omega$. These points are clarified by the following theorem.

\begin{Thm}[Horizontal/vertical split of the symplectic structure]\label{thm:OmegaH-pp}
The horizontal/vertical split of the off-shell symplectic potential and off-shell symplectic 2-form read, respectively:
\be
\theta = \theta^H + \theta^V \;\text{where}\;
\begin{dcases}
\theta^H = \int  \sqrt{g} \, \tr( E^i_\rad \dd_H A_i)  -\int  \sqrt{g} \;\bar\psi \gamma^0 \dd_H \psi\\
\theta^V = \int \sqrt{g} \;\tr\big( -\GC\, \varpi ) + \oint \sqrt{h}\; \tr(E_\Coul^s \varpi )\\\quad \;\;\approx \oint \sqrt{h}\; \tr(f \varpi )
\end{dcases}
\label{eq:theta_HV}
\ee
and 
\be
\Omega = \Omega^H +  \Omega^\pp\;\text{where}\;
\begin{dcases}
\Omega^H = \int \sqrt{g} \, \tr\big( \dd_H E_\rad^i \curlywedge \dd_H A_j  \big)- \int \sqrt{g} \, \big( \dd_H \bar\psi \curlywedge \gamma^0 \dd_H \psi -  \tr( \rho \bb F) \big)\\
\Omega^\pp = - \int \sqrt{g} \, \tr\big( \tfrac12 \GC [\varpi\stackrel{\curlywedge}{,} \varpi] + \dd_H \GC \curlywedge \varpi  + \GC \bb F \big)  \\
\qquad\quad + \oint \sqrt{h} \, \tr\big( \tfrac12 E_s [\varpi\stackrel{\curlywedge}{,} \varpi] + \dd_H E_s \curlywedge \varpi  + E_s \bb F \big)\\
\quad\;\;\approx  \oint \sqrt{h} \, \tr\big( \tfrac12 f [\varpi\stackrel{\curlywedge}{,} \varpi] + \dd_H f \curlywedge \varpi  + f \bb F \big)
\end{dcases}
\label{eq:OmegaH-pp}
\ee
(In Appendix \ref{app:translation}, we provide a quick bridge to a more common notation, analogous to that used e.g. by Ashtekar and Streubel \cite{AshtekarStreubel} or Wald and Lee \cite{Lee:1990nz}.)
\end{Thm}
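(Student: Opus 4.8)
The plan is to establish the split of the potential $\theta$ first, and then obtain the split of $\Omega$ by applying the field-space differential $\dd$ to the two pieces $\theta^H$ and $\theta^V$ separately, as dictated by the definition \eqref{eq:OmegaH-def}. The whole computation is a sequence of Leibniz expansions in which every bare $\dd$ is traded for a horizontal differential $\dd_H$, after which the radiative conditions \eqref{eq:Erad} and the second-order identities \eqref{eq:dd2} do the decisive work.

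For the potential, I would start from $\theta = \int\sqrt{g}\,\tr(E^i\dd A_i) - \int\sqrt{g}\,\bar\psi\gamma^0\dd\psi$ and insert the single horizontal differentials \eqref{eq:dd1}, i.e. $\dd A_i = \dd_H A_i + \D_i\varpi$, $\dd\psi = \dd_H\psi - \varpi\psi$, and $\dd\bar\psi = \dd_H\bar\psi + \bar\psi\varpi$. The terms carrying $\dd_H$ reassemble $\theta^H$ as in \eqref{eq:thetaH-def}; in the gauge-field term one may freely replace $E^i$ by $E_\rad^i$ because, by the orthogonality relations \eqref{eq:EErad/Coul}, the Coulombic part drops against the horizontal $\dd_H A_i$. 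The leftover $\varpi$-terms build $\theta^V$: I would integrate the Yang--Mills piece $\int\sqrt{g}\,\tr(E^i\D_i\varpi)$ by parts, combine it with the matter piece $\int\sqrt{g}\,\bar\psi\gamma^0\varpi\psi = \int\sqrt{g}\,\tr(\rho\varpi)$ (recognizing the charge density $\rho_\alpha = \bar\psi\gamma^0\tau_\alpha\psi$), and then use $\D_iE_\rad^i = 0$ in the bulk and $s_iE_\rad^i = 0$ at $\pp R$ from \eqref{eq:Erad} to rewrite $\D_iE^i - \rho = \GC$ and $E_s = E_\Coul^s$. This is exactly \eqref{eq:theta_HV}, and the on-shell form follows by imposing $\GC_f$ \eqref{eq:Gauss-Coul}.

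For $\Omega^H = \dd\theta^H$, I would differentiate by Leibniz and convert every bare $\dd$ into $\dd_H$ using the adjoint/fundamental relations dual to \eqref{eq:dd1}, namely $\dd E_\rad = \dd_H E_\rad + [E_\rad,\varpi]$ and $\dd\bar\psi = \dd_H\bar\psi + \bar\psi\varpi$, together with \eqref{eq:dd2}, $\dd_H^2 A = -\D\bb F$ and $\dd_H^2\psi = \bb F\psi$. The $\varpi$-cross terms then cancel in pairs by the ad-invariance (cyclicity) of the trace. The decisive point is the asymmetry between the two sectors: the Yang--Mills curvature contribution appears as $-\int\sqrt{g}\,\tr(E_\rad^i\D_i\bb F)$, which upon integration by parts equals $\int\sqrt{g}\,\tr(\D_iE_\rad^i\,\bb F) - \oint\sqrt{h}\,\tr(E_\rad^s\bb F)$ and hence \emph{vanishes} precisely because of the radiative conditions \eqref{eq:Erad}; by contrast the matter curvature term $\int\sqrt{g}\,\bar\psi\gamma^0\bb F\psi = \int\sqrt{g}\,\tr(\rho\bb F)$ carries no spatial derivative and therefore \emph{survives}, producing the $\tr(\rho\bb F)$ contribution to $\Omega^H$ in \eqref{eq:OmegaH-pp} (its sign being fixed by the fermionic and form-grading conventions). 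For $\Omega^\pp = \dd\theta^V$, I would apply $\dd$ to the bulk and boundary pieces of \eqref{eq:theta_HV} separately---since $R$ and $\pp R$ are field-independent, $\dd$ commutes with the spatial integrals and no new integration by parts is needed. Using $\dd\varpi = \bb F - \tfrac12[\varpi\stackrel{\curlywedge}{,}\varpi]$ from \eqref{eq:FF}, together with $\dd\GC = \dd_H\GC + [\GC,\varpi]$ and $\dd E_s = \dd_H E_s + [E_s,\varpi]$, the three structures $\tfrac12\GC[\varpi\stackrel{\curlywedge}{,}\varpi]$, $\dd_H\GC\curlywedge\varpi$, $\GC\bb F$ (and their boundary analogues in $E_s$) assemble after invoking the identity $\tr([X,\varpi]\curlywedge\varpi) = \tr(X[\varpi\stackrel{\curlywedge}{,}\varpi])$, which fixes the relative coefficient $-\tfrac12$; the on-shell reduction again follows from $\GC_f$.

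The main obstacle I anticipate is bookkeeping rather than conceptual: correctly tracking the three layers of signs---the graded-commutativity of $\curlywedge$ on $\Omega^\bullet(\A)$, the Grassmann oddness of $\psi,\bar\psi$, and the ordering inside the Lie-algebra trace---so that the $\varpi$-cross terms cancel and the bracket structures combine with the right coefficients. The one genuinely structural subtlety is ensuring that the asymmetric treatment of the curvature terms (killed by \eqref{eq:Erad} in the gauge sector, retained in the matter sector) is handled consistently, since this is exactly what distinguishes $\Omega^H$ from a naively ``purely horizontal'' object and is the origin of the surviving $\tr(\rho\bb F)$ term.
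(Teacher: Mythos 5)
Your proposal is correct and follows essentially the same route as the paper's proof: the $\theta$-split is just \eqref{eq:thetaH-def} and \eqref{eq:thetaV}, while $\Omega^H$ and $\Omega^\pp$ are obtained by the same tedious Leibniz computation using $\dd_H^2 A = -\D\bb F$, $\dd_H^2\psi = \bb F\psi$ \eqref{eq:dd2}, the radiative conditions \eqref{eq:Erad} to kill the gauge-sector curvature term (leaving only $\tr(\rho\bb F)$), and \eqref{eq:FF} for $\Omega^\pp = \dd\theta^V$. The only cosmetic difference is that the paper computes $\Omega^H$ directly as $\dd_H\theta^H$ (licensed by $\theta^H$ being basic, proposition \ref{prop:gaugeOmegaH}), whereas you expand $\dd\theta^H$ and cancel the $\varpi$-cross terms by hand---the same calculation in substance.
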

\begin{proof}
The proof \eqref{eq:theta_HV} was given in equations \eqref{eq:thetaH-def} and \eqref{eq:thetaV}. 
The proof of \eqref{eq:OmegaH-pp} follows from a straightforward albeit tedious calculation which employs the following relations: $\Omega^H := \dd_H \theta^H$ \eqref{eq:OmegaH-def}, $\dd_H^2 A = - \D\bb F$  and $\dd_H^2 \psi = \bb F \psi $ \eqref{eq:dd2}, as well as \eqref{eq:Erad}; \eqref{eq:FF} is also needed to compute $\Omega^\pp = \dd \theta^V$.
\end{proof}

{\it In the absence of boundaries}, we come to the following conclusion:

\begin{CorT}\label{Lemma:redppR0}
In the \emph{absence} of boundaries $\pp R = \emptyset$ and on-shell of the Gauss constraint $\GC\approx0$, the total symplectic form equals the horizontal one,  $\Omega\approx \Omega^H$.  Therefore, in the absence of boundary and on shell of the Gauss constraint, $\Omega^H$ is independent of the choice of functional connection $\varpi$ used to build it.
\end{CorT}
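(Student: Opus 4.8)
The plan is to read the result straight off the explicit expression for $\Omega^\pp$ in Theorem \ref{thm:OmegaH-pp}, equation \eqref{eq:OmegaH-pp}, interpreting the weak equality $\approx$ as equality upon pullback to the Gauss constraint surface. First I would set $\pp R = \emptyset$, so that the three boundary integrals $\oint\sqrt{h}(\cdots)$ in $\Omega^\pp$ are simply absent. What remains is the bulk expression
\be
\Omega^\pp = - \int \sqrt{g} \, \tr\big( \tfrac12 \GC\, [\varpi\stackrel{\curlywedge}{,} \varpi] + \dd_H \GC \curlywedge \varpi + \GC\, \bb F \big),
\ee
and the task reduces to showing that this vanishes weakly.

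Next I would introduce the constraint surface $\bar\Phi := \{\GC\approx 0\}\subset\Phi$ with inclusion $\iota:\bar\Phi\hookrightarrow\Phi$, and recall that \emph{weak equality} of forms means equality of their pullbacks along $\iota$. The first and third terms above carry an explicit factor of $\GC$, so their pullbacks vanish because $\iota^*\GC = 0$ by definition of $\bar\Phi$. The only term requiring an argument is the middle one, $\dd_H\GC\curlywedge\varpi$, since $\dd_H\GC$ is not manifestly proportional to $\GC$. Here I would use that $\GC$ transforms in the adjoint, so by \eqref{eq:dH_equivariant} its horizontal differential reads $\dd_H\GC = \dd\GC + [\varpi,\GC]$. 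The commutator term pulls back to $[\iota^*\varpi,\iota^*\GC]=0$, while for the genuine differential I would invoke the commutation of pullback with the field-space de Rham differential, $\iota^*\dd\GC = \dd(\iota^*\GC) = \dd 0 = 0$. Hence $\iota^*(\dd_H\GC) = 0$ and the middle term pulls back to zero as well, giving $\iota^*\Omega^\pp = 0$, i.e. $\Omega^\pp\approx0$, and therefore $\Omega = \Omega^H + \Omega^\pp \approx \Omega^H$.

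Finally, for the $\varpi$-independence claim I would argue that the full off-shell form $\Omega = \dd\theta$ is manifestly independent of the choice of functional connection, being built solely from the canonical tautological data on phase space $\Phi$. Since we have just established $\Omega^H\approx\Omega$ and the right-hand side is $\varpi$-independent, it follows that $\Omega^H$ is independent of $\varpi$ on the constraint surface. I would emphasize that this is a statement about the \emph{total} $\Omega^H$ and not about its individual summands: the piece $\tr(\rho\,\bb F)$ in \eqref{eq:OmegaH-pp} does depend on $\varpi$ through the curvature $\bb F$, and this dependence is necessarily cancelled on-shell by the $\varpi$-dependence hidden in $\dd_H E_\rad\curlywedge\dd_H A$ and $\dd_H\bar\psi\curlywedge\gamma^0\dd_H\psi$.

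The step I expect to be the only real subtlety is the treatment of $\dd_H\GC\curlywedge\varpi$: one must be careful that weak equality is being used in the sense of pullback to $\bar\Phi$ (so that $\iota^*\dd\GC = \dd\,\iota^*\GC$ may legitimately be invoked), rather than in a naive pointwise sense in which $\dd_H\GC$ need not vanish at all. Everything else is immediate once the boundary terms are dropped and the explicit $\GC$-factors are recognized.
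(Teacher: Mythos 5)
Your proposal is correct and takes essentially the same route as the paper, which obtains the corollary by reading off the split \eqref{eq:OmegaH-pp}: with $\pp R=\emptyset$ the boundary integrals in $\Omega^\pp$ are absent, and the remaining bulk terms vanish weakly, which is exactly what the ``$\approx$'' lines of \eqref{eq:OmegaH-pp} already encode, with the $\varpi$-independence then following, as in your last step, from $\Omega=\dd\theta$ being connection-independent. Your pullback treatment of the one non-manifest term, $\iota^*(\dd_H\GC\curlywedge\varpi)=\big(\dd(\iota^*\GC)+[\iota^*\varpi,\iota^*\GC]\big)\curlywedge\iota^*\varpi=0$, correctly makes precise the weak equality the paper leaves implicit.
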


{\it In the presence of boundaries}, on the other hand, even on-shell of the Gauss constraint, the pure-gauge and Coulombic dof {\it fail to fully drop} from the symplectic structure: both $f$ and the boundary value of $\varpi$ survive\footnote{This is tightly related to the introduction of so-called edge-modes \cite{DonnellyFreidel}; cf. the discussion in section \ref{sec:conclusions}.} in $\Omega^\pp$, i.e. $\Omega - \Omega^H = \Omega^\pp \not\approx 0$. 
We stress that the boundary value of $\varpi$ is in general a nonlocal function of the fields within $R$, as in the SdW case. 

Notice that, contrary to $\theta^V$, $\Omega^\pp$ is {\it not} purely-vertical: it features one pure-vertical contribution (the first one in \eqref{eq:OmegaH-pp}), one mixed horizontal-vertical contribution (the second one), and---if $\bb F \neq 0$---even a {\it purely horizontal} contribution.
This has the following consequences:

\setcounter{Thm}{1}
\begin{CorT}[Implications  of $\pp R\neq \emptyset$  on $\Omega^H$\label{Cor:312}]~\\
If $\pp R\neq \emptyset$:
\begin{enumerate}[(i)]
\item The horizontal symplectic form $\Omega^H$ coincides with the horizontal projection of $\Omega$, i.e. with $\Omega(\hat H(\cdot),\hat H(\cdot) )$, if and only if $\varpi$ is flat---that is if and only if $\bb F=0$;
\item The horizontal projection  $\Omega(\hat H(\cdot),\hat H(\cdot) )$ is not closed unless $\bb F=0$;
\item The horizontal symplectic form $\Omega^H$ depends on the choice of functional connection $\varpi$ used to build it.
\end{enumerate}
\end{CorT}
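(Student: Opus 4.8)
The plan is to leverage the explicit split $\Omega = \Omega^H + \Omega^\pp$ from Theorem \ref{thm:OmegaH-pp}, together with two structural facts I would establish first. The first is that $\Omega^H$ is \emph{closed}: since by definition $\Omega^H = \dd\theta^H$ (see \eqref{eq:OmegaH-def}), one has $\dd\Omega^H = \dd^2\theta^H = 0$. The second is that $\Omega^H$ is \emph{horizontal}: reading off its expression in \eqref{eq:OmegaH-pp}, it is built entirely from the manifestly horizontal forms $\dd_H E_\rad$, $\dd_H A$, $\dd_H\psi$, $\dd_H\bar\psi$ (all annihilated by $\bb i_{\xi^\#}$) and the horizontal curvature $\bb F$, so $\bb i_{\xi^\#}\Omega^H \equiv 0$ and hence $\Omega^H(\hat H(\cdot),\hat H(\cdot)) = \Omega^H$. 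Projecting both slots of $\Omega = \Omega^H + \Omega^\pp$ horizontally then yields the master identity
\be
\Omega(\hat H(\cdot),\hat H(\cdot)) = \Omega^H + \Omega^\pp(\hat H(\cdot),\hat H(\cdot)),
\ee
which reduces all three claims to understanding $\Omega^\pp(\hat H(\cdot),\hat H(\cdot))$.

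For (i), I would compute $\Omega^\pp(\hat H(\cdot),\hat H(\cdot))$ directly from \eqref{eq:OmegaH-pp}. Since $\varpi(\hat H(\bb X)) = 0$ for every $\bb X$, every term carrying an explicit factor of $\varpi$---the $\tfrac12\GC[\varpi\stackrel{\curlywedge}{,}\varpi]$ and $\dd_H\GC\curlywedge\varpi$ terms in the bulk and their boundary analogues---is killed by the projection, while the curvature terms survive because $\bb F$ is horizontal. This leaves
\be
\Omega^\pp(\hat H(\cdot),\hat H(\cdot)) = -\int\sqrt{g}\,\tr(\GC\,\bb F) + \oint\sqrt{h}\,\tr(E_s\,\bb F).
\ee
By the master identity, $\Omega^H = \Omega(\hat H(\cdot),\hat H(\cdot))$ is equivalent to the vanishing of this expression. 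It is manifestly zero if $\bb F=0$; for the converse I would note that off shell $\GC$ and $E_s$ are unconstrained data that vary independently of $\bb F$ (e.g. changing $E$ at fixed $A$ leaves $\bb F$ untouched), so vanishing for all configurations forces $\bb F\equiv0$.

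For (ii), using $\dd\Omega^H=0$ the master identity gives $\dd\,\Omega(\hat H(\cdot),\hat H(\cdot)) = \dd\,\Omega^\pp(\hat H(\cdot),\hat H(\cdot))$, so I must differentiate the two surviving curvature terms. The key inputs are the algebraic Bianchi identity $\dd_H\bb F\equiv0$, the equivariance of $\GC$ and $E_s$, and cyclicity of the trace. Applying Leibniz with $\dd\bb F = -[\varpi\stackrel{\curlywedge}{,}\bb F]$ and $\dd\GC = \dd_H\GC - [\varpi,\GC]$, the $\varpi$-bracket contributions organise into $\tr([\varpi\stackrel{\curlywedge}{,}\GC\bb F])$ and cancel, leaving the manifestly horizontal obstruction
\be
\dd\,\Omega(\hat H(\cdot),\hat H(\cdot)) = -\int\sqrt{g}\,\tr(\dd_H\GC\curlywedge\bb F) + \oint\sqrt{h}\,\tr(\dd_H E_s\curlywedge\bb F).
\ee
This vanishes when $\bb F=0$; conversely the bulk and boundary pieces are integrals over disjoint domains and cannot cancel, while $\dd_H\GC$ is generically non-zero, so a non-vanishing $\bb F$ obstructs closure. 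I expect this computation---ensuring the $\varpi$-dependent pieces cancel cleanly via Bianchi and trace-cyclicity---to be the main technical hurdle.

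For (iii), I would argue indirectly: the total off-shell $\Omega = \dd\theta$ is built from the bare symplectic potential and is therefore \emph{independent} of any choice of $\varpi$. Since $\Omega = \Omega^H + \Omega^\pp$ and, by inspection of \eqref{eq:OmegaH-pp}, $\Omega^\pp$ depends genuinely on $\varpi$---most transparently through the surviving boundary terms $\oint\sqrt{h}\,\tr(\tfrac12 f[\varpi\stackrel{\curlywedge}{,}\varpi] + \dd_H f\curlywedge\varpi + f\bb F)$, which change when $\varpi|_{\pp R}$ or $\bb F$ changes---it follows that $\Omega^H = \Omega - \Omega^\pp$ must carry the compensating $\varpi$-dependence. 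This is precisely the mechanism by which a nonempty boundary spoils the connection-independence established, in the boundaryless case, by Corollary \ref{Lemma:redppR0}, and I would present it as such.
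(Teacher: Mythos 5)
Your proposal is correct and takes essentially the same route as the paper, which derives the corollary by inspection of the split in Theorem \ref{thm:OmegaH-pp}: the horizontal projection annihilates all $\varpi$-bearing terms of $\Omega^\pp$ and leaves precisely the purely horizontal curvature contribution $-\int\sqrt{g}\,\tr(\GC\,\bb F)+\oint\sqrt{h}\,\tr(E_s\,\bb F)$ (on-shell $\oint\sqrt{h}\,\tr(f\,\bb F)$, as used again in section \ref{sec:QLSymplRed}), from which (i)--(iii) follow. Your explicit Bianchi-plus-trace-cyclicity cancellation in (ii) and the argument for (iii) via the $\varpi$-independence of $\Omega$ merely spell out the steps the paper leaves implicit, and both check out.
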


Ultimately, this hints at a deeper fact: in the presence of boundaries, $\Omega^H$ does not provide, by itself, a canonical symplectic structure on the reduced phase space. We will come back to this point in section \ref{sec:QLSymplRed}.

We conclude this section with an analysis of the special case in which the functional connection is flat,  $\varpi=h^{-1}\dd h$, as in the case of the SdW connection for EM (see theorem \ref{thm:EM}). Using the dressed field formalism (definition \ref{Def:dressing}), the horizontal/vertical split of the symplectic structure acquires a more transparent physical meaning in terms of a symplectic structure for the dressed fields ($\Omega^H$) and one for the dressing factor and the Gauss constraint ($\Omega^\pp$):

\begin{CorT}
Suppose $\varpi$ is flat, then $\varpi = h^{-1}\dd h$ and\footnote{The dressed Gauss constraint $\hat\GC$ has the same functional expression of $\GC$ with the fields $\phi=(A,E_\Coul,\psi,\bar\psi)$ replaced by their dressed counterparts $\hat\phi = (\hat A, \hat E_\Coul,\hat\psi,\hat{\bar\psi})$. As a result $\hat\GC = h \GC h^{-1}$. Similarly for the definition of $\hat E_\rad = h E_\rad h^{-1}$ and $\hat E_\Coul = h E_\Coul h^{-1}$.} 
\be
\begin{dcases}
\theta^H  = \int  \sqrt{g} \, \tr(  \hat E^i_\rad\; \dd \hat A_i)  -\int  \sqrt{g} \;\hat{\bar\psi} \gamma^0 \dd \hat \psi\\
\theta^V = \int \sqrt{g} \;\tr\big( -\hat \GC\; h^{-1} \dd h ) + \oint \sqrt{h}\; \tr(\hat E_\Coul^s\; h^{-1}\dd h ) \\
\phantom{\theta^V =} \approx \oint \sqrt{h}\; \tr(\hat f h^{-1} \dd h )
\end{dcases}
\qquad
(\varpi=h^{-1}\dd h)
\label{eq:theta_HV-EM}
\ee
and 
\be
\begin{dcases}
\Omega^H = \int \sqrt{g} \, \tr\big( \dd \hat E_\rad^i \curlywedge \dd \hat A_j  \big)- \int \sqrt{g} \, \big( \dd \hat{ \bar\psi} \curlywedge \gamma^0 \dd \hat \psi \;\big)\\
\Omega^\pp = - \int \sqrt{g} \, \tr\big(  \dd \hat \GC \curlywedge h^{-1} \dd h  \big)   + \oint \sqrt{h} \, \tr\big( \dd \hat E_s \curlywedge h^{-1} \dd h  \big) \\ \phantom{\Omega^\pp=}\approx \oint \sqrt{h} \, \tr\big( \dd \hat f \curlywedge h^{-1} \dd h \big)
\end{dcases}
\qquad
(\varpi=h^{-1}\dd h)
\label{eq:OmegaH-pp-EM}
\ee
\end{CorT}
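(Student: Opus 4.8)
\emph{Strategy.} Because $\Omega^H=\dd\theta^H$ and $\Omega^\pp=\dd\theta^V$ by definition \eqref{eq:OmegaH-def}, the plan is to first prove the two symplectic-potential identities in \eqref{eq:theta_HV-EM} and then obtain \eqref{eq:OmegaH-pp-EM} by a single application of the field-space differential $\dd$, using $\dd^2\equiv0$. Throughout I would feed in the two defining features of the flat case: the dressing factor $h$ with $\varpi=h^{-1}\dd h$ (Definition~\ref{Def:dressing} and Proposition~\ref{prop:ddHdressing}), and the vanishing curvature $\bb F=0$, which by \eqref{eq:FF} is equivalent to the Maurer--Cartan identity $\dd\varpi=-\tfrac12[\varpi\stackrel{\curlywedge}{,}\varpi]$. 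The on-shell ($\approx$) flux lines will then follow, exactly as in Theorem~\ref{thm:OmegaH-pp}, from $\GC\approx0$ and $\hat E_\Coul^s\approx\hat f$.

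\emph{The horizontal potential.} For $\theta^H$ I would substitute the dressed-field relations of Proposition~\ref{prop:ddHdressing}, namely $\dd\hat A=h(\dd_H A)h^{-1}$, $\dd\hat\psi=h\,\dd_H\psi$ and $\dd\hat{\bar\psi}=h^{-1}\dd_H\bar\psi$, together with $\hat E_\rad=hE_\rad h^{-1}$ from the footnote definition. In the Yang--Mills term $\tr(\hat E_\rad^i\,\dd\hat A_i)=\tr\big(hE_\rad^i h^{-1}\,h(\dd_H A_i)h^{-1}\big)$ the inner pair $h^{-1}h$ cancels and the outer conjugation is removed by cyclicity of the trace, returning $\tr(E_\rad^i\dd_H A_i)$. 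In the Dirac term $\hat{\bar\psi}\gamma^0\dd\hat\psi=(h^{-1}\bar\psi)\gamma^0(h\,\dd_H\psi)$ the key point is that $\gamma^0$ acts only on the spinor indices and hence commutes with the gauge rotation $h$, so again $h^{-1}$ and $h$ cancel and one recovers $\bar\psi\gamma^0\dd_H\psi$. This reproduces $\theta^H$ of \eqref{eq:theta_HV}, establishing the first line of \eqref{eq:theta_HV-EM}.

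\emph{The vertical potential.} For $\theta^V$ I would insert $\varpi=h^{-1}\dd h$ and express the bare fields through the dressed ones, $\GC=h^{-1}\hat\GC h$ and $E_\Coul=h^{-1}\hat E_\Coul h$ (so that $\hat\GC=h\GC h^{-1}$, as in the footnote). Cyclicity of the trace then reorganises $\tr(\GC\varpi)=\tr(h^{-1}\hat\GC h\,h^{-1}\dd h)=\tr(\hat\GC\,h^{-1}\dd h)$, and likewise for the boundary term carrying $E_\Coul^s$, yielding the second line of \eqref{eq:theta_HV-EM}. I expect this rearrangement to be immediate in the Abelian (EM) case that actually realises a flat SdW connection (Theorem~\ref{thm:EM}), where $h$ commutes with everything and $h^{-1}\dd h=\dd h\,h^{-1}=\dd\varsigma$; in a putative flat non-Abelian case the ordering of the residual $h$-factors must be tracked more carefully.

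\emph{Differentiating, and the main obstacle.} Applying $\dd$ to the first line of \eqref{eq:theta_HV-EM} and using $\dd^2\equiv0$ turns $\dd\big(\tr(\hat E_\rad^i\dd\hat A_i)\big)$ into $\tr(\dd\hat E_\rad^i\curlywedge\dd\hat A_i)$, and similarly for the Dirac part, which is precisely $\Omega^H$ of \eqref{eq:OmegaH-pp-EM}; note that the $\tr(\rho\bb F)$ term present in the general \eqref{eq:OmegaH-pp} drops because $\bb F=0$. For $\Omega^\pp=\dd\theta^V$ the step I expect to be the crux is the exterior derivative of the Maurer--Cartan form, $\dd(h^{-1}\dd h)=\dd h^{-1}\curlywedge\dd h=-(h^{-1}\dd h)\curlywedge(h^{-1}\dd h)=-\tfrac12[\varpi\stackrel{\curlywedge}{,}\varpi]$: in the Abelian case this vanishes and one is left exactly with $-\int\sqrt{g}\,\tr(\dd\hat\GC\curlywedge h^{-1}\dd h)+\oint\sqrt{h}\,\tr(\dd\hat E_s\curlywedge h^{-1}\dd h)$, matching \eqref{eq:OmegaH-pp-EM}, while the $\tfrac12\GC[\varpi\stackrel{\curlywedge}{,}\varpi]$ and $\GC\bb F$ terms of \eqref{eq:OmegaH-pp} are respectively reabsorbed into $\dd_H\GC\curlywedge\varpi$ by this same identity or killed by $\bb F=0$. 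The genuine bookkeeping challenge is thus to confirm that the non-commutative conjugation factors cancel consistently under the trace and that the quadratic-in-$\varpi$ terms reassemble into the single dressed expression $\dd\hat\GC\curlywedge h^{-1}\dd h$; all of this trivialises in the EM setting for which the corollary is primarily intended.
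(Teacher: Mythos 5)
Your proposal is correct and follows exactly the route the paper intends for this corollary (which it states without explicit proof): substitute the dressed-field identities of Proposition \ref{prop:ddHdressing} together with $\hat E_\rad = h E_\rad h^{-1}$ and $\GC = h^{-1}\hat\GC h$ into the split of Theorem \ref{thm:OmegaH-pp}, cancel the conjugation factors by cyclicity of the trace, drop the $\tr(\rho\,\bb F)$ and $\GC\,\bb F$ and $\tfrac12[\varpi\stackrel{\curlywedge}{,}\varpi]$ terms using $\bb F=0$, and obtain the 2-forms by a single application of $\dd$ with $\dd^2\equiv 0$. Your caveat about the residual $h$-ordering is well placed: strict cyclicity in fact yields $\tr\big(\hat\GC\;\dd h\,h^{-1}\big)$ rather than $\tr\big(\hat\GC\;h^{-1}\dd h\big)$, and the two agree precisely in the Abelian case for which the corollary is primarily intended (an imprecision the paper's own statement shares), so your argument loses nothing relative to the paper's.
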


In EM, $h=e^\varsigma$ and $h^{-1} \dd h = \dd \varsigma$ and thus these formulas show that the dressing factor $\varsigma$ \textit{is the dof conjugate to the Gauss constraint}. 

This has a nice interpretation in terms of the Dirac formalism for constrained system: the choice of $\GC=0$ as the first class constraint and of $\varsigma=0$ as the gauge-fixing second class constraint, puts the Dirac's matrix of (off-shell) Poisson brackets between the constraints in normal (Darboux) form. In this article, we will not elaborate on this observation any further.

\subsection{The radiative/Coulombic split and the SdW connection\label{sec:SdWCoul}}

In this brief interlude, we turn to the SdW choice of connection in relation to the radiative/Coulombic split.
Since the SdW connection is built out of similar orthogonality conditions as those involved in the split of $E$, the SdW choice leads to a more harmonious formalism.

 First, the radiative part of the electric field \eqref{eq:Erad} and the SdW-horizontal perturbations of $A$ \eqref{eq:horizontalpert} satisfy the same functional properties, that is they are both covariantly divergence-free and fluxless. This is of course a consequence of them both being {\it de facto} determined by orthogonality to $V=\T\cal F$, i.e. to the pure-gauge directions in $\A$ (figure \ref{fig:GGE}). 
This agreement of their functional properties is particularly welcome in the Lagrangian context, for then one has, in temporal gauge, that the radiative electric field corresponds to the SdW-horizontal component of the velocity vector\footnote{$\bb E_\rad := \int \sqrt{g}\, g_{ij} E^i_\rad \frac{\delta}{\delta A_j} \in H = V^\perp \subset\T\A$. } $\bb E_\rad =  \hat H_{\bb G} (\bbv)$---a relationship that holds if and only if one makes use of the SdW notion of horizontality, with or without boundaries.

To the extent that there is a parallel between $E_\rad$ and $\dd_\perp A$, a parallel also exists between $E_\Coul$ and $\varpi_\sdw$. 

\begin{Prop}[SdW radiative/Coulombic decomposition]
Let $\varpi=\varpi_\sdw$. Then, $E_\Coul$ is the pure-gradient part in the (generalized) Helmholtz decomposition of $E$, denoted
\be\label{eq:Ecoul}
E_\Coul^i = g^{ij}\D_j \varphi \qquad (\sdw)
\ee
with $\varphi$ the \emph{(SdW-)Coulombic potential}.
Expressed in terms of $\varphi$, the Gauss constraint \eqref{eq:Gauss-Coul} then reads
\be
\GC_f : \quad 
\begin{cases}
\GC= \D^2 \varphi - \rho \approx 0& \text{in }R\\
\GC^\pp_f = \D_s \varphi -f \approx 0 & \text{at }\pp R
\end{cases}
\qquad (\sdw)
\label{eq:Coul_pot}
\ee
which is  another SdW value problem, cf. \eqref{eq:SdW}.
\end{Prop}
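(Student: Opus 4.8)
The plan is to exploit the defining feature of the SdW connection---that its horizontal distribution coincides with the $\bb G$-orthogonal complement of the vertical one, $H_{\bb G} = V^\perp$---to show that the Coulombic electric field is purely vertical, hence covariant-exact (pure-gradient), and then simply to feed this form into the Gauss constraint.

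First I would recall the dual characterization \eqref{eq:EErad/Coul} of the radiative/Coulombic split, in which $\bb E_\Coul$ is singled out by $\bb G(\bb E_\Coul, \hat H(\bb X)) \equiv 0$ for all $\bb X \in \T\A$. For a \emph{generic} connection this says only that $\bb E_\Coul$ is $\bb G$-orthogonal to $H$, which need not place it inside $V$. The key step is to specialize to $\varpi = \varpi_\sdw$: then $\T\A = H \oplus V$ is a \emph{$\bb G$-orthogonal} direct sum, so the $\bb G$-orthogonal complement of $H$ is exactly $V$, whence $\bb E_\Coul \in V$ is vertical. This is precisely the statement, anticipated in figure \ref{fig:GGE}, that only for the SdW choice do the two panels align.

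Next, since every vertical vector is of the form $\varphi^\#$ for some $\varphi \in \fG$---recall $V_A = \mathrm{Span}(\{\xi^\#_A\})$---I would write $\bb E_\Coul = \varphi^\#$, which \emph{defines} the Coulombic potential $\varphi$. Reading off the field-space components of $\varphi^\# = \int (\D_i\varphi)\frac{\delta}{\delta A_i}$ and converting the tangent vector $\bb E_\Coul$ back into the electric cotangent field through $\bb G$ (undoing $\theta = \bb G(\bb E)$, cf.\ \eqref{eq:theta_E}) yields $E_\Coul^i = g^{ij}\D_j\varphi$, which is \eqref{eq:Ecoul}; this exhibits $E_\Coul$ as the pure-gradient part of the generalized Helmholtz decomposition of $E$, complementary to the divergence-free and fluxless radiative part $E_\rad$ of \eqref{eq:Erad}. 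Finally, substituting this expression into the regional Gauss constraint \eqref{eq:Gauss-Coul} and using metric compatibility to commute $g^{ij}$ through the covariant derivative, the bulk term becomes $\D_i E_\Coul^i = \D^i\D_i\varphi = \D^2\varphi$, giving $\GC = \D^2\varphi - \rho$, while contracting with the outgoing unit normal gives $s_i E_\Coul^i = s^i\D_i\varphi = \D_s\varphi$, giving $\GC^\pp_f = \D_s\varphi - f$. These are exactly \eqref{eq:Coul_pot}, whose structure---covariant Laplacian in the bulk sourced by $\rho$, covariant-Neumann condition at $\pp R$ sourced by $f$---is manifestly that of an SdW boundary value problem, cf.\ \eqref{eq:SdW}.

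The main obstacle is the first step: establishing that the $\bb G$-orthogonal complement of $H$ is genuinely $V$, i.e.\ that $(V^\perp)^\perp = V$. In a finite-dimensional heuristic this is automatic from non-degeneracy of $\bb G$ together with $\T\A = H \oplus V$, but in the functional setting it is tantamount to the existence and uniqueness of the SdW splitting---which is exactly why one must assume the SdW boundary value problem to be invertible (no reducibility parameters), as stipulated earlier. A secondary, purely bookkeeping subtlety is tracking the metric-determinant and index factors in the conversion between the tangent vector $\bb E_\Coul$ and the cotangent field $E_\Coul^i$, and justifying the metric-compatibility step when pulling $g^{ij}$ past $\D_i$; neither affects the structure of the argument.
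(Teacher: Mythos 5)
Your proof is correct and follows essentially the same route as the paper's: the paper's one-line proof simply invokes the ``formal analogy'' between \eqref{eq:EErad/Coul} and \eqref{eq:GGvarpi}, which is exactly what you unpack---for $\varpi=\varpi_\sdw$ the orthogonality condition $\bb G(\bb E_\Coul,\hat H(\bb X))\equiv 0$ forces $\bb E_\Coul\in (H_{\bb G})^\perp = V$, so $\bb E_\Coul=\varphi^\#$ gives $E^i_\Coul=g^{ij}\D_j\varphi$, and substitution into \eqref{eq:Gauss-Coul} yields the SdW boundary value problem \eqref{eq:Coul_pot}. Your explicit flagging of the functional-analytic caveat that $(V^\perp)^\perp=V$ rests on the invertibility of the SdW boundary value problem (i.e.\ absence of reducibility parameters) is a welcome precision the paper leaves implicit.
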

\begin{proof}
This directly follows from the formal analogy between \eqref{eq:EErad/Coul} and \eqref{eq:GGvarpi}. Cf. proposition \ref{prop:SdWbvp}
\end{proof}

 We have now all the tools necessary to state the existence and uniqueness of the solution to the Gauss constraint, {\it once a a choice of functional connection $\varpi$ is given} \eqref{eq:Gauss-Coul}:

\begin{Prop}[Uniqueness of $E_\Coul$ \cite{AldoNew}]\label{Prop:uniqueCoul}For any choice of functional connection $\varpi$ and electric flux $f$, the Gauss constraint $\GC_f = 0$ has one and only one solution $E_\Coul=E_\Coul(A,\rho,f)$.
\end{Prop}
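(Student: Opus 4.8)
The plan is to recast the two conditions defining $E_\Coul$---that it be \emph{Coulombic} and that it solve the sourced Gauss constraint $\GC_f=0$ of \eqref{eq:Gauss-Coul}---as a single prescription for a linear functional on the vertical space, and then read off existence and uniqueness from the splitting $\T\A=H\oplus V$. First I would recall that, by \eqref{eq:EErad/Coul}, $E_\Coul$ being Coulombic means precisely that, under the pairing $\int\sqrt{g}\,\tr(E_\Coul^i X_i)$ between $\T^*\A$ and $\T\A$, it annihilates every horizontal vector; i.e. $E_\Coul\in\mathrm{Ann}(H)\subset\T^*\A$. Since the projection property of $\varpi$ guarantees (at irreducible configurations) the direct sum $\T\A=H\oplus V$ with $V\cong\fG$ via $\xi\mapsto\xi^\#$, the restriction-to-$V$ map $\mathrm{Ann}(H)\to V^*\cong\fG^*$ is an isomorphism: a Coulombic electric field is the same datum as a linear functional on $\fG$.

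Second I would show that the constraint fixes this functional uniquely in terms of $\rho$ and $f$. Pairing $E_\Coul$ against an arbitrary $\xi^\#$ and integrating by parts gives
\be
E_\Coul(\xi^\#)=\int\sqrt{g}\,\tr(E_\Coul^i\D_i\xi)=-\int\sqrt{g}\,\tr\big((\D_iE_\Coul^i)\,\xi\big)+\oint\sqrt{h}\,\tr\big((s_iE_\Coul^i)\,\xi\big).
\ee
Because $\xi$ is entirely unconstrained both in $R$ and at $\pp R$, the fundamental lemma of the calculus of variations separates bulk from boundary, so that this pairing equals the fixed functional $\ell_{\rho,f}(\xi):=-\int\sqrt{g}\,\tr(\rho\,\xi)+\oint\sqrt{h}\,\tr(f\,\xi)$ for all $\xi$ if and only if $\D_iE_\Coul^i=\rho$ in $R$ and $s_iE_\Coul^i=f$ at $\pp R$, i.e. if and only if $\GC_f=0$. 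Thus $\GC_f=0$ is equivalent to the single condition $E_\Coul|_V=\ell_{\rho,f}$.

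Combining the two steps yields both claims at once: via the isomorphism $\mathrm{Ann}(H)\cong\fG^*$, the functional $\ell_{\rho,f}$ determines one and only one Coulombic field, manifestly a function $E_\Coul(A,\rho,f)$ of the stated data (the $A$-dependence entering through $\D$ and through $H=\ker\varpi$). For uniqueness alone one may give an even more elementary argument that bypasses surjectivity of the restriction map: if $E_\Coul$ and $E_\Coul'$ were two solutions, their difference $\Delta$ would be Coulombic---hence annihilate $H$---and would solve the homogeneous constraint $\D_i\Delta^i=0$, $s_i\Delta^i=0$; integrating by parts then shows $\Delta(\xi^\#)=0$ for all $\xi$, so $\Delta$ annihilates $V$ as well, and since it already annihilates $H$ it annihilates $H\oplus V=\T\A$, whence $\Delta=0$.

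The main obstacle is the existence half, i.e. the surjectivity of $\mathrm{Ann}(H)\to\fG^*$: producing an actual field that realises the prescribed functional amounts to solving an elliptic boundary value problem---for the SdW connection exactly the SdW problem \eqref{eq:Coul_pot}, $\D^2\varphi=\rho$, $\D_s\varphi=f$, with $E_\Coul^i=g^{ij}\D_j\varphi$---whose well-posedness I would take from the invertibility assumption adopted throughout this section. This invertibility fails precisely at reducible configurations (proposition \ref{prop:Dchi=0}), where $V\cong\fG$ degenerates: there $\ell_{\rho,f}$ must annihilate every reducibility parameter, which is exactly the integrated-charge consistency condition flagged in the caveats above (e.g. $\int\sqrt{g}\,\rho=\oint\sqrt{h}\,f$ in electromagnetism). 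The general-$\varpi$ case and these reducibility subtleties I would settle by referring to \cite{AldoNew}.
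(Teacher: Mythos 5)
Your proof is correct, but its uniqueness half takes a genuinely different route from the paper's. The paper proceeds in two steps: first it establishes existence and uniqueness for the SdW choice, where the constraint \eqref{eq:Gauss-Coul} becomes the explicit elliptic problem \eqref{eq:Coul_pot} for the Coulombic potential $\varphi$ (with $E_\Coul = g^{ij}\D_j\varphi$, and with the kernel controlled by proposition \ref{prop:Dchi=0}, so that uniqueness of $E_\Coul$ survives even at reducible configurations because stabilizer ambiguities in $\varphi$ are annihilated by $\D$); second, it bootstraps from the SdW case to an arbitrary $\varpi$, deferring that step to \cite{AldoNew}. You instead work with a general $\varpi$ from the outset, characterizing $E_\Coul$ as the unique element of $\mathrm{Ann}(H)$ whose restriction to $V$ is the functional $\ell_{\rho,f}$, so that uniqueness follows from pure (formal) linear algebra: the difference of two solutions is simultaneously Coulombic and radiative---it annihilates both $H$ and $V$, hence all of $\T\A = H\oplus V$---and vanishes by nondegeneracy of the $L^2$ pairing. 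This buys a uniqueness proof that is manifestly $\varpi$-independent with no SdW crutch and no bootstrap, and it cleanly isolates where genuine analysis enters: only in the existence half, i.e. the surjectivity of $\mathrm{Ann}(H)\to\fG^*$, which---exactly as in the paper---reduces to the well-posedness of an elliptic boundary value problem (the SdW problem \eqref{eq:Coul_pot} in that case) and is otherwise deferred to \cite{AldoNew}. What the paper's route buys in exchange is constructive content (an explicit solution via the Green's functions of section \ref{sec:Greens}) and the sharper remark about reducible configurations; your framework recovers the latter correctly as the consistency condition $\ell_{\rho,f}(\chi)=0$ for stabilizers $\chi$, which is precisely the integrated Gauss law \eqref{eq:no_charge}, so no gap results relative to the paper's stated hypotheses.
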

\begin{proof}
 The proof of this statement proceeds in two steps. In the first step we prove the existence and uniqueness of the solution to the Gauss constraint for the SdW choice of connection, i.e. $\varpi=\varpi_\sdw$. This is a consequence of \eqref{eq:Coul_pot} and the general properties of the SdW boundary value problem, see proposition \ref{prop:Dchi=0} (notice that in this case uniqueness holds even at reducible configurations, since we are ultimately interested in $E_\Coul$, not $\varphi$). In the second step, one can show that this result for the SdW connection can be used to prove existence and uniqueness for any other choice of connection.
For details on the second step, see \cite{AldoNew}.
\end{proof}

\subsection{Gauge properties of the horizontal/vertical split\label{sec:gaugeprop}}

In this section we will characterize the properties of $\theta^{H,V}$ and $\Omega^{H,V}$ in relation to gauge. 
First, however, we characterize the gauge properties of the off-shell symplectic potential $\theta$:

\begin{Prop}
The following propositions on $\theta$ hold true:
\begin{enumerate}[(i)]
\item $\theta$ is gauge invariant only under field-\emph{in}dependent gauge transformations $\xi$, $\dd\xi=0$;
\item on-shell of the Gauss constraint $\GC_f\approx 0$ and in the \emph{absence} of boundaries $\pp R = \emptyset$, $\theta$ $\Theta$ is gauge invariant under \emph{all} gauge transformations.
\end{enumerate} 
\end{Prop}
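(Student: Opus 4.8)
The plan is to compute the field-space Lie derivative $\bb L_{\xi^\#}\theta$ directly and in one stroke, treating $\xi\in\Omega^0(\Phi,\fG)$ as a possibly field-dependent gauge parameter so that $\dd\xi$ need not vanish. Since $\theta$ is a sum of terms of the schematic form $\alpha\,\dd\phi$ with $\alpha$ a field-space function and $\phi$ one of the fields, I would use the Leibniz rule $\bb L_{\xi^\#}(\alpha\,\dd\phi) = (\xi^\#\alpha)\,\dd\phi + \alpha\,\dd(\xi^\#\phi)$ (recalling $\bb L_{\xi^\#}\dd\phi=\dd(\xi^\#\phi)$ and that $\sqrt g$ is field-independent), together with the explicit infinitesimal transformations recorded earlier: $\xi^\#A_i=\D_i\xi$, $\xi^\#E^i=[E^i,\xi]$, $\xi^\#\psi=-\xi\psi$, and $\xi^\#\bar\psi=\bar\psi\xi$. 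The one point to track carefully is that $\dd(\xi^\#\phi)$ produces both a ``$\dd(\text{field})$'' piece and a ``$\dd\xi$'' piece; e.g. $\dd(\xi^\#A_i)=\pp_i\dd\xi+[\dd A_i,\xi]+[A_i,\dd\xi]$.

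For the Yang--Mills part I expect the contributions that carry no derivative of $\xi$ to cancel: the term $\int\sqrt g\,\tr([E^i,\xi]\,\dd A_i)$ coming from $\xi^\#E^i$ cancels against $\int\sqrt g\,\tr(E^i[\dd A_i,\xi])$ by cyclicity of the trace, which is exactly the statement that $E$ transforms in the adjoint. The surviving pieces $\int\sqrt g\,\tr\big(E^i(\pp_i\dd\xi+[A_i,\dd\xi])\big)$ I would integrate by parts once to get $-\int\sqrt g\,\tr(\D_iE^i\,\dd\xi)+\oint\sqrt h\,\tr(E_s\,\dd\xi)$. An identical cancellation in the Dirac sector (the $\propto\xi$ terms cancel because $\gamma^0$ and $\xi$ act on different indices) leaves only $\int\sqrt g\,\tr(\rho\,\dd\xi)$, with $\rho=J^0$ the matter charge density. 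Using $\D_iE^i_\rad=0$ to replace $\D_iE^i$ by $\D_iE^i_\Coul$, the two sectors combine into the master identity
\be
\bb L_{\xi^\#}\theta = -\int\sqrt g\,\tr(\GC\,\dd\xi) + \oint\sqrt h\,\tr(E_s\,\dd\xi),
\ee
where $\GC=\D_iE^i_\Coul-\rho$ is precisely the bulk Gauss constraint and $E_s=s_iE^i$ the electric flux.

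From this single formula both claims follow at once. For (i), if $\dd\xi=0$ the right-hand side vanishes identically, so $\theta$ is gauge invariant; conversely, for a generic field-dependent $\xi$ the right-hand side is nonzero (off shell the bulk term already survives, and even on shell the boundary flux term does), so invariance can hold for all configurations only in the field-independent case. For (ii), imposing $\GC_f\approx0$ kills the bulk term while $\pp R=\emptyset$ removes the boundary term, leaving $\bb L_{\xi^\#}\theta\approx0$ for all $\xi$, field-dependent or not. I expect the only delicate step to be the bookkeeping in the integration by parts and the trace-cyclicity cancellations---in particular verifying that \emph{every} non-$\dd\xi$ contribution drops out, which is where the adjoint covariance of $E$, $\psi$, $\bar\psi$ is used---rather than anything conceptually hard. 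As a consistency check, one notes that $\bb i_{\xi^\#}\theta$ equals the smeared Gauss generator $-\int\sqrt g\,\tr(\GC\,\xi)+\oint\sqrt h\,\tr(E_s\,\xi)$, so the whole computation is really the Cartan--Noether identity $\bb L_{\xi^\#}\theta=\dd(\bb i_{\xi^\#}\theta)+\bb i_{\xi^\#}\Omega$ in disguise.
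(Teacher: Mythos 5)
Your proposal is correct and takes essentially the same route as the paper's proof: an explicit computation of $\bb L_{\xi^\#}\theta$ from the gauge transformation laws and $[\bb L,\dd]=0$, yielding the same master identity $\bb L_{\xi^\#}\theta = -\int\sqrt{g}\,\tr(\GC\,\dd\xi)+\oint\sqrt{h}\,\tr(E_s\,\dd\xi)$ from which (\textit{i}) and (\textit{ii}) follow by setting $\dd\xi=0$, respectively $\GC\approx0$ and $\pp R=\emptyset$. Your closing consistency check via Cartan's formula is exactly the content of the paper's subsequent corollary on Hamiltonian generators, so nothing is missing.
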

\begin{proof}
Using the gauge transformation properties of $A$, $E$, $\psi$ and $\psi$ (e.g. $\bb L_{\xi^\#} A = \D\xi$), as well as $[\bb L, \dd] = 0$, an explicit computation shows that 
\be
\bb L_{\xi^\#} \theta = \int \sqrt{g} \, \tr(E^i  \D\dd \xi) + \int \sqrt{g} \, \bar \psi \gamma^0 (\dd \xi) \psi  = - \int \sqrt{g} \,\tr(\GC \dd \xi) + \oint \sqrt{h} \,\tr( E_s \dd \xi).
\label{eq:Lxitheta}
\ee
The two propositions follow from the formula above upon imposing respectively $\GC\approx 0$ and $\pp R =\emptyset$ for (\textit{ii}), and $\dd \xi = 0$ for (\textit{i}). The latter case gives:
\be
\bb L_{\xi^\#} \theta = 0 \qquad (\dd \xi = 0)
\label{eq:Ltheta=0}
\ee
Notice that since $\dd$ and $\bb L$ commute, this implies $\bb L_{\xi^\#} \Omega= 0$ if $\dd \xi = 0$.
\end{proof}

Ultimately, the reason \eqref{eq:Ltheta=0} holds is that, in YM theory, the conjugate momentum to $A$ transforms covariantly, rather than as a connection.
Thus, the fact that a polarization of the symplecitc potential exists such that \eqref{eq:Ltheta=0} holds, is a property of Yang-Mills theory not shared by either $BF$ or Chern-Simons theories.\footnote{See also \cite{Mnev:2019ejh}, where the {\it failure} to satisfy an extended analogue of the above equation plays a role in the BV-BFV derivation of the Chern-Simons's edge theory. \label{fnt:CS-Michele}} 
This property will be implicitly at the root of much of the following analysis.

From \eqref{eq:Ltheta=0} one can readily deduce the following corollary characterizing the Hamiltonian flow of a gauge transformation with an eye for the field-dependence of the gauge transformation involved.

\begin{CorP}\label{Cor3.7}
The following propositions hold true:
\begin{enumerate}[(i)] 
\item Off shell of the Gauss constraint (and irrespectively of boundaries), only field-{\it in}dependent gauge transformations $\xi$, such that  $\dd \xi = 0$, have a Hamiltonian generator $H_\xi$ with respect to $\Omega = \dd \theta$.
This generator, up to a field-space constant, is given by
\be
H_\xi := \theta(\xi^\#) = \theta^V(\xi^\#) \qquad (\dd \xi = 0);
\label{eq:Hxi}
\ee
\item In the \emph{absence} of boundaries $\pp R = \emptyset $, $H_\xi$ is a smearing of the Gauss constraint and therefore vanishes on shell of the Gauss constraint, $H_\xi \approx 0$;
\item In the \emph{presence} of boundaries $\pp R \neq \emptyset $, $H_\xi$ is \emph{not} a smearing of the Gauss constraint, and generally fails to vanish on shell of the Gauss constraint; indeed, in this case, $H_\xi \approx \oint \sqrt{h}\,\tr( f \xi)$.
\end{enumerate}
\end{CorP}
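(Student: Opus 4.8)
The plan is to combine Cartan's formula with the standard criterion that a vector field admits a Hamiltonian generator for the closed 2-form $\Omega=\dd\theta$ only if the contraction $\bb i_{\xi^\#}\Omega$ is exact --- in particular closed. Since $\dd\Omega=0$, closedness of $\bb i_{\xi^\#}\Omega$ is equivalent to $\bb L_{\xi^\#}\Omega=0$, and because $\dd$ and $\bb L$ commute this reads $\dd\,\bb L_{\xi^\#}\theta=0$. Cartan's formula moreover gives the working identity $\bb i_{\xi^\#}\Omega = \bb L_{\xi^\#}\theta-\dd\big(\theta(\xi^\#)\big)$, which I will use to read off the generator when it exists.

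For part (i), the forward direction is immediate from \eqref{eq:Ltheta=0}: for $\dd\xi=0$ one has $\bb L_{\xi^\#}\theta=0$, so the identity above collapses to $\bb i_{\xi^\#}\Omega=-\dd\big(\theta(\xi^\#)\big)$, exhibiting $H_\xi=\theta(\xi^\#)$ as a Hamiltonian generator up to a field-space constant; the replacement $\theta(\xi^\#)=\theta^V(\xi^\#)$ then follows from $\theta=\theta^H+\theta^V$ together with the defining horizontality $\theta^H(\xi^\#)\equiv0$. For the converse I would take $\xi$ genuinely field-dependent, $\dd\xi\neq0$, and show that $\bb i_{\xi^\#}\Omega$ fails to be closed off shell. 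Taking the field-space differential of the explicit expression \eqref{eq:Lxitheta} and using $\dd^2\xi=0$ produces a bulk obstruction $-\int\sqrt{g}\,\tr(\dd\GC\curlywedge\dd\xi)$ (accompanied by a boundary term $\oint\sqrt{h}\,\tr(\dd E_s\curlywedge\dd\xi)$). This is generically nonzero, so $\bb L_{\xi^\#}\Omega\neq0$ and no generator exists.

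Parts (ii) and (iii) are then a short computation. Substituting the split form of $\theta^V$ from \eqref{eq:theta_HV} into $H_\xi=\theta^V(\xi^\#)$ and invoking the projection property $\bb i_{\xi^\#}\varpi=\xi$ of \eqref{eq:varpi_def}, I get $H_\xi=-\int\sqrt{g}\,\tr(\GC\,\xi)+\oint\sqrt{h}\,\tr(E_\Coul^s\,\xi)$. If $\pp R=\emptyset$ the boundary term is absent and $H_\xi$ is a pure smearing of the Gauss constraint, hence $H_\xi\approx0$ on shell, which is (ii). If $\pp R\neq\emptyset$ the boundary term survives; on shell of $\GC_f\approx0$ the bulk term drops out while the boundary part of the constraint \eqref{eq:Gauss-Coul} enforces $E_\Coul^s\approx f$, giving $H_\xi\approx\oint\sqrt{h}\,\tr(f\,\xi)$, which is (iii).

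The delicate step is the converse of (i): establishing that a field-dependent $\xi$ really fails to be Hamiltonian rests on the word ``generically'', i.e.\ on arguing that the obstruction $\int\sqrt{g}\,\tr(\dd\GC\curlywedge\dd\xi)$ and its boundary analogue cannot vanish identically for an arbitrary off-shell background and a genuinely field-dependent $\xi$. I expect this to follow from the observation that $\dd\GC$ probes field-space directions that are not annihilated upon pairing with $\dd\xi$, so that the obstruction can be killed only by the on-shell condition $\GC\approx0$ --- which is excluded here since we work off shell --- or by field-independence $\dd\xi=0$.
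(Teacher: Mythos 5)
Your proof is correct and follows essentially the same route as the paper's: Cartan's formula applied to $\bb L_{\xi^\#}\theta$ as computed in \eqref{eq:Lxitheta}, horizontality of $\theta^H$ to identify $\theta(\xi^\#)=\theta^V(\xi^\#)$, and explicit evaluation of $H_\xi$ for (ii)--(iii), where your $\oint\sqrt{h}\,\tr(E_\Coul^s\,\xi)$ agrees with the paper's $\oint\sqrt{h}\,\tr(E_s\,\xi)$ because $s_iE_\rad^i=0$ at $\pp R$. Your differentiation of \eqref{eq:Lxitheta} to exhibit the obstruction $-\int\sqrt{g}\,\tr(\dd\GC\curlywedge\dd\xi)+\oint\sqrt{h}\,\tr(\dd E_s\curlywedge\dd\xi)$ simply makes explicit the paper's one-line assertion that the right-hand side is exact only when $\dd\xi=0$, and both arguments rest on the same ``generic'' caveat that you correctly flag.
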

\begin{proof}
Application of Cartan's formula $\bb L_{\xi^\#}  = \bb i_{\xi^\#} \dd + \dd \bb i_{\xi^\#} $ to the left-most expression in \eqref{eq:Lxitheta}, together with the definition $\Omega = \dd \theta$, gives 
\be
\bb i_{\xi^\#} \Omega + \dd \theta(\xi^\#)  = \bb L_{\xi^\#} \theta =  - \int \sqrt{g}\, \tr(\GC \dd \xi) + \oint \sqrt{h} \,\tr( E_s \dd \xi).
\ee
Off shell of the Gauss constraint, the right hand side is exact, and actually vanishes, only if $\dd \xi = 0$ irrespectively of the presence of boundaries. Also, from the remark below \eqref{eq:thetaH-def}, it is clear that $\theta(\xi^\#) \equiv \theta^V(\xi^\#)$. Hence,\footnote{In the following, to remind the reader which equations are  subject to the conditional $\dd \xi = 0$, we will explicitly include it in parenthesis.}
\be
0 = \bb L_{\xi^\#} \theta = \bb i_{\xi^\#} \Omega + \dd H_\xi \qquad (\dd \xi = 0).
\label{eq:Ltheta=0_v2}
\ee
This proves (\textit{i}). To prove (\textit{ii-iii}), it is enough to write $H_\xi$ explicitly, starting from the expression \eqref{eq:thetaV} for the vertical symplectic form:
\begin{align}
H_\xi &= \int \sqrt{g}\,\tr( E^i \D_i \xi + \rho \xi) \notag\\
&=- \int\sqrt{g}\, \tr( \GC \xi ) + \oint \sqrt{h}\; \tr(E_s \xi)  \approx \oint \sqrt{h}\; \tr(f \xi).
\label{eq:Hxi-explicit}
\end{align}
This concludes the proof.
\end{proof}

 (This corollary provides an explicit answer to the question of why one should introduce field-dependent gauge transformations at all: field-dependent gauge transformation serve as a diagnostic tool to detect the presence of spacetime boundaries from a geometric analysis within field-space. Of course, a more abstract answer to this same question is that arbitrary vertical vector fields---aka field dependent gauge transformations---are natural geometric objects on field space.)

 Heuristically, this corollary emphasizes once again that the dof contained in $H_\xi$ are precisely those dof whose Hamiltonian flow generates translations in the pure-gauge part of the fields,  that is (loosely speaking) $\varpi$.
Conversely, the following two results confirm that the dof contained in $\theta^H$ play no role in the flow equation along the pure gauge directions \eqref{eq:Ltheta=0_v2}:

\begin{Prop}[Gauge properties of $\Omega^H$]\label{prop:gaugeOmegaH}
The horizontal symplectic form $\Omega^H := \dd \theta^H$ is horizontal and gauge-{\it invariant}, i.e. \emph{basic}, and can be expressed as $\Omega^H = \dd_H \theta^H$. 
\end{Prop}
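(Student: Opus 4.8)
The plan is to show first that the horizontal potential $\theta^H$ is itself \emph{basic}---horizontal and gauge-invariant---and then to propagate this to $\Omega^H=\dd\theta^H$ by purely formal Cartan calculus, reading off $\Omega^H=\dd_H\theta^H$ from the defining property of $\dd_H$ at the end. The advantage of this route is that basicness of $\theta^H$ is almost manifest from its building blocks. For horizontality, recall that $\theta^H=\int\sqrt{g}\,\tr(E_\rad^i\,\dd_H A_i)-\int\sqrt{g}\,\bar\psi\gamma^0\dd_H\psi$ is assembled entirely out of the horizontal differentials $\dd_H A$ and $\dd_H\psi$. The identity $\bb i_{\xi^\#}\dd_H\mu\equiv0$---valid for any form $\mu$ and any, possibly field-dependent, $\xi$ since $\hat H(\xi^\#)=0$ pointwise---then gives $\bb i_{\xi^\#}\theta^H\equiv0$ at once.

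For invariance, I would restrict to field-independent $\xi$ ($\dd\xi=0$), which is all that \emph{basic} requires, and invoke the equivariance of each factor: $E_\rad$ transforms in the adjoint (as recorded after the radiative/Coulombic definition), $\dd_H A$ is equivariant in the adjoint by \eqref{eq:LddHA}, while $\psi,\bar\psi$ and their horizontal differentials are equivariant in the (anti)fundamental. Because in $\theta^H$ every gauge index is saturated---by the trace in the Yang--Mills term and by the $\bar\psi(\cdots)\psi$ pairing in the matter term---these adjoint and (anti)fundamental rotations cancel under $\bb L_{\xi^\#}$, yielding $\bb L_{\xi^\#}\theta^H=0$ for $\dd\xi=0$. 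Together with the previous paragraph, this establishes that $\theta^H$ is basic.

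With $\theta^H$ basic, Cartan's formula gives, for field-independent $\xi$,
\[
\bb i_{\xi^\#}\Omega^H=\bb i_{\xi^\#}\dd\theta^H=\bb L_{\xi^\#}\theta^H-\dd\big(\bb i_{\xi^\#}\theta^H\big)=0,
\]
the first term vanishing by the invariance just shown and the second because $\bb i_{\xi^\#}\theta^H$ is the identically-zero function. Since contraction of a form with a vector is tensorial (pointwise) in the vector slot, and every vertical vector at a configuration equals $\zeta^\#$ for some field-independent $\zeta$, this extends to $\bb i_{\xi^\#}\Omega^H\equiv0$ for \emph{all} $\xi$; hence $\Omega^H$ is horizontal. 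Gauge-invariance follows from $[\bb L,\dd]=0$, namely $\bb L_{\xi^\#}\Omega^H=\dd\,\bb L_{\xi^\#}\theta^H=0$ for $\dd\xi=0$, so $\Omega^H$ is basic. Finally, to identify $\Omega^H=\dd_H\theta^H$ I would use the definition $\bb i_{\bb X}\dd_H\theta^H=\bb i_{\hat H(\bb X)}\dd\theta^H$ together with $\hat H(\bb X)=\bb X-\varpi(\bb X)^\#$, writing $\bb i_{\hat H(\bb X)}\dd\theta^H=\bb i_{\bb X}\dd\theta^H-\bb i_{\varpi(\bb X)^\#}\dd\theta^H$; the last term vanishes because $\varpi(\bb X)^\#$ is vertical and $\dd\theta^H=\Omega^H$ is horizontal. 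Hence $\dd_H\theta^H=\dd\theta^H=\Omega^H$.

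The only non-formal input is the equivariance of $E_\rad$: unlike $\dd_H A$ and the matter terms, whose transformation rules were recorded earlier, the adjoint covariance of the radiative electric field is a consequence of the gauge-covariance of the $\varpi$-induced radiative/Coulombic split, which is ultimately where the connection properties \eqref{eq:varpi_def} genuinely enter. Everything else reduces to Cartan's calculus and the two defining identities of $\dd_H$, so I expect the main (and only mild) obstacle to be bookkeeping: keeping the index contractions and the adjoint sign conventions straight when verifying that the gauge variations in $\theta^H$ cancel.
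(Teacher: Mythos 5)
Your proof is correct and takes essentially the same route as the paper: both arguments first establish that $\theta^H$ is basic (horizontality because it is built from $\dd_H A$ and $\dd_H\psi$; invariance from the equivariance of the factors, including the adjoint covariance of $E_\rad$ guaranteed by the covariance property of $\varpi$), and then propagate basicness to $\Omega^H=\dd\theta^H$ and identify $\Omega^H=\dd_H\theta^H$. The only cosmetic difference is that the paper invokes its lemma \eqref{eq:dH_equivariant} on horizontal equivariant forms to get $\dd_H\theta^H=\dd\theta^H$ in one stroke (and records that the basicness of $\theta^H$ holds even for field-\emph{dependent} $\xi$), whereas you reach the same identities by direct Cartan calculus plus the pointwise-tensoriality extension of $\bb i_{\xi^\#}\Omega^H=0$ to field-dependent $\xi$---both steps are sound.
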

\begin{proof}
The first part of the proposition is another consequence of \eqref{eq:Ltheta=0}---together with the equivariance of $\dd_H A$ \eqref{eq:LddHA} \cite{GomesRiello2016,GomesHopfRiello}. Indeed, these two equations imply that $\theta^H$ itself \textit{basic}:\footnote{The second equations follows from \eqref{eq:LddHA} and an analoguous formula for $\dd_H\psi$ which can be deduced from \eqref{eq:dH_equivariant}. For more explicit details see equation (6.29) in \cite{GomesHopfRiello}.}
\be
\bb i_{\xi^\#} \theta^H = 0
\quad\text{and}\quad
\bb L_{\xi^\#} \theta^H = 0.
\label{eq:LthetaH=0}
\ee
Notice that both of these equations---contrary to \eqref{eq:Ltheta=0}---hold for field-{\it dependent} $\xi$'s as well.
Because $\theta^H$ is basic, using the result \eqref{eq:dH_equivariant} on the differential of horizontal and equivariant forms, it is immediate to see that $\dd_H \theta^H = \dd \theta$ and hence that $\Omega^H$ is also basic. Crucially, these results hold\footnote{Beside the previous abstract argument, an explicit, albeit non-illuminating, proof of the right-most equality can be found in appendix B.2 (equation 109) of \cite{GomesRiello2016}.} for {\it any} $\varpi$, even when $\bb F\neq 0$.
\end{proof}

\begin{CorP}[A trivial flow for gauge transformations]
With respect to the horizontal symplectic structure $\Omega^H$, gauge transformations have a \emph{trivial} Hamiltonian flow.
\end{CorP}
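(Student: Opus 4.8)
The plan is to mirror the structure of the proof of Corollary~\ref{Cor3.7}, simply replacing the full off-shell data $(\theta,\Omega)$ by its horizontal counterpart $(\theta^H,\Omega^H)$, and to exploit the two identities established in Proposition~\ref{prop:gaugeOmegaH}, namely that $\theta^H$ is \emph{basic}: $\bb i_{\xi^\#}\theta^H = 0$ and $\bb L_{\xi^\#}\theta^H = 0$. The decisive point is that, unlike \eqref{eq:Ltheta=0}, both of these hold for \emph{field-dependent} $\xi$ as well, so the argument will not require any restriction on $\xi$.

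First I would fix the meaning of the statement. With the sign convention of \eqref{eq:Ltheta=0_v2}, a vector field $\bb X$ admits a Hamiltonian generator $H_{\bb X}$ with respect to the presymplectic form $\Omega^H$ whenever $\bb i_{\bb X}\Omega^H + \dd H_{\bb X} = 0$; the associated flow is \emph{trivial} precisely when $H_{\xi}$ can be taken to be a field-space constant, equivalently when $\xi^\#$ lies in the kernel (the null distribution) of $\Omega^H$. So the goal reduces to showing $\bb i_{\xi^\#}\Omega^H = 0$.

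Then I would apply Cartan's magic formula to $\theta^H$ along a vertical vector field $\xi^\#$ and invoke $\Omega^H = \dd\theta^H$:
\be
\bb L_{\xi^\#}\theta^H = \bb i_{\xi^\#}\dd\theta^H + \dd\,\bb i_{\xi^\#}\theta^H = \bb i_{\xi^\#}\Omega^H + \dd\big(\theta^H(\xi^\#)\big).
\ee
Inserting the two identities of Proposition~\ref{prop:gaugeOmegaH}---$\bb L_{\xi^\#}\theta^H = 0$ on the left and $\theta^H(\xi^\#)=\bb i_{\xi^\#}\theta^H = 0$ in the exact term---collapses both sides simultaneously and leaves $\bb i_{\xi^\#}\Omega^H = 0$. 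Hence $\xi^\#$ is a degenerate direction of $\Omega^H$, its candidate Hamiltonian $H_\xi^H := \theta^H(\xi^\#)$ vanishes identically, and the flow is trivial. I would emphasize the contrast with Corollary~\ref{Cor3.7}: there only field-\emph{in}dependent $\xi$ possessed a Hamiltonian off shell of the Gauss constraint, whereas here the conclusion holds for \emph{every} (possibly field-dependent) gauge transformation and for \emph{any} choice of $\varpi$, including the non-flat case $\bb F\neq0$.

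I do not expect a serious computational obstacle, since the two ingredients from Proposition~\ref{prop:gaugeOmegaH} do essentially all the work; the only point requiring care is conceptual---pinning down what ``trivial Hamiltonian flow'' means for a merely \emph{pre}symplectic form. The cleanest reading, which I would adopt, is that $\xi^\#$ belongs to the kernel of $\Omega^H$, so that it generates no motion in the reduced (symplectic) phase space even though it still displaces points along the gauge orbits in $\Phi$. This is precisely the geometric content of $\Omega^H$ being basic: it descends to the quotient, where gauge transformations act as the identity.
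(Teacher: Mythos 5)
Your proof is correct and coincides with the paper's own argument: both apply Cartan's formula to $\theta^H$, use $\Omega^H=\dd\theta^H$ together with the two basicity identities $\bb i_{\xi^\#}\theta^H=0$ and $\bb L_{\xi^\#}\theta^H=0$ from Proposition \ref{prop:gaugeOmegaH}, and conclude that both terms on the right-hand side vanish independently, even for field-dependent $\xi$. Your added remarks on the meaning of ``trivial'' for a presymplectic form (namely $\xi^\#\in\ker\Omega^H$, with identically vanishing generator) simply make explicit what the paper states in its closing sentence.
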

\begin{proof}
Application of Cartan's formula to \eqref{eq:LthetaH=0} gives
\be
0= \bb L_{\xi^\#} \theta^H = \bb i_{\xi^\#} \Omega^H + \dd \theta^H(\xi^\#).
\ee
This flow equation can be called trivial, because each of the two terms on the right-most formula vanish {\it independently} (even if $\dd \xi \neq 0$).
\end{proof}

\subsection{\label{sec:QLSymplRed} Quasilocal symplectic reduction}

The results presented in this section are discussed and proved in greater detail in \cite{AldoNew}. We briefly review them here for completeness, but they will not be needed in the following.

As proved in the previous sections, the horizontal symplectic form $\Omega^H$ is basic, i.e. both horizontal and gauge-invariant. As a consequence it can be unambiguously projected down to a 2-form $\Omega^H_\text{proj}$ on the reduced,  on-shell phase space $\Phi//\G$.\footnote{Here $\Phi//\G$ denotes the symplectic reduction of $\Phi$, which requires both going on-shell of Gauss and modding out gauge transformations.}
Moreover, since $\Omega^H$ is closed, $\Omega^H_\text{proj}$ is also closed.
However, for it to define a {\it symplectic} structure on $\Phi//\G$, $\Omega^H_\text{proj}$ would need to be non-degenerate as well. 
It turns out that, {\it in the presence of boundaries}, this is not the case.

Physically, this is simple to understand: $\Omega^H_\text{proj}$ fails to provide a symplectic structure for the {\it Coulombic} dof. The reason why this does not happen in the absence of boundaries is because $E_\Coul$ is fully determined by the matter degrees of freedom, and therefore does not need to independently appear in the symplectic structure. However, in the presence of boundaries, $E_\Coul$ is determined by $\rho$ {\it as well as} $f$ \eqref{eq:Gauss-Coul}. Thus, loosely speaking, what is missing in $\Omega^H_\text{proj}$ is a symplectic structure for the fluxes $f$. 

In sum, not only does $\Omega^H_\text{proj}$ depend on the choice of $\varpi$ (corollary \ref{Cor:312}), but it also  fails to be non-degenerate (and therefore symplectic).

Both these problems can be solved in one stroke by resorting to the concept of (covariant) superselection sectors.
In the Abelian case, this means simply that one ``stratifies'' the reduced phase space $\Phi//\G$ by subspaces at fixed value of $f$. Notice that in the Abelian case $f$ is gauge invariant and therefore a well-defined quantity on the reduced phase space. As a result, within each superselection sector, $E_\Coul$ is also completely fixed by the matter dof and we are therefore in a situation similar to that of the case without boundary. Thus, in the Abelian case, although $\Phi//\G$ is not symplectic, each superselection sector is. 
In the presence of field-space curvature, the appropriate symplectic structure here is not $\Omega^H_\text{proj}$ but rather the projection of $\Omega(\hat H, \hat H) \approx \Omega^H + \oint \sqrt{h} \tr( f\bb F)$---which is now closed within a superselection sector since there $\dd f \equiv 0$ (and $\dd\bb F\equiv 0$ by the Bianchi identity; cf. corollary \ref{Cor:312}).
One can show that the resulting symplectic structure is also independent of $\varpi$.

In the non-Abelian case, fixing $f$ would be tantamount to breaking the gauge symmetry at the boundary. 
Therefore the best one can do is to fix $f$ up to gauge, i.e. demand that $f$ belongs to the set $[f] = \{ f = g^{-1} f g , \text{for some} g\in\G \}$.
The restriction of $\Phi$ to those configurations on-shell of the Gauss constraint with $f\in[f]$ is called a {\it covariant} superselection sector.
However, fixing a {\it covariant} superselection sector is not enough for the Gauss constraint to fully fix $E_\Coul$ in terms of the matter field: $f$ can still be varied within $[f]$, even at {\it fixed} $(A,E_\rad, \psi)$. These transformations, that vary $f$ within $[f]$ while leaving the other fields fixed, are called {\it flux rotations} and are {\it physical} transformations (gauge transformations would have to uniformly act on the other fields as well).
Therefore, loosely speaking, to define a symplectic structure over the (gauge-reduced) covariant superselection sector, one needs to add a symplectic structure for the superselected fluxes $f\in[f]$.

This can be done in a canonical manner, by realizing that $[f]$ is essentially a (co)adjoint orbit in $\G$ and by resorting to the canonical Kirillov--Konstant--Sourieu (KKS) symplectic structure on coadjoint orbits. 
A properly constructed horizontal variation of the KKS symplectic structure  over the fluxes, $\omega^H_\text{KKS}$, can then be added to $\Omega^H$. The resulting 2-form $\Omega^H_f = \Omega^H + \omega^H_\text{KKS}$ is basic, closed, and projects to a non-degenerate symplectic structure within a reduced covariant superselection sector.
The resulting symplectic structure is also independent of the choice of $\varpi$.

We refer to the procedure of adding to $\Omega^H$ the canonically constructed symplectic structure on the fluxes, $\omega^H_\text{KKS}$, as the ``canonical completion'' of the symplectic structure $\Omega^H$. 
We call it a completion of the symplectic structure---as opposed e.g. to an extension of the phase space (à la ``edge mode'')---because this procedure fixes the degeneracy of $\Omega^H_\text{proj}$ over $\Phi//\G$ without enlarging the space $\Phi//\G$, that is without adding any extra degree of freedom to the phase space.

Mathematically, this procedure is closely related to performing the Marsden--Weinstein symplectic reduction \cite{MW1974} not on the pre-image of the zero-section of the momentum map, but on the pre-image of a coadjoint orbit of the moment map: indeed, from Corollary \ref{Cor3.7} the relevant moment map is $H:\xi\mapsto H_\xi$ such that $H_\xi=\int\sqrt{g}\tr(E^i\D_i\xi + \rho \xi) \approx\oint \sqrt{h}\tr(f\xi)$). The crucial subtlety is that one still wants the Gauss constraint strictly imposed in the bulk, which  means that one focuses on non-zero coadjoint orbits of $H$ that are, so to say, concentrated at the boundary only.

One last remark: although the (completed) reduced symplectic form is independent of a choice of $\varpi$ (analogously, independent of a choice of what one could call a ``covariant perturbative gauge-fixing''), the basis in which one describes the physical dof \textit{will} depend on that choice. 
In particular, which electric degrees of freedom are {\it precisely} coordinatized by $f$ through \eqref{eq:Gauss-Coul} depends on the choice of $\varpi$.
This is important, for example, when considering how (on-shell of the Gauss constraint) a ``flux rotation'' alters the bulk electric field $E = E_\rad + E_\Coul$. Flux rotations, for different choices of $\varpi$ and  for the same value of $f$, will have different effects on the bulk electric field  (as the ``meaning'' of $f$---i.e .the component of the electric field that it coordinatizes---changes along with these choices).

\section[Charges]{\label{sec:charges}Charges\protect\footnote{In this section we shall rectify some statements made in \cite{GomesHopfRiello}. In particular, contrary to what we had assumed there, the horizontal symplectic form is {\it not} invariant under charge-transformations $\chi^\#$. Here, we will discuss the origin and consequences of the important obstructions to this statement which had been hitherto missed.}}

In the previous sections we have established that dynamical quantities in the quasi-local gauge-reduced phase space---which are by definition gauge invariant---are encoded in the horizontal symplectic structure $\Omega^H$. 
We have also noticed that the generator of gauge transformations, $H_\xi = \int \sqrt{g}\,\tr( \GC_f \xi) \approx \oint \sqrt{h} \, \tr(f\xi) $, are encoded rather in the remaining part of the symplectic structure, $\Omega^\pp$ \eqref{eq:Hxi}. 

This means that the gauge generators $H_\xi$, which are the (naive) Noether charges for the gauge symmetry, have in general no bearing on the radiative degrees of freedom in the bulk of $R$. 
Shortly, we will argue that these charges do not encode any particular conservation laws. 
(These facts notwithstanding, these charges still encode information on the $f$-superselection sector, which is an important physical information; notice, however, how this statement has to be qualified:  in non-Abelian theories, neither $f$ nor any of the $H_\xi$ is gauge invariant and therefore observable as such.)

 In this section, we are going to clarify these statements, argue that one needs reducible configurations to obtain a gauge-invariant set of charges that satisfy a Gauss' law as well as appropriate conservation laws \cite{AbbottDeser, Barnich, DeWitt_Book}, and discuss how  these charges are related to certain geometric features of field space and the kernel of the SdW boundary value problem. 
(Notice that we draw a distinction between the Gauss constraint, which is an elliptic differential equation $\D E - \rho = 0$, and the (integrated) Gauss' law, which is an integral relation between the total charge contained in a region and the total electric flux through its boundary, e.g. in electromagnetism $\int \rho = \oint f$.)

At the end of this section, we will briefly comment on the consequences of these observations for the symplectic flow of these charges.

\subsection{Reducible configurations: an overview\label{sec:red_basics}}

At a configuration $ A \in \A $, consider the infinitesimal gauge transformations $\chi_{ A}\in\fG$ such that
\be
\delta_{\chi_{ A}}  A \equiv  \D \chi_{ A} = 0.
\label{eq:reducible_def}
\ee
If a $\chi_A \neq 0$ exists, then $ A_i$ is said {\it reducible} and $\chi_{ A}$ is called a {\it reducibility parameter} or {\it stabilizer}. 
The stabilizers $\chi_{ A}$ depend on the global properties of $\tilde A_i$ and  constitute a finite dimensional Lie algebra  (possibly a zero-dimensional one). 
Denote this Lie algebra $\Lie({\cal I}_A)$, where ${\cal I}_A\subset \G$ is the stabilizer or isotropy group of $A$, i.e. the subgroup of $\G$ composed by the elements $h\in\G$ such that $A^h = A$.

The isotropy group is a covariant notion, in the sense that
\be
\I_{A^g} = g^{-1}\I_A g.
\label{eq:IA}
\ee%
In general, $\I_A$ is {\it not} a normal subgroup of $\G$, and $\G_A := \G/\I_A$ is only a (right) quotient and not a group.
Infinitesimally, $\Lie(\I_A)$ is a sub-Lie-algebra but generally not an ideal of $\fG$, and therefore $\fkG_A := \fG/\Lie(\I_A)$ is only a quotient of vector spaces and {\it not} a Lie algebra. We will indicate elements of $\fkG_A$ by $[\xi]_A\equiv[\xi + \chi_A]_A$ or, more often, just $[\xi]$.
Since at $A\in \A$, $\chi_A^\# \equiv 0$, one has that $[\xi]^\#_A = \xi^\#_A$ is a well defined vertical vector in $\T_A\A$.

In non-Abelian theories, reducible configurations  form a meager set,\footnote{A  \textit{meagre} set is one whose complement is an everywhere dense set: roughly, an arbitrarily small perturbation takes one out of a meager set. Here, reducible configurations form a meager set according to the standard field-space metric topology on $\mathcal{A}$ (the Inverse-Limit-Hilbert topology \cite{YangMillsSlice, kondracki1983}, see also\cite{fischermarsden}).} in the same way as those spacetime metrics which admit non-trivial Killing vector fields are ``extremely rare'' (i.e. form a meager set). In this respect, Abelian theories, such as electromagnetism, are an exception: {\it all} their configurations have as their reducibility parameter the constant gauge transformation, e.g. $\chi_\text{EM}=const \in i \bb R$.

 This means that the action of $\G$ does not act freely on $\A$, and therefore $\A$ cannot be a bona-fide (infinite-dimensional) principal fibre bundle, since it lacks the necessary, homogeneous  local product structure: fibres  associated to reducible configurations are not isomorphic to $\G$---see figure \ref{fig8}.
\begin{figure}[t]
	\begin{center}
	\includegraphics[scale=0.17]{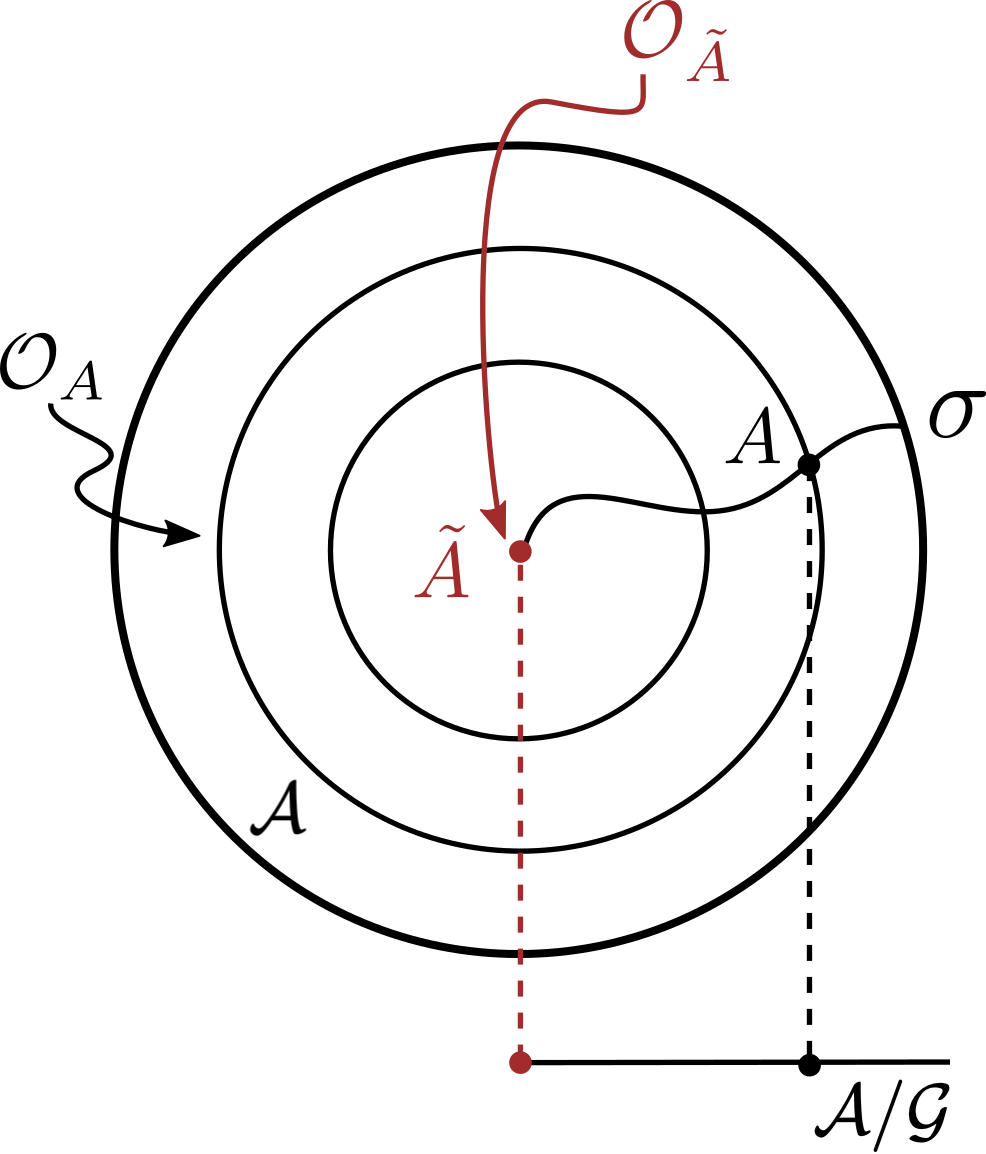}	
	\caption{ In this representation $\A$ is the page's plane and the orbits are given by concentric circles. The field $A$ is generic, and has a generic orbit, $\mathcal{O}_A$.  The field $\tilde A$ has a nontrivial stabilizer group (i.e. it has non-trivial reducibility parameters), and its orbit $\mathcal{O}_{\tilde A}$ is of a different dimension than $\mathcal{O}_A$. The projection of $\tilde A$ on  $\A/\G$ therefore sits at a qualitatively different point than that of $A$ (a lower-dimensional stratum of $\A/\G$). Exclusion of the reducible configuration $\tilde A$  gives rise to a fibre bundle structure over $\A\setminus \{\tilde A\}$; here $\sigma$ represents a section of $\A\setminus \{\tilde A\}$. Locally, the concept of section can be generalized to include reducible configurations such as $\tilde A$, thus leading to the notion of ``slice''. This is briefly reviewed in appendix \ref{app:slice}. }
	\label{fig8}
	\end{center}
\end{figure}
However, it turns out that $\A$ can be decomposed into ``strata'' defined by an increasing degree of symmetry, each of which does have (at least locally) a product structure. 
Indeed, a {\it slice theorem} shows that $\A$ is regularly stratified by the action of $\G$, and in particular that all the strata are smooth submanifolds of $\A$.

Given the gauge-covariance of the constructions involved in the slice theorem, the stratification of field space survives the gauge-reduction, and is thus geometrically reflected in the structure of the reduced field space. This will give us the opportunity to build gauge invariant (sets of) global charges.

 The kernel of the SdW boundary value problem at $A\in\A$ (cf. \eqref{eq:SdW} and \eqref{eq:Coul_pot}) is provided precisely by the reducibility parameters of $A$ (see proposition \ref{prop:Dchi=0}). 
This tyies the SdW kernel to certain geometrical properties of $\A$: a  fact which has two main consequences. 
First, it means that the SdW kernel is empty almost-everywhere in $\A$ and therefore that the SdW boundary value problem has generically a unique solution.
Second, it means that at a generic non-Abelian configuration there is no integrated Gauss law nor a gauge invariant notion of conserved charges.
The goal of this section is to analyze and explain these statements and show how one can leverage the a relation between the SdW kernel and the geometry of $\A$. 
We will take electromagnetism as the epitomic Abelian theory.

\subsection{Green's functions}\label{sec:Greens}

Let us consider the properties of the Green's functions $G_{\alpha,x}(y)$ of the SdW boundary value problem entering the definition of the radiative degrees of freedom as well as of the Coulombic ones.
For definiteness and future convenience, we will focus on the example of the Coulombic degrees of freedom.

The Green's functions $G_{\alpha,x}(y)$ are defined by the following\footnote{Here, $\delta_x(y)\equiv \delta(x,y)$ is a Dirac delta distribution. The notation is meant to emphasize the index structure of $\Lie(\G)$.}
\be
\begin{dcases}
\D^2 G_{\alpha,x}(y) = \tau_\alpha \delta_x(y)  & \text{in }R,\\
\D_s G_{\alpha,x}(y) =  0  & \text{at }\pp R.
\end{dcases}
\label{eq:Green}
\ee 
Physically, this choice of (covariant) Neumann boundary conditions corresponds to the demand that the charged perturbation inserted at $x$ does \textit{not} contribute to the flux $f$ at $\pp R$.

In EM, this choice of boundary conditions is inconsistent and should be amended e.g. by demanding that it creates a constant flux compatible with the integrated Gauss law.\footnote{In EM one possible natural boundary condition is $\pp_s G = 1/\text{Vol}(\pp R)$ with $\text{Vol}(\pp R)$ the volume of the region's boundary $\pp R$, whereas in YM at a $\tilde A$ with a single reducibility parameter $\chi$, the following $y$-constant boundary condition plays a similar role $\D_s G_{\alpha,x} =  \tr(\chi(x) \tau_\alpha)/||\chi||^2_{\pp R}$.}
This is because in EM all configuration have as a stabilizer $\chi_\text{EM}=const$.
However, as we discussed above,  {\it generic} non-Abelian configurations are irreducible and possess no such stabilizer: therefore at these configurations there is {\it no integrated Gauss law} that the Green's function should respect.

Indeed, the extension of the (Abelian) integrated Gauss law  $\int \rho = \int \nabla_iE^i = \oint f$, to the non-Abelian context would be 
\be
\label{eq:no_charge}
\int \sqrt{g}\,\tr( \tau_\alpha \rho) \approx \int\sqrt{g}\, \tr( \tau_\alpha \D_i E^i) = -\int\sqrt{g}\, \tr( E^i \D_i \tau_\alpha) + \oint \sqrt{h}\,\tr(\tau_\alpha f),
\ee 
but the bulk term on the rightmost side vanishes only if $A$ is reducible and $\tau_\alpha$ is replaced by a reducibility parameter. 
 Notice that reducibility parameters are rigid, i.e. their value at one point determines their value everywhere (they solve a first-order differential equation), and only in such a situation does one lose the functional independence between bulk and boundary integrals. The ensuing functional dependence is what makes the very possibility of having an integrated Gauss law meaningful. 
 Therefore, we conclude that at a generic configuration of a non-Abelian YM theory, there is no (integrated) Gauss law relating total charges and (integrated) electric fluxes.\footnote{The issue is formally the same as the difficulties present in defining quasi-local conserved quantities in general relativity. Also, notice that bringing the gauge field contribution on the left-hand side of \eqref{eq:no_charge} to make the ensuing ``integrated Gauss law'' satisfied identically is a trick with no bearing on the dressing problem and the definition of the Green's functions for the matter field.}

Using the definition \eqref{eq:Green} of the Green's function, together with the following non-Abelian generalization of Green's theorem (e.g. \cite{JacksonBook}),
\be
\int_R \sqrt{g}\,\tr( \psi_1 \D^2 \psi_2 - \psi_2 \D^2 \psi_1 ) = \oint_{\pp R} \sqrt{h}\,\tr( \psi_1 \D_s \psi_2 - \psi_2 \D_s \psi_1) \quad\forall \psi_{1,2}\in \Omega^0(R,\Lie(\G)),
\label{eq:Greenthm}
\ee
one can choose $\psi_1 = \varphi$ and $\psi_2 = G_{\alpha,x}$, to obtain the Coulombic component of the electric field in terms of the charge density $\rho$ and the flux $f$:
\begin{align}
\varphi_\alpha(x) & = \int_R \sqrt{g(y)} \, \tr \Big( G_{\alpha,x}(y) \D^2 \varphi (y) \Big) - \oint_{\pp R}\sqrt{h(y)}\, \tr\Big( G_{\alpha,x}(y)\D_s \varphi(y) \Big) \notag\\
& \approx \int_R \sqrt{g(y)} \, \tr \Big( G_{\alpha,x}(y) \rho (y) \Big) - \oint_{\pp R}\sqrt{h(y)}\, \tr\Big( G_{\alpha,x}(y) f(y) \Big) .
\end{align}
At reducible configurations, this formula  must again be amended by the addition of constant offsets due to the modified boundary condition. The addition of such offsets  is possible thanks to the freedom of redefining $G_{\alpha,x}$ by some combination of the reducibility parameters, since they lie in the kernel of \eqref{eq:Green}.

A similar construction allows us to solve the SdW boundary value problem for $\varpi_\sdw$.
More on this in section \ref{sec:dressing}.

\subsection{Conserved charges}\label{sec:conservation}

To talk about conservation laws, we consider now a spacetime process in the presence of matter.

At the light of the previous section, we see that an integrated Gauss law exists at reducible configurations. 
This suggests the possibility of defining {\it stabilizer charges} at such configurations via \eqref{eq:no_charge}:\footnote{Notice that $\tr(\rho\chi) = \bar\psi \chi \psi$. Therefore, this charge can be zero even for $\rho\neq0$, e.g. if $\chi\psi=0$  while $\psi\neq0$. For matter in the fundamental representation, the latter condition is not attainable for $G={\SU}(2)$, but it is for larger $N>2$. This situation was analyzed in \cite{GomesHopfRiello} through the lens of the Higgs mechanism for condensates. 
\label{fn:SU2} }
\be
Q[\chi_A] := \int \sqrt{g}\,\tr(\rho \chi_A) \approx \oint \sqrt{h} \,\tr(\chi_A f).
\label{eq:stabchargedef}
\ee
As a consequence of the Gauss law, these charges are inherently {\it quasilocal}, insofar the value of $Q[\chi]$ over a closed Cauchy surface $\Sigma$, $\pp \Sigma =\emptyset$, necessarily vanish.

Notice also that, if $\dim(\I_A)=1$,  and we take $\chi_A$ to be of unit\footnote{$\fG$ is equipped with the following canonical positively-definite inner product: $\langle \xi,\eta\rangle := \int \sqrt{g} \,\tr(\xi\eta) $.} norm, then from \eqref{eq:IA} it follows that $\chi_{A^g} = g^{-1}\chi_A g$ and therefore $Q[\chi_A]$ is gauge invariant. However, if $\dim(\I_A)>1$, the  identification of  specific elements of $\Lie(\I_A)$ at different $A$'s is more problematic: for this reason in the reminder of this section we will focus on the case $\dim(\I_A)=1$ and will comment on its generalization at the end.

Having established a Gauss law and the gauge invariance of the stabilizer charge $Q[\chi_A]$, we now ask whether such charges are also conserved. 

Consider a configuration of $\Phi=\T\A \times (\bar \Psi \times \Psi)$ whose time evolution in temporal gauge allows a time-{\it in}dependent  extension of $\chi_A$ to a {\it time} neighbourhood $N$ of $R$, $N=R\times(t_0,t_1)$, that satisfies $ \D_i\chi_A =0$ at every time.\footnote{This is equivalent to asking that the evolution is confined to the stratum $\cal N_A$---see the end of this section and appendix \ref{app:slice}. It is possible that such $\chi$'s are uniquely fixed  by demanding that they conserve both $ A$ and $ E_\rad$ (and then evolving these solutions in time). Whether these motions are physically relevant (at least in some approximation) is not clear to us.  We are also ignoring here the difficulty of identifying a given $\chi\in\Lie(\I_A)$ at different configurations in $\mathcal N_A$ when $n=\dim(\mathcal{I}_A)>1$---see the last paragraph of section \ref{sec:charges_YM}. This difficulty might result in the ``mixing''  of different stabilizer charges, which is inconsequential for the present argument.} 
Then, the quantity $Q[\chi_A]$ is conserved in the sense that it satisfies a balance law in terms of the matter current \cite{AbbottDeser,Barnich, DeWitt_Book}:
\be\label{eq:balance_YM_time}
 0 = \int_N  \sqrt{g}\, \nabla_\mu \tr(\chi J^\mu)  = \Delta_{t_1,t_0} Q[\chi] + \int_{t_0}^{t_1} \d t \, F_{\pp R}[\chi],
\ee
where we introduced the fluxes $F_{\pp R}[\chi] := \oint \sqrt{h}\,\tr(\chi J_s)$ through $\pp R$. The first equality follows from $ \D_\mu \chi_A = 0$ as well as the equation of motion $\D_\mu J^{\mu\alpha} = 0$ (Noether II); instead, the second equality is just an application of Stokes' theorem.

It is important to stress that all integrands in the above balance law are {\it quantities constructed geometrically} from the properties of $ A$ in $N$, with gauge-invariance properties analogous to those of $Q[\chi_A]$. Indeed, the existence and properties of  a $\chi_A$ such that $\D_\mu \chi=0$ are gauge-invariant features of the configuration history $ A(t)$. 
The construction of such quantities would not be possible at non-reducible configurations, where the equation (Noether II) $\D_\mu J^{ \mu\alpha} = 0$ does {\it not} constitute an appropirate replacement of the above. 

Notice that the condition $\D_\mu\chi_A =0$ on the whole of $N=R\times(t_0,t_1)$ implies that $ A(t)$ and $ E(t)=\dot A(t)$ are both reducible, and therefore---via the Gauss constraint---that $\rho$ commutes with $\chi_A$:
\be
\D_\mu\chi_A  =0 \quad \Rightarrow \quad \D_i\chi_A =0, \quad [  E_i, \chi_A ] = 0 \quad\text{and}\quad [ \rho, \chi_A]=0.
\ee

We conclude this section by noticing that the balance equation expressed in \eqref{eq:balance_YM_time} is akin to the conservation of Komar charges for Killing vector fields in general relativity. Similarly, the impossibility of identifying a meaningful non-Abelian charge density over generic, i.e. nonreducible, configurations parallels, in general relativity, the  difficulties in identifying conserved stress-energy  charges away from backgrounds with Killing symmetries \cite{Szabados}. 
In general relativity, the Komar charges encode the conservation of energy-momentum and angular-momentum in the test particle approximation over a symmetric background. The physical relevance of this approximation in general relativity is more than well established (think of the theory special relativity); the same cannot be said for YM. 
 (This difference is due to the extreme weakness of the gravitational attraction compared to the other forces.)
Finally, within the present framework, the construction of asymptotic YM charges at null infinity that are more akin to the Bondi, rather than Komar, charges was carried out in \cite{RielloSoft}.

The above analysis has focused on the space and space-time properties of reducible configurations.
In the following sections, we will instead focus on their field-space and symplectic properties.
Whereas these properties will be analyzed in detail in the case of the Abelian theory, in the non-Abelian case we will limit ourselves at emphasizing the difficulties a generalization would incur.

\subsection{Charges and symplectic geometry in electromagnetism}\label{sec:charges_EM} 

As the prototypical example of an Abelian YM theory, we will focus on electromagnetism (EM).
As we have already notice, in the {\it Abelian} case {\it all} configurations are reducible.
It is therefore necessary to incorporate charge transformations in the symplectic treatment of the Abelian theory. 
In this case, for all $A$, $\I_A \equiv \I_\text{EM}\cong G$ is the (normal) subgroup of constant gauge transformations in $\G$ (the ``electric-charge'' group), and therefore $\A_\text{EM}$  has the structure of a bona-fide infinite dimensional fibre bundle for  the group $\G_\text{EM}:=\G / \I_\text{EM}$.

Since in EM the electric field $E$ is gauge invariant, the matter-free phase space $\T^*\A_\text{EM}$ inherits  a bona-fide fibre bundle  structure with respect to the same quotient group. In particular, all phase space configurations $(A,E)\in\T^*\A_\text{EM}$ are reducible with respect to the constant gauge transformations $\chi_\text{EM}=const$.
However, in spite of this, none of the matter field configurations for which $\psi\neq0$ is thus reducible:
\be
\delta_{\chi_\text{EM}}\psi = -\chi_\text{EM} \psi \neq 0.
\label{eq:psi_chiEM}
\ee
Indeed, $\chi_\text{EM}^\#$ as a vector field on the full phase space $\Phi_\text{EM} = \T^*\A_\text{EM} \times( \bar \Psi \times \Psi)$ reads\footnote{Here, $B$  is a spinorial index in $\bb C^4$, e.g.  the Dirac gamma matrices $\gamma^\mu$ have components   $(\gamma^\mu)^{B'}{}_B$.}
\be
\chi_\text{EM}^\#{} = \int (-\chi_\text{EM} \psi)^B(x) \frac{\delta}{\delta \psi^B(x)} + (\bar \psi \chi_\text{EM})^B(x) \frac{\delta}{\delta \bar\psi^B(x)} \in \mathrm T \Phi_\text{EM}.
\label{eq:chiEMpsi}
\ee

Therefore, although we can define a functional connection on $\A_\text{EM}$ for the the quotient gauge group $\G_\text{EM}$, in order to use this connection to define horizontal derivatives on $\Phi_\text{EM}$, we need to be able to identify elements $[\xi] = [\xi + \chi_\text{EM} ] \in \Lie(\G_\text{EM})$ with elements of $\Lie(\G)$. 
This cannot be done canonically, and a choice of embedding map of vector spaces	
\be\label{eq:kappa_EM}
\kappa : \Lie(\G_\text{EM}) \hookrightarrow \fG
\ee
has to be made. (Notice that in the Abelian case, as long as $\kappa$ preserves the vector-space structure of $\Lie(\G_\text{EM})$, it will also preserve its (trivial) Lie-algebra structure.)

A simple choice is to represent $\G_\text{EM} $ in $\G$ as the so-called group of ``pointed\footnote{One can always find the respective group of pointed gauge transformations that acts freely on the space of field configurations. Its construction is completely analogous in non-Abelian YM. In all cases $\G_\ast\subset \G$ is a normal subgroup and $\G/\G_\ast \cong G$. (Analogous considerations hold in metric general relativity where ``pointed diffeomorphisms" are diffeomorphisms that leave a point and a tangent space at that point invariant.) What distinguishes the Abelian case is that the group of pointed gauge transformations is isomorphic to the quotient $\G_A := \G / \mathcal I_A$ (for all $A$) which only in this case is a group itself.} gauge transformations,'' $\G_* := \{ g \in \G \text{ such that } g(x_*) = \mathrm{id} \} $ for a certain (arbitrary) $x_*\in R$.  I.e. fixing $\kappa(\xi)$ as the only element of $\xi + \Lie(\I_\text{EM})$ such that $(\kappa(\xi))(x_\ast)= 0$.

But there are other possibilities, too. In the following we will denote by $\xi_*$ elements of $\kappa(\Lie(\G_\text{EM}))\subset\fG$, irrespectively of the choice of $\kappa$ that has been made.
At the end of this section, we will argue that the choice of $\kappa$ is irrelevant---at least within a given superselection sector of $f$. 

We now turn our attention to the geometry of field space, and to the definition of a functional connection form.

Recalling that the SdW boundary value problem has $\Lie(\I_\text{EM})$ as a kernel, we define $\varpi_\text{EM}\in\Omega^1(\A_\text{EM},\fG)$ as the unique solution to the SdW equation\footnote{The SdW choice is here made for definiteness, but won't play any particular role in what follows. Other choices of connection can be studied e.g. by considering the corresponding vertical projector $\hat V$ and thus defining $\varpi_\text{EM} := \kappa\circ \iota^{-1} (\hat V(\cdot)) \in \Omega^1(\A_\text{EM}, \fG)$, where $\iota \equiv \kappa^\#: \Lie(\G_\text{EM}) \to V_A\subset \T_A\A_\text{EM}$ is the isomorphism between equivalence classes $[\xi+\chi_\text{EM}]\in  \Lie(\G_\text{EM})$ and vertical vector fields in $\T\A_\text{EM}$. The SdW connection considered in the text corresponds to the choice of $\hat V$ as the $\bb G$-orthogonal vertical projector.} that is valued in $\kappa(\Lie(\G_\text{EM}))\subset \Lie(\G)$.
This connection satisfies the projection and covariance properties \eqref{eq:varpi_def} with respect to gauge transformations in the image of $\kappa$:
\be
\begin{cases}
\bb i_{\xi_*^\#} \varpi_\text{EM} = \xi_*\\
\bb L_{\xi_*^\#}\varpi_\text{EM} = \dd \xi_*
\end{cases}
\qquad \forall \xi_*\in\kappa(\Lie(\G_\text{EM}))
\label{eq:EM_varpi-pointed}.
\ee
The above properties, however, ``fail'' if $\xi$ is replaced by an element of the stabilizer $\chi_\text{EM} = const$:
\be
\bb i_{\chi_\text{EM}^\#} \varpi_\text{EM} = 0 = \bb L_{\chi_\text{EM}^\#} \varpi_\text{EM},
\label{eq:varpi_chiEM}
\ee
These two equations can be summarized in the following:
\be
\begin{cases}
\bb i_{\xi^\#} \varpi_\text{EM} = \kappa( \xi)\\
\bb L_{\xi^\#}\varpi_\text{EM} = \dd \kappa(\xi)
\end{cases}
\qquad \forall \xi \in\fG.
\label{eq:EM_varpikappa}
\ee

In $\A_\text{EM}$, these equations are trivial since $\chi_\text{EM}^\# \equiv 0$ in $\T\A$. But these equations can readily be interpreted within the phase space $\Phi_\text{EM}$ as well. In this case $\varpi_\text{EM}$ is as usual the pullback of the SdW connection defined over $\A_\text{EM}$ and $\cdot^\#$ now includes the action of $\G$ on the electric field (which is also trivial) and on the matter fields $\psi$ (which is {\it non}trivial) \eqref{eq:chiEMpsi}. 

Hence, over $\Phi_\text{EM}$, the vector $\chi^\#_\text{EM}$ does not vanish but is nonetheless in the kernel of (the pullback of) $\varpi_\text{EM}$.

Thus, we see that we have geometrically isolated the set of ``constant gauge transformations'' of EM. Of course, these transformations are precisely those associated with the the total electric charge contained in $R$. 
In the following, we will rather call them {\it charge transformations}.

We can now define the horizontal derivatives as usual:
\be
\dd_\perp A := \dd A - \d \varpi_\text{EM}
\qquad 
\dd_\perp E := \dd E
\qquad\text{and}\qquad
\dd_\perp \psi : = \dd \psi + \varpi_\text{EM}\psi
\ee
with the understanding that---in the presence of the matter fields---full covariance is only guaranteed with respect to the gauge transformation {\it modulo} charge transformations.
Also, notice how the horizontal derivative of $A$ could be unambiguously defined using a connection valued in $\Lie(\G_\text{EM}) = \fG/\Lie(\I_\text{EM})$, since $\D\chi_\text{EM} \equiv 0$, but the horizontal derivative of $\psi$ requires a connection valued in $\fG$, since $\xi$ and $\xi+\chi_\text{EM}$ act differently on $\psi$.

Let us analyze the properties of $\theta^\perp = \theta^\perp_\text{EM} + \theta^\perp_\text{Dirac}$ under gauge and charge transformations. We start by noticing that, contrary to the Noether charges $H^\text{EM}[\xi_*] := \theta(\xi_*^\#) = \theta^V(\xi_*^\#)$, the stabilizer charges $Q_\text{EM}[\chi_\text{EM}]$ can be defined solely from the {\it horizontal} symplectic potential:\footnote{Notice that, since $ \varpi_\text{EM}(\chi_\text{EM}^\#) = 0$, $Q_\text{EM}[\chi_\text{EM}] := \theta^\perp(\chi^\#_\text{EM})$ is also equal to $ \theta(\chi^\#_\text{EM})$, to $ \theta_\text{Dirac}(\chi^\#_\text{EM})$, and to $ \theta^\perp_\text{Dirac}(\chi^\#_\text{EM})$.}
\be
0 \neq Q_\text{EM}[\chi_\text{EM}] := \int \sqrt{g} \, ( \chi_\text{EM} \rho ) =  \theta^\perp (\chi^\#_\text{EM}).
\label{eq:QEM}
\ee

We notice that the Gauss constraint \eqref{eq:Gauss-Coul} together with the fact that $0=\delta_{\chi_\text{EM}} A = \d \chi_\text{EM}$, implies the (integrated) Gauss law for the stabilizer charges---usually expressed for $\chi_\text{EM}=1$:
\be
Q_\text{EM}[\chi_\text{EM}] \approx \int \sqrt{g} \, ( \chi_\text{EM} \nabla^i E_i^\Coul ) \approx \oint \sqrt{h}\,( \chi_\text{EM} f) = \chi_\text{EM}  \oint \sqrt{h} \, f,
\ee
If $\chi_\text{EM}$ is not only a constant in space but also in time, $Q_\text{EM}[\chi_\text{EM}]$ is a quantity satisfying a balance equation like \eqref{eq:balance_YM_time} (electric charge conservation). 

 To understand the symplectic significance of the charge $Q_\text{EM}[\chi_\text{EM}]$, the following identity is important:\footnote{This equation can be obtained by Lie-deriving $\theta^H$ \eqref{eq:theta_HV} using $ \bb L_{\bb X} \dd \bullet = \dd \bb L_{\bb X} \bullet$, the Leibniz rule, as well as the identities $\bb L_{\chi^\#_\text{EM}} A \equiv \delta_{\chi_\text{EM}} A = 0$ and \eqref{eq:varpi_chiEM} (which imply $\bb L_{\chi_\text{EM}^\#} \theta^H_\text{EM}  = \bb L_{\chi_\text{EM}^\#} \theta^{H}_\text{EM,Dirac} $), and \eqref{eq:psi_chiEM}. }
\be
\bb L_{\chi_\text{EM}^\#} \theta^\perp = \int \sqrt{g}\, (\rho\, \dd \chi_\text{EM}).
\label{eq:L-ddchi}
\ee
This identity establishes that $\theta^\perp$ is {\it not} invariant under the flow of a {\it charge} transformation $\chi_\text{EM}$, unless $\chi_\text{EM}$ is field-{\it in}dependent. This should be contrasted with the invariance of $\theta^\perp$ under gauge transformations proper \eqref{eq:LthetaH=0}. 
Before we make the comparison explicit,  let us first follow the previous remark to its conclusions:
the invariance of $\theta^\perp$ under the field-{\it in}dependent charge flow $\chi^\#_\text{EM}$, for $\dd \chi_\text{EM}=0$, implies the following  {\it nontrivial} flow equation:
\be
0 = \bb L_{\chi_\text{EM}^\#} \theta^\perp = \dd Q_\text{EM}[\chi_\text{EM}] + \bb i_{\chi^\#_\text{EM}} 
\Omega^\perp \qquad (\dd \chi_\text{EM} = 0).
\label{eq:Abflow}
\ee
Whereas the second equality is an identity that follows solely from Cartan's formula and the definition \eqref{eq:QEM}, we stress once again that, in the presence of matter (where the equation is nontrivial), the first equality and therefore the Hamiltonian-flow equation hold if and only if $ \chi_\text{EM}$ is the {\it same} throughout $\A_\text{EM}$, i.e. only if $\dd \chi_\text{EM} = 0$ (cf. \eqref{eq:L-ddchi}).

Heuristically, this makes sense: as defined here, a charge $Q_\text{EM}[\chi_\text{EM}] $ is a measure of a certain physical property of a matter distribution over a (symmetric) background, and the flow equation compares this measure at two neighbouring configurations---but, for this comparison to be meaningful, one cannot change the ``measuring rod'' (i.e. $\chi_\text{EM}$) from one configuration to the other. 

 Back to the comparison with the {\it trivial} ``flow'' equation for gauge-transformations proper. This is implied by \eqref{eq:LthetaH=0} when $\A$ actually has the structure of a fibre bundle (so is a proper foliation induced by the action of the gauge group) and $\varpi$ satisfies the connection-form axioms \eqref{eq:varpi_def} for {\it all} $\xi\in\fG$ (as opposed to \eqref{eq:varpi_chiEM}).
In EM, thanks to \eqref{eq:EM_varpi-pointed}, it is {\it pointed} gauge transformations which satisfy $\bb L_{\xi_*^\#}\theta^\perp_\text{EM}\equiv 0$ and thus the trivial flow equation follows:\footnote{More generally, the two equations above can be summarized in the following equality between the two expressions of $\bb L_{\xi^\#} \theta^\perp $:
$$
Q_\text{EM}[\dd\Xi] =  \dd Q_\text{EM}[\Xi] + \bb i_{\Xi^\#} \Omega^\perp
\quad\text{where}\quad
\Xi := \xi - \kappa(\xi) \in \Lie(\I_\text{EM}).$$ }
\be
0 \equiv \bb L_{\xi_*^\#} \theta^\perp = \dd (\bb i_{\xi_*^\#}\theta^\perp) + \bb i_{(\xi-\kappa(\xi))^\#} \Omega^\perp.
\label{eq:gaugeinvariance}
\ee
This ``flow'' equation is said to be  trivial because each term on the rhs vanishes identically and independently, even when $\dd \xi_* \neq 0$.

In sum, that such {\it charge} transformations $\chi_\text{EM}$ are  physical, and are thus distinguished from {\it gauge} transformations $\xi_*$, is not postulated, but derived:  the transformations corresponding to $\chi_\text{EM}$ are entirely generated by the $\theta^\perp_\text{EM}$ components, rather than by the Gauss constraint (which is entirely in $\theta^V_\text{EM}$), and thus survive the symmetry reduction process.

Therefore, the formal similarity between  the on-shell stabilizer charges $Q_\text{EM}[\chi_\text{EM}]  \approx \oint \sqrt{h}\,( \chi_\text{EM} f) $ and the on-shell Noether charge $H^\text{EM}_{\xi_*} \approx \oint \sqrt{h}\,( \xi_ * f)$, should not obfuscate the important differences between these two very different quantities.
The Noether charges $H^\text{EM}_{\xi_*}$ should  be thought of as encoding information on the $f$-superselection sector, rather than on the quasi-local radiative degrees of freedom contained in $R$. 

\paragraph*{Remark} Above we have chosen to represent $\G/\I_\text{EM}$ in terms of $\G_\text{EM}$. This choice is arbitrary not only in the choice of $x_*$, but also due to the fact that other ways exist of representing the quotient  $\G/\I_\text{EM}$ (e.g., for a cubic region, in terms of the Fourier modes of $\xi$ beyond the zero mode). Different such choices lead to different prescriptions for defining the SdW connection $\varpi_\text{EM}$. Consider two such prescriptions, $\varpi_{\text{EM},1}$ and $\varpi_{\text{EM},2}$. Then, since $\varpi_\text{EM}$ is itself exact, $\varpi_{\text{EM},2} - \varpi_{\text{EM},2} = \dd \sigma$, for some $\sigma \in \Omega^0(\Lie(\I_\text{EM}))$. From this, it is easy to see that $\Omega^\perp_2 - \Omega^\perp_1 = Q_\text{EM}[\dd\sigma] = \dd \sigma \curlywedge \dd q_R $ where $q_R = \int \sqrt{g} \rho$ is the total electric charge in $R$. 
Thus, the two reduced symplectic forms coincide within any given superselection sector (actually, even within larger sectors of constant $q_R$ rather than constant $f$). 

\subsection{Considerations on the non-Abelian generalization}\label{sec:charges_YM} 

How much of the constructions carried over in the previous section generalize to the non-Abelian case? 
We leave a comprehensive answer to this question for future work. Here, we limit ourselves to some general considerations on the difficulties one would encounter in this process of generalization.

As observed in the previous part of this section, we already have a candidate for a global set of charges in YM theory: this is the stabilizer charge $Q[\chi_A]$. 
What is in question is: Are these charges gauge-invariant? Are they the Hamiltonian generator of the charge transformation $\chi_A^\#$ for the {\it horizontal} symplectic structure? 

In the rest of this section we will try to identify the difficulties one needs to face when addressing these questions. 
Albeit rarely, we will at times be compelled to make reference to notions---that we have not introduced---regarding the stratification of $\A$ by the reducible configurations, or the slice theorem \cite{Ebin, Palais, isenberg1982slice, kondracki1983}. 
To make this article self-contained, we add a brief summary of these notions in appendix \ref{app:slice}.

\paragraph*{An example: the vacuum and constant gauge transformations}
Consider the non-Abelian vacuum configuration $A=0$. 
Similarly to the EM case, the non-Abelian vacuum is also stabilized by constant gauge transformations, $\I_{A=0} \cong G$ and $\chi_{A=0} = const$; this might suggest that similar considerations to those made in the previous section about EM might be made for YM around the vacuum background $A=0$.
This would  recover the notion of global charges proposed in \cite[Ch. 7]{StrocchiBook}, who singles out ``global gauge transformations'' of this sort (i.e. $\xi = const$)  as having a particular physical significance.

However, complications arise and this simple example is useful to exemplify some of the difficulties one encounters when attempting to generalize the constructions of the previous section to the non-Abelian case.

From a mathematical perspective, taking the directions $\chi_{A=0}^\#$ as physical also at non-vacuum configurations means modelling the space of physical configurations $\A/\G$ by the slice through the vacuum configuration\footnote{Mutatis mutandis, any slice through a vacuum configuration $A=0^g = g^{-1}\d g \in {\cal O}_{A=0}$ would do. Although a brief summary will follow, for a more through discussion of the notion of ``slice'' see appendix \ref{app:slice}  and references therein.} $\mathscr S_{A=0}$. 
The notion of ``slice'' generalizes the notion of ``section'' of a fibre bundle to the case in which reducible configurations are present. However, at reducible configurations, the notion of slice necessarily differs from the naive intuition behind the notion of section. In particular, the slice $\mathscr S_{A=0} \subset \A$ contains the non-trivial orbits of the $A\neq 0$ under the constant transformations $\chi_{A=0}^\#$  even if they contain a single representative of the (partial) orbit of ``non-constant gauge transformations'' $\G_{A=0} = \G / {\cal I}_{A=0}$ (which do not form a group).  
The stabilizer charges $Q[\chi_{A=0}]$ then emerge as the Noether charges (Noether I) associated with the now ``frozen'' background $A=0$ and (fluctuation) fields that transform in the adjoint of $\I_{A=0}\cong G$. (This treatment has an analogue at all background configurations $\tilde A$ with nontrivial stabilizers. These analogue constructions, however, lead to different charge groups, $\I_{\tilde A}$).

There are however various reasons to deem this analysis  incomplete.
One such reason is the arbitrariness in the choice of symmetric background.
 Another is that, even granting that, in YM,  the vacuum configurations in ${\cal O}_{A=0}$ {\it are} special, the use of $\mathscr S_{A=0}$ is at best perturbative around $A=0$, as the following analogy highlights: the analogue in general relativity of using $\mathscr S_{A=0}$ as model for $\A/\G$ in YM would correspond to (somehow) choosing one set of e.g. Cartesian coordinates (adapted to the vacuum $g_{\mu\nu}=\eta_{\mu\nu}$) and declaring that translations and rotations with respect to these coordinates have physical significance also at non-flat configurations. 
Therefore, the analysis offered above can, at best, have an approximate significance in the presence of small perturbations on top of the vacuum background. (As already emphasized at the end of section \ref{sec:Greens}, the YM stabilizer charges $Q[\chi]$ are indeed analogous to general relativity's Komar charges for backgrounds with a Killing symmetry; but whereas the physical importance of these charges over approximately symmetric backgrounds is manifest in the low-mass limit of general relativity, to the best of our knowledge it has not been established at all in (any regime of) YM.)

These observations suggest that it is not possible to arrive at a single notion of global charges in YM that is meaningful throughout $\A$.
The alternative is to work, as in the rest of this paper, in a differential-geometric language at the level of local tangent spaces, considering only those stabilizer charges defined at one given configuration.

\paragraph*{Symmetry sectors} This takes us to the next complication. 
Since stabilizer charges exists only at reducible configurations which form a meagre set in $\A$, differentiating quantities such as $Q[\chi_A]$---e.g. to study the associated symplectic flows---is problematic: generic variations of the symmetric base configurations that are not fine-tuned necessarily take us to irreducible configurations that do not admit any stabilizer charge at all. 
This is another reason why, in the non-Abelian theory, the physical viability of stabilizer charges is unclear.

The above notwithstanding, it still is of mathematical interest to analyze the consequence of stabilizer charges {\it within} field-space sectors characterized by a certain degree of symmetry, i.e. within given strata ${\cal N}_A$ characterized by non-trivial (possibly sub-maximal) stabilizers, ${\cal I}_A\supsetneq \{\mathrm{id}\} $. 

(One notable case in which focusing on reducible configurations is not physically restrictive is that of Yang-Mills fields at asymptotic null infinity: there, all physically admissible configurations must be---at leading order in $1/r$---in the vacuum configuration. This means that certain asymptotic stabilizers are intrinsically defined, and thus lead to an enlarged group of asymptotic symmetries, that correspond to Strominger's leading soft-charges.  Our formalism extends to that context without obstructions: See \cite{RielloSoft} and references therein.)

Thus, within a single stratum, it is at least meaningful to attempt a generalization of the symplectic analysis we performed for the Abelian stabilizer charges. 
There are however further obstacles, which we will now highlight by inspecting the various ingredients entering \eqref{eq:Abflow} and \eqref{eq:gaugeinvariance}. In the following, all differential operators must be understood as being those intrinsic to a given stratum $\mathcal N_A$.

\paragraph*{A $\varpi$ for reducible configurations?}
The main tool we need to construct is a generalization of the connection form to a stratum $\mathcal N_A$ and thus of a horizontal differential. 
First, notice that the notion of horizontal differential is useful only if there are nontrivial horizontal directions within $\mathcal N_A$, which is not the case in the bottom stratum $\mathcal N_{A=0}$ of the YM vacuum $A=0$, since\footnote{Cf. footnote \ref{fnt:vac_stratum} in appendix \ref{app:slice}.} $\mathcal N_A = \mathcal O_{A=0}$.
Second, in the non-Abelian theory, $\fkG_A = \fG/\Lie(\I_A)$ generically fails to be a Lie algebra, and so does $\kappa(\fkG_A) \subset \Lie(\G)$, for $\kappa$ (the non-Abelian analogue of)  the embedding map given in \eqref{eq:kappa_EM}. Therefore, on $\mathcal N_A$, it won't be possible to define an actual connection form that satisfies the covariance property as in \eqref{eq:EM_varpi-pointed}, and its extension to $\Phi$ will lack a generalization of this property as in \eqref{eq:EM_varpikappa}.
In order to find a useful non-Abelian version of \eqref{eq:EM_varpikappa}, it will therefore be important to find a set of definitions leading to a minimal modifications of the projection and covariance properties \eqref{eq:varpi_def} in the presence of a nontrivial stabilizer group.

\paragraph*{A basic symplectic structure?}
Once a viable generalization of the projection and covariance properties has been found, and thus a $\varpi_\text{red}\in\Omega^1(\mathcal N_A, \fG)$ has been defined, one will still have to check whether---through this $\varpi_\text{red}$---it is possible to define an appropriately horizontal and gauge-invariant (i.e. basic) symplectic structure that can lead to the analogues of \eqref{eq:Abflow} and \eqref{eq:gaugeinvariance}. 
We expect this will not be, strictly speaking, possible:
depending on the specific way $\varpi_\text{red}$  generalizes the projection and covariance properties, we expect certain obstructions to invariably appear. Ideally, these obstructions will be encoded in a certain combination of the stabilizer charges, rather than some new objects.

\paragraph*{Field-space constant charge transformations?}
Finally, an essential ingredient of the flow equation for the electric charges \eqref{eq:Abflow} is the condition $\dd \chi_\text{EM} = 0$.
A similar condition will have to appear in the non-Abelian case too.
It is thus instructive to discuss why its naive generalization, $\dd \chi_A = 0$, cannot be correct and what kind of generalization might be available. 
The failure of the naive generalization follows from the fact that  $\I_A$ is preserved throughout $\mathcal N_A$ only {\it up to} conjugation by elements of $\G$ (see appendix \ref{app:slice}). 
Since $\I_A$ necessarily changes vertically according to \eqref{eq:IA}, an equation expressing (some form of) constancy of $\chi_A$ can only  hold along horizontal directions.
Indeed, it turns out that $\I_A$ {\it is} preserved in the directions that are $\bb G$-orthogonal to the orbits ${\cal O}_A$ (this fact is at the basis of most proves of the slice theorem, see footnote \ref{fnt:slicethm} in appendix \ref{app:slice}).
If $\dim(\I_A) = 1$  or $\dim\mathscr{S}_A<2$,\footnote{ Here we take the slice within each stratum, $\cal{N}_A$. } this observation would suffice to find a (weaker) version of the condition $\dd \chi_\text{EM}=0$ that applies to the non-Abelian case.
However, it is still found to be insufficient if $\dim(\I_A) = n > 1$  (and $\dim\mathscr{S}_A\geq 2$).
This is because there is no canonical way to a priori identify {\it elements} of $\I_A$ and $\I_{A'}$ at two different configurations $A$ and $A'$, even when $\I_A = \I_{A'}$. Therefore, in general, we expect that  it would be necessary to introduce a set of bases of $\{ \chi_A^{(\ell)}\}_{\ell =1}^n$ of $\Lie(\I_A)$ and of a connection $\nu^{\ell}{}_{\ell'} \in \Omega^1(\mathcal N_A , \mathfrak{gl}(n))$. The curvature of this connection, if geometrically constrained, would provide yet another source of obstruction to the non-Abelian generalization of the flow equation \eqref{eq:Abflow}.

We postpone any further analysis of these issues to future work.

 \section{The SdW connection from Dirac's dressing prescription \label{sec:dressing}}
 
In this section,  which is somewhat independent from the rest of the paper, we revisit Dirac's construction for the dressing of the electron \cite{Dirac:1955uv}, and provide considerations about its generalizability (or lack thereof) to the non-Abelian case. 
This discussion also offers an {\it independent}, albeit heuristic, route to the introduction of the SdW connection $\varpi_\sdw$.

It should be noted from the outset that, following Dirac, we will set up the problem in terms of a Coulombic potential since the very beginning (equation \eqref{eq:Green}). In the light of section \ref{sec:symred}, and of equation \eqref{eq:Coul_pot} in particular, it should then not come as a surprise that this will lead us precisely to the {\it SdW} connection. What {\it is} unexpected is that Dirac's construction will naturally lead us to a field-space connection form at all, and thus to a notion of dressing that involves ``field-space Wilson lines,'' which are field-space non-local objects. We will comment on this point at the end of this section.

The Dirac's dressing construction can be motivated by the need to define a {\it physical} electron field that is meant to correspond to creation operators associated to the ``bare'' electron and its Coulombic  electric field at once, so that the Gauss law is automatically satisfied. With the purport of being physical, this dressed field is expected to be, and indeed is, gauge invariant. At the same time, however, the dressed field describes a charged electron, and therefore also carries electric charge.  This means that the total electric charge Poisson-generates a global shift in the phase of the dressed electron field, which might seem in contrast with the posited gauge invariance of the dressed field. However, as we saw in section \ref{sec:charges}, these requirements are not mathematically in conflict: constant ``gauge transformations'' associated to the electric charge correspond to stabilizer transformations that do have a different (geometric) status in $\A_\text{EM}$ with respect to generic (local) gauge transformations.

 Starting from these ideas, we will now revisit Dirac's construction.
We will work from the onset in finite regions and in the non-Abelian setting.

Denoting the dressed field with a hat, $\hat \psi$, the classical condition corresponding to the demand that the  corresponding quantum field  creates an electron at $x$ together with its electrostatic field is 
\be
\{ E_j^\beta(y) , \hat\psi(x) \} =  - \big(\D_j G_{\alpha,x} \big){}^\beta(y) \tau_\alpha \hat \psi(x) 
\label{eq:dressedPoisson}
\ee
where $\{ \cdot , \cdot \}$ denotes the Poisson bracket and\footnotemark~$G_{\alpha,x}\in\Omega^0(R,\Lie(\G))$ is the $\Lie(\G)$-valued Green's function  of the SdW boundary value problem, as in \eqref{eq:Green}---which we report here for convenience:
\be
\begin{dcases}
\D^2 G_{\alpha,x}(y) = \tau_\alpha \delta_x(y)  & \text{in }R\\
\D_s G_{\alpha,x}(y) =  0  & \text{at }\pp R 
\end{dcases}
\label{eq:Green2}
\ee 

Although at this level any other choice of boundary conditions would have worked, the (covariant) Neumann boundary condition is chosen here for future convenience.
As observed in section \ref{sec:Greens}, this choice of Green's function---valid only at non-reducible configurations\footnote{See \ref{sec:Greens} for how the boundary value problem should be amended at reducible configurations.}---corresponds to the demand that the dressed particle created at $x$ does \textit{not} contribute to the flux $f$ at $\pp R$. This is consistent with the fact reviewed in \ref{sec:Greens} that at non-reducible configurations there is no meaningful integrated Gauss law.
A posteriori, with the knowledge acquired from the construction of the SdW connection, it is possible to see that the boundary conditions of \eqref{eq:Green2} are moreover the only ones that make $\hat \psi$ gauge invariant with respect to gauge transformations  whose support is not limited to the interior of $R$, extending also to its boundary $\pp R$.

Going back to the definition of the dressed matter field, and working formally ($\simeq$), we consider the Ansatz
\be\label{eq:dress_ansatz}
\hat \psi(x) \simeq e^{\phi[A](x)} \psi(x) \quad \text{for some}\quad \phi[A]\in\Lie(\G). 
\ee
From this Ansatz, by substitution into the requirement \eqref{eq:dressedPoisson}, we get the condition:
\be
\frac{1}{\sqrt{g}}\frac{\delta}{\delta A} \phi = \{ E , \phi \} = - \D G,
\ee
or, in full detail, 
\be
\frac{g_{ji}(y)}{\sqrt{g(y)}}\frac{\delta}{\delta A_i^\beta(y)} \phi^\alpha[A](x) = \{ E_j^\beta(y) , \phi^\alpha(x) \} = - \big(\D_j G_{\alpha,x}\big){}^\beta(y),
\label{eq:118}
\ee
where we used $\Omega(\tfrac{\delta}{\delta A_i^\beta(y)}) = - \sqrt{g(y)} \,g^{ij}(y)\delta E_j^\beta(y)$ to individuate the Hamiltonian vector field associated to $E_j^\beta(y)$ (notice that this expression is valid also in the presence of boundaries without the need of bulk-supported smearings).

Equation \eqref{eq:dressedPoisson} in the form of \eqref{eq:118} can then be formally solved through a line integral in configuration space $\A$, that we denote $\fint^A$:
\be
\phi[A]^\alpha(x) = - \fint^A \int_R \d^Dy \sqrt{g(y)} \, g^{ij} \sum_\beta \big(\D_i G_{\alpha,x}\big){}^\beta(y)\, \dd A^\beta_j(y) 
\label{eq:Fdressing}
\ee
Using a more compact notation, this can be written
\be
\phi[A](x) = - \fint^A \int_R \sqrt{g} \, \tr\Big(\D^i G_{\alpha,x}\, \dd A^\beta_i \Big)\tau_\alpha.
\ee
Integrating by parts, one obtains
\begin{align}
\phi[A](x) & = - \fint^A \left( - \int_R \sqrt{g}\,\tr\Big(  G_{x,\alpha} \D^j\dd A_j \Big)\tau_\alpha + \oint_{\pp R} \sqrt{h}\, \tr\Big( G_{x,\alpha} \dd A_s  \Big)\tau_\alpha\right) 
\end{align}
Now, to be able to use Green's theorem and simplify this expression, it is natural to introduce
\be
\varpi \in \Omega^1(R, \Lie(\G))
\ee
defined by \eqref{eq:varpi_def}
\be
\begin{cases}
\D^2 \varpi = \D^i \dd A_i & \text{in } R,\\
\D_s \varpi = \dd A_s & \text{at } \pp R.
\end{cases}
\ee
Hence, using this definition and Green's theorem \eqref{eq:Greenthm} for $\psi_1 = \varpi$ and $\psi_2 = G_{\alpha, x}$, we obtain the {\it formal} solution
\begin{align}
\phi[A](x)
& =  \fint^A \varpi(x) 
\label{eq:120}
\end{align}
and $\hat \psi \simeq \exp\left(\fint^A \varpi \right)\psi$ is the formal general solution to the demands imposed by Dirac's dressing. 

This construction provides an independent motivation for the introduction of the SdW connection form $\varpi$---even though at this level, its connection-form properties \eqref{eq:varpi_def}, and in particular its covariance property, are {\it not} manifest.

But with hindsight knowledge of the connection-form nature of $\varpi$, we introduce the following  gauge-covariant expression (i.e. even under field-{\it dependent} gauge transformations) involving a {\it field-space Wilson-line}: we call  this the dressing factor:\footnote{This is a 1-dimensional integral along a curve embedded in an infinite dimensional space. It is the latter property that the doublestruck-face of the symbol $\mathbb{Pexp} \fint$ is meant to emphasize. Cf. equation \eqref{eq:Wilson}, where a Wilson line in space---rather than in configuration space---is considered. }
\be
\hat \psi(x) = \underbrace{ \mathbb{Pexp}\left( \fint^A \varpi(x) \right) }_{\text{=:dressing factor $e^{\phi[A]}$}}\psi(x).  
\label{eq:dressedquark}
\ee
(see below and especially \cite[Sec. 9]{GomesHopfRiello} for details and  crucial subtleties regarding the dressing factor's gauge-covariance and the associated choice of field-space path).

In EM, the SdW connection is Abelian and flat (cf. theorem \ref{thm:EM}), i.e. $\varpi = \dd \varsigma$ for
\be
\begin{cases}
\nabla^2 \varsigma = \D^i  A_i & \text{in } R,\\
\pp_s \varsigma = A_s & \text{at } \pp R.
\end{cases}
\ee 
Therefore, both the path ordering and the choice of path in $\A$ are inessential.
In particular, one can choose the trivial configuration $A^\star=0$ as a starting point for the field-space line integral so that the resulting expression (seemingly) depends only on the final configuration $A$. Indeed, using the fact that in EM the Green's function $G$ does not depend on $A$, we can perform the integral explicitly and readily find---cf. \eqref{eq:dressedquark}:
\be
\hat \psi(x) = e^{\varsigma(x)} \psi(x) = e^{i \int_R \sqrt{g(y)}\, G(x,y)\pp^i A_i(y) }\psi(x).
\ee
This provides a generalization to finite and bounded regions of the Dirac dressing, which in $\bb R^{D=3}$ with isotropic boundary conditions at infinity reads (see  \cite{Dirac:1955uv}):
\be
\hat \psi(x) = e^{i \int \frac{\d^3 y}{4\pi} \frac{\pp^i A_i(y)}{|x-y|} }\psi(x).
\ee

In the non-Abelian setting, we first proposed the expression \eqref{eq:dressedquark} (without reference to the derivation presented here) in \cite{GomesHopfRiello}. There, this formula was framed in relation to the work on dressings by Lavelle and McMullan \cite{Lavelle:1994rh, Lavelle:1995ty, bagan2000charges, bagan2000charges2}, and also to the Gribov-Zwanziger framework \cite{Zwanziger82, Zwanziger:1989mf} (see  \cite{Vandersickel:2012tz} for a review and relation to confinement), and, finally, to the geometric approach to the quantum effective action by Vilkovisky and DeWitt \cite{Vilkovisky:1984st, vilkovisky1984gospel, Rebhan1987, DeWitt_Book, Pawlowski:2003sk, Branchina:2003ek}. In particular, in \cite{GomesHopfRiello}, we studied in  detail the properties and limitations of \eqref{eq:dressedquark} and we related the limitations to certain obstructions appearing in  the previous works \cite{Vilkovisky:1984st, vilkovisky1984gospel, Rebhan1987, DeWitt_Book, Zwanziger82, Zwanziger:1989mf, Lavelle:1995ty}.
More specifically, we showed that: the obstructions found previously come from the curvature of the connection form, which induces a path-dependence ambiguity in the dressing; that this ambiguity can be fixed in a neighbourhood of a given reference configuration $A^\star$ (using field-space geodesics with respect to the so-called Vilkovisky connection \cite{Vilkovisky:1984st, vilkovisky1984gospel, DeWitt_Book}); and that, nonetheless, all expressions will still depend on the (gauge-dependent) choice of the reference configuration $A^\star \in \A$ \cite{Pawlowski:2003sk,Branchina:2003ek}.  Finally, note that global existence and uniqueness of the Vilkovisky geodesics from $A^\star=0$ to a generic $A$, a question related to the non-perturbative existence and uniqueness of the dressing factor, is expected to fail in view of the Gribov problem.
We restrain from dissecting these topics here, and refer to \cite[Sec. 9]{GomesHopfRiello} for a thorough discussion.

Instead, we limit ourselves to the following observations: although the notion of a full-blown nonperturbative dressing is not viable in YM due to the involved geometry of $\A$, an infinitesimal version thereof is precisely provided by the SdW horizontal differential.
Indeed, {\it formally}, the total differential of the (gauge invariant) dressed matter field, $\dd \hat \psi$ is directly related to the SdW horizontal differential of the bare matter field, modulo the dressing factor:\footnote{Once again this can be made precise in the Abelian case where $\varpi = \dd \varsigma$, where the following is an actual, i.e. not merely formal, equality (formally, $\phi = \varsigma - \varsigma_\star$---in the following we set $\varsigma_\star=0$): $$\dd \hat \psi = e^\varsigma (\dd \psi + \dd \varsigma \psi) = e^\varsigma \dd_\perp \psi.$$}
\be
\dd \hat \psi \simeq e^\phi (\dd \psi + \varpi \psi) = e^\phi \dd_\perp \psi.
\ee
Since the SdW-horizontal differential has a natural place in any Abelian as well as non-Abelian YM theory, it follows that, in this sense, {\it the SdW horizontal differential constitutes the closest analogue to the Dirac dressing that generalizes to the non-Abelian YM theory}.
In particular, the discussion of symmetry charges of section \ref{sec:charges} shows that the dressed fields (or better, their differentials) do carry charges despite being fully gauge invariant (resp. horizontal covariant) objects. 

Although uncharged, the photon can also be made gauge invariant by dressing it with the same dressing factor. Not surprisingly this gives rise to the transverse photon. In the non-Abelian setting a dressed, covariantly-transverse, gluon can be defined with the same caveats as for the dressed quark and with a completely analogous relation to the horizontal differentials. 

We conclude by directing the reader to \cite{francoisthesis, Francois_review} for a more algebraic take on dressings and their consequences e.g. for the interpretation of spontaneous symmetry breaking, aka the Higgs mechanism. In relation to the Higgs mechanism within our formalism we refer to the field-space ``Higgs connection'' discussed in\footnote{See in particular {\it ibid.} point (\textit{x}) of Sec. 9.2.} \cite[Sec. 7 and 9]{GomesHopfRiello}.

\section{Gluing \label{sec:gluing}}

So far we have analyzed the definition of quasilocal degrees of freedom and charges within a given region with boundaries. The goal of this section is to study how these notions behave with regards to the composition, or gluing, of regions.
 
The first result of this section concerns the gluing of field configurations or, more precisely, of horizontal field-space vectors---i.e. either horizontal perturbations of $A$ or radiative electric fields.
The second result builds on the first and concerns the gluing of the symplectic structures.

In other words, we first show that---with knowledge of the choice of connection---two horizontal field configurations can be glued {\it un}ambiguously.  
Then, in section \ref{sec:gluing_symp_pot}, we apply this result to the gluing of the symplectic structure.
We will show that the horizontal symplectic structure do {\it not} factorize, even though the total symplectic structure can be unambiguously reconstructed from the regional ones.
In doing so we will precisely identify {\it what} the new dof emerging upon gluing are.

These results apply, strictly speaking, only when considering the gluing of (trivial bundles over) simply connected regions into a larger (trivial bundle over a) simply connected region.
That is, we neglect topological effects, such as the emergence upon gluing of new Aharonov--Bohm dof.

This possibility is briefly discussed in the simplest possible context in section \ref{sec:topology}.

Nonetheless, despite the simplified context, this result is nontrivial: new dof {\it do} emerge upon gluing, and can be {\it uniquely} identified.
The fact that new dof emerge upon gluing will not surprise those whose intuition is built through lattice considerations.
However, the fact that these dof can be {\it uniquely} identified, i.e. that no gauge ``slippage'' is allowed at the interface, might defy their intuition.
For this reason we will sidestep the problem of providing a thorough set of conditions for the existence part of the problem of (smooth) gluing.\footnote{A thorough discussion of this problem should be set up along the following lines: an appropriate functional space (e.g. the space of smooth functions, or a certain Sobolev space) must be chosen for the configurations $A\in\A$ and their fluctuations $\bb X\in\mathrm T\A$ (more generally, analogous choices must be made also for the electric and matter fields). Once this space is given, the existence within the same functional space (restricted over $R^\pm$) of the regional horizontal projections $h^\pm$ according to the SdW boundary value problem must be checked. Finally, conditions on the regional horizontal projections must be provided so that the resulting glued, global, and horizontal fluctuation $H=H(h^+,h^-)$ also belongs to the originally chosen functional space. If the functional space of choice is the space of smooth functions, $C^\infty$, then the main difficulty is ensuring that the glued, global, and horizontal fluctuation  $H$ is smooth across $S$. Cf. the next section for the notation.} 

In this topologically trivial context we will also discuss how the presence of matter fields (on top of regionally-reducible configurations) can introduce ambiguities in the gluing. 
We relate these ambiguities to 't Hooft's beam splitter thought experiment and to the concept of Direct Empirical Significance for gauge symmetries (DES).

\subsection{Mathematical statement of gluing}\label{sec:gluingthm}

In the following, we will first state and then prove the theorem at the root of all our results on the composition of both the electric field $E$ and of the perturbations $\bb X\in\mathrm T\A$ of the gauge potential $A$. 

Here we are not interested in global, topological, dof, and  will focus on a boundary-less, self-contained model of the universe as a whole. Therefore, we consider for simplicity a global region $\Sigma \cong S^D$, $\pp \Sigma =\emptyset$, which is split into two hemispheres, $R^\pm \cong \bb B^D$ (the $D$-dimensional ball), such that $R^+\cap R^-=S $ coincides with the equator up to orientations, $S = \pm \pp R^\pm \cong S^{D-1}$. I.e.
\be
\Sigma = R^+ \cup_S R^-.
\ee
The two hemispheres serve also as charts over $\Sigma$. Since we are not interested in studying topological dof, we consider the transition function for the gluing of the $A$'s across $S$ to be fixed. For simplicity we will fix this transition function as the identity, which will allow us to introduce quantities corresponding to global electric fields and global perturbations of the gauge potentials.
At the end of section \ref{sec:GGTproof}, we will comment on the extension to the case of $\pp \Sigma \neq \emptyset$.

To formally encode the separation of regions, we introduce  $\Theta_\pm$ as   the characteristic functions of ${R^\pm}$. Denoting $s_i$ the outgoing co-normal at $S$ with respect to the region ${R^+}$, one has
\be
\pp_i \Theta_\pm =\mp s_i \delta_S 
\label{eq:dTheta}
\ee
where $\delta_S$ is a $(D-1)$-dimensional delta function supported on the interface $S$.

Having assumed a trivial bundle over $\Sigma$, a generic  Lie-algebra-valued vector in $\mathrm T \A$ can be written as $\bb Y = \int_\Sigma Y \frac{\delta}{\delta A} \in \mathrm { T \mathcal{A}}$.
 It is useful to introduce the following notation for the regional decomposition of $\bb Y$ supported on $\Sigma$: 
\be
\bb Y = \bb Y^+ \oplus \bb Y^-,
 \label{eq:heavi_dec} 
\ee
 where $Y^\pm := Y \Theta_\pm$ and $\bb Y^\pm = \int_{R^\pm} Y^\pm \frac{\delta}{\delta A}$.

With this notation, notation we can state the following (see figure \ref{fig4}):

\begin{Thm}[General Gluing Theorem]\label{thm:GGT}
Given $\Sigma = R^+ \cup_S R^-$ as above, and given $\bb Y = \bb Y^+ \oplus \bb Y^-$ as above; 
consider three field-space connections $(\varpi,\varpi_\pm)$ each associated to $\Sigma$ and $R^\pm$ respectively, defining the three horizontal/vertical decompositions 
\be
Y = H + \D \Lambda \qquad\text{and}\qquad  Y^\pm = h^\pm + \D \lambda^\pm,
\label{eq:regs_Y_comps}
\ee
where $\Lambda := \varpi(\bb Y)$ and $\lambda^\pm := \varpi_\pm(\bb Y^\pm)$, and $\varpi(\bb H) = 0 = \varpi_\pm (\bb h^\pm)$.
Then,  formally, $H$ is uniquely determined by $h^\pm$.
\end{Thm}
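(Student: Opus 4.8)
The plan is to reduce the claim to the triviality of the intersection $H\cap V=\{0\}$, which is built into the projection property of $\varpi$. The only structural input I need is that the global horizontal field and the two regional ones are horizontal projections of the \emph{same} $Y$: writing $\bb H=\hat H(\bb Y)$ and $\bb h^\pm=\hat H_\pm(\bb Y^\pm)$, the decompositions \eqref{eq:regs_Y_comps} give, in the interior of each $R^\pm$,
\be
H-h^\pm=\D\mu^\pm,\qquad \mu^\pm:=\lambda^\pm-\Lambda|_{R^\pm}\in\Omega^0(R^\pm,\fG),
\ee
so the global representative differs from each regional one only by a pure-gauge (vertical) mode. Hence reconstructing $H$ amounts to finding the $\mu^\pm$ for which the regionally gauge-shifted fields $h^\pm+\D\mu^\pm$ both patch smoothly across $S$ \emph{and} assemble into a globally horizontal field, $\varpi(\bb H)=0$.

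First I would prove uniqueness by difference. Suppose $H_1$ and $H_2$ both glue the same $h^\pm$; then $W:=H_1-H_2$ is globally horizontal, $\varpi(\bb W)=0$ (a $1$-form is linear in its vector slot), while on each region $W|_{R^\pm}=\D\nu^\pm$ is vertical, with $\nu^\pm:=\mu^\pm_1-\mu^\pm_2$. The idea is to upgrade these regional verticalities to a single global one. Setting $\nu:=\nu^+\Theta_++\nu^-\Theta_-$ and using \eqref{eq:dTheta}, I get $\D\nu=W-\llbracket\nu\rrbracket\,s\,\delta_S$ with $\llbracket\nu\rrbracket:=(\nu^+-\nu^-)|_S$; since $W$ is a genuine smooth field carrying no distributional support on $S$, the identity $\bb W=\nu^\#$ holds globally precisely when $\llbracket\nu\rrbracket=0$. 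Granting this, the projection property $\bb i_{\nu^\#}\varpi=\nu$ forces $\nu=\varpi(\bb W)=0$, whence $W=\D\nu=0$ and $H_1=H_2$. (For $\varpi=\varpi_\sdw$ the same conclusion follows from positivity, $0=\bb G(\bb W,\bb W)$, after integrating by parts region-by-region and dropping the bulk terms via $\D^iW_i=0$.)

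The crux, and the step I expect to be the main obstacle, is therefore the vanishing of the interface jump $\llbracket\nu\rrbracket$. Continuity of the \emph{tangential} components of the smooth field $W$ across $S$ yields $\D_\parallel\nu^+=\D_\parallel\nu^-$ there, i.e. $\D_\parallel\llbracket\nu\rrbracket=0$: the jump is covariantly constant along $S$ for the pulled-back connection $A|_S$. At a generic (irreducible) configuration $A|_S$ admits no nonzero covariantly constant $\fG$-valued section on the closed connected interface $S\cong S^{D-1}$, so $\llbracket\nu\rrbracket=0$ and the argument closes. This is exactly where genericity is used: at reducible configurations (all Abelian ones, or non-Abelian strata with $\I_A\neq\{\mathrm{id}\}$) a constant offset can survive and the gluing becomes ambiguous---consistent with the ambiguity later tied to the total regional charge. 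As anticipated, I would treat existence and the transverse smoothness of $H$ only \emph{formally}; the constructive counterpart is the transmission problem (for $\varpi=\varpi_\sdw$) $\D^2\mu^\pm=0$ in $R^\pm$ with $\D_s\mu^+=\D_s\mu^-$ and $\D_\parallel(\mu^+-\mu^-)=h^-_\parallel-h^+_\parallel$ at $S$, whose solvability is the point sidestepped in the footnote.
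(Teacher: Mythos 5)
Your proof is correct at the formal level the paper works at, but it takes a genuinely different route. The paper proceeds constructively in two steps: it first proves the SdW Gluing Lemma by deriving an explicit transmission problem for the vertical adjustments $\xi^\pm$ ($\D^2\xi^\pm=0$ in $R^\pm$, with the interface data split into the tangential equation $\DS{}^2\mu=\DS{}^a\iota_S^*(h^+-h^-)_a$ and the common normal datum $\Pi$), and then inverts the relation between Dirichlet and Neumann data via the operator $\mathcal R_+^{-1}+\mathcal R_-^{-1}$, whose invertibility rests on the self-adjointness and positivity of the generalized Dirichlet-to-Neumann operators; the General Theorem is then bootstrapped from the SdW case by re-decomposing the primed horizontals through $\varpi_\sdw$. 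You instead prove uniqueness abstractly by difference: $W:=H_1-H_2$ is globally horizontal yet regionally vertical, the interface jump $(\nu^+-\nu^-)_{|S}$ is covariantly constant along $S$ by tangential continuity and hence vanishes at boundary-irreducible $\iota_S^*A$, and the projection property $\bb i_{\nu^\#}\varpi=\nu$ then kills $\nu$ and with it $W$. This works uniformly for \emph{any} connection satisfying the axioms, so you get the general statement in one stroke without the SdW detour, and it isolates cleanly where irreducibility of $\iota_S^*A$ enters---the same locus where the paper needs it to solve its boundary-intrinsic equation uniquely. What your argument does not buy is the constructive content: the explicit formulas for $\xi^\pm$, $\mu$ and $\Pi$, and the positivity of $\mathcal R_\pm$, are not by-products of your proof, yet they are exactly what the paper reuses downstream (flux reconstruction, the gluing of symplectic potentials, the radiative/Coulombic energy balance); you acknowledge this by stating the transmission problem without solving it. One small caution: your parenthetical SdW alternative via $0=\bb G(\bb W,\bb W)$ is not an independent shortcut---after integrating by parts the boundary terms at $S$ are $\oint_S\sqrt{h}\,\tr\big(W_s(\nu^+-\nu^-)\big)$ with $W_s{}_{|S}=\D_s\nu^\pm$ generically nonzero, so they cancel only once the jump has already been shown to vanish, i.e.\ it still requires your main interface step.
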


\begin{figure}[t]
		\begin{center}
			\includegraphics[width=6cm]{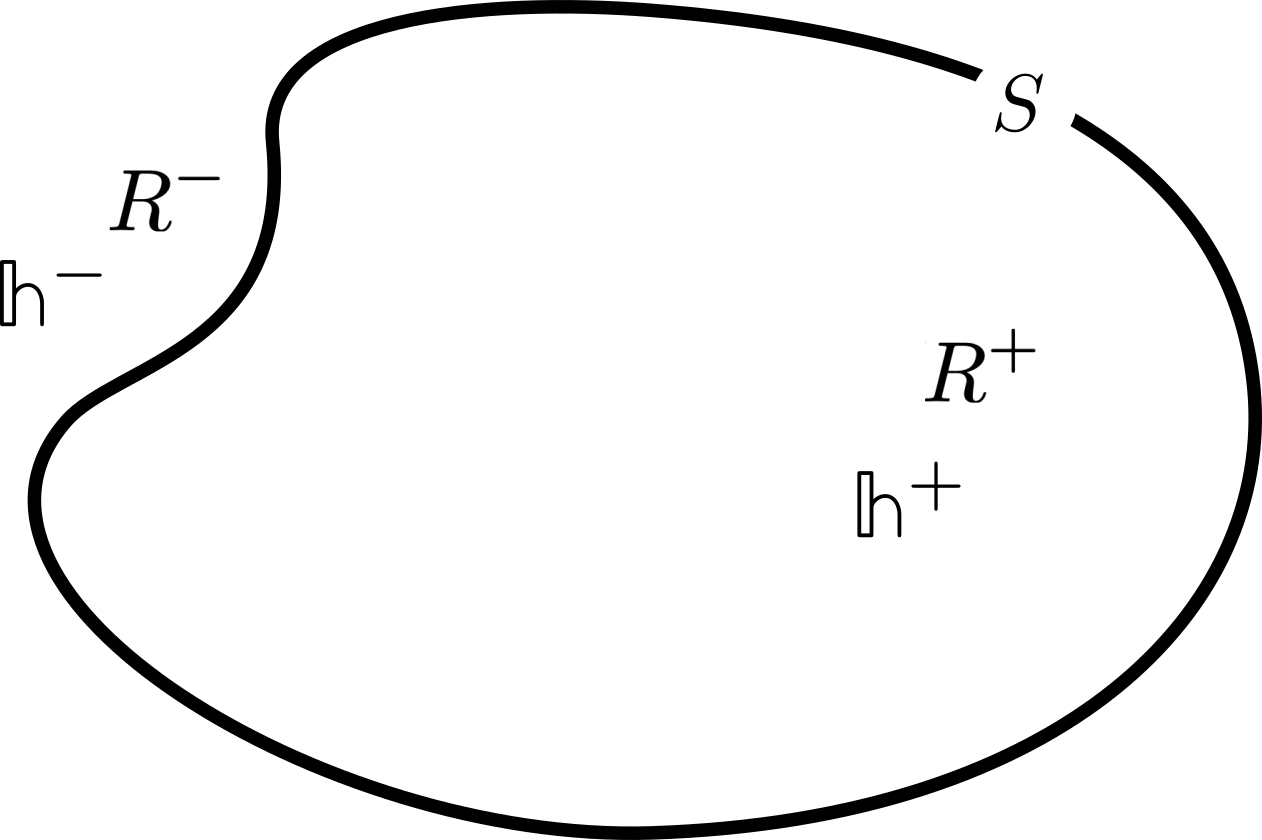}	
			\caption{ The two subregions of $\Sigma$, i.e. $\Sigma^\pm$, with the respective horizontal perturbations $\bb h^\pm$ on each side, along with the separating surface $S$. 
			}
			\label{fig4}
		\end{center}
\end{figure}

Notice that $h^\pm \neq H \Theta_\pm$ and $\lambda^\pm \neq \Lambda \Theta_\pm$, i.e. that {\it regional restrictions and horizontal projections fail to commute}.
This is a consequence of the nonlocality of the functional connections $(\varpi, \varpi_\pm)$ and the reason why the Gluing Theorem is nontrivial.

Under the hypotheses of the theorem, the ``commutator'' between these two operations is provided by the regional vertical adjustments $\xi^\pm :=  \lambda^\pm - \Lambda \Theta_\pm$:
\be
H(h^+,h^-) =  ( h^+ + \D \xi^+)\Theta_++( h^- + \D \xi^-)\Theta_- .
\label{eq:reconstructed_H}
\ee
The precise form of these vertical adjustments depends on the specific choice of connections $(\varpi,\varpi_\pm)$.
If all three connections are SdW connections, then the $\xi^\pm$'s can be determined through explicit formulas.
Indeed, as it turns out, the General Gluing Theorem will be proven in the following section as an immediate consequence of the analogous statement for the SdW decompositions,  the SdW Gluing Lemma:

\begin{Lemma}[SdW Gluing Lemma]\label{Lem:SdWGL}
Consider the premises of the General Gluing Theorem \ref{thm:GGT} in the case where $(\varpi,\varpi_\pm)$ are all SdW connections, so that
\be
\D^i H_i = 0 
\qquad\text{and}\qquad
\begin{dcases}
\D^i h^\pm_i=0 & \text{in } {R^\pm}\\
 s^i h^\pm_i = 0 & \text{at }S
\end{dcases}
\qquad \text{(SdW)}.
\label{eq:hor_conds}
\ee
 Assume also that $A$, $A^\pm := A\Theta_\pm$ and $^SA:=\iota_S^* A$ are irreducible as gauge potentials over $\Sigma$, $R^\pm$ and $S$ respectively.

Then,  if $H$ is $C^1$ across $S$, the vertical-adjustments $\xi^\pm = \xi^\pm(h^+,h^-)\in\Omega^0(R^\pm,\Lie(G))$ of formula \eqref{eq:reconstructed_H} are uniquely determined by the regional SdW value problems
\be
\begin{cases}
\D^2 \xi^\pm = 0 & \text{in }R^\pm\\
\D_s \xi^\pm = \Pi & \text{at }S
\end{cases}\label{eq:gluing1}
\qquad \text{(SdW)},
\ee
for $\Pi \in \Omega^0(S,\Lie(G))$ given by 
\be
\Pi = - \big( \mathcal R^{-1}_+ + \mathcal R^{-1}_-\big)^{-1} \mu
\label{eq:RRPi}
\ee
and  $\mu\in \Omega^0(S,\Lie(G))$ determined by the following SdW boundary value problem \emph{intrinsic to the boundary} $S$ ($\pp S = \emptyset$):
\be
\DS{}^2 \mu = \DS{}^a \iota_S^*( h^+ - h^-)_a.
\label{eq:mu}
\ee
Here, $\DS_a := (\iota_S^* \D)_a$ is the covariant derivative intrinsic to $S$ and $\DS{}^2 = h^{ab}\; \DS_a\; \DS_b$ is the covariant Laplace operator on $S$, with $h_{ab}=(\iota_S^* g)_{ab}$ the induced metric there.
 Finally the operators $\mathcal R_\pm$ appearing in \eqref{eq:RRPi} are   the `generalized Dirichlet-to-Neumann pseudo-differential operators' attached to $S$, but geometrically associated to each region (see the next section for the precise definitions of these operators, equation \eqref{eq:DTN}).
\end{Lemma}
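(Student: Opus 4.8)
The plan is to reduce the statement to the determination of the two vertical adjustments $\xi^\pm=\lambda^\pm-\Lambda\Theta_\pm$, which by the reconstruction formula \eqref{eq:reconstructed_H} are the only unknowns in $H=(h^++\D\xi^+)\Theta_++(h^-+\D\xi^-)\Theta_-$. Two independent sets of conditions pin them down: \emph{(a)} the global SdW-horizontality $\D^iH_i=0$ on all of $\Sigma$, understood distributionally across $S$; and \emph{(b)} the assumed $C^1$-regularity of $H$ across $S$, i.e.\ the continuity of $H$ as a one-form there. I would first extract the bulk equations and a common piece of Neumann data from (a), then a Dirichlet mismatch from (b), and finally combine the two through the regional Dirichlet-to-Neumann operators $\mathcal R_\pm$.

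For step (a), I would compute $\D^iH_i$ treating $H$ as $H^+:=h^++\D\xi^+$ on $R^+$ and $H^-:=h^-+\D\xi^-$ on $R^-$, glued by the characteristic functions. Only the flat part $\pp_i$ acting on $\Theta_\pm$ produces distributional terms, and using $\pp_i\Theta_\pm=\mp s_i\delta_S$ \eqref{eq:dTheta} one finds
\be
\D^iH_i = (\D^iH^+_i)\,\Theta_+ + (\D^iH^-_i)\,\Theta_- + s^i\big(H^-_i-H^+_i\big)\,\delta_S .
\ee
The bulk part vanishes because $\D^ih^\pm_i=0$ \eqref{eq:hor_conds}, leaving $\D^2\xi^\pm=0$ in $R^\pm$, the bulk equation of \eqref{eq:gluing1}. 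The $\delta_S$-part forces $s^iH^+_i=s^iH^-_i$ at $S$; since $s^ih^\pm_i=0$ there, this reads $\D_s\xi^+=\D_s\xi^-$, defining the common Neumann datum $\Pi:=\D_s\xi^\pm$ on $S$ (with $s^i$ the outgoing co-normal of $R^+$). This is precisely the boundary condition in \eqref{eq:gluing1}.

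For step (b), continuity of the tangential components of $H$ at $S$ reads $\iota_S^*h^++\DS\xi^+=\iota_S^*h^-+\DS\xi^-$, where I used $\iota_S^*(\D\xi)=\DS(\iota_S^*\xi)$; hence $\DS(\xi^+-\xi^-)=-\iota_S^*(h^+-h^-)$. Taking the intrinsic divergence $\DS^a$ and recalling the definition \eqref{eq:mu} of $\mu$ gives $\DS^2(\xi^+-\xi^-+\mu)=0$ on $S$; since ${}^S A$ is irreducible, the intrinsic covariant Laplacian $\DS^2$ on the closed manifold $S$ has trivial kernel (proposition \ref{prop:Dchi=0} applied on $S$), whence $\xi^+-\xi^-=-\mu$ on $S$. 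I would then invoke the Dirichlet-to-Neumann operators \eqref{eq:DTN}, which send the Dirichlet datum $\xi^\pm_{|S}$ to the Neumann datum along the region's own outgoing normal ($s^i$ for $R^+$, $-s^i$ for $R^-$): invertibility of $\mathcal R_\pm$, again guaranteed by irreducibility of $A^\pm$, yields $\xi^+_{|S}=\mathcal R_+^{-1}\Pi$ and $\xi^-_{|S}=-\mathcal R_-^{-1}\Pi$. Subtracting and equating to $-\mu$ gives $(\mathcal R_+^{-1}+\mathcal R_-^{-1})\Pi=-\mu$, i.e.\ \eqref{eq:RRPi}. With $\Pi$ fixed, each $\xi^\pm$ is the unique solution of the covariant-Neumann problem \eqref{eq:gluing1} (uniqueness from irreducibility of $A^\pm$ via proposition \ref{prop:Dchi=0}), so $H$ is uniquely determined by $h^\pm$; this simultaneously proves the General Gluing Theorem \ref{thm:GGT}.

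The genuinely hard part is not this uniqueness argument but the \emph{existence}/regularity issue that I have assumed away: nothing above guarantees a priori that the reconstructed $H$ is $C^1$—let alone smooth—across $S$ rather than merely continuous with matched normal derivative, and verifying that $H(h^+,h^-)$ lands back in the chosen functional space is exactly the delicate point flagged after Theorem \ref{thm:GGT}. The other place demanding care is the bookkeeping of normal orientations at $S$ (the sign flip between the $R^+$ and $R^-$ outgoing normals), which is what distinguishes the harmless-looking sum $\mathcal R_+^{-1}+\mathcal R_-^{-1}$ from a difference and must be tracked consistently through both the distributional and the DtN steps.
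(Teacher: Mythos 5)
Your argument is, in substance, the paper's own proof: the same continuity condition at $S$ split into normal and tangential parts, the same covariant-harmonicity of $\xi^\pm$ extracted from global horizontality, the same unique solvability of the intrinsic boundary problem for $\mu$ from irreducibility of ${}^SA$, and the same Dirichlet-to-Neumann bookkeeping, including the crucial orientation flip that turns the combination into the sum $\mathcal R_+^{-1}+\mathcal R_-^{-1}$ rather than a difference. The only organizational difference is where you obtain the normal matching $\D_s\xi^+=\D_s\xi^-$: you read it off the $\delta_S$-term of the distributional identity $\D^i H_i=0$, whereas the paper derives it by contracting the $C^1$ continuity condition with $s^i$ and then uses it to cancel the surface term in the smeared horizontality condition. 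The two derivations are equivalent---your distributional computation is exactly the paper's smeared one with the test function $\sigma$ made explicit---so this is a cosmetic reorganization, not a different method.

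There is, however, one genuine omission. Your final step solves \eqref{eq:RRPi} for $\Pi$, which requires the operator $\mathcal R_+^{-1}+\mathcal R_-^{-1}$ to be invertible on $\Omega^0(S,\Lie(G))$, and this does not follow from the invertibility of each $\mathcal R_\pm$ separately: a sum of two invertible operators can perfectly well be degenerate. The paper closes this in \eqref{eq:self_adj} by showing that each $\mathcal R_\pm$ is self-adjoint with respect to $\langle u,v\rangle_S=\oint_S\sqrt{h}\,\tr(uv)$ and strictly positive at irreducible configurations, since $\oint_S\sqrt{h}\,\tr\big(u\,\mathcal R_\pm(u)\big)=\int_{R^\pm}\sqrt{g}\,g^{ij}\tr(\D_i\zeta_u\,\D_j\zeta_u)\geq 0$ with equality iff $u=0$; hence the inverses are positive self-adjoint as well and their sum is positive-definite, therefore invertible. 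You correctly flag the sum-versus-difference sign issue as the delicate point, but you never supply this positivity argument, and without it the uniqueness of $\Pi$---and hence of $\xi^\pm$ and $H$---is unsupported. A second, minor inaccuracy: proving the SdW Lemma does not by itself ``simultaneously prove'' Theorem \ref{thm:GGT}; the general case needs the additional bootstrap of section \ref{sec:GGTproof}, which decomposes the $\varpi'$-horizontals through the SdW split, applies the Lemma, and converts back.
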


Notice that {\it neither} the General Gluing Theorem {\it nor} the SdW Gluing Lemma will attempt an in-depth analysis of the conditions that, from smooth $h^\pm$, allow us to reconstruct a global $H$  which is smooth across the interface $S$. Indeed, as emphasized above, our main interest is to prove that the gluing procedure is---whenever meaningful---fully unambiguous (at irreducible configurations and modulo topological ambiguities).

We make the following remark on the conditions of Lemma \ref{Lem:SdWGL} in the {\it non-Abelian case}:  
as emphasized by our discussion of global charges, section \ref{sec:charges_YM}, the generic and physically most natural condition is subsumed by an irreducible \textit{bulk} configuration. Then, if the bulk configuration is irreducible, the requirement of irreducibility for $^SA=\iota_S^*A$ is also open: a choice of $S$ such that $^SA$ is irreducible always exists, and perturbations in the position of $S$  preserve this property. In the Abelian case, on the other hand, all configurations are reducible by constant ``gauge transformations.'' This reducibility moreover carries physical significance, since it is---in all cases---related to the total charge contained in the region. In section \ref{sec:gluing_matter} we will explore the physical meaning of a {\it bulk} stabilizer ambiguity due to a reducible bulk configuration $A$ in the presence of matter.

\paragraph*{Road map of the proof} 
Let us chart a roadmap of the proof of the two propositions above, which will be given in greater detail in the next section.
The proof consists of four steps. The first three focus on reconstructing the $\xi^\pm$ in the SdW case; the last one bootstraps the general case from the SdW case:
\begin{enumerate} 
\item First,   assuming that a global $H$ which is at least $C^1$ at $S$ exists, we will  deduce restrictions on the difference $ h_+- h_-$ at $S$.  Combined with the horizontality of the regional $h^\pm$, the requirement of smoothness gives us conditions on the longitudinal and transverse derivatives of $(\xi^+-\xi^-)$ at the boundary:
\begin{enumerate}
\item In the absence of boundary stabilizers, the longitudinal condition allows us to solve for the difference 
\be
\mu := - (\xi^+-\xi^-)_{|S},
\ee
 in terms of  the interface \emph{mismatch} $( h^+ -  h^-)_{|S}$, which is parallel to the boundary due to \eqref{eq:hor_conds}. This leads to equation \eqref{eq:mu}.
\item The transverse condition states the equality of the derivatives normal to the boundary, allowing us to introduce
\be
\Pi:= \D_s \xi^+{}_{|S} = \D_s \xi^-{}_{|S}
\qquad \text{(SdW)}.
\label{eq:Pi}
\ee
\end{enumerate}
\item  Second, the SdW horizontality of the global $H$ provides us with one extra condition on the bulk part of the $\xi^\pm$'s, stating that the $\xi^\pm$ must be (covariantly) harmonic in their own regional domains. Together with \eqref{eq:Pi}, this leads to the SdW boundary value problem for the $\xi^\pm$'s \eqref{eq:gluing1}.
\item Third, we show that an equation fixing $\Pi$ in terms of $\mu$ exists \eqref{eq:RRPi}, thus allowing us to prove that the information above suffices to uniquely and fully reconstruct the $\xi^\pm$ in terms of $ h^+$ and $h^-$.
The relationship between $\Pi$ and $\mu$ can be loosely understood as a conversion of Dirichlet ($\mu$) to Neumann ($\Pi$) boundary conditions for the $\xi^\pm$'s entering the SdW boundary value problem \eqref{eq:gluing1}. This is why \eqref{eq:RRPi} involves a combination of generalized Dirichlet-to-Neumann pseudo-differential operators $\mathcal R_\pm$  attached to the boundary (but geometrically associated to each region). The details on the nature of these operators are postponed to the next section.
\item Finally, we show that the proof of the general case can always be reduced to the proof of the SdW case, thus showing that the General Gluing Theorem follows from the SdW Gluing Lemma.
\end{enumerate}

The  next two sections are devoted to the proof (and explanation) of the above formulas. 

\subsubsection{Formal proof of the SdW Gluing Lemma \ref{Lem:SdWGL}}

\begin{proof}
Assuming a global $H$ which is $C^1$ across $\Sigma$ exists, from \eqref{eq:reconstructed_H} we deduce that the following relation must hold at the interface:
\be
(h_i^+-h_i^-){}_{|S} =  - \D_i (\xi^+ - \xi^-) {}_{|S}.
\label{eq:continuity}
\ee
This equation not only imposes a series of conditions on our unknown $\xi^\pm$, but also demands the interface mismatch of our variables $h_i^\pm$ to be of a pure-gradient form. This condition is restrictive and will be discussed in more detail in section \ref{sec:int_h}. For now, we take it for granted and focus on the consequences of this equation on $\xi^\pm$.
 
  We now decompose \eqref{eq:continuity} into its transverse and longitudinal components with respect to $S$. Since the component of $ h^\pm$ transverse to $S$ vanishes because of regional horizontality \eqref{eq:hor_conds}, contracting \eqref{eq:continuity} with $s^i$ we obtain that the normal derivatives of $\xi^\pm$ at $S$ must match 
\be
\D_s(\xi^+ - \xi^-)_{|S} = 0.
\label{eq:normalmatchingxi}
\ee
Therefore, taking the boundary divergence of the pullback of \eqref{eq:continuity} to $S$ (i.e. effectively contracting its pullback with $\DS{}^a$) we find that the difference 
\be\label{def:mu}
\mu:= - (\xi^+-\xi^-)_{|S}
\ee
 is solely determined,  in the absence of stabilizers $\chi_S$ of $^SA = \iota_S^* A$, by the mismatch of the two horizontals at the boundary according to the following SdW boundary value problem \emph{intrinsic to the boundary} $S$  ($\pp S = \emptyset$):
 \be
 \DS{}^2\mu = \DS{}^a \iota_S^*(   h^+ -  h^- )_a.
 \label{eq:Shoriz}
 \ee
Assuming that $^SA = \iota_S^* A$ is boundary-{\it irreducible}, this equation has a unique solution and this concludes  step 1 of the proof outlined above.
 
Now we move to step 2: assuming that the  global region $\Sigma$ has no boundaries, by smearing the global horizontality condition, $\D^{i}  H_{i}=0$, with $ H$ given by \eqref{eq:reconstructed_H}, we obtain:
 \be 
 \int_\Sigma \,\tr\Big[\sigma\, \D^{i}\Big(( h_{i}^+ + \D_{i}\xi^+)\Theta_++( h_{i}^-+\D_{i}\xi^-)\Theta_-\Big)\Big]=0
 \ee
 for any smooth test function $\sigma\in C^\infty(\Sigma,\Lie(G))$.
 Now, thanks to the identity $\pp_{i}\Theta_\pm=\mp s_{i}\delta_S$ \eqref{eq:dTheta} and to the regional horizontality conditions $\D^{i} h_{i}^\pm=0=s^{i}  h_{i}^\pm{}_{|S}$ \eqref{eq:hor_conds}, we get:
  \be 
  \int_{R^+} \tr\Big( \sigma\,\D^2\xi^+ \Big) + \int_{R^-}\tr\Big( \sigma\,\D^2\xi^-\Big) - \oint_S \tr\Big(\sigma\, s^{i}\D_{i}\left(\xi^+-\xi^- \right)\Big)=0, 
 \ee
 where the last term above already vanishes due to \eqref{eq:normalmatchingxi}.
  From the arbitrariness of $\sigma$, we obtain the last, bulk, condition mentioned in step 2 of the outline above.
  
We thus deduce that, if a global $H\in C^1(\Sigma, \fG)$ exists, then the $\xi^\pm$ must satisfy the following elliptic boundary value problem
\be\label{lambda}
\begin{dcases}
\D^2 \xi^\pm = 0 &\text{in }R^\pm\\ 
s^i\D_i\left(\xi^+-\xi^- \right)=0 &\text{at }S\\
(\xi^+-\xi^-) = -\mu(h^+,h^-) & \text{at }S 
\end{dcases}
\ee
where $\mu(h^+,h^-)$ is the unique solution to \eqref{eq:Shoriz}.
This concludes  step 2. Now we must use the appropriate PDE tools  to show that this boundary value problem determines $\xi^\pm$  in terms of the  regional horizontal perturbations $\bb h^\pm$. This is step 3.

For  step 3, we proceed as follows: start by setting
\be
\Pi := s^i(\D_i \xi^\pm){}_{|S},
\label{eq:Pidef}
\ee
from the second equation in \eqref{lambda}.  
Note that in possession of $\Pi$, we can determine $\xi^\pm$  by solving the boundary value problem given by \eqref{eq:Pidef} and the first equation of \eqref{lambda}. 
Then step 3 can be reformulated as the problem of fixing $\Pi$ in terms of $\mu(h^+,h^-)$.

Notice that, given $\Pi$, $\xi^\pm$ are uniquely determined  up to stabilizers, i.e. up to elements $\chi^\pm\in C^\infty(\Sigma,\Lie(G))$ such that $\D_i\chi^\pm=0$  which are nontrivial only at reducible configurations.  In the topologically simple case  that we have analyzed so far, this  is the only ambiguity present in the determination of $\xi^\pm$. We postpone the discussion of reducible configurations and of nontrivial topologies until sections \ref{sec:gluing_matter} and \ref{sec:topology}, respectively. 

Now, to determine $\Pi$, we introduce generalized \textit{Dirichlet-to-Neumann operators} $\mathcal{R}_\pm$ (see  e.g. \cite{BFK} and references therein). In each region, such operators map Dirichlet conditions for a (gauge-covariantly) harmonic function to the corresponding (gauge-covariant) Neumann conditions.   In  brief, for a given bounded region, $\mathcal R$ functions as follows: a given harmonic function with Dirichlet conditions---these conditions are the input of $\mathcal R$---will possess  a certain normal derivative at the boundary; i.e. will induce  certain Neumann conditions there---these conditions are the output of $\mathcal R$. But let us be more explicit.

In general, for a manifold with boundary $S$  and outgoing normal $s^i$, we define the Dirichlet-to-Neumann operator $\mathcal R\in\mathrm{Aut}(\Omega^0(S, \fG))$ by
 \be
 \label{eq:DTN}
 \mathcal R(u):=s^i\D_i (\zeta_{u})_{|S}
 \ee
 where $\zeta_u$ is the unique (gauge-covariantly) harmonic Lie-algebra-valued function   defined by the elliptic Dirichlet boundary value problem: $\D^2 \zeta_{u}=0$ with $(\zeta_{u})_{|S } = u$.  Notice that the subscript $u$ encodes the {\it Dirichlet} boundary condition employed. Using superscripts to denote (gauge-covariant) Neumann boundary conditions, we would have by definition $\zeta^{\mathcal{R}(u)} \equiv \zeta_u$. Moreover, since (in the absence of stabilizers) the corresponding Neumann problems  also have unique solutions, $\cal R$ is invertible, i.e. $\zeta^\Pi \equiv \zeta_{\mathcal R^{-1}(\Pi)}$.
At irreducible configurations, we can thus define $\cal R_\pm$  associated to $R^\pm$ with boundaries $\pp R^\pm = S$, and their inverses $\mathcal{ R}_\pm^{-1}$.
 
 Now, from \eqref{eq:Pidef} and the fact that $\xi$ is itself  (gauge-covariantly) harmonic from the first equation of \eqref{lambda}, we have 
\be\label{eq63}
 \xi^\pm= {^{(\pm)}\zeta}{}^{ \pm\Pi}\equiv{^{(\pm)}\zeta}{}_{\mathcal{R}_\pm^{-1}(\pm \Pi)}
\ee
where the back-superscript $(\pm)$ indicates whether the respective covariantly harmonic functions ${}^{(\pm)}\zeta$ are defined over $R^+$ or $R^-$, respectively. 
We will now use the last equation of \eqref{lambda} to fix $\Pi$ uniquely. Once this is done, \eqref{eq63} contains all the information we sought for the gluing.

Notice that there is a $\pm$ sign  in the argument of $\mathcal{R}_\pm^{-1}$ in \eqref{eq63}.  This sign  is due to the fact that, at $S$, $s^i\D_i \xi^+ = s^i\D_i\xi^- $ but $s^i$ is the outgoing normal on one side and the ingoing normal on the other, so the conditions $s^i\D_i \xi^\pm = \Pi$ fix opposite Neumann conditions on the two sides.  By the linearity of $\mathcal{R}$ we have
\be  
\mathcal{R}^{-1}_\pm({ \pm}\Pi) = {\pm} \mathcal{R}^{-1}_\pm(\Pi).
\label{eq:signimportant}
\ee 
Hence, since by defintion $(\zeta_{u})_{|S } = u$, together with \eqref{eq63} and \eqref{eq:signimportant} we have
\be\label{eq:xis_bdary}
(\xi^+-\xi^-)_{|S}=\mathcal{R}^{-1}_+(\Pi)- \mathcal{R}^{-1}_-(-\Pi)=\Big(\mathcal{R}^{-1}_+  + \mathcal{R}^{-1}_-\Big)(\Pi).
\ee
 This gives us a relation between the (gauge-covariant) Neumann boundary condition $\Pi$ and the difference of the Dirichlet boundary conditions $\xi_\pm{}_{|S}$. 
 
 This relation finally allows us to provide a formula that fixes $\Pi$ in terms of the boundary discrepancy of the regional horizontals $( h^+ -  h^-)_{|S}$. That is, we insert \eqref{eq:xis_bdary}  into the last of the equations \eqref{lambda} to obtain: 
 \be
\Big( \mathcal R^{-1}_+ + \mathcal R^{-1}_-\Big)(\Pi) = - \mu(h^+,h^-) 
\label{eq:111}
\ee
This is the equation that fixes $\Pi$ in terms of $ \mu(h^+,h^-)$. Since its solution is unique---as we will discuss in a moment---it also fixes $\xi^\pm$ uniquely through \eqref{eq63},  thus subsuming  the entire set of equations \eqref{lambda}. This concludes step three. 

For the uniqueness statement for $\Pi$ to be meaningful, it is important to check that the operator $(\mathcal{R}^{-1}_+  + \mathcal{R}^{-1}_-)$ is invertible. That this is (formally) the case follows from $\cal R_\pm$  being positive self-adjoint operators, and from the relative sign appearing on the left-hand-side of  \eqref{eq:111}---a consequence of the sign in \eqref{eq:signimportant}.

To show  that the generalized Dirichlet-to-Neumann operators $\cal R_\pm$ are self-adjoint and have positive spectrum we proceed as follows. Consider again $\zeta_u\neq0$ to be the unique solution to the problem $\D^2 \zeta_u = 0$ in the bulk and $(\zeta_u)_{|S}=u$ at the boundary. Then, for any Lie-algebra valued functions $u, v$ on the boundary, one has
\begin{align}\label{eq:self_adj}
\int_{R^+}\sqrt{g}\, g^{ij}\tr( \D_i \zeta_u \D_j \zeta_v ) 
 & = - \int_{R^+}\sqrt{g}\, \tr( \zeta_u \D^2 \zeta_v )+ \oint_{S}\sqrt{h}\, s^i \tr( \zeta_u \D_i\zeta_v )\notag\\ 
 & = \oint_{S}\sqrt{h}\,  \tr (u \mathcal R_+(v))
 = \oint_{S}\sqrt{h}\, \tr (\mathcal  R_+(u) v) .
\end{align}
Notice that the first step in \eqref{eq:self_adj}  follows from an integration by parts and properties of the commutator under the trace (i.e. from the ad-invariance of the Killing form).\footnote{The following identity is  valid for any smearing $\sigma \in C^\infty(\Sigma,\Lie(G))$:
$$
\tr\Big( - \sigma\,  \pp^i\D_i \zeta  + g^{ij}[A_i,\sigma] \D_j \zeta \Big) = 
\tr\Big( - \sigma\, \pp^i\D_i \zeta  - g^{ij} \sigma [A_i, \D_j \zeta] \Big) =
\tr\Big( - \sigma\, \D^2 \zeta  \Big) .
$$}
 The last line of  \eqref{eq:self_adj} proves the self-adjointness of $\cal R_+$ with respect to the natural inner product $\langle u,v\rangle_S = \oint_S \sqrt{h}\,\tr(uv)$, while setting  $u=v$ in \eqref{eq:self_adj}, gives positivity:
\be
 \oint_{S} \sqrt{h}\,\tr (u \mathcal R_+(u))\geq0.
\ee
At {\it irreducible} configurations, the equality holds if and only if $\zeta_u=0$ and therefore if and only if $u=0$.
Similar manipulations lead to the analogous conclusion for $\mathcal R_-$. 
 
We have thus showed that: {\it if} a global SdW horizontal $H\in C^1(\Sigma,\fG)$ exists, then it is uniquely determined by the SdW regional horizontals $h^\pm$ via \eqref{eq:reconstructed_H}, \eqref{eq:gluing1} and \eqref{eq:RRPi}.
This concludes the proof of the SdW Gluing Lemma \ref{Lem:SdWGL}. \end{proof}

\paragraph*{Summary}  Here is what our gluing theorem  means, in words. In a given region, say $R_+$,  the vertical $\xi_+$ which translates between the global and regional horizontals, $H_{|R_+} = h_+ + \D \xi_+$, is defined as a harmonic function with Neumann boundary conditions (with respect to the \textit{covariant} differential operator $\D$). The Neumann conditions are implicitly defined by the difference of horizontals at the boundary, but since this difference would only give a Dirichlet boundary condition, one must apply the Dirichlet-to-Neumann boundary operator $\mathcal R_+$. Nonetheless, we can summarize: $\xi_\pm$ are the unique harmonic functions with Neumann conditions defined by the difference of horizontals at the boundary. Each such doublet will identify a unique global horizontal $H$ compatible with the doublet of horizontals, $(h_+, h_-)$. \\

In the next section we show that the SdW connection is just a crutch: the gluing theorem holds more generally. 

\subsubsection{Proof of the General Gluing Theorem \ref{thm:GGT}}\label{sec:GGTproof} 

The proof of SdW Gluing Lemma of course relies on the particular choice of the SdW connection. But as long as there exists a 1-1 correspondence between the horizontal vectors of one connection to the horizontal vectors of another, uniqueness will go through. 

To avoid confusions, in this section we denote the global and regional SdW connections as $(\varpi_\sdw, \varpi_\sdw^\pm)$ and the arbitrary connections of the statement of the General Gluing Theorem \ref{thm:GGT} as $(\varpi', \varpi'_\pm)$ so that the corresponding horizontal/vertical decompositions are also primed. Unprimed decompositions refer to the SdW connection.

Let us emphasize once again that  the three connections $(\varpi', \varpi'_\pm)$ can be completely unrelated: unlike $\varpi_\sdw$ and $\varpi_\sdw^\pm$, they might not all descend from the same geometric criterion.

Now, according to the primed horizontal/vertical decomposition, equation \eqref{eq:heavi_dec} stays the same, whereas \eqref{eq:regs_Y_comps} and \eqref{eq:reconstructed_H} are rewritten with primes. E.g.
\be\label{eq:primed_reconstruct}
H' =  ( h'_+ + \D \xi'_+)\Theta_++( h'_- + \D \xi'_-)\Theta_- .
\ee

Our goal is to show that, given $h'_\pm$, then $H'$ is uniquely determined. 
We start by SdW-decomposing $h_\pm'$, thus obtaining:
\be
h'_\pm = h_\pm + \D \lambda_\pm,
\ee
where $\lambda_\pm := \varpi^\pm_\sdw(\bb h'_\pm)$ and $\varpi_\sdw^\pm(\bb h_\pm) = 0$.
Now, from the SdW Gluing Lemma, we formally compute the unique SdW-horizontal $H$ such that
\be
H =  ( h_+ + \D \xi_+)\Theta_++( h_- + \D \xi_-)\Theta_-;
\ee
here, $\varpi_\sdw(\bb H)= 0$ and the $\xi_\pm=\xi_\pm(h^+,h^-)$ are given by the SdW Gluing Lemma.
Now, decomposing $H$ according to $\varpi'$ we obtain:
\be
H = H' + \D \Lambda'
\ee
where $\Lambda' := \varpi'(\bb H)$. Hence, combing all formulas together, we find that the $\xi'_\pm$'s of equation \eqref{eq:primed_reconstruct} are given by:
\be
\xi'_\pm = \xi_+ - \lambda_+ - \Lambda'\Theta_\pm.
\ee
Therefore, if $H'$ is to be a horizontal field according to $\varpi'$, we can find unique vertical adjustments $\xi_\pm'$ to \eqref{eq:primed_reconstruct}.
This concludes the formal proof of General Gluing Theorem.\\

Finally, let us comment on the role of the condition $\pp \Sigma = \emptyset$. 
This condition ensures that the only boundary of $R^\pm$ is the interface $S = R^+ \cap R^-$. 
Relaxing this condition by e.g. introducing ``radial gluing'' of spherical shells introduces only minor variations on the above construction. 
This is the case unless boundaries intersect at corners, as e.g. in the case of two topological balls glued to form a larger ball.  This is most clearly highlighted from the perspective of the SdW Gluing Lemma, in which case one  must require further  (corner) boundary conditions for the equation determining the mismatch $\mu := - (\xi^+ - \xi^-){}_{|S}$. 
We will not attempt an analysis of this situation here  beyond the preliminary observations offered at the end of the next section.

\subsection{Continuity at $S$: towards a dimensional tower of conditions on $h^\pm$}\label{sec:int_h}

 Recall that, whereas the normal component of the continuity condition for $H$ \eqref{eq:continuity} is a condition on the $(\xi^+ - \xi^-)_{|S}$ only, its parallel component to $S$ not only encodes a relation between $(\xi^+ - \xi^-)_{|S}$ and $(h^+ - h^-)_{|S}$, but also requires $(h^+ - h^-)_{|S}$ to be a pure gradient parallel to $S$.
  This is a necessary and sufficient condition on $h^\pm$ for there to exist a continuous global horizontal field $H$ corresponding to their composition.

 In this section we will discuss a more constructive procedure to understand this condition on $(h^+ - h^-)_{|S}$.
This procedure can be iteratively applied to the ``boundaries of the boundaries'', opening a door to the discussion of the more general gluing schemes involving corners.

In a gauge theory, the space of the pullbacks to $S$ of the fields in $\A$ defines a new ``boundary configuration space'', $^S\A $ which is isomorphic to the space of gauge fields intrinsic to $S$:
\be 
{}^S A := \iota_S^* A \in {}^S\A .
\ee
Moreover, the induced metric on $S$ defines a supermetric $^S\bb G$ on $^S\A$. 
From this, one can define an SdW connection $\varpi_{S}$ on $^S\A$ and hence, via pullback, on the ``phase space'' of boundary fields $\mathrm T^*({}^S\A)$.
Now, thanks to the second of the equations \eqref{eq:hor_conds},  i.e. $s^i\bb h_i=0$, the difference between two {\it generic}\footnote{I.e. that do {\it not} have to necessarily satisfy the continuity condition \eqref{eq:continuity}.} horizontal perturbations {$\bb h^\pm$} defines,  without any loss of information, a vector field intrinsic to the boundary:
\be
\label{eq:bbx}
^S\bb Y := \oint_S \iota^*_S( h^+ - h^-) \frac{\delta}{\delta({}^SA)}\in  \mathrm T_{({}^SA)} (^S\A).
\ee
Now, the boundary field-space vector $^S\bb Y$ can be decomposed via $\varpi_{S}$ into its horizontal and vertical parts \textit{within} $^S\A$:
\be
\label{dec_S} 
^S\bb Y= \;{}^S \bb H+ \left({}^S\xi \right)^{\#_S},
\ee
 where the $\cdot^{\#_S}$ operation is the $S$-intrinsic analog of $\cdot^\#$.
Given equations \eqref{eq:bbx} and \eqref{dec_S}, then it becomes clear that the  parallel component of the continuity condition for a fiducial boundary \eqref{eq:continuity},\footnote{Fiducial interfaces are  interfaces at which no fixed boundary condition is imposed.} is equivalent to demanding that $^S \bb Y$ has no horizontal component, i.e. $^S\bb H=0$.  Of course, in this case, the ${}^S\xi$ of  \eqref{dec_S} is identified with the $(\xi^- - \xi^+)$ of \eqref{eq:continuity}.

From these observations we conclude that {\it the parallel continuity condition is satisfied if and only if $^S\bb Y$ is purely vertical, that is if and only if $^S \bb Y = \varpi_{S}^{\#_S}\left({}^S\bb Y\right)$.} If this is the case, this last equation is only a more formal way to write \eqref{eq:continuity}, with $(\xi^+-\xi^-)_{|S}=-\varpi_{S}({}^S\bb Y)$ being a rewriting of  \eqref{eq:Shoriz}. 

We conclude this section by observing that the parallel continuity condition bears an interesting possibility.  Note that if $S$ itself had corners, i.e. if it was subdivided into regions $S^\pm$ sharing a boundary, we could have repeated the same treatment for two possible horizontal differences, $({}^Sh)^+ - ({}^Sh)^-$, themselves arising from the difference of horizontals in a manifold of one higher dimension, as expressed in \eqref{dec_S}. This chain of descent to the boundaries of boundaries might become useful in discussions of more complex gluing patterns involving corners; a  necessary extension for building general manifolds from fundamental building blocks.  We conclude this section by noticing that this chain of descent is reminiscent of the BV-BFV formalism \cite{cattaneo2014classical, cattaneo2016bv, Mnev:2019ejh}, but we will leave an investigation of these matters to future work.

\subsection{Gluing of gauge potential fluctuations\label{sec:glueA}}
We are now ready to apply the above results to the gluing of the perturbations of the gauge potential $A$.  We include matter in the next section, and apply the construction to the elecric field in section \ref{sec:electric_gluing}. 

Therefore, we consider
\be
\bb X = \int X \frac{\delta}{\delta A} \in \mathrm T_A \A,
\ee
and decompose it, and its regional restrictions, into their SdW-horizontal and SdW-vertical components
\be
X = H + \D \Lambda
\quad\text{and}\quad
X^\pm = h^\pm + \D \lambda^\pm.
\ee

Physically, whereas $\Lambda$ and $\lambda^\pm$ encode the ``pure gauge'' components of $X$ in $\Sigma$ and $R^\pm$ respectively, $H$ and $h^\pm$ encode their physical components.
Therefore, the gluing question can be rephrased as the following: given only the regional gauge invariant perturbations $h^\pm$, is the global gauge invariant perturbation\footnote{Notice that the theorem involves the perturbations of $A$ (elements of $\mathrm T\A$) over a globally smooth, fixed, background configuration $A$. } $H$ {\it uniquely} reconstructed, provided it can be reconstructed at all? 

The theorem of the previous sections states that---whenever possible---{\it the reconstruction of a continuous $H$ from $h^\pm$ is indeed unique}, and no additional information is needed to perform the gluing.

In particular, the theorem provides an explicit formula \eqref{eq:RRPi} for the reconstruction of the gauge transformations  $\xi^\pm$ that relate the regional and global  horizontals according to
\be
H = (h^+ + \D \xi^\pm)\Theta_+ + (h^- + \D \xi^-) \Theta_-,
\ee
where  the $\xi^\pm$ were fully determined in \eqref{eq:111} and \eqref{eq63}, i.e.  by a covariant Laplace equations with boundary conditions determined in terms of the mismatch $\iota_S^*(h^+ - h^-)$.
 
 However, the derivation assumed the mismatch $\iota_S^*(h^+ - h^-)$  to be a pure (gauge-covariant) gradient intrinsic to $S$. As explained in the previous section, whether this is the case can be checked by considering an SdW connection $\varpi_S$  intrinsic to $S$, and verifying whether $\iota_S^*(h^+ - h^-)$ is purely vertical with respect to $\varpi_S$.
If this mismatch is not purely  boundary-vertical, then there would be a physical discontinuity in the magnetic flux across $S$, i.e. in $F_{ab}$ ($a,b$ are tangential indices over $S$).\footnote{More precisely, in a neighbourhood of $S$, the relation between the curvature and the perturbation is:
$F_{ab}(A+X) - F_{ab}(A) = [F_{ab}(A),\Lambda]+{}^S\D_{[b}H_{a]}+\mathcal{O}(X^2)$,
where the first term on the right-hand side is an inconsequential perturbation in the gauge (vertical) direction and the second is the physical perturbation. Thus, only if $ (h^+-h^-)_{a} = \D_{a}\Xi$ does $^S\D_{[b}(h^+-h^-)_{a]} = [F_{ab}, \Xi]$ feed into the gauge ambiguity; otherwise, a physical discontinuity in the parallel curvature will emerge. In this case, existence fails. But, once again, we do not aim here to give a complete characterization of existence.} 

With reference to EM, it is interesting to observe that such a discontinuity is {\it not} the consequence of a  distributional surface current density on $S$, which would rather contribute a discontinuity in $s^iF_{ia}$ corresponding to the tangential magnetic field. Rather, it is the consequence of a  distributional surface density of magnetic monopole charges. 
Indeed, in the same way a discontinuity in the electric flux across a surface is due to a nonvanishing surface density of electric charges, a discontinuity in the magnetic flux is due to a nonvanishing surface density of magnetic monopoles.
But, postulating the configuration space of Yang-Mills theory to be fundamentally given by  the space of smooth (or at least once-differentiable) connections $\A$, we are  implicitly excluding this possibility from the onset: the algebraic validity of the Bianchi identities $\D F=0$ excludes the existence of magnetic monopoles\footnote{Notice that, the discontinuity in the components $s^iF_{ia}$ of the magnetic field at $S$ induced by the presence of surface currents is more subtle from a gluing perspective since it does not necessarily stem from a discontinuity of $h_i$ (it could also be due to a discontinuity in its normal derivative). Given any vector field $u$ in a neighbourhood of $S$ that is tangent to $S$, and recalling that $h_s = 0$ by the horizontality condition, one has that the perturbation of $ F^\pm_{su} \equiv s^iu^jF^{\pm}_{ij}$ is given by $ s^i u^j \D_i h_j^\pm=\D_s h^\pm_u - h^\pm_j (\pounds_u s)^j $.}---and thus guarantees that a physically allowed $H$ is continuous across $S$.

\subsection{Gluing with matter: reducible configurations and charges\label{sec:gluing_matter}}
In this section, we will briefly discuss  caveats of our gluing theorem due to reducibility. First, we briefly comment on the changes brought about the presence of a reducibility condition on the boundary that does not extend into the region. In that case,  $\mu$ defined in \eqref{def:mu}, whose solution in terms of the difference in boundary horizontals is described in \eqref{eq:Shoriz}, is defined only up to the boundary stabilizers: $\mu\rightarrow \mu+ {}^S\chi$. This degeneracy propagates to the determination of $\Pi$, in \eqref{eq:111}---sending $\Pi\mapsto \Pi+\big( \mathcal R^{-1}_+ + \mathcal R^{-1}_-\big)^{-1}({}^S\chi)$---and thereby to the final solution of the $\xi^\pm$ in  \eqref{lambda}. Thus the total solution $(\xi_+, \xi_-)$ acquires a physically significant degeneracy labeled by the boundary stabilizers. The degeneracy is physically significant since, for each choice of $h_\pm$, $^S\chi$, we obtain a distinct global $H$. That is, we obtain a $H(^S\chi, h_\pm)$ that is not gauge-related to $H(^S\chi', h_\pm)$.

 The presence of a boundary stabilizer that is not extendible into the bulk is typical of asymptotic boundaries.\footnote{
As such, even at finite boundaries, it can be possibly interpreted as a (kinematical) {\it isolation} condition between two subsystems. With this interpretation, the above gluing ambiguity is maybe less surprising: if two subsystems are properly isolated there could be more ways of gluing them together.} At finite boundaries, and in non-Abelian theories, this condition is only slightly ``less generic'' than the presence of a bulk stabilizer. This latter case is the one we will now focus on. It is most relevant in the Abelian theory, where such a bulk stabilizer is always present and there is no mismatch between bulk and boundary stabilizers.

 In vacuum, the difference due to $\chi$ will then have no effect on the physical states. In the presence of matter,  gluing is more subtle. Let us see how this goes.

First,  some notation: Let $\bb h^\pm = \bb h^\pm_A + \bb h^\pm_\psi$ and $\bb H = \bb H_A + \bb H_\psi$,  be horizontals, which decompose according to
\be \begin{dcases}
 H_A = ( h_A^+ + \D \xi^+) \Theta_+ + ( h_A^- + \D \xi^-) \Theta_-
\\~\\
 H_\psi = ( h_\psi^+ - \xi^+\psi) \Theta_+ + ( h_\psi^- +  \xi^-\psi) \Theta_-
\end{dcases}.\label{eq:gluing_Apsi}
\ee
 and e.g.
\be
\bb H = \bb H_A \oplus \bb H_\psi = \int H_A \frac{\delta}{\delta A} + \int H_\psi \frac{\delta}{\delta \psi}.
\label{eq:HApsinotation}
\ee

As above, we are here implicitly using the SdW connection to assess horizontality. It is important to note that the matter horizontal components $\bb h^\pm_\psi$ are then, in a sense, parasitic on the gauge-field: they are just the matter perturbations corrected by the vertical displacement provided by the gauge sector. Namely, for a fermion field in the fundamental representation of $\G$ \cite{GomesHopfRiello},
\be\label{eq:corotation}
 H_\psi= X_\psi-\varpi(\bb X_A) X_\psi.
\ee
where $\bb X_\psi$ and $\bb X_A$ denote arbitrary (not necessarily horizontal) matter and gauge-potential perturbations respectively.
In other words, $\bb H_\psi$ and $\bb h^\pm_\psi$ do not satisfy horizontality conditions of their own. In section \ref{sec:dressing}, we provided an interpretation of this in terms of Dirac dressings.

Then, we see that $\bb H$ (and $\bb h^\pm$) is horizontal (regionally horizontal, respectively) if and only if $\bb H_A$ ($\bb h^\pm_A$, respectively) is.
This means in particular that the above procedure aimed at the determination of $\xi^\pm$ is completely insensitive to the presence of matter, and can be applied in the same way.

Now, all previous results on gluing go through seamlessly  unless either one of the {\it regional} configurations of the gauge potential, i.e. $A^\pm= A{}_{|R^\pm}$,  is reducible.
On such configurations, a modification of the connection-form, $\tilde\varpi$, must be employed, and this comes with certain added difficulties and obstructions to the usual properties of  $\varpi$---see sections \ref{sec:charges_EM} and \ref{sec:charges_YM}. For what concerns this section, the main point is that at a reducible configurations $\tilde A$ an ambiguity is present in defining a pure-gauge transformation $\xi^+$ from a fluctuation of $\tilde A$ (parallel to the given stratum). 

 If, say, $A^+=\tilde A^+$ is reducible, then the resulting ambiguity in the reconstruction of  $ \xi^+$ will have no effect on the reconstruction of the global horizontal gauge potential $H_A$, but it  \textit{will} generically render the reconstruction of the horizontal matter field $H_\psi$ ambiguous. 
 This is always the case in QED, where we can always add constants $\chi_\text{EM}^\pm$ to the reconstructed $\xi^\pm$ and where a constant phase shift will affect the Dirac fermions, unless they vanish. In a non-Abelian theory, the zoology of the solution is more complicated, and will depend on the gauge group as well as the type of matter fields (fundamental, adjoint, etc). 
 
For definiteness and simplicity, we will henceforth suppose that only the regional configuration $A^+=\tilde A^+ $ is reducible  by a single reducibility parameter, i.e. $\chi^+$ such that $\tilde\D\chi^+=0$, while $A^- $ is not reducible. 
The hypothesis that the stabilizer of $\tilde A^+$ is one-dimensional is quite strong, and it would be interesting to explore its relaxation (cf. the last paragraph of section \ref{sec:charges_YM}).

Anyway, with these restrictions, we see that the the solution $\xi^+$  to the gluing boundary value problem \eqref{lambda} is defined only up to the addition of terms proportional to $\chi^+$. That is, there is a  continuous 1-parameter family of solutions for $\xi^+$  that we write, by choosing an arbitrary origin $\xi^+_o$ and introducing the parameter $r$ (depending on the charge group), as
\be
\xi^+_r := \xi^+_o + r \chi^+
\qquad r \in \bb R \text{ or } \bb C.
\ee

Then, two distinct possibilities are given: either
$\psi$ vanishes at $S$ or it does not. 
The second case allows us to glue the two perturbations together if and only if we can find an $r$ such that
\be\label{eq:bdary_stabilizers}
\xi^+_r \psi{}_{|S} = \xi^- \psi{}_{|S}.
\ee
With the continuity hypothesis for the original global field perturbation $\bb X = \bb X_A + \bb X_\psi$, this equation would then fix the global ambiguity, but for generic values of $\psi{}_{|S}$ no solution exists.\footnote{These compatibility requirements between $\chi^+$ and $\psi^+$ could be further formalized in terms of the kernel of the Higgs functional connection introduced in \cite{GomesHopfRiello}. However, the presence of distributional charged matter at $S$---as manifested over e.g. an idealized conducting plate---generally blocks the possibility of a smooth gluing \textit{of the electric field}, $E$, discussed in the following section.}
 If no solution exists, it means that the two perturbations are not glueable, i.e. they do not descend from a global smooth perturbation.

Conversely, in the first case, which is realized if $\psi^+$ vanishes at $S$, the
gluing of the two perturbations $\bb h^\pm$ is  possible  for any $r$ but will give rise to {\it  distinct horizontal global perturbations}.   These should be interpreted as physically distinct alternatives, thus leading---for the first time in our analysis so far---to an actual ambiguity in the gluing procedure. This ambiguity is due to the concomitant presence of a reducible gauge potential and of charged matter.

To see how this comes about, we observe that   in the presence of this stabilizer,  there exists a 1-parameter family of global horizontal perturbations corresponding to each of the $\xi^+_r$, i.e. $\bb H^r = \bb H_A^r \oplus \bb H_\psi^r$, is given by
\be\label{eq:Hpsi_r}
 H_A^r \equiv  H^o_A
\qquad\text{and}\qquad
 H_\psi^r =  H^o_\psi  + r \chi^+ \psi \,\Theta_+,
\ee
where the same notation as in \eqref{eq:HApsinotation} was used.

Now, two possible situations are given: either $\chi^+$ stabilizes $\psi^+$ throughout $R^+$, or it does not. 

If $\psi^+$ is also stabilized,\footnote{For matter fields $\psi\neq0$ throughout $R^+$ which are in the fundamental representation,  $G=\SU(N\geq3)$ is needed; see \cite[Sec.7]{GomesHopfRiello}.}  then uniqueness of the reconstructed global radiative mode is untouched:  even if the regional gauge transformations $\xi^\pm$ are ambiguous, $\bb H$ of \eqref{eq:gluing_Apsi}  will not be since in this case $\bb H^r\equiv\bb H^o$.  The generally quite restrictive condition of $\chi^+$ stabilizing $\psi^+$ trivially applies if matter is absent from $R^+$, in which case $\bb H=\bb H_A$ is clearly unaffected by $\chi^+$  such that $\tilde\D\chi^+=0$.

If $\psi^+$ is {\it not} stabilized by $\chi^+$, on the other hand, the resulting $\bb H^r$ are indeed {\it distinct} from one another.
 This setup formalizes the beam splitter thought experiment devised by 't Hooft \cite{thooft} (see also  \cite{Brading2004}), and can be used to provide a concrete example for the considerations of Wallace and Greaves,  characterizing ``symmetries with direct empirical significance'' (DES) \cite{GreavesWallace}.

\begin{figure}[t]
\begin{center}
\includegraphics[width=5.5cm]{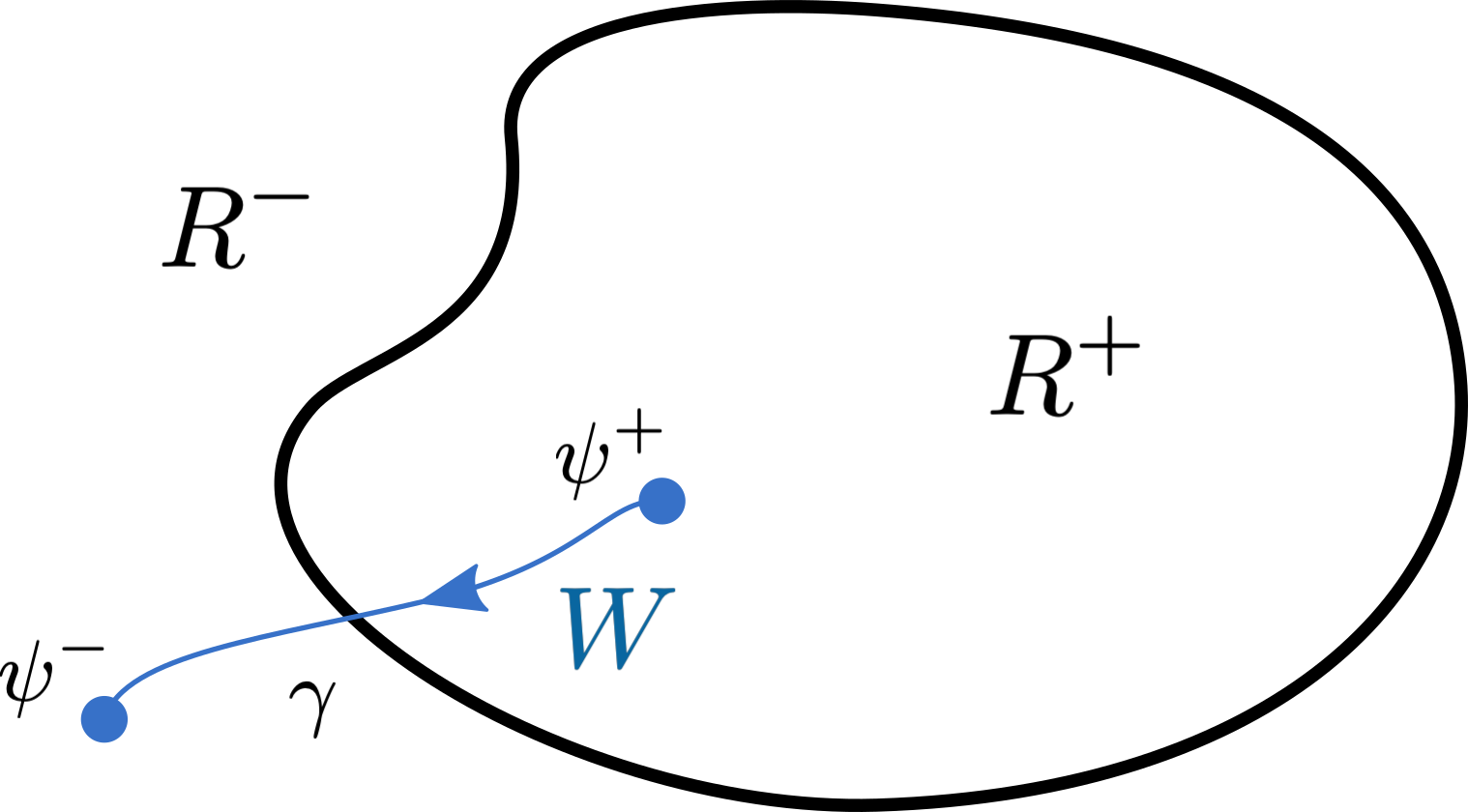}
\caption{We consider the gluing of two regions $R^+$ and $R^-$ containing two charged (point) particles $\psi^+$ and $\psi^-$  connected by a Wilson line $\gamma$. In $\Sigma$, the observable $W$ \eqref{eq:Wilson} is gauge invariant. Nonetheless, its gluing is ambiguous in the presence of a regional stabilizer \eqref{eq:deltaW}. This ambiguity is closely related to 't Hooft's beam splitter thought experiment and highlights the fact that global phase transformations (which are part of the stabilizer) are physical transformations distinguished from pure gauge transformations. Our formalism allows to make this distinction also within a finite and bounded regional, i.e. at the quasi-locally.}
\label{fig:tHooft}
\end{center}
\end{figure}

As a proof of concept that the ensuing states are regionally indistinguishable but globally distinct, let us consider the following simplified scenario in the Abelian theory,  closely related to 't Hooft's beam splitter (figure \ref{fig:tHooft}): let $R^\pm$ contain one charged particle each, located at $x^\pm\in R^\pm$, and thus $A^+$ admits a reducibility parameter $\chi^+ = const$. 
 Denoting the particle's spinorial configurations\footnote{The bra-ket notation is employed to ease the writing of the following formulae, it does not refer to any quantum treatment.}  by $|\psi^\pm\rangle$,  we consider then the following global {\it gauge-invariant} Wilson-line observable between two charged particles (with obvious notation):
\be
W = \langle \psi^- | \mathrm{exp} \left(\int_\gamma A\right) | \psi^+ \rangle ,
\label{eq:Wilson}
\ee
where $\gamma$ is some path connecting across $S$ the positions $x^\pm\in R^\pm$ of the charged particles $|\psi^\pm\rangle$. 
Now, if we ``unglue'' the two regions, perform the (infinitesimal) charge transformation $\chi^+$ in $R^+$ and glue back (which as we saw above is a seamless operation), we will find that: whereas $\mathrm{exp} \left(\int_\gamma A\right)$ and $|\psi^-\rangle$ have not changed at all,  $|\psi^+\rangle$ has changed by the (infinitesimal) amount $\delta_{\chi^+} |\psi^+\rangle = - \chi^+ |\psi^+\rangle$; in turn this means that the global, gauge invariant, observable $W$ is able to distinguish the two global states, since generically
\be
\delta_{\chi^+} W = - \langle \psi^- | \mathrm{exp} \left(\int_\gamma A\right) \chi^+ | \psi^+ \rangle \neq 0.
\label{eq:deltaW}
\ee

In sum: in the presence of matter, the Wilson line $W$ is a gauge-invariant functional that is sensitive to the ambiguity present in the gluing procedure, i.e. in the determination of the vertical adjustments $\xi^\pm(h^+,h^-)$. As per our gluing theorem, this  ambiguity is in one-to-one correspondence with choices of regional stabilizer, $\chi_\pm$, and is only relevant in the presence of matter (since $\delta_{\chi^+} A^+ =  \pp\chi^+ \equiv 0 $, but in general $\delta_{\chi^+}\psi^+ =  -\chi^+ \psi^+ \neq 0$).

Of course, this construction is strictly related to the ability of defining a charge for the $|\psi^+\rangle$ on the reducible background $A^+=\tilde A^+$, and is in line with the claim  that {\it (regional) stabilizers must be attributed a different status than generic gauge transformation}, as discussed in section \ref{sec:charges} (see in particular the last two paragraphs of \ref{sec:charges_EM} before the Remark).

\subsection{Gluing of the electric field\label{sec:electric_gluing}}

We now turn our attention to the gluing of the electric field  $E$.

 We start by recalling the representation of the electric field as a configuration-space vector  $\bb E$:
\be
 \bb E = \int E_i  \frac{\delta}{\delta A_i}  \in \mathrm T_A\A \subset \Phi.
\ee
In this section, for ease of notation, we shall treat $E$ as a one-form, i.e.---consistently with the above---$E$ will stand for $E_i := g_{ij}E^j$. 
Thus  (the components) of the global and regional SdW decompositions  of $\bb E$ (see equation \eqref{eq:Erad} in Section \ref{sec:splitsympl}) are
\be
E = E_\rad + \D \varphi
\quad\text{and }\quad
E^\pm = E_\rad^\pm + \D \varphi^\pm.
\ee

We emphasize that, as was the case with $h^\pm$ and $\lambda^\pm$ in relation to $H$ and $\Lambda$ in \eqref{eq:hor_conds}, $\varphi^\pm$ is {\it not} the regional restriction\footnote{Having run out of symbols, we could not use the same capitalized vs. lower case  variables to indicate that relationship.} of $\varphi$ to $R^\pm$, and similarly $E_\rad^\pm$ is {\it not} the regional restriction of $E_\rad$ to $R^\pm$;  instead,
\be
\begin{cases}
\varphi = ( \varphi^+ - \eta^+)\Theta_\pm + (\varphi^- - \eta^- ) \Theta_- \\
E_\rad = (E_\rad^+ + \D \eta^+) \Theta_+ + (E_\rad^- + \D \eta^-) \Theta_- 
\end{cases}
\label{eq:El_glue}
\ee
where, according to the theorem of section \ref{sec:gluingthm}, the $\eta^\pm$ are fully determined by the mismatch of $( E_\rad^+ - E_\rad^-)_{|S}$; the appropriate behaviour of $\varphi$ merely follows.
 Notice also that the electric flux $f$ through $S$ corresponds precisely to $\Pi=f$ of \eqref{eq:RRPi}  in our main gluing  theorem in section \ref{sec:gluingthm}. 

In the case of the electric field, we do not interpret the SdW vertical component $\varphi$ of $\bb E$ as a pure-gauge quantity, but as a Coulombic component of the electric field, while $E_\rad$ is what we called its radiative component.

Therefore, equation \eqref{eq:El_glue} states that the Coulombic/radiative split of the electric field depends on the choice of region in which the split is performed.

To understand this phenomenon, it is particularly instructive to consider first the case without matter.
We also recall that we have assumed, for simplicity, that the  simply connected global  region $\Sigma=R^+\cup_S R^-$ has no boundary, i.e. $\pp\Sigma=\emptyset$. 
From the above equations, it then follows that $\varphi\equiv 0$, and therefore, according to \eqref{eq:El_glue}  $ \varphi^\pm = \eta^\pm$. Since $\eta^\pm$ are entirely functions of $E_\rad^\pm{}_{|S}$, it follows that all components of the global electric field are determined solely by its regional radiatives.

Indeed, for a globally radiative electric field (i.e. no global boundary and no charges),  $E=E_\rad$ and $E{}_{|R^\pm}=E_\rad^\pm-\D\eta^\pm$ with $\eta^\pm$ functionals of $(E_\rad^+-E_\rad^-){}_{|S}$ only.\footnote{Again, as already stressed, it is important to note that  regional restriction and horizontal projection do not commute, thus e.g.: $E_\rad^\pm\neq E_\rad{}_{|R^\pm}$.}  Thus,  in this case,  once \textit{both} regional radiatives are known, even the \textit{regional} Coulombic components are completely determined---including the electric flux $f$ through $S$, which is thus no longer an independent degree of freedom  once the radiative modes are accessible in {\it both} regions.

 Thus, in this case---when the larger (glued) region $\Sigma$ has no boundary,---the \textit{regional radiative modes encode the totality of the dof in the joint system.} 
 
 In particular, the conclusion reached in section \ref{sec:QLSymplRed} from a regional viewpoint that $f$ through $S$ must be superselected is---as expected---a mere artifact of excluding observables in the complement of that region.
 
 The addition of charged matter does not change this conclusion.
 
In sum, once the radiative modes are given in both regions, the role of the flux $f$ at $S$---i.e. to regionally fix $\varphi^\pm$---is taken over by $(E_\rad^+-E_\rad^-){}_{|S}$. Thus $f$---which is often claimed to embody the ``new boundary degrees of freedom'' \cite{AronWill} or their momenta \cite{DonnellyFreidel}---also  constitutes a piece of redundant information for the final result of the gluing.
Heuristically, we could say that $f$ only shows up when encoding one subregion's ignorance of the other, i.e. when we do not have access to both radiatives, $E_\rad^\pm$. 

 Explicitly, playing the role of $\Pi$ in the theorem of section \ref{sec:gluingthm}, the flux is given by
\be
f =\Big( \mathcal R^{-1}_+ + \mathcal R^{-1}_-\Big)^{-1} \left(\DS{}^2 \right)^{-1} \;\DS{}^a \iota_S^*( E_\rad^+ -E_\rad^-)_a.
\label{eq:fluxreconstr}
\ee%
This conclusion is only challenged in the presence of nontrivial cohomological 1-cycles in the Cauchy surface,  a point exemplified in section \ref{sec:topology}.
 
Concerning the analogues of the continuity conditions explored in section \ref{sec:int_h} for the gauge potential, we observe that on-shell the electric field is continuous across $S$ if and only if there is no  \textit{ distributional} charge density there. 
 Such a charge density would create a discontinuity in the fluxes $E^\pm_s{}_{|S}\equiv f^\pm$.  No analogous physical discontinuity can be  found in the components of the electric field parallel to $S$. Moreover, if there is no charge density and therefore $E$ is continuous,  the difference $(E_\rad^+-E_\rad^-){}_{|S}$ is the same as the difference $(\D_i\varphi^+-\D_i\varphi^-){}_{|S}$. Since the latter is always of the pure-gradient form, the radiative parts of a continuous electric field satisfy (on-shell of Gauss) the analogue of \eqref{eq:continuity}.

\subsection{On the energy of radiative and Coulombic modes\label{sec:energy}}

The radiative/Coulombic split of $E$ satisfies a monotonocity property, which roughly states that \textit{in a  composite region $\Sigma=R^+\cup_S R^-$, a larger portion of the energy is attributed to the radiative part of the electric field than  it is in the disjoint union of $R^+$ and $R^-$; the converse holds for its Coulombic part}. This section is devoted to establishing and interpreting this result.
 
Let us start by writing the energy $\mathcal H$ contained in $\Sigma$. We decompose this energy into its electric (kinetic) and magnetic (potential) parts,
\be
\mathcal H = \mathcal E + \mathcal B = \int_{\Sigma} \sqrt{g} \, \tr(E^iE_i) + \int_\Sigma \sqrt{g}\, \tfrac{1}{2} \tr(F^{ij} F_{ij} )
\ee
Since $F$ is fully determined by the background value of $A$ (which undergoes no SdW splitting), we will henceforth focus on the electric contribution.
This can be written more abstractly as
\be
{\cal E} = || \bb E ||^2 =|| \bb E ||_+^2 + || \bb E||_-^2, 
\label{eq:Eadditivity}
\ee
with $||\cdot||$ and $||\cdot||_\pm$ the $\bb G$-norms over $\Sigma$ and $R^\pm$ respectively. E.g. $|| \bb E||_+^2 = \bb G_{R^+}(\bb E, \bb E) = \int_{R^+} \sqrt{g} \, g^{ij}\tr(E_i E_j)$.

Consider now the radiative/Coulombic decomposition of $E$, and recall that it corresponds to a horizontal/vertical orthogonal decomposition with respect to the $\bb G$ supermetric. Then,
\be
|| \bb E ||^2 = || \bb E_\rad||^2 + || \varphi^\# ||^2 =:  \mathcal E_\rad + \mathcal E_\Coul,
\ee
and similarly on $R^\pm$. 

Applying the same decomposition to the second gluing formula of \eqref{eq:El_glue} gives
\begin{align}
|| \bb E_\rad ||^2 
& = || \bb E_\rad^+ + (\eta^+)^\# ||^2_+ + || \bb E_\rad^- + (\eta^-)^\# ||^2_- \notag\\
& = || \bb E_\rad^+||^2_+  + || \bb E_\rad^-||^2_- + ||(\eta^+)^\# ||^2_+ + ||(\eta^-)^\# ||^2_-\notag\\
& \geq  || \bb E_\rad^+||^2_+ + || \bb E_\rad^-||^2_-.
\end{align}
From the additivity of $\mathcal E$ \eqref{eq:Eadditivity}, the gluing formula \eqref{eq:El_glue} and the equation above, it follows that the total Coulombic contribution correspondingly decreases by the same amount:%
\footnote{ This follows from the comparison of the following two expressions
$$\begin{dcases}
|| \bb E ||^2   & = || \bb E_\rad - \varphi^\# ||^2 = || \bb E_\rad ||^2  + || \varphi^\# ||^2 = || \bb E_\rad^+||^2_+  + || \bb E_\rad^-||^2_- + ||(\eta^+)^\# ||^2_+ + ||(\eta^-)^\# ||^2_-  + || \varphi^\# ||^2\notag\\
 || \bb E ||^2 & =  || \bb E ||_+^2 +  || \bb E ||_-^2 = || \bb E_\rad^+ - (\varphi^+)^\# ||^2_+ + || \bb E_\rad^- - (\varphi^-)^\# ||^2_+ = || \bb E_\rad^+||^2_+ + || (\varphi^+)^\# ||^2_+ + || \bb E_\rad^- ||^2_- + || (\varphi^-)^\# ||^2_+  .\notag
\end{dcases}
$$
}
\begin{align}
|| \varphi^\# ||^2 
& = || (\varphi^+)^\# ||^2_+ + ||(\varphi^-)^\# ||^2_- -  ||(\eta^+)^\# ||^2_+ - ||(\eta^-)^\# ||^2_-\notag\\
&\leq || (\varphi^+)^\# ||^2_+ + ||(\varphi^-)^\# ||^2_-.
\end{align}
We have thus proved (and qualified) our statement above.

So, if  to the radiative part of $E$  we ascribe the kinetic energy of the  radiative modes, the following question arises: which new radiative field strengths are included in $\Sigma$ that are not present in the disjoint union of $R^+$ and $R^-$? 

The answer lies at the interface $S$: the  regional Coulombic and vertical adjustments,  $\eta^\pm$ and $\xi^\pm$, respectively,  from the global perspective are additions to the radiative sector of $\Sigma$ with respect to the radiative sectors of $R^\pm$.  Although supported on the whole regions $R^\pm$ respectively, these new components, are completely determined by the mismatch at $S$ of the two regional radiative modes,  $\bb E_\rad^\pm{}_{|S}$ (or $h^\pm{}_{|S}$, resp).  In other words, the new global radiative field strength  that emerges on $\Sigma$ upon gluing $R^\pm$  is entirely determined by the standard regional radiative modes at the boundary. 

 In formulas:
\be
\mathcal E = \mathcal E_\rad = \mathcal E_\rad^+ + \mathcal E_\rad^- + \oint_S \sqrt{h}\, \tr\Big( f \big(\mathcal R_+^{-1} + \mathcal R_-^{-1} \big) f\Big),
\label{eq:radenergydiff}
\ee
where $f$ should be understood as given by \eqref{eq:fluxreconstr}  and there we used the following relation $ \mathcal E^\pm_\text{Coul} = || (\varphi^\pm)^\# ||^2_\pm = \oint_S \sqrt{h} \tr( f \mathcal R_\pm^{-1} f )$ that is easily deducible from the definitions and results of section \ref{sec:gluingthm} (also, a similar computation will be carried out in more detail in the next section). 

We summarize these results in the following\footnote{The last statement follows from the positivity of $\mathcal R$, see the proof of the SdW Gluing Lemma \ref{Lem:SdWGL}.}
\begin{Prop}
Assuming the same geometrical setting relevant for the General Gluing Theorem \ref{thm:GGT}, the following radiative/Coulombic energy balance holds:
$$
\mathcal E_\rad - (  \mathcal E_\rad^+ + \mathcal E_\rad^-) = (  \mathcal E_\Coul^+ + \mathcal E_\Coul^-) - \mathcal E_\Coul =  \oint_S \sqrt{h}\, \tr\Big( f \big(\mathcal R_+^{-1} + \mathcal R_-^{-1} \big) f\Big) \geq 0,
$$
with $f=f(E_\rad^+, E_\rad^-)$ as in \eqref{eq:fluxreconstr}. The equality sign holds if and only if $f=0$.
\end{Prop}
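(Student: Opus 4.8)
The plan is to reduce everything to the $\bb G$-orthogonality of the radiative and Coulombic sectors together with the gluing formula \eqref{eq:El_glue}, and then to convert the resulting bulk norms into boundary integrals governed by the Dirichlet-to-Neumann operators $\mathcal R_\pm$. First I would record the two facts already at our disposal: the additivity of the electric energy $\mathcal E = || \bb E ||^2 = || \bb E ||_+^2 + || \bb E ||_-^2$ \eqref{eq:Eadditivity}, and the orthogonality of the radiative/Coulombic split, which on each of $\Sigma, R^\pm$ gives $|| \bb E ||^2 = \mathcal E_\rad + \mathcal E_\Coul$ and $|| \bb E ||_\pm^2 = \mathcal E_\rad^\pm + \mathcal E_\Coul^\pm$. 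The expansion
\be
|| \bb E_\rad ||^2 = \mathcal E_\rad^+ + \mathcal E_\rad^- + ||(\eta^+)^\#||_+^2 + ||(\eta^-)^\#||_-^2,
\ee
already derived in the text from \eqref{eq:El_glue} and the regional $\bb G$-orthogonality of the horizontal $\bb E_\rad^\pm$ against the vertical $(\eta^\pm)^\#$, immediately yields the first equality $\mathcal E_\rad - (\mathcal E_\rad^+ + \mathcal E_\rad^-) = ||(\eta^+)^\#||_+^2 + ||(\eta^-)^\#||_-^2$.

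For the second equality I would compute $|| \bb E ||^2$ in two ways and equate them: once as $\mathcal E_\rad + \mathcal E_\Coul$ with the above expansion inserted, and once via additivity as $(\mathcal E_\rad^+ + \mathcal E_\Coul^+) + (\mathcal E_\rad^- + \mathcal E_\Coul^-)$. Cancelling the common $\mathcal E_\rad^\pm$ terms gives directly
\be
(\mathcal E_\Coul^+ + \mathcal E_\Coul^-) - \mathcal E_\Coul = ||(\eta^+)^\#||_+^2 + ||(\eta^-)^\#||_-^2,
\ee
so both differences equal the same quantity, which is manifestly a sum of $\bb G$-norms and hence nonnegative.

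The remaining task is to identify this quantity with the boundary expression. Here I would invoke the SdW Gluing Lemma \ref{Lem:SdWGL} applied to the radiative electric field, which is horizontal and thus on the same footing as $h^\pm$: the adjustments $\eta^\pm$ are covariantly harmonic in $R^\pm$ with covariant-Neumann data equal to $\pm f$ with respect to each region's own outgoing normal, the flux $f$ playing the role of $\Pi$. Using harmonicity to integrate by parts exactly as in \eqref{eq:self_adj}, the bulk term drops and $||(\eta^\pm)^\#||_\pm^2 = \oint_S \sqrt{h}\, \tr\big( \eta^\pm_{|S}\, \D_s \eta^\pm \big)$. Rewriting the Dirichlet traces $\eta^\pm_{|S}$ through the inverse Dirichlet-to-Neumann operators and applying the sign rule \eqref{eq:signimportant} then collapses both contributions to $\oint_S \sqrt{h}\,\tr\big(f\,\mathcal R_\pm^{-1} f\big)$, whose sum is the claimed $\oint_S \sqrt{h}\,\tr\big(f(\mathcal R_+^{-1} + \mathcal R_-^{-1})f\big)$, consistently with \eqref{eq:fluxreconstr} and \eqref{eq:radenergydiff}.

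Finally, positivity of $\mathcal R_\pm$ — and hence of $\mathcal R_\pm^{-1}$ and of their sum — established in the proof of Lemma \ref{Lem:SdWGL}, shows that the expression is $\geq 0$; at irreducible configurations $\mathcal R_+^{-1} + \mathcal R_-^{-1}$ is positive-definite, so it vanishes if and only if $f=0$. The step I expect to demand the most care is the sign bookkeeping in the last paragraph: because $S$ carries opposite orientations as $\pp R^+$ and $\pp R^-$, the Neumann data of $\eta^+$ and $\eta^-$ differ by a sign, and it is precisely the relative sign encoded in \eqref{eq:signimportant} that conspires to make the two boundary integrals add rather than cancel. Verifying that these signs, together with the identification $\Pi = f$, reproduce the quoted formulas consistently is the only genuinely delicate point.
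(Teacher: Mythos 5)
Your proposal is correct and takes essentially the same route as the paper: the same orthogonal expansion of the gluing formula \eqref{eq:El_glue} for the first equality, the same double evaluation of $|| \bb E ||^2$ for the Coulombic balance (this is precisely the computation in the paper's footnote), and the same identification $||(\eta^\pm)^\#||^2_\pm = \oint_S \sqrt{h}\,\tr\big(f\,\mathcal R_\pm^{-1} f\big)$ via harmonicity of $\eta^\pm$, integration by parts as in \eqref{eq:self_adj}, and the Dirichlet-to-Neumann operators, with positivity imported from the proof of Lemma \ref{Lem:SdWGL}. Your flagged sign bookkeeping at $S$ — the opposite orientations and the relative sign of \eqref{eq:signimportant} making the two boundary contributions add rather than cancel, with $\Pi$ identified with $f$ per \eqref{eq:fluxreconstr} — is exactly how the paper handles it, so there is no gap.
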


It is important to stress that the new global contribution to the radiative energy is not encoded in either region, since it depends on the mismatch at $S$ of the two regional components \eqref{eq:fluxreconstr}. 
Thus, in this precise sense, we can claim that there is an additional component to the global radiative field strength: it results from the gluing  and arises from the relation of the two subsystems at their common boundary.

\subsection{Gluing of the symplectic potentials\label{sec:gluing_symp_pot}}

It is now straightforward to study the gluing of the SdW-horizontal symplectic potential.
As above, we focus on the situation where a $D$-dimensional  simply connected hypersurface without boundary $\Sigma\cong S^D$ is split into two regions $R^\pm \cong B^D$ glued at $S=\pp R^\pm \cong S^{D-1}$, i.e. $\Sigma = R^+ \cup_S R^-$. 

In this case, from \eqref{eq:theta_HV}, the total symplectic potential reads
\be
\theta = \int_\Sigma \sqrt{g} \left\{ \tr\Big( E^i  \dd A_i\Big)  - \bar \psi \gamma^0 \dd \psi \right\}
\approx \int_\Sigma \sqrt{g} \left\{  \tr\Big( E_\rad^i  \dd_\perp A_i\Big)  - \bar \psi \gamma^0 \dd_\perp \psi \right\} = \theta^\perp ,
\ee
where $\theta \approx \theta^\perp$ since $\pp \Sigma =\emptyset$.

Now, $\theta$ can also be decomposed into $\theta = \theta^+ + \theta^-$ simply by factorizing the integration domain in the first expression above,
\be
\theta^\pm = \int_{R^\pm} \sqrt{g} \left\{  \tr\Big( E^i  \dd A_i\Big)  - \bar \psi \gamma^0 \dd \psi \right\}.
\ee
Each of these regional contributions can be written in the SdW decomposition following  \eqref{eq:theta_HV}:
\be
\theta^\pm \approx \int_{R^\pm} \sqrt{g} \left\{  \tr\Big( E_\rad^{\pm}{}^i  \dd_{\perp (\pm)} A_i\Big)  - \bar \psi \gamma^0 \dd_{\perp (\pm)} \psi \right\} \pm \oint_S \sqrt{h}\, \tr\Big( f \varpi_\pm \Big) .
\ee
where $\perp\!(\pm)$ denotes that the SdW decomposition intrinsic to $R^\pm$  has been respectively  used, and the sign of the last term depends on the fact that, in $f = s^i E_i{}_{|S}$, the normal $s^i$ to $S$ is outgoing for $R^+$ and ingoing for $R^-$. Thus, we find
\be
\theta \approx \theta^{\perp(+)} + \theta^{\perp(-)}  + \oint_S \sqrt{h}\, \tr\Big( f (\varpi_+ - \varpi_-) \Big) .
\ee

The results of section \ref{sec:gluingthm}, and in particular equation \eqref{eq:Shoriz}, can be applied\footnote{This is entirely compatible with the standard definition of $\varpi$, which can be seen by noticing that given a vector $\bb Y\in\mathrm T_A\A$: $\bb i_{\bb Y} \dd_\perp A_i = H_i$, $\bb i_{\bb Y} \varpi =\Lambda$, and similarly $\bb i_{\bb Y} \dd_{\perp(\pm)} A_i = h^\pm_i$, $\bb i_{\bb Y} \varpi_\pm = \lambda^\pm$.} to $\varpi_\pm$ to obtain
\be
(\varpi_+ - \varpi_-)_{|S} = - (\DS{}^2)^{-1} \DS{}^a \iota_S^*( \dd_{\perp(+)} A - \dd_{\perp(-)} A )_a 
\equiv -\frac{\DS[ \dd_\perp A]_S^\pm}{\DS{}^2}.
\ee
Here, $(\varpi_\pm){}_{|S}$ means that the connection---which is valued in $\Lie(\G) = C(R^\pm, \Lie(G))$---is evaluated at the boundary $S = \pp R^\pm$, i.e. at points $x\in \pp R^\pm$. We have also introduced a new short-hand symbol for the interface mismatch of a given regional quantity $\bullet$, namely $[\bullet]^\pm_S$. For more compact notation, we have also schematically denoted the inverse operator by a fraction $(\DS{}^2)^{-1}(\bullet):=\frac{\bullet}{(\DS{}^2)}$. Similarly, we recall\footnote{The notation used for \eqref{eq:fluxreconstr} has been here (slightly) adapted to fit with the notation used in the rest of this section. We apologize with the reader for the inconvenience.} \eqref{eq:fluxreconstr}
\be
\Big( \mathcal R^{-1}_+ + \mathcal R^{-1}_-\Big)(f) = - (\DS{}^2)^{-1} \DS{}^a \iota_S^*( E_\rad^{+} - E_\rad^{-} )_a
\equiv-\frac{\DS[ E_\rad]_S^\pm}{\DS{}^2}.
\ee

Hence, combining these results, and remembering that  $\mathcal R_\pm$ is self-adjoint, we find the following result for the gluing of the symplectic potential:

\begin{Thm}[Gluing of the symplectic potential]
Consider the same geometrical setting relevant for the General Gluing Theorem \ref{thm:GGT}. 
Denote, the YM regional symplectic potentials associated to $\Sigma$ and $R^\pm$ by $\theta$ and $\theta^{(\pm)}$ respectively, and their SdW-horizontal counterparts by $\theta^\perp$ and $\theta^{\perp(\pm)}$ respectively.
Then, as a corollary of the SdW Gluing Lemma, 
\be
\theta \stackrel{ \pp\Sigma=\emptyset}{\approx} \theta^\perp= 
\theta^{\perp(+)} + \theta^{\perp(-)}  + \oint_S \sqrt{h} \, \tr\left( %
\frac{\DS[ E_\rad]_S^\pm}{\DS{}^2}\;%
\Big( \mathcal R^{-1}_+ + \mathcal R^{-1}_-\Big)^{-1} \;%
\frac{\DS[ \dd_\perp A]_S^\pm}{\DS{}^2}%
\right).
\label{eq:GlueTheta}
\ee
Since both $\Omega^\perp := \dd \theta^\perp$ and $\Omega^{\perp(\pm)}:= \dd\theta^{\perp(\pm)}$ are basic and closed, each of the terms on rhs above is projectable on the reduced phase space $\A/\G$. Since $\Omega^\perp$ projects to the full reduced symplectic structure (cf. section \ref{sec:QLSymplRed} and \cite{AldoNew}), each of the terms of the rhs encodes a ``pair'' of reduced canonical dof. The last terms, in particular, encodes the ``new'' radiative dof emerging upon gluing.
\end{Thm}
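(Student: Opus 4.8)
The plan is to derive the statement as a corollary of the SdW Gluing Lemma \ref{Lem:SdWGL}, building directly on the regional decomposition already assembled above. First I would use $\pp\Sigma=\emptyset$ to pass from $\theta$ to its horizontal representative: by \eqref{eq:theta_HV} the vertical potential $\theta^V$ reduces to a boundary integral over $\pp\Sigma$, which is empty, so $\theta\approx\theta^\perp$. The additive split $\theta=\theta^++\theta^-$ is then merely the factorization of the integration domain $\Sigma=R^+\cup_S R^-$ and is trivial; all the content lies in re-expressing each regional piece through its \emph{own} SdW splitting. Applying \eqref{eq:theta_HV} region by region yields, for each $R^\pm$, the bulk term $\theta^{\perp(\pm)}$ together with a boundary integral $\pm\oint_S\sqrt{h}\,\tr(f\varpi_\pm)$, the relative sign encoding the outgoing/ingoing orientation of the normal $s^i$ at $S$. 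Summing the two contributions concentrates the entire theorem in the single interface term $\oint_S\sqrt{h}\,\tr\big(f(\varpi_+-\varpi_-)\big)$.

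Next I would feed in the two outputs of Lemma \ref{Lem:SdWGL}. The mismatch $(\varpi_+-\varpi_-)_{|S}$ is fixed by the boundary-intrinsic SdW problem \eqref{eq:Shoriz}, applied now to the fluctuations $\dd_{\perp(\pm)}A$ in place of $h^\pm$, giving $(\varpi_+-\varpi_-)_{|S}=-(\DS{}^2)^{-1}\DS{}^a\iota_S^*(\dd_{\perp(+)}A-\dd_{\perp(-)}A)_a$. The flux $f=s^iE_i{}_{|S}$ plays the role of the Neumann datum $\Pi$ of the Lemma, so the reconstruction formula \eqref{eq:fluxreconstr} expresses it through the radiative mismatch, $(\mathcal R_+^{-1}+\mathcal R_-^{-1})(f)=-(\DS{}^2)^{-1}\DS{}^a\iota_S^*(E_\rad^+-E_\rad^-)_a$, i.e. $f=-\big(\mathcal R_+^{-1}+\mathcal R_-^{-1}\big)^{-1}(\DS{}^2)^{-1}\DS{}^a\iota_S^*(E_\rad^+-E_\rad^-)_a$.

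I would then substitute both expressions into the interface term and use the self-adjointness of $\mathcal R_\pm$---established in the proof of Lemma \ref{Lem:SdWGL} via the integration-by-parts identity \eqref{eq:self_adj}---to move the operator $\big(\mathcal R_+^{-1}+\mathcal R_-^{-1}\big)^{-1}$ onto the $E_\rad$ factor. The two leading minus signs cancel, producing exactly \eqref{eq:GlueTheta}. For the projectability statement I would invoke Proposition \ref{prop:gaugeOmegaH}: since $\Omega^\perp=\dd\theta^\perp$ and $\Omega^{\perp(\pm)}=\dd\theta^{\perp(\pm)}$ are basic and closed, they descend to $\A/\G$; and because $\Omega^\perp$ projects to the full reduced symplectic structure (section \ref{sec:QLSymplRed}), differentiating \eqref{eq:GlueTheta} and invoking these facts identifies the remaining interface term as the carrier of the ``new'' radiative dof.

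The main obstacle is conceptual rather than computational, and it is precisely what makes the additive step deceptive: regional restriction and horizontal projection do not commute, so $E_\rad^\pm\neq E_\rad{}_{|R^\pm}$ and $\varpi_\pm\neq\varpi{}_{|R^\pm}$, and it is the associated boundary mismatch that must be tracked with care---both in its definition and in the bookkeeping of orientations at $S$. The two load-bearing technical inputs, the self-adjointness of $\mathcal R_\pm$ and the relative sign \eqref{eq:signimportant} that guarantees invertibility of $\mathcal R_+^{-1}+\mathcal R_-^{-1}$, are already supplied by the Lemma, so the residual work is essentially bookkeeping.
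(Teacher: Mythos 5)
Your proposal is correct and follows essentially the same route as the paper's own proof: split $\theta$ regionally, apply the regional SdW decompositions to isolate the interface term $\oint_S\sqrt{h}\,\tr\big(f(\varpi_+-\varpi_-)\big)$, then substitute the two outputs of the SdW Gluing Lemma (the boundary-intrinsic formula \eqref{eq:Shoriz} for $(\varpi_+-\varpi_-)_{|S}$ and the flux reconstruction \eqref{eq:fluxreconstr} with $f$ as the Neumann datum $\Pi$), with self-adjointness of $\mathcal R_\pm$ and the orientation-induced signs handled exactly as in the text. You also correctly flag the one genuinely subtle point---that regional restriction and horizontal projection do not commute---so nothing is missing.
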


In other words, $\theta \approx \theta^\perp$ is a functional \emph{only} of the regional \emph{radiative} electric fields $E_\rad^\pm$ and the regional \emph{SdW-horizontal} differentials $\dd_{\perp(\pm)} A$---i.e.  it does {\it not} require any knowledge of the regional Coulombic or pure-gauge dof.
Nonetheless, $\theta^\perp$ does not factorize in terms of its regional SdW-horizontal counterparts, $\theta \approx \theta^\perp \neq \theta^{\perp(+)} + \theta^{\perp(-)}$.
This non-factorizability has its root in the nonlocality of the horizontal/vertical decomposition. 
Its physical consequence is the emergence of new radiative dof upon gluing, which express the relational nature of the gauge theory across the interface. 
As discussed in section \ref{sec:energy}, the {\it mismatch} of the horizontal/radiative modes at the interface $S$ plays---from the global perspective---the role of a new horizontal/radiative dof which is not present in {\it either} region.

As emphasized in the previous section, upon gluing, we are---consistently---no longer required to superselect, or otherwise fix or refer to the electric flux  $f$ trough $S$ (which was conjugate to the pure-gauge part of the gauge potential): $f$ can now be reconstructed from the mismatch of the {\it electric} radiative modes \eqref{eq:fluxreconstr}.

In sum, the horizontal symplectic structure of Yang-Mills theory fails, as expected, to factorize into regional horizontal symplectic structures. This is because there are global horizontal modes can only be reconstructed from the two regional radiative modes  as functionals of their nontrivial {\it mismatch} at the common interface $S$.

\subsection{Example: 1-dimensional gluing and the emergence of topological modes}\label{sec:topology}

In this final section we work out a simple example, implementing the gluing of 1-dimensional intervals. 
Two cases are given: two closed intervals are glued into a larger interval, and one interval is glued on itself to form a circle. 
This second case falls outside the  simply-connected setup we adopted for the rest of the paper. Nonetheless,  this case allows us to easily discuss, without introducing a host of new technologies, the emergence of new global (or ``topological'') degrees of freedom associated to the non trivial cohomology of the circle.

\subsubsection{Gluing into  an interval}
Let us start by  considering two closed intervals $I^+=[0,1]$ and $I^-=[-1,0]$, that we shall glue together to form  a new closed interval $I = [-1,1]$.
 We shall see that, since on the interval the gauge potential must be pure gauge, the regional horizontal perturbations must vanish---a fact consistently encoded by our gluing formula. Although somewhat trivial, this example helps us set the stage for the gluing into a circle.
 
We first characterize the 1-dimensional gauge fields and their horizontal perturbations.
One dimensional gauge fields are always locally pure gauge,
\be
A^\pm = g_\pm^{-1} \d g_\pm
\ee 
for $g_+(x) = \mathrm{Pexp}\int_0^x A$ on $I^+$ and similarly on $I^-$, where we choose $g_-$ such that $g_-(0)=\bb 1$ too ($x=0$ is where the gluing takes place).
Since in one dimension $s^i  h_i{}_{|S}=0$ implies $ h_{|S}=0$, SdW-horizontal perturbations $\bb h^\pm$ in $I^\pm$, according to  \eqref{eq:hor_conds} must satisfy the equations 
\be
\D^\pm  h^\pm =0
\qquad\text{and}\qquad
 h^\pm{}_{|\pp I^\pm} = 0,
\ee which can be rewritten in terms of $\tilde{ h}{}^\pm := g_\pm  h^\pm g^{-1}_\pm$  as $\pp\tilde{ h}{}^\pm = 0$ and $\tilde{ h}{}^\pm{}_{|\pp I^\pm} = 0$. 
Now, these equations can be solved to give $\tilde{ h}{}^\pm = 0$ and hence
\be
\bb h^\pm=0.
\label{eq:1dhoriz}
\ee
This is solely an immediate consequence of the pure gauge character of all 1-dimensional configurations, and therefore all perturbations over topologically trivial regions must be purely vertical.

Applying these results on the horizontal/vertical decomposition of fields on the interval to the electric field,  we deduce that on the interval all electric fields are purely Coulombic.
 As per section \ref{sec:electric_gluing}, without any knowledge of regions outside of the interval $I^+\equiv R^+$, this is entirely characterized by the charge content of the interval and by $f$ at its boundary $S$.  
 The latter encodes our ignorance of the outside of the region. 

 Let us now analyze the gluing. Again, the global horizontal vector is denoted by 
\be
 H=( h^++\D\xi^+)\Theta_++( h^-+\D\xi^-)\Theta_- = \D\xi^+\Theta_++\D\xi^-\Theta_-
\ee
as in \eqref{eq:reconstructed_H}.  The relevant equations for gluing arise as in  \eqref{lambda}, with a couple of new features:  (\textit{i}) there is no  analogue to the last equation of \eqref{lambda}, since $ h_i$ has only one component that is transverse to the zero-dimensional gluing surface $S$; and (\textit{ii}) we have to add one equation per global boundary of the interval $I=[-1,1]$, since the total horizontal vector has now (two) endpoint boundaries, $\pp\Sigma \equiv \pp I= \{ -1\} \cup \{+1\}\neq \emptyset$.

Thus, 
\be
\label{eq:gluing1d}
\begin{dcases}
\D^2 \xi^\pm = 0 &\text{in }I^\pm\\ 
\D \left(\xi^+-\xi^- \right)=0 &\text{at } \pp I^+ \cap \pp I^-=\{0\}\\
\D \xi^\pm=0 &\text{at } \pp I = \{\pm 1\}
\end{dcases}
\ee
Now, again, by defining $\tilde \xi{}^\pm := g_\pm \xi^\pm g_\pm^{-1}$, we can turn the covariant derivatives into ordinary ones. This allows us to readily solve these equations. In fact, the bulk equations (the first of \eqref{eq:gluing1d}) tell us that 
\be
\tilde \xi^\pm = \pm \tilde \Pi^\pm x + \tilde\chi^\pm,
\ee
where $\tilde\chi^\pm$ are constant functions valued in $\Lie(G)$ corresponding to two arbitrary reducibility parameters of the vanishing configuration $\tilde A^\pm=0$. This is a concrete example of the discussion in the previous section.

 Now, the second equation of \eqref{eq:gluing1d} sets $\tilde\Pi^+ = - \tilde\Pi^-$, and the third one sets them equal to zero. Since the $\tilde \chi_\pm$ don't affect the value of the regional horizontal fields, we hence conclude that in this case the unique solution to the gluing problem at hand is $\xi^\pm = 0$ which readily leads to $\bb H =0$, consistently with the general regional result \eqref{eq:1dhoriz}. This concludes the gluing of two intervals $I^\pm$ into a larger one $I=[-1,1]$.

\subsubsection{Gluing into a circle}
We now move on to the second case, where one interval, $I=[-\pi,\pi]\ni\phi$, has its ends  glued to form a unit circle. To keep the two cases notationally distinct, we have denoted an element of the circle by $\phi$, as opposed to $x$ of the interval in the previous case.
This case requires a little more care. 

The idea is to split $I$ into two intervals which overlap around $\phi=0$, e.g. on the interval $U_\epsilon:=(-\epsilon,\epsilon)$. Thus we consider $I^- = \left[ - \pi , \epsilon\right)$ and $I^+ = \left(-\epsilon, \pi\right]$, so that we can glue at $\phi=\pm\pi$ according to the procedures of the above section, while matching the overlap of charts  around $\phi=0$ to close the interval into a circle. 

This allows us to separate the problem of gluing from the problem of covering the circle. The latter is accomplished  by overlapping open charts,  with transition functions which appropriately match the gauge configuration.

Let us start by analyzing the background configuration $A^\pm$ on $I^\pm$. We assume, as in the previous sections, that the configurations $A^\pm$ join smoothly at $\phi=\pm\pi$.

As above, $A^\pm$ are pure gauge, i.e. $A^\pm = g^{-1}_\pm \d g_\pm$ with $g_+(\pi)=g_-(-\pi)$. 
On the other hand, on $U_\epsilon$, the configurations $A^\pm$ do not have to  be equal; they need only be related by the action of a gauge transformation $\kappa$, the transition function.  Since we are in 1-dimension, this does not constitute a restriction; one simply has $\kappa = g_-^{-1} g_+$.

Now, we move on to consider the horizontal perturbations.
We shall find that the relevant horizontality equations for $\bb h^\pm$ involve boundary conditions only at $\phi = \pm \pi$, and the one for $\bb H$ does not involve boundary conditions at all. In particular no boundary conditions are imposed at the open-extrema of the intervals $I^\pm$. This is not  because the intervals are open, but rather because  there are no boundaries from the perspective of the global $\bb H$.
But let us  be more detailed. 

We start from the observation that on the overlap region $U_\epsilon$, generic perturbations $\bb X^\pm$  must be gauge related through $ X^+ = \Ad_{\kappa}  X^-$.
This means that, using the appropriate partitions of unity over $S^1$, there is no difficulty, nor ambiguity, in the patching of the SdW inner products over $I^+$ and $I^-$: we obtain an inner product over $ S^1$  between two perturbations $\bb X^\pm$ and $\bb Y^\pm$ that satisfy the overlap condition we have just described. Recalling that SdW-horizontality is the requirement of being orthogonal to any purely vertical vector  with respect to the SdW supermetric, we see that the horizontality condition for $\bb H$ does {\it not} involve boundary conditions at the non-glued boundaries of $I^\pm$, i.e. at $\phi=\pm \epsilon$. Of course, this was an expected result from the closed nature of the manifold on which $\bb H$ resides.

 Focusing now on horizontal perturbations, it is easy to see that this discussion doesn't change the fact that $\bb h^\pm=0$, since the manifold on which they reside still  has boundaries  at $\phi=\pm \pi$. Note moreover that  $\bb h^\pm=0$ implies that their matching on $U_\epsilon$ is  automatic. However, this discussion leads us to a horizontality condition for $\bb H$ that is distinct from the one found for the gluing into an interval \eqref{eq:gluing1d}. Indeed, in the present case, we find 
\be
\label{eq:gluing1d-circle}
\begin{dcases}
\D^2 \xi^\pm = 0 &\text{in }I^\pm\\ 
\D \left(\xi^+-\xi^- \right)=0 &\text{at } \phi=\pm\pi\\
\end{dcases}
\ee
with {\it no} extra conditions at $\phi=\pm\epsilon$.
Hence, it is readily clear that the solutions for $\xi^\pm$ are here much less  restricted than they were in the closed interval case considered above: in this case we find that  
\be
\xi^\pm =  {g_\pm^{-1}} (\tilde \Pi \phi + \tilde \chi^\pm) {g_\pm}.
\label{eq:Pigluing}
\ee 
with the same, possibly non-vanishing, $\tilde \Pi$  for both the $\pm$ choices.
From this we obtain,
\be
 H = {g_\pm^{-1}} \tilde \Pi {g_\pm}.
\ee
As for the background, matching the perturbed configurations in $U_\epsilon$ comes at no cost (since $\bb h_\pm =0$).

In summary, we see that the gluing procedure has no unique solution in this case, as a consequence of the  absence of a second ``outer'' boundary for the interval (which is glued into a circle). The second outer boundary is instead replaced by the chart matching.\footnote{ The decoupling of chart transitioning and horizontal gluing can be made into a more general feature. For instance, had we wished to cut up the circle into three segments, we would divide the interval $[0,2\pi]$ into three sets, $I_1=[0,2\pi/3], I_2=[2\pi/3, 4\pi/3], I_3=[4\pi/3, 2\pi]$, with $\bb h_i\in I_i$. Then we can cover the circle with three charts $U_{1,2,3}$,  given in larger, but largely overlapping, domains: $D_1=[0,4\pi/3], D_2=[\pi/3, 2\pi], D_3=[4\pi/3, \pi/3]$. Then $\bb h_1$ and $\bb h_2$ glue entirely within the $U_1$ chart domain $D_1$; $\bb h_2$ and $\bb h_3$ similarly glue in $D_2$; and $\bb h_3, \bb h_1$ glue in $D_3$.  In this way, one decouples the chart matching from the horizontal gluing; we can cyclically glue all $\bb h_i$'s first and find the appropriate chart transition later, independently. In that case, it is the cyclicity of the equations that yields one less condition. This type of concatenating construction can be extended to higher dimensional manifolds.}
 We thus obtain a  one-parameter family of solutions parametrized by an element $\tilde \Pi\in\Lie(G)$.
This element constitutes the perturbation of the Wilson-loop observable around the circle (Aharonov-Bohm phase), which is precisely the unique physical degree of freedom present there. 
The existence of this new topological mode is of course related to the  non-contractibility ($\pi_1(S_1) = \bb Z \neq 0$) of the circle.

Application of these results to the gluing of the electric field on the circle leads to the following analogous result: the Coulombic adjustments $\eta^\pm$---formally corresponding to the vertical adjustments $\xi^\pm$ in the gauge-potential case---encode the global {\it radiative} mode of the electric field on the circle. This global radiative mode is \textit{regionally} of a pure Coulombic form. 
 Then  the analogue of $\tilde\Pi$ in equation \eqref{eq:Pigluing} for $\eta^\pm$   is not free, but fixed by the electric flux $f = E_s{}_{|S}$ through the gluing interface.
In other words, a locally Coulombic field can be supported by the topology of the circle without any charged source; this is the conjugate dof to the Aharonov--Bohm phase, and what the electric analogue of $\tilde\Pi$ physically stands for.
 
In sum, this 1-dimensional example provides a proof of principle that topological dof of the Aharonov-Bohm kind are not lost in our formalism, but rather emerge as ambiguities in the gluing procedure, ambiguities which are not there in topologically trivial situations.
 
 This consideration only partly endorses the attribution of ``new edge mode degrees of freedom'' to boundaries \cite{AronWill, DonnellyFreidel}. Namely, it grants such  status only to those, {\it finitely many}  degrees of freedom which encode information about a (global!) nontrivial first cohomology.\footnote{Of course, this distinction and the ensuing identification of finitely many topological modes cannot be performed at the regional level.}

\section{Outlook\label{sec:conclusions}}

We conclude this article by mentioning a few physically relevant questions that we expect our quasilocal framework will address and clarify.
 We will also take the opportunity to briefly comment on the relationship of the present quasilocal framework with other formalisms proposed in the literature.

\paragraph*{Comparison to edge modes}
 The protagonist of this study is the functional connection  on field space, $\varpi$, characterized by its projection and covariance properties. In hindsight and to our knowledge, the first appearance of an object possessing those two properties in the context of the symplectic geometry of YM in the presence of boundaries is \cite{DonnellyFreidel} (see also \cite{Balachandran:1994up, Speranza:2017gxd, Geiller:2017xad, Camps, Freidel_2020} among others).
In contrast to the present work, the connection of  \cite{DonnellyFreidel} was built out of {\it new} gauge-covariant fields;  that is, by enlarging the configuration space of the theory and with no  field-space geometrical interpretation in mind. 
These new fields were called ``edge modes'' since their existence is arguably revealed only at $\pp R$. 
In the following, we will denote by $\varpi_\text{DF}$ the functional connection that corresponds to the construction of \cite{DonnellyFreidel}.

In the case of YM theory (that work considers also the case of general relativity), edge modes were posited to be group-valued, i.e. of the form $\tilde g(x)\in G$, and to transform under gauge transformation as $\tilde g\mapsto  \tilde g g$ (on the right). This meant that  $\varpi_\text{DF} = \tilde g^{-1} \dd \tilde g$ could serve as a (flat) field-space connection\footnote{The reader should be aware that many expressions used in this paragraph cannot be found in  \cite{DonnellyFreidel}, which is not framed in terms of principal fiber bundles in field space: we are using our language and conventions, to describe their results.} and that the following {\it extended} symplectic potential was horizontal and gauge-invariant: $\theta_\text{ext} = \theta_\text{YM} - \oint \tr(f \varpi_\text{DF})$. Notice that $\theta_\text{ext}$ is---on-shell of the Gauss constraint---formally identical to our $\theta^H = \theta - \theta^V$. 
But the analogy stops there. 

Indeed, $\theta_\text{ext}$ is labeled {\it extended} with respect to $\theta$ because it contains the new fields $\tilde g$, whereas $\theta^H$ contains {\it less} modes than $\theta$ and is defined {\it intrinsically} to the phase space $\T^*\A$.
In many ways, the construction of $\theta_\text{ext}$ can be understood as a ``Stuckelberg-ization'' of the gauge symmetry\footnote{In the fibre-bundle $P\to \Sigma$ description of YM, the edge modes $\tilde g(x)$ are nothing else than the bundle's fibre coordinates (in some arbitrary gauge)---and $\varpi_\text{DF}$ is the Maurer-Cartan form on the infinite dimensional bundle provided by $\A$. This relationship between edge modes and coordinates is even clearer in general relativity, where the analogue of the fields $\tilde g(x)$ are maps $\tilde X: \Sigma  \to \bb R^3$ (or from $M \to \bb R^4$) which are actual coordinates in the sense of differential geometry.} (at the boundary), as  can be inferred from the fact that the gauge charges $H_\xi$ (which have no place in $\theta^H$) reappear as charges associated to the ``global'' symmetries of the new fields $\tilde g$, i.e. $\tilde g \mapsto h^{-1} \tilde g$ (on the left).
In fact, this simple observation can be made mathematically precise, thus revealing a hidden residual gauge-dependence of the edge mode construction. This analysis, as well as a detailed comparison of edge modes with the present geometric framework, is available in \cite{AldoNew}.

In light of these considerations, it seems to us that the intrinsic geometric approach put forward in this article is more minimal and more insightful than the one based on group-valued edge modes.  Indeed, it only relies on geometric properties that are already present inside standard Yang-Mills theory,  and avoids introducing boundary conditions or new fields.
 This idea is taken to its logical conclusions \cite{AldoNew}, where the geometric approach developed in this paper is used to show that the reduced phase space $\Phi/\G$ is foliated by canonically-defined symplectic spaces associated to superselection sectors of fixed electric $f$ (see section \ref{sec:QLSymplRed} for a brief review).

Therefore, we take the position that there is no a priori reason to introduce the group-valued edge-modes of \cite{DonnellyFreidel} in YM theory for the study of quasilocal degrees of freedom, charges, or gluing---all of which we have been able to analyze in greater detail from a purely field-space geometrical standpoint.
  (Having said that, edge modes can nonetheless be useful to model the idealized coupling of a bulk YM theory with other, physical degrees freedom leaving on a codimension-1 surface).\footnote{E.g. to a superconductor confined on a conducting surface \cite{SusskindMemory}. See also \cite{Wieland1} for a different coupling to boundary fields, this time represented by spinorial fields. 
  
In the literature one finds other two motivations for the introduction of edge modes that we haven't mentioned so far: the first is based on an analogy with Chern-Simons theory, the second one with (quantum) Lattice Gauge Theory. Their analysis is instructive.}

\paragraph*{Edge modes and $\varpi$ in Chern-Simons theory}
At the boundary of a bulk Chern-Simons theory (CS), it is well-known that a boundary Wess-Zumino-Witten theory (WZW) emerges, whose dof are analogous to the edge modes $\tilde g(x)$. But, in relation to gauge, the action and the symplectic structures of YM and CS are very different ($BF$ theory offers yet another example, in many ways more similar to CS than YM). The Lagrangian density of YM theory is point-wise gauge invariant, the same is not true for CS; moreover, in YM there exists a (natural) polarization of the symplectic potential which is gauge-invariant (under field-{\it in}dependent gauge transformations), whereas the same is not true for CS---this lack of invariance was used in \cite{Mnev:2019ejh} to derive the WZW from CS. )  These remarks suggest that it is totally conceivable that edge modes are required in CS but not in YM; \cite{GriffinSchiavina} make a similar point. 
However, to settle this point, it is necessary to give a treatment of CS theory through the formalism put forward in this paper; \cite{JFrancois19} might provide some useful tools to this purpose.

\paragraph*{Comparison to Lattice Gauge Theory} 
The introduction of boundaries in (quantum) Lattice Gauge Theory (LGT) requires one to cut open a series of lattice links (see e.g. \cite{Casini_gauge, Delcamp:2016eya} where a second option---cutting along links---is also considered).  
At the 1-valent vertex of an open link, gauge invariance must necessarily be broken (unless the link carries a vanishing electric flux). This is most easily seen in the spin-network basis of lattice gauge theory  \cite{Donnelly2008}.
The result of this breaking of gauge symmetry, it is claimed, is that new would-be-gauge degrees of freedom have to be introduced at the 1-valent vertex of LGT.  But let us consider two case-studies: a lattice $G = \SU(2)$ and $\mathrm U(1)$ gauge theories.

Let us start by $G=\SU(2)$. Then, the lattice links are associated with a spin $j\in\frac12\bb N$ (an irrep of  $G$) which labels the eigenvalues of the modulus square of the quantum electric flux through a surface dual to the link, $\tr(f^2) = j(j+1)$; the vertices are labeled by $\SU(2)$ invariant tensors (intertwiners); and the 1-valent vertices at the end of an open link carry, as new dof, the $\SU(2)$ magnetic indices $m\in\{-j, -j +1 ,\dots, j\}$. This means that the boundary states at an open link are given by $||j,m\rangle \in \mathcal H_j$. These magnetic numbers are claimed to be a quantum version of the edge modes. Before coming back to this claim let us discuss the other case, $\mathrm U(1)$.

If $G=\mathrm U(1)$, lattice links are associated with an integer $n\in\bb Z$ (an irrep of  $G$) which labels the eigenvalues of the quantum electric flux through a surface dual to the link, with the sign of $n$ encoding whether the flux is ingoing or outgoing (relative to the orientation of the link); at the vertices, gauge invariance means that the sum of these oriented flux quantum numbers must vanish (this is Gauss' law).
Boundaries are where open (half) links end; if these half-link carry a nontrivial flux with $n\neq0$, then gauge invariance is manifestly broken there. However, since all irreps of $\mathrm U(1)$ are 1-dimensional, no extra dof (beside the magnitude of the flux) is present there. Therefore, in the $G=\mathrm U(1)$ there is no analogous candidate for the quantum edge modes, which according to the construction of \cite{DonnellyFreidel} should always be present. Why?

The issue is that the magnetic numbers in the $G=\SU(2)$ do {\it not} correspond to the edge modes of \cite{DonnellyFreidel}, but rather to the (quantum) direction that the electric flux is pointing towards in the internal (gauge) space. This can be seen from the fact that the electric flux operator on a given link is proportional to the $\su(2)$-generator: $\hat f^\alpha = s_i \hat E^{i\alpha} \propto J^\alpha$. This is why, in the $\mathrm U(1)$ case no analogue of the magnetic numbers is necessary: the internal space is trivial.\footnote{The embedding of the link in $\Sigma$, i.e. the lattice discretization itself, projects the electric field $E^i$ in a particular spacial direction.}

Moreover, as argued by \cite{Casini_gauge}, in this framework the value of the electric flux $n$ (or $j$) at the boundary is superselected. This means that they (Poisson-)commute with any other observable in the theory, i.e. fluxes become nondynamical and should have no conjugate variables. This is clearly in contrast to what happens in the edge-mode framework, where the edge modes are the conjugate variables to the fluxes themselves.

Now, according to Kirillov's coadjoint orbit method,\footnote{The name coadjoint orbit method comes from the following: $\eta_f = \tr(f \, \cdot\, )$ is an element of the vector space dual to $\Lie(G)$ whose coadjoint orbit is parametrized by elements $\tilde g\in G$ according to $\Ad_{\tilde g}^* \eta_f = \tr( (\tilde g^{-1} f \tilde g)  \, \cdot\, ) $.} the Hilbert space $\mathcal H_j$ arises as the quantization of the canonically-given symplectic form associated to the coadjoint orbit fixed by $\tr(f^2) = j(j+1)$ (at a given link) \cite{Kirillov}.
Therefore the LGT computation nicely matches the classical and continuum construction of \cite{AldoNew} (summarized in section \ref{sec:QLSymplRed}). Once again, this construction requires no edge modes, and rather relies on the restriction of $\theta$ to sectors at fixed $\tr(f^2)$ (in the Abelian case this is precisely $\theta^H$).\footnote{The DF extended symplectic structure, which includes the new edge modes, rather than relying on Kirillov's canonical symplectic structure associated to a coadjoint-orbit of a given flux $f \in \Lie(G)$, relies on the canonical symplectic structure associated to $\T^*G$. This is how new dof $\tilde g$ are introdued which are conjugate to the $f$. See \cite{AldoNew} for details.}
This interpretation is further confirmed by computations of entanglement entropy (see below).

 Although our construction nicely parallels the LGT phase space in the way presented above, it seems to us that  relating gluing in the two pictures is less straightforward.

We showed that in the continuum there is no ambiguity in the gluing procedure and that all dof can be reconstructed by solving certain elliptic boundary value problems. On the lattice, on the other hand, there is no true analogue of the elliptic boundary value problems that enter the gluing formula---which de facto require infinitely fine-grained knowledge of all the continuous modes of the fields involved. Moreover, gluing is highly ambiguous since one can in principle introduce a gauge ``slippage'' at the gluing of every open link: these missing modes are essentially new Aharonov-Bohm phases not present in the open lattice. And this leads us to the point of contact between the two formalisms: in our study of gluing in 1+1 dimensions (see section \ref{sec:topology}) we found that new Aharonov-Bohm dof are indeed seen to appear when the glued manifold has a nontrivial topology (like a circle).

Let us clarify our argument with a more concrete example. Consider first electromagnetism in $\Sigma = R^+ \cup R^-$, and consider a Wilson loop $L$ which is cut in two by the interface $S=\pp R^\pm$: $L = L^+ \cup L^-$ with $L^\pm\subset R^\pm$. Although there is no way to reconstruct the Aharonov-Bohm phase $\phi(L)$ around $L$ from gauge invariant information associated to its two open ``halves'' $L^\pm$, this information {\it is} gauge-invariantly encoded in $R^\pm$. Indeed, turning $L^\pm$ into closed loops $\bar L^\pm = L^\pm \cup \ell \subset R^\pm$ by closing $L^\pm$ with a common open Wilson line along the boundary, $\ell \subset S$ and $ \pp \ell =  \pp L^\pm$, one obviously finds $\phi(L) = \phi(\bar L^+) + \phi(\bar L^-) \; (\text{mod} \; 2\pi)$. Given the nonlinear nature of non-Abelian YM theory, this trick would not work there; however, our gluing result shows that (at least at the linearized level) having access to {\it all} the gauge invariant information in $R^\pm$ would allow unambiguous gluing even in the non-Abelian theory. However, this information is not available on the lattice, where information about the field configuration is de facto limited to the knowledge of a finite number of Wilson loops.

Therefore, it seems consistent to understand these results as saying that: from the perspective of our framework, LGT behaves as a gauge theory defined on a topologically (highly) nontrivial 1-dimensional manifold, where gluing is {\it non}-unique and new dof {\it do} emerge.
(This distinction in the quasilocal properties of continuum and lattice gauge theories might have consequences for approaches to quantum gravity, like Loops and Spinfoams, that maintain as fundamental  both gauge-like variables and a polymerized, i.e. lattice-like, notion of quantum spacetime \cite{LQG30years}.)

\paragraph*{Lorentz covariance of the horizontal symplectic form}
Our formalism is founded on a $D+1$ decomposition of spacetime, which manifestly breaks Lorentz invariance. 
In this regard, it is crucial to appreciate a rather trivial point: prior to the formalism itself, it is the focus on a $(D-1)$-dimensional surface $S$ that breaks global Lorentz invariance. 

 Indeed, the natural spacetime structure associated with $S$ is given by a pair  of disconnected causal domains $J^\pm$ within the globally hyperbolic spacetime $M \cong \Sigma\times \bb R$. 
These are the domain of dependence of the regions $R^\pm \subset \Sigma$.
But whereas different  choices of $R^\pm$ might determine the same $J^\pm$, all these equivalent choices share the same boundary $S=\pp R^\pm$---which means we should write $S=S(J^\pm)$.
Thus, even if the spacetime $M$ is a flat Minkowski space, Lorentz invariance is manifestly broken by our focus on $S$, which  indeed picks a privileged rest frame (provided $\Sigma \supset R^\pm$ is a simultaneity hypersurface).

More generally, the above causal spacetime geometry suggests that a better notion of spacetime covariance is given by the freedom to foliate the causal domains $J^\pm$.
In this regard we think that an interesting future direction consists in studying the quasilocal dynamics within $J^\pm$ by means of the horizontal (and in particular the SdW) decomposition of the gauge fields. 
This is also the right (covariant) framework to talk about entanglement entropy---discussed below.

We notice that this type of study requires a straightforward generalization of the present formalism to more general foliations with nontrivial lapse (and possibly shift, see \cite{RielloSoft}), as well as a way to deal with the divergences associated with a vanishing lapse at $S$.

\paragraph*{Superselection Sectors and the Asymptotic Limit}

It has been argued that, in the asymptotic limit $\pp R \to\infty$, $f=f_\infty$ is superselected and that its superselection has  highly nontrivial and somewhat puzzling consequences such as the spontaneous breaking of Lorentz symmetry in Quantum Electrodynamics on a Minkowski spacetime (or, indeed, an asymptotically flat one) \cite{StrocchiCSR1974, FrohlichMorchioStrocchi1979, Buchholz1986, BalachandranVaidya2013}.

In the works dealing with the asymptotic case, the superselection of $f_\infty$ follows from the remark that $f_\infty = E_s{}_{|\infty}$ at {\it infinity} is spacelike separated from, and hence commutes with, {\it all} the local operators of the theory (since they must have a finite support). 

In the case of a finite-region, we argued that $f$ is also superselected (see section \ref{sec:QLSymplRed} for a summary, \cite{Casini_gauge} for a lattice perspective, and \cite{AldoNew} for a complete treatment in the continuum).

However, whereas in the standard argument for the asymptotic superselection the latter follows from an argument of complete knowledge (one has that $f_\infty$ commutes with {\it all} local observables), in the finite case we argued for the superselection of $f$ on a basis of our {\it ignorance}: adopting a quantum lingo, we are ``tracing over'' all observables in the complement of the region of interest.


This interpretation is supported by our results on the gluing problem presented in section \ref{sec:gluing}.
There we have shown that
 from a global perspective (one that is not intrinsic to $R$), the flux $f$ at a finite $\pp R$ functionally depends on the quasilocal radiative dof supported {\it both} on $R$ and on its complement.
Therefore, we conclude, the physical origin of the {\it regional} superselection of $f$ is indeed the ``tracing'' over the dof contained in the complementary region to $R$ in $\Sigma$.

From this stance, the Lorentz symmetry breaking in Quantum Electrodynamics---which follows from the superselection of $f_\infty$---appears as a consequence (an artifact?) of taking  the idealized limit $\pp R\to\infty$ too seriously: i.e. not merely as a large-distance expansion, but as a limit that ``pushes'' the complementary region to $R$ out of existence.

We find it compelling that this observation resonates with the previous one, on the breaking of Lorentz invariance: in the finite case, it is the presence of a finite boundary (and the tracing out of dof outside it) that {\it directly} causes {\it both} the superselection of $f$ and the breaking of Lorentz invariance. 

A detailed discussion of the finite-region superselection of $f$ is provided in \cite{AldoNew}. However, to fully bridge with the asymptotic case,  
 a detailed study of the role played by the boundary (and fall-off) conditions for the (asymptotic) fields is needed---see e.g. \cite{Harlow_cov, brown1986}.
This work begun in \cite{RielloSoft},  where null-infinity was analyzed, but we leave a more detailed analysis of these ideas to future work.

\paragraph*{Entanglement Entropy} 
Another question that we expect our formalism can help clarify concerns the nonstandard properties of entanglement entropy of gauge systems \cite{Kabat}.
In gauge theories, the entanglement entropy turns out to quantify not only the standard, ``distillable'', (quantum and classical) correlations between local excitations, but also a more exotic ``edge'' (or ``contact'') component.
The latter component is classical, and descends from the probability distribution  for finding the super-selected flux $f$ in a certain configuration---i.e. in a certain superselection sector \cite{Polikarpov, Donnelly_entanglement, Casini_gauge, Donnelly:2014fua, AronWill}.

Given our understanding of the interplay between gauge,  fiducial\footnote{Fiducial interfaces---i.e. interfaces at which no fixed boundary condition is imposed---are crucial to the generic definition of entanglement entropy, but for gauge theories they were not easily implementable in previous set-ups (see e.g. the ``brick wall'' of \cite{Donnelly:2014fua, AronWill}).}  interfaces, and gauge symmetry, it is clear that the present formalism will shed light on the interpretation and computation of the edge component to the entanglement entropy. 
Indeed, it turns out that the probability distribution of a superselection sector of $f$, as computed in \cite{Donnelly:2014fua, AronWill},  comes precisely from a (Euclidean spacetime) analogue of formula  \eqref{eq:radenergydiff} for the Coulombic contribution to the energy (there, the Euclidean action) of an $f$-superselection sector.
It is also worth noticing that the Euclidean action featured in the computation of the entanglement entropy by the replica trick is the Euclideanization of the Lorentzian action in the Rindler causal domain.

 In \cite{Agarwal}, a computation of the contact term is  proposed which starts from a comparison between a globally gauge-fixed path integral and its regional counterparts. The main ingredient of this  computation is the Forman-BFK formula for the factorization of (zeta-regularized, Faddeev-Popov) functional determinants of Laplacians \cite{Forman, BFK, KirstenBFK} (the relevance of this ingredient to calculations of black-hole entropy was already identified\footnote{See also \cite{CarlipDellaPietra} for an even earlier application of Forman's results to the gluing, or ``sewing'', of string amplitudes.} by Carlip \cite{Carlip1995}). 
This formula features precisely the Abelian analogue of the operator $(\mathcal R_+^{-1} + \mathcal{R}^{-1}_-)$ that is central to our gluing formula. Indeed, interpreting horizontal modes as corresponding to the perturbatively gauge fixed ones, our gluing formula gives a precise non-degenerate\footnote{However, subtleties are expected to arise for non-simply-connected manifolds and at reducible background configurations.}  Jacobian for the transformation of the global radiatives to the regional radiatives, whose determinant yields the relevant factor in the factorization of the path integrals. 

More generally, we notice that our formalism is well-suited not only for a broad generalization of the ideas of \cite{Agarwal} on the computation and interpretation of the contact term of the (3d Abelian) Yang-Mills theory, but also for inscribing them in a larger theoretical landscape, viz. in the geometry of the Yang-Mills field space.

 The first evidence that this is the right direction comes from an analysis of the LGT entanglement entropy computed in 
 \cite{Donnelly_entanglement} with our framework: the non-distillable part of the entropy precisely reflects the foliation of the reduced phase space by symplectic superselection sectors analyzed in \cite{AldoNew} and summarized in section \ref{sec:QLSymplRed}.

\paragraph*{Corners and Gluing} So far we have considered only gluing patterns in which two regions are glued along their {\it whole} boundaries. More generally, one should consider cases in which the gluing happens on portions of the boundaries bounded by corner surfaces, and the boundary of those corners, and so on until the 0-dimensional boundary terminates the descent. In particular, these more general gluing patterns are necessary to build  topologically nontrivial manifolds from topologically simple building blocks (e.g. in the case of triangulated manifolds, or of the trinion decomposition of Riemann surfaces). This is therefore an important topic that deserves deeper study. 
In section \ref{sec:int_h}, we noticed that the continuity condition parallel to the interface $S$  is a verticality condition {\it in the space of boundary fields}, where the boundary field in question is the difference of the {\it pull-backs} of the regional horizontals onto $S$.  In this scenario, it seems that a chain of descent could apply for horizontal/vertical decomposition at boundaries of boundaries, etc. with analogies to the nested structures featured in the BV-BFV formalism (when interfaces of multiple codimensions are considered) \cite{cattaneo2014classical, cattaneo2016bv, Mnev:2019ejh}. 
 More generally, it would be valuable to have a  precise mapping between,  on one side, our reduction and gluing formalisms, which are based on ideas of symplectic reduction, and, on the other, those formalisms such as BV-BFV which are instead based on the ``opposite'' ideas of (homological, BRST) resolution of the gauge symmetry---such as the BV-BFV formalism of \cite{cattaneo2014classical, cattaneo2016bv, Mnev:2019ejh}, and the theory of factorization algebras of  \cite{CostelloGwilliam}.

\paragraph*{The symplectic flow of non-Abelian stabilizer charges}

We have argued, in section \ref{sec:conservation}, that the only (nontrivial) geometrically-determined set of quasilocal charges that survives symplectic reduction is given by stabilizer charges $Q[\chi_A]$. These charges are only defined at reducible configurations and, in YM,  only special configurations are reducible.
Reducible configurations constitute  ``meager'' submanifolds of the configuration space $\A$ and are organized along geometrical structures called strata (this is in analogy to metrics admitting Killing vector fields in general relativity, see \cite{Ebin, Palais, isenberg1982slice} and \cite{kondracki1983} for the same constructions in YM). Hence, in YM, the study of the symplectic flow associated with these symmetries must be performed intrinsically to these lower strata of $\A$, where the charges are defined  (physically, this corresponds to a  restriction  to a sector of the gauge-field configurations determined by a given symmetry property---e.g. rotationally invariant solutions in general relativity). 

However, in the non-Abelian case,  a definition of a connection-form in these strata is not forthcoming, as discussed in section \ref{sec:charges_YM}. Moreover,    in the non-Abelian case, the stabilizers are necessarily field-dependent, and thus the relationship between the flow of the stabilizer transformations and  the stabilizer charges, as their would-be-Hamiltonian-generators, is potentially obstructed.
A more detailed study of the geometry of the strata is needed to better characterize this obstruction and fully clarify its relation to (that is, the curvature of an associated connection, if it can be defined there).


\section*{Acknowledgements}
 We are thankful to Florian Hopfm\"uller as well as to Ali Seraj  and Hal Haggard for valuable comments and feedback  on  an earlier version of this  work.
We also thank William Donnelly for encouraging us to  study nontrivial topologies and in particular the 1+1 dimensional example.
Finally, our gratitude goes to an anonymous referee whose insightful observations and questions allowed us to considerably improve this manuscript to its present form.

\paragraph{Author contributions}
All authors contributed equally to the present article.

\paragraph{Funding information} HG was supported by the Cambridge International Trust.
During the duration of this work, AR was supported first by the Perimeter Institute and then by the European Union’s Horizon 2020 programme. Research at Perimeter Institute is supported in part by the Government of Canada through the Department of Innovation, Science and Economic Development Canada and by the Province of Ontario through the Ministry of Economic Development, Job Creation and Trade.
This project has received funding from the European Union’s Horizon 2020 research and innovation programme under the Marie Skłodowska-Curie grant agreement No 801505.

\begin{appendix}

\section{A quick translation into common notation for \eqref{eq:OmegaH-pp}\label{app:translation}}

To conclude, we provide a quick bridge to a more common notation  for \eqref{eq:OmegaH-pp} (e.g. \cite{AshtekarStreubel} or \cite{Lee:1990nz}). Let $\bb X_{1,2}=\int (X_i^\alpha)_{1,2}\frac{\delta}{\delta A_i^\alpha}$ be two tangent vectors on configuration space. In interpreting them as two infinitesimal variations, we denote their components with the more common notation $(X_i^\alpha)_{1,2} \equiv \delta_{1,2} A_i^\alpha$. Then, the horizontal-vertical decomposition of $\delta_{1,2} A$ is given by 
\be
\delta_{1,2} A_i = (h_{1,2})_i + \D_i \eta_{1,2}
\ee
where $h_{1,2}$ is the horizontal part of $\delta_{1,2} A$ and $\D \eta_{1,2}$ its vertical part.

On-shell of the Gauss constraint and in vacuum, to obtain a complete basis of variations over $\mathrm T^*\A$, we define the field space vectors
\be
\delta_{1,2} := ( \delta_{1,2} A, \delta_{1,2} E )   = (h_{1,2}, \eta_{1,2}, \varepsilon^\rad_{1,2}, \delta_{1,2} f),
\ee
where we denoted $\varepsilon_{1,2} = \delta_{1,2} E_\rad$, and traded the variation of the Coulombic part of the electric field for that of $f$. Then,
 \be
 \begin{dcases}
 \Omega^H(\delta_1 , \delta_2 ) = \int \sqrt{g} \, \tr\Big( (\varepsilon_1)^i (h_2)_i - (\varepsilon_2)^i (h_1)_i\Big),\\
  \Omega^\pp(\delta_1 , \delta_2 ) \approx \oint \sqrt{h} \,\tr\Big(  f[\eta_1 , \eta_2] +   \delta_1 f \, \eta_2  -\delta_2 f \, \eta_1 \Big).
 \end{dcases}
 \ee


\section{A brief overview of the slice theorem\label{app:slice}}

Denote $\tilde A$ a reducible configuration and by $\tilde \chi$ or $\tilde \chi_{\tilde A}$ one of its reducibility parameters.
Let us start by the simple observation that since $ (\bb i_{\tilde \chi^\#} \dd A)_{|\tilde A} = \delta_{\tilde \chi} \tilde A =  0 $, it follows from the definition \eqref{eq:hash} that at these configurations of $\A$, ${\tilde \chi}^\#{}_{|\tilde A}\in \mathrm T_{\tilde A}\A$ vanishes, thus establishing the degeneracy of the gauge orbit $\mathcal O_{\tilde A}\subset \A$. Therefore, $\A$ is not quite a bona fide fibre bundle, and its base manifold is in fact a stratified manifold, see figure \ref{fig8}.
\begin{figure}[t]
	\begin{center}
	\includegraphics[scale=0.17]{fig_new_3}	
	\caption{In this representation $\A$ is the page's plane and the orbits are given by concentric circles. The field $A$ is generic, and has a generic orbit, $\mathcal{O}_A$.  The configuratoin $\tilde A$ has a nontrivial stabilizer group (i.e. it has non-trivial reducibility parameters), and its orbit $\mathcal{O}_{\tilde A}$ is of a different dimension than $\mathcal{O}_A$. The projection of $\tilde A$ on  $\A/\G$ therefore sits at a qualitatively different point than that of $A$ (a lower-dimensional stratum of $\A/\G$). Exclusion of the reducible configuration $\tilde A$ gives rise to a fibre bundle structure over $\A\setminus \{\tilde A\}$; here $\sigma$ represents a section of $\A\setminus \{\tilde A\}$. Whereas $\sigma$ defines a slice through $A$, the slice through $\tilde A$ (not depicted) is an open disk centred at $\tilde A$.}
	\label{fig8}
	\end{center}
\end{figure}

 To be more precise, the definition of a  fibre bundle requires a local product structure, and while $\A$ does not have that product structure, it can be decomposed into submanifolds that do.  These manifolds are called the {\it strata} of $\A$, and this result is known as a \textit{slice theorem} for $\A$ \cite{Ebin, Palais, isenberg1982slice, kondracki1983}. 

 A stratum $\mathcal N_A \subset \A$ consists of those connections that have the same stabilizer of $A$ up to conjugacy by $\G$. 
E.g. generic configurations are irreducible, i.e. have $\I_A=\{\mathrm{id}\}$, and therefore belong to the same (top) stratum;  the (bottom) stratum of the vacuum configuration ${\cal N}_{A=0}$ is instead constituted by those configurations of maximal stabilizer\footnote{Contrary to general relativity, there is only one configuration (up to gauge) with maximal stabilizer in YM---at least over a simply connected space, e.g. $R\cong \bb R^D$, for $G$ a semisimple Lie group. Indeed, suppose $A$ is maximally symmetric, and denote  $\{\chi^{(\ell)}_A\}_{\ell = 1}^n$ a basis of $\Lie(\I_{A})\cong \Lie(G)$, that is $\D\chi_A^{(\ell)} = 0$. This implies $[F_A,\chi_A^{(\ell)}]=0$ at every point in space. Now, since the $\dim(\Lie(\I_A)) = \dim (\Lie(G))$, and since the $\chi^{(\ell)}$ are all linearly independent at every point in space (this is because the equation $\D\chi=0$ is first order), we conclude that $[F_A(x), \Lie(G)]=0$. If $G$ is semisimple, this means $F_A=0$, and hence, using that $R$ is simply connected, one concludes that $A = g^{-1}\d g = 0^g \in {\cal O}_{A=0}$. \label{fnt:vac_stratum}} $\I_{A=0} \cong G$, i.e.  ${\cal N}_{A=0}={\cal O}_{A=0} = \{ g^{-1}\d g, g\in \G\}$.
Intermediate strata have an increasing degree of symmetry, $\{ \mathrm{id} \} \subset \I_A \subset  G$.
The slice theorem shows that $\A$ is regularly stratified by the action of $\G$. In particular, all the strata are smooth submanifolds of $\A$.

A ``slice'' is a notion that reverts to the usual definition of a section on a  fibre bundle, but in the presence of stabilizers, it differs in important ways.
More precisely, at slice $\mathscr S_A$  at $A\in\A$ is an open submanifold of $\A$ containing $A$ such that \cite[Def. 1.1]{isenberg1982slice}: (\textit{i}) the \textit{entire} $\mathscr S_A$ is invariant under $\mathcal I_{A}$, i.e. for all $g\in{\cal I_A}$,  $R_g\mathscr S_A = \mathscr S_A$; (\textit{ii}) an orbit will interesect with $\mathscr S_A$ only for the stabilizers, i.e. $(R_{g}\mathscr{S}_A)\cap\mathscr{S}_A\neq \emptyset$ iff $g\in \mathcal I_{A}$, (\textit{iii}) most importantly,  the part of the group that is not in the stabilizer heuristically  provides the fibres of its own kind of sub-bundle; namely, for an open neighbourhood around the identity coset ${\cal U}_A\subset \G_A := \G/\mathcal{I}_{A}$, and a section $\kappa: \mathcal{U}_A\rightarrow \G$, the following map $\Gamma$ is a local diffeomorphism: 
\be
\Gamma: \mathcal{U}_A\times \mathscr{S}_A\rightarrow \A,\qquad ([g], s)\mapsto R_{\kappa([g])}s.
\label{eq:Gamma-map}
\ee
So-called ``slice theorems'' ensure that slices exist at all $A\in\A$ (cf.  \cite{YangMillsSlice, kondracki1983}, see also\cite{fischermarsden}).\footnote{The difficulties in proving the slice theorem all stem from the infinite-dimensional nature of field space: one must show that the orbits are embedded manifolds, and that they are ``splitting'' (i.e. the total  tangent space splits into the tangent to the orbit a closed complement) and that the Riemann exponential map (for some auxiliary gauge-compatible supermetric $\bb G$) is a local diffeomorphism. One then constructs the slice---whose tangent complements the vertical directions at the given configuration---by exponentiating some neighbourhood of the zero section of the normal bundle to the orbit (the subbundle of $\T\A$ which is $\bb G$-normal to the orbit in question). The $\G$-invariance of $\bb G$ guarantees that the slice has the necessary properties above. All of this must be done with due consideration of the relevant convergence properties for spaces with the appropriate Holder and Sobolev norms, within a given differentiability class. It is beyond the scope of this paper to exhibit these details (cf. \cite{kondracki1983}). 
This appeal to a super-metric shows once again the naturalness of the SdW notion of horizontality. \label{fnt:slicethm}
}

At a  non-reducible configuration $A$, the stabilizer $\mathcal I_A = \{ \mathrm{id} \}$ and the definition of a slice collapses to that of a local section. The space of such configurations is open and dense inside of $\A$, i.e. it is generic.
At a reducible configuration ${\tilde A}$, however, new features emerge. Call $\mathcal V_{\tilde A}$ a small open neighbourhood of $\tilde A$ (the image of $\Gamma$). Then, the demand (\textit{i})---that the entire $\mathscr S_{\tilde A}$ is stable under the action of $\mathcal I_{\tilde A}$---takes two different meanings depending on whether we focus on neighbouring configurations which take the stratum as an ambient manifold or which take the entire $\A$ as the ambient manifold, i.e. in  $\mathcal V_{\tilde A} \cap\mathcal N_{\tilde A}$ or in $\mathcal V_{\tilde A}$. 

On the one hand, off the stratum, condition (\textit{i}) means that a suitable $\mathscr S_{\tilde A}$ contains also the non-trivial orbit of a generic $A\in\mathcal V_{\tilde A}\setminus {\mathcal N}_{\tilde A}$ with respect to $\mathcal I_{\tilde A}$. So the slice of a reducible configuration is of ``higher dimension'' than that of a generic configuration.\footnote{ In finite dimensions, one would have dim$(\mathscr S_{ A})=\text{dim}(\A)-\text{dim}(\mathcal{O}_A)$ and dim$(\mathcal{O}_A)=\text{dim}(\G)-\text{dim}(\mathcal I_A)$. In the present context, however, all these dimensions are actually infinite except that of $\I_A$ which is finite and bounded from above by $\mathrm{dim}(G)$. \label{fnt:slicedim}}
This phenomenon ulitmately underlies our identification of charges. 

On the other hand, on the stratum, condition (\textit{i}) means that the slice cuts through the orbits within $\mathcal N_{\tilde A}$ in a non-generic manner, that is to ensure that for $\tilde A' \in \mathscr S_{\tilde A} \cap\mathcal N_{\tilde A}$, $\mathcal I_{\tilde A'}$ is equal to $\mathcal I_{\tilde A}$ and not just conjugate to it. The existence of $\mathscr S_{\tilde A}$ means that this ``special'' cuts exist; and indeed they are usually constructed by exponentiating an orthogonality condition with respect to a gauge-compatible supermetric $\bb G$ on $\A$  (cf.  \cite{YangMillsSlice, kondracki1983}, see also\cite{fischermarsden}).\footnote{The exponential is equivariant, and ``transports'' the relevant properties above at $A$ to any other $A'$ in the slice.  We discussed a completely analogous construction of transverse sections, this time at generic configurations, in \cite[Sect. 9]{GomesHopfRiello} under the name of Vilkovisky-DeWitt dressing. See there for details.}

\section{List of Symbols}\label{app:symbols}

\textbf{Space and time}\vspace{-1em}
\begin{longtable}{cp{0.8\textwidth}}
  $\d$ & De-Rahm differential on $\Sigma$\\
  $g_{ij}$, $\sqrt{g}$ & the space-like metric on $\Sigma$ and the square-root of its determinant and the square-root of its determinant\\
  $h_{ab}$, $\sqrt{h}$ & the induced metric on $\pp R$, $h_{ab}:=(\iota_{\pp R}^*g)_{ab}$ and the square-root of its determinant\\
  $N$ & a time-neighbourhood of $R$, $N = R\times(t_0,t_1)$ \\  
  $R$ & a (compact) subregion of $\Sigma$, possibly with boundary. It is assumed to have trivial topology, $\mathring R \cong \bb R^D$, and smooth boundary\\
  $s^i$ & the outgoing normal to $\pp R$\\
  $\Sigma$ & a $D$-dimensional Cauchy hypersurface of spacetime \\
  $\int$ & integral over $R$\\
  $\oint$ & integral over $\pp R$\\
  $\nabla$ & the space(-time) Levi-Civita connection\\
  $\wedge$ & the wedge product between differential forms (often omitted)
\end{longtable}

\noindent\textbf{Yang-Mills and matter fields}\vspace{-1em}
\begin{longtable}{cp{0.8\textwidth}}
  $A$ & a $\Lie(G)$-valued gauge field configuration ($A\in\A = \Omega^1(R, \Lie(G))$ (a gauge potential over $\Sigma$, in temporal gauge)\\
  $\A$ & the space of all field configurations $A$\\
  $(A,E)$ & coordinates on the cotangent bundle of $\T^*\A$\\
  $\D$ & the gauge-covariant differential $\D = \d + A$\\
  $E$ & the electric field (the momentum conjugate to $A$). In temporal gauge, $E= \dot A$. See ``symplectic geometry'' below for the definition of $E_\rad$ and $E_\Coul$\\
  $F$ & the field-strength of $A$ (magnetic field), $F = \d A + A \wedge A$ \\
  $f$ & the electric flux through $\pp R$, $f = E_s\equiv s_i E^i$\\ 
  $g$ & a (finite) gauge transformation, i.e. an element of $\G$\\
  $G$ & the charge group (finite dimensional, e.g. $G=\SU(N)$)\\
  $\G$ & the gauge group (infinite dimensional, $\G = \mathcal C^\infty(R, G)$)\\
  $\GC$ & the Gauss constraint $\GC := \D_i E^i - \rho \approx 0$ (see also \eqref{eq:Gauss-Coul} for $\GC^\text{tot}_f$ and $\GC_f^\pp$)\\
  $G_{\alpha,x}(y)$ & the Green's function of the ``SdW boundary-value problem'' (see \eqref{eq:Green}) \\
  $J^\mu $ & the $\Lie(G)$-valued current $J^\mu_\alpha = \bar\psi \gamma^\mu\tau_\alpha\psi$, $J^\mu= (\rho,J^i)$\\
  $R_g$ & the action of $\G$ on $\A$ (or $\Phi$, see below)\\
  $\tr$ & a short-hand for the appropriately normalized Killing form on $\frak g$\\  
  $\gamma^\mu$ & Dirac's gamma matrices\\
  $\xi,\eta,\dots$ & an infinitesimal ``field-dependent'' gauge transformations, i.e. elements of $\Omega^0(\A, \Lie(\G))$ (more generally elements of $\Omega^0(\Phi,\Lie(\G))$, see below). One says $\xi$ is field-{\it in}dependent if $\dd \xi = 0$ i.e. $\xi$ is a constant over $\A$ and can thus be identified with an element of $\Lie(\G)$. If $\dd\xi\neq0$, $\xi$ is said field-dependent\\
  $\rho$ & the $\Lie(G)$-valued  matter charge density (see $J^\mu$)\\
  $\tau_\alpha$ & a basis of $\Lie(G)$ normalized so that $\tr(\tau_\alpha \tau_\beta)=\delta_{\alpha,\beta}$\\
  $\Phi $ & the total phase space: $\Phi = \T^*\A \times(\Psi\times\bar\Psi)$ \\
  $\psi$ & a matter field; for definiteness, often taken to be a charged Dirac spinor in the fundamental representation of $\G$\\
  $\bar\psi$ & the conjugate spinor, $\bar\psi = i \psi^\dagger \gamma^0$\\
  $\Psi, \bar \Psi$ & the spaces of $\psi$'s and $\bar\psi$'s respectively\\
  $[\cdot,\cdot]$ & the Lie bracket in $\frak g$\\
\end{longtable}

\noindent\textbf{Field space geometry}\vspace{-1em}
\begin{longtable}{cp{0.8\textwidth}}
  $\dd$ & the (formal) exterior differential over $\A$ (it commutes with $\d$, and satisfies $\dd^2 \equiv 0$)\\
  $\dd_H$ & the horizontal differential adapted to a covariant horizontal distribution $H=\ker(\varpi)$. Heuristically, it is given by the ``covariant'' differential $\dd_H = \dd + \varpi$\\
  $\dd_\perp$ & the horizontal differential specific to $\varpi=\varpi_\sdw$\\
  $\bb E$ & the field-space vector on $\A$ built out of $E$ by means of $\bb G$, $\bb E = \int g_{ij} E^i \frac{\delta}{\delta A_j}$. The vectors $\bb E_\rad$ and $\bb E_\Coul$ are similarly defined from $E_\rad$ and $E_\Coul$. See ``symplectic geometry'' below for their definition\\
  $\mathcal F$ & the vertical foliation, i.e. $\mathcal F = \{ \mathcal O_A\}$\\ 
  $\bb F$ & the curvature of $\varpi$, it encodes the anholonomicity of the horizontal distribution $H\subset \T\A$\\
  $\bb F_\sdw$ & the curvature of $\varpi_\sdw$\\
  $\bb G$ & the ``kinetic super-metric'' on $\A$ built through the natural $L^2$ metric on $\Sigma$ together with the Killing form $\tr$\\
  $H $ & a transverse complement to $V$ in $\T\A$, $H\oplus V = \T\A$. For brevity, it often stands for $H_{\bb G}$ (see below)\\
  $ H_{\bb G}$ & the orthogonal complement to $V$ with respect to $\bb G$, $H_{\bb G} = V^\perp$\\
  $\hat H, \hat V$ & projectors from $\T\A$ to $H$ and $V$ respectively\\
  $\bb h$ &  a field-space horizontal vector (field), $\bb h_A \in H_A$\\
  $\bb i$ & the field-space inclusion operator, e.g. $\bb i_{\bb X} \dd \phi = \bb X(\phi)$ for all $\phi\in \Omega^0(\A)$\\
  $\bb L$ & the field-space Lie derivative. Acting on field-space forms, $\bb L_{\bb X} = \bb i_{\bb X} \dd + \dd \bb i_{\bb X}$ (Cartan's formula)\\
  $\mathcal O_A$ & the orbit of $A$ under the action of $\G$, a subspace of $\A$\\ 
  $V$ & the vertical subspace of $\T\A$, i.e. $V = \T\mathcal F$ \\  
  $\hat V$ & see $\hat H$\\
  $\bb X, \bb Y, \dots$ & a field-space vector (field), $\bb X \in \mathfrak{X}^1(\A)$, e.g. $\bb X = \int \d x X_i^\alpha(x) \frac{\delta}{\delta A_i^\alpha(x)}$\\
  $\varpi$ & a connection 1-form on $\A$, $\varpi \in \Omega^1(\A, \Lie(\G))$. It is adapted to a choice of decomposition $\T\A = H\oplus V$ in the sense that $H = \ker(\varpi)$ and $\varpi^\# = \hat V $. It satisfies the defining properties \eqref{eq:varpi_def}. It can also stand for the pull-back of $\varpi$ to $\Phi$. Often, after section \ref{sec:SdW}, $\varpi$ can stand for $\varpi_\sdw$\\
  $\varpi_\sdw$ & the Singer-DeWitt connection 1-form, i.e. the connection 1-form uniquely adapted to the orthogonal decomposition of $\A$ with respect to $\bb G$\\
  $\varsigma$ & it is a ``potential'' for $\varpi$, i.e. $\varpi = \dd \varsigma$. This potential exists only under restrictive hypothesis ($G$ Abelian and $\bb F = 0$) \\
  $\cdot^\#$ & it is the infinitesimal version of $R_g$, it maps a field-independent $\xi \in \Lie(\G)$ to a vertical field-space vector, $\xi^\#_A\in V_A$. This maps extends canonically to field-dependent gauge transformations\\  
  $\lbr \cdot, \cdot\rbr$ & the field space Lie bracket between vector fields $\bb L_{\bb X} \bb Y = \lbr \bb X, \bb Y \rbr$\\  
  $\curlywedge$ & the formal antisymmetric tensor  (wedge)  product between field-space differential forms\\ 
\end{longtable}

\noindent\textbf{Symplectic geometry}\vspace{-1em}
\begin{longtable}{cp{0.8\textwidth}}
  $E_\rad, E_\Coul$ & the (functional) components of $E$ entering $\theta^H$ and $\theta^V$ respectively\\
  $H_\xi$ & the (naive) Noether charge, defined as $H_\xi = \bb i_{\xi^\#}\theta$\\
  $\mathcal S, \mathcal T, \dots$ & bulk-supported real-valued function(al)s on $\Phi$, i.e. function(al)s on $\Phi$ which do not depend on the value of the fields in an (arbitrary) collar neighbourhood of $\pp R$\\
  $\bb X_{\cal S}, \bb X_{\cal T}, \dots$ & the Hamiltonian vector fields associated with $\mathcal S, \mathcal T, \dots$\\

  $\theta$ & the sum $\theta = \theta_\text{YM} + \theta_\text{Dirac} \in \Omega^1(\Phi)$. It is the off-shell symplectic potential of YM theory with matter\\
  $\theta_\text{Dirac}$ & the off-shell symplectic potential of the matter sector of Yang-Mills theory, a 1-form on $\Psi \times \bar \Psi$\\
  $\theta_\text{YM}$ & the tautological 1-form on $\T^*\A$, it is the off-shell symplectic potential of pure Yang-Mills theory\\
  $\varphi$ & for the SdW decomposition of $E$ into $E_\rad$ and $E_\Coul$, one finds $E^i_\Coul = g^{ij} \D_j \varphi$\\
  $\Omega$ & the off-shell symplectic form of Yang-Mills theory with matter, $\Omega \in \Omega^2(\Phi)$. Obvious variations are $\Omega_\text{YM} = \dd \theta_\text{YM}$ and $\Omega_\text{Dirac} =\dd\theta_\text{Dirac} $\\
  $\theta^H, \theta^V$ & respectively the horizontal and vertical parts of $\theta$ with respect to a given decomposition $\T\A = H\oplus V$. Clearly $\theta = \theta^H + \theta^V$\\
  $\Omega^H$ & is the differential $\Omega^H := \dd \theta^H$ (it is necessarily horizontal, but it is not necessarily the horizontal part of $\Omega$)\\
  $\Omega^\pp$ & it is the differential $\Omega^V := \dd \theta^V$ (on-shell of the Gauss constraint it is a pure-boundary term)\\
\end{longtable}

\noindent\textbf{Reducible configurations}\vspace{-1em}
\begin{longtable}{cp{0.8\textwidth}}
    $\A_\text{EM}$ & the configuration space of the electromagnetic theory taken as the prototypical example of an Abelian YM theory\\
  $\G_A$ & the quotient $\G_A = \G/\I_A$. It is not a group unless $\I_A$ is a normal subgroup of $\G$ (which is a non-generic property)\\
  $\G_\text{EM}$ & the quotient $\G_\text{EM} = \G/\I_\text{EM}$. It is a group\\
  $\mathcal G_\ast$ & a subgroup of $\G$ homomorphic to $\G_\text{EM}$. In this case, $\kappa: \G_\text{EM}\to \mathcal G_\ast \subset \G$ is a group homomorphism. The choice of $\G_\ast\subset \G$ is not unique\\  
  $\fkG_A$ & the quotient of vector spaces $\fkG_A = \fG/\Lie(\I_A)$\\
  $\I_A$ & the stabilizer (or ``isotropy'') group of $A$, i.e. the subgroup of $\G$ given by those $g\in\G$ such that $A^g=A$\\
    $\I_\text{EM}$ & the sub Lie algebra of constant gauge transformations in electromagnetism, $\Lie(\I_\text{EM})\cong  i \bb R$. These transformations stabilize all $A\in\A_\text{EM}$\\
  $\mathcal N_{\tilde A}$ & the subspace of $\cal N$ composed of all configurations $A$ with stabilizer conjugate to that of $\tilde A$. This space is called a (lower) ``stratum'' of $\A$ (the ``top'' stratum, which is dense in $\A$, is given by the set of generic configurations with trivial stabilizer; conversely the ``bottom'' stratum has maximal stabilizer $\I_A \cong G$ and is given by the single orbit $\mathcal O_{A=0}$)\\
  $Q[\chi_A]$ & the stabilizer charge, $Q[\chi_A] = \int \sqrt{g} \, \tr( \rho \chi_A)$, which is defined at reducible $A\in\cal N$\\
    $\mathscr S_A$ & a ``slice'' through $A$. The notion of ``slice'' generalizes the notion of section at reducible configurations\\
    $Q_\text{EM}[\chi_\text{EM}]$ & the stabilizer charge in electromagnetism. $Q_\text{EM}[\chi_\text{EM}] = \chi_\text{EM} \int \sqrt{g} \, \rho$ is the total electric charge in $R$ (times the constant $\chi_\text{EM}$)\\
  $\kappa$ &  a section $\kappa:\mathcal U_A \to \G$ where  $\mathcal U_A$ is a neighbourhood of the identity coset $[\mathrm{id}]\in\G_A$. With an abuse of notation, we use the same symbol for what is actually the tangent map $\T\kappa : \fkG_A \to \fG $\\
  $[\xi]_A, [\eta]_A,\dots$ & an element of $\fkG_A$, $[\xi]_A = [\xi + \chi_A]_A$. It is often simply denoted by $[\xi]$\\
  $\chi_A$ & an element of $\Lie(\I_A)$\\
  $\chi_\text{EM}$ & an element of $\Lie(\I_\text{EM})$\\
\end{longtable}

\noindent\textbf{Gluing}\vspace{-1em}
\begin{longtable}{cp{0.8\textwidth}}
${}^S \D_a$ & the gauge-covariant Levi-Civita derivative on $S$ associated to $h_{ab}$\\
${}^S \D^2$ & the gauge-covariant Laplace operator on $S$, ${}^S\D^2 := h^{ab} {}^S \D_a \,{}^S \D_b$\\
$\bb H_A, \, \bb H_\psi$ & the components along the gauge-potential and matter-field directions respectively of a SdW-horizontal field-space vector $\bb H \in\T\Phi$\\
$H_A,\, H_\psi$ & the components of $\bb H_A$ and $\bb H_\psi$\\
$h_{ab}$  & the induced metric on $S$, $h_{ab} := (\iota^*_S g)_{ab}$\\
$\mathcal R_\pm$ & the (generalized) Dirichlet-to-Neumann pseudo-differential operator associated with the SdW boundary value problem\\
$R^\pm$ & the two complementary regions in which $\Sigma$ is split, $\Sigma = R^+ \cup_S R^-$\\
$S$ & the common boundary $S = \pm \pp R^\pm$\\
$\Sigma$ & the whole Cauchy surface, assumed simply-connected and boundary-less\\
$\bullet^\pm, \bullet_\pm, {}^\pm\bullet$ & indicates which of the regions $R^\pm$ the given object $\bullet$ is associated to (this should {\it not} be confused with the restriction to a certain region of a globally defined object)\\
$\bullet_{|R^\pm}$ & restriction of a globally defined object $\bullet$ to the region $R^\pm$\\
$[\bullet]^\pm_S$ & the boundary mismatch, defined on regional one-form-valued objects (typically $\bullet \in \Omega^1(R^\pm, \fG)$) as $\iota_S^*(\bullet^+ - \bullet^-)$
\end{longtable}

\end{appendix}



\footnotesize
 \bibliographystyle{SciPost_bibstyle} 

\nolinenumbers

\end{document}